\let\cite\citep
\newtheorem{theorem}{Theorem}
\newtheorem{lemma}{Lemma}
\newtheorem{corollary}{Corollary}
\newtheorem{definition}{Definition}
\newtheorem{proposition}{Proposition}
\newtheorem{fact}{Observation}
\newtheorem{remark}{Remark}
\def\arrowedvec{\mathaccent"017E}
\DeclareMathOperator{\lca}{lca}
\DeclareMathOperator{\aug}{\mathcal{A}}
\newcommand{\hourglass}{\mathrel{\text{\ooalign{$\searrow$\cr$\nearrow$}}}}
\newcommand{\fp}{\emph{fp}\xspace}
\newcommand{\ufp}{\emph{u-fp}\xspace}
\newcommand{\Scap}{\mathcal{S}^{\cap}}
\newcommand{\CIG}{\mathfrak{C}}
\newcommand{\G}{\arrowedvec{G}}  
\newcommand{\ROOT}{\circledcirc}
\newcommand{\LEAF}{\odot}
\newcommand{\SPEC}{\newmoon}
\newcommand{\DUPL}{\square}
\newcommand{\child}{\mathsf{child}}
\newcommand{\tT}{{\widehat{t}_T}}
\newcommand{\NH}{\ensuremath{\mathbb{NH}}}
\newcommand{\tTp}{{\widehat{t}_{\aug(T)}}}
\newcommand{\tTps}{{\widehat{t}_{\aug(T^*)}}}
\newcommand{\wt}{{\widehat{t}}}
\newcommand{\AX}[1]{\textnormal{#1}}
\newenvironment{emptyTHM}[1]
{\innercustomEthm}
{\endinnercustomEthm}
\newenvironment{ctheorem}[1]
{\innercustomthm}
{\endinnercustomthm}
\newenvironment{clemma}[1]
{\innercustomlem}
{\endinnercustomlem}
\newenvironment{cproposition}[1]
{\innercustomprop}
{\endinnercustomprop}
\newenvironment{cdefinition}[1]
{\innercustomdef}
{\endinnercustomdef}
\newenvironment{ccorollary}[1]
{\innercustomcor}
{\endinnercustomcor}
\newenvironment{cfact}[1]
{\innercustomfact}
{\endinnercustomfact}
\providecommand{\keywords}[1]{\textbf{\textit{Keywords: }} #1}
\title{Complete Characterization of Incorrect Orthology Assignments in
  Best Match Graphs}
\author[1,2]{David Schaller}
\author[3]{Manuela Gei{\ss}} 
\author[1,2,4-7]{Peter F. Stadler}    
\author[8,*]{Marc Hellmuth} 
\affil[1]{Max Planck Institute for Mathematics in the Sciences,
  Inselstra{\ss}e 22, D-04103 Leipzig, Germany}
\affil[2]{Bioinformatics Group, Department of Computer Science \&
  Interdisciplinary Center for Bioinformatics, Universit{\"a}t Leipzig,
  H{\"a}rtelstra{\ss}e~16--18, D-04107 Leipzig, Germany.}
\affil[3]{Software Competence Center Hagenberg GmbH, Softwarepark 21, A-4232 
Hagenberg, Austria}
\affil[4]{German Centre for Integrative Biodiversity Research
  (iDiv) Halle-Jena-Leipzig, Competence Center for Scalable Data Services
  and Solutions Dresden-Leipzig, Leipzig Research Center for Civilization
  Diseases, and Centre for Biotechnology and Biomedicine at Leipzig
  University at Universit{\"a}t Leipzig}
\affil[5]{Institute for Theoretical Chemistry, University of Vienna,
  W{\"a}hringerstrasse 17, A-1090 Wien, Austria}
\affil[6]{Facultad de Ciencias, Universidad National de Colombia, Sede
  Bogot{\'a}, Colombia}
\affil[7]{Santa Fe Insitute, 1399 Hyde Park Rd., Santa Fe NM 87501,
  USA}
\affil[8]{School of Computing, University of Leeds, EC Stoner
  Building, Leeds LS2 9JT, UK \newline \texttt{mhellmuth@mailbox.org}}
\affil[*]{corresponding author}
\date{\ }
\begin{document}

\maketitle 

\abstract{  
  Genome-scale orthology assignments are usually based on reciprocal best
  matches.  In the absence of horizontal gene transfer (HGT), every pair of
  orthologs forms a reciprocal best match.  Incorrect orthology assignments
  therefore are always false positives in the reciprocal best match graph.
  We consider duplication/loss scenarios and characterize unambiguous
  false-positive (\ufp) orthology assignments, that is, edges in the best
  match graphs (BMGs) that cannot correspond to orthologs for any gene tree
  that explains the BMG. Moreover, we provide a polynomial-time algorithm
  to identify all \ufp orthology assignments in a BMG.  Simulations show
  that at least $75\%$ of all incorrect orthology assignments can be
  detected in this manner. All results rely only on the structure of the
  BMGs and \emph{not} on any \emph{a priori} knowledge about underlying
  gene or species trees.
}

\bigskip
\noindent
\keywords{
  orthology detection,
  best matches,
  unambiguous orthologs,
  colored graphs,
  cograph,
  tree reconciliation,
  polynomial-time algorithm}

\sloppy

\section{Introduction}
\label{sec:intro}

Orthology is one of the key concepts in evolutionary biology: Two genes are
orthologs if their last common ancestor was a speciation event
\cite{Fitch:70}. Distinguishing orthologs from paralogs (originating from
gene duplications) or xenologs (i.e., genes that have undergone horizontal
gene transfer) is of considerable practical importance for functional
genome annotation and thus for a wide array of methods in bioinformatics
and computational biology that rely on gene annotation data.  In
particular, according to the ``ortholog conjecture'', orthologous genes in
different species are expected to have essentially the same biological and
molecular functions, whereas paralogs and xenologs tend to have similar,
but distinct functions. Albeit controversial
\cite{Nehrt:11,Stamboulian:20}, this assumption is widely made in the
computational prediction of gene functions
\cite{Nehrt:11,Gabaldon:13,Soria:14,Zallot:16}. Moreover, the distinction
of orthologs and paralogs is crucial in phylogenomics \cite{Delsuc:05}.
Most of the commonly used tools for large-scale orthology identification
compute reciprocal best hits as a first step followed by some filtering and
refinement steps to improve the results
\cite{tatusov2000cog,Roth:08,Lechner:11a,
  LTPL:11,sonnhammer2015inparanoid,TGG+17,HSH+18}, see also
\cite{Nichio:17,Setubal:18a,Galperin:19} for reviews and
\cite{Altenhoff:16} for benchmarking results.

Orthology identification has also received increasing attention from a
mathematical perspective starting from the concept of an \emph{evolutionary
  scenario} comprising a gene tree $T$ and a species tree $S$ together with
a \emph{reconciliation map $\mu$} from $T$ to $S$. The map $\mu$ identifies
the locations in the species tree at which evolutionary events, represented
by the vertices of the gene tree, took place.  \emph{In this contribution,
  we consider exclusively duplication/loss scenarios, i.e., we explicitly
  exclude horizontal gene transfer.} Characterizations of reconciliation
maps are given e.g.\ in
\cite{Gorecki:06,Vernot:08,Doyon:11,Rusin:14}. While every gene tree can be
reconciled with any species tree \cite{Guigo:96,Page:97}, this is no longer
true if event-labels are prescribed in the gene tree~$T$
\cite{HernandezRosales:12a,Lafond:14,Hellmuth:17}.

The orthology relation itself has been characterized as a cograph (i.e.,
graphs that do not contain induced paths $P_4$ on four vertices) by
\citet{Hellmuth:13a} based on earlier work by \citet{Boecker:98}.  This
line of research has led to the idea of editing reciprocal best hit data to
conform to the required cograph structure \cite{Hellmuth:15}. There are,
however, two distinct sources of errors in an orthology assignment pipeline
based on best matches:
\begin{description}
  \item[(i)] inaccuracies in the assignment of best matches from sequence
  similarity data \cite{Stadler:20a}, and
  \item[(ii)] limits in the reconstruction of the ``true'' orthology relation
  from best match graphs \cite{Geiss:20a}.
\end{description}
We consider best matches as an evolutionary concept: A gene $y$ in species
$s$ is a best match of a gene $x$ from species $r\ne s$ if $s$ contains no
gene $y'$ that is more closely related to $x$. That is, best matches
capture the idea of phylogenetically most closely related genes. Maybe
surprisingly, the combinatorial structure of best matches has become a
focus only very recently \cite{Geiss:19a}. Best match graphs (BMGs) have
several appealing properties: They have several alternative
characterizations providing polynomial-time recognition algorithms
\cite{BMG-corrigendum,Schaller:20d} and they are ``explained'' by a unique
least resolved tree \cite{Geiss:19a}. These properties will be introduced
formally in the next section and play an important role in our
discussion. The reciprocal best match graphs (RBMGs) are the symmetric
parts of BMGs and conceptually correspond to the reciprocal best hits used
in orthology detection. In contrast to BMGs, RBMGs are much more difficult
to handle and are not associated with unique trees \cite{Geiss:19b}.  An
example for an evolutionary scenario with corresponding BMG and RBMG is
given Fig.~\ref{fig:bmg_example}.

\begin{figure}[t]
  \begin{center}
    \includegraphics[width=0.85\linewidth]{./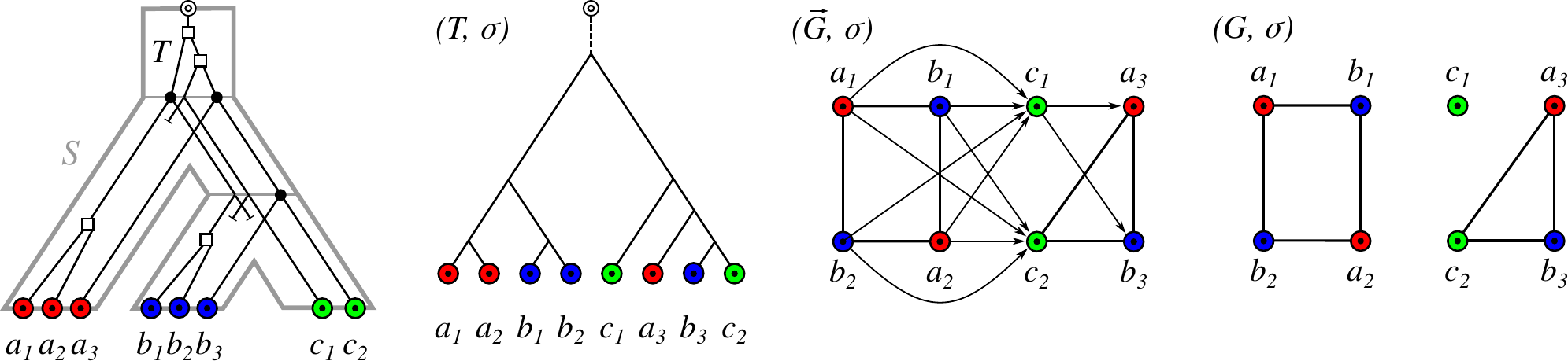}
  \end{center}
  \caption[]{An evolutionary scenario (left) consists of a gene tree
    $(T,\sigma)$ (whose observable part is shown in the second panel)
    together with an embedding into a species tree $S$.  The coloring
    $\sigma$ of the leaves of $T$ represents the species in which the genes
    reside.  Speciation vertices ($\SPEC$) of the gene tree coincide with
    the vertices of the species tree, whereas gene duplications ($\DUPL$)
    are mapped to the edges of $S$. The reciprocal best match graph (RBMG)
    $(G,\sigma)$ on the right corresponds to the undirected graph
    underlying the symmetric part of the best match graph (BMG)
    $(\G,\sigma)$ (third panel).}
  \label{fig:bmg_example}
\end{figure}

In this contribution, we are only concerned with the second source of
errors, i.e., with the limits in the reconstruction of the true orthology
relation from best matches.  We therefore assume throughout that a
``correct'' BMG (cf.\ Def.~\ref{def:BestMatchGraph}) is given. \emph{We do
  not assume, however, that we have any \emph{a priori} knowledge about the
  underlying gene or species tree}. The problem we aim to solve is to
determine the orthology relation that is best supported by the given BMG.

Of course, the \emph{true} orthology relation is not known. Nevertheless,
we start our mathematical analysis with the following definition: A pair of
genes $x$ and $y$ that are not true orthologs but reciprocal best matches
are false-positive orthologs. If they are orthologs but not reciprocal best
matches, they are false-negative orthologs. \citet{Geiss:20a} showed that,
for evolutionary scenarios that involve only speciations, gene
duplications, and gene losses, there are no false-negative orthology
assignments (see also Thm.~\ref{thm:extrem-ortho} below).  Our task
therefore reduces to understanding the false-positive orthology
assignments. Being a false positive is a property of the edge $xy$ in an
RBMG, and equivalently of the symmetric pair $(x,y)$ and $(y,x)$ in the
BMG. Here, we aim to identify false-positive edges from the structure of
the BMG itself.

\begin{figure}[t]
  \begin{center}
    \includegraphics[width=0.85\linewidth]{./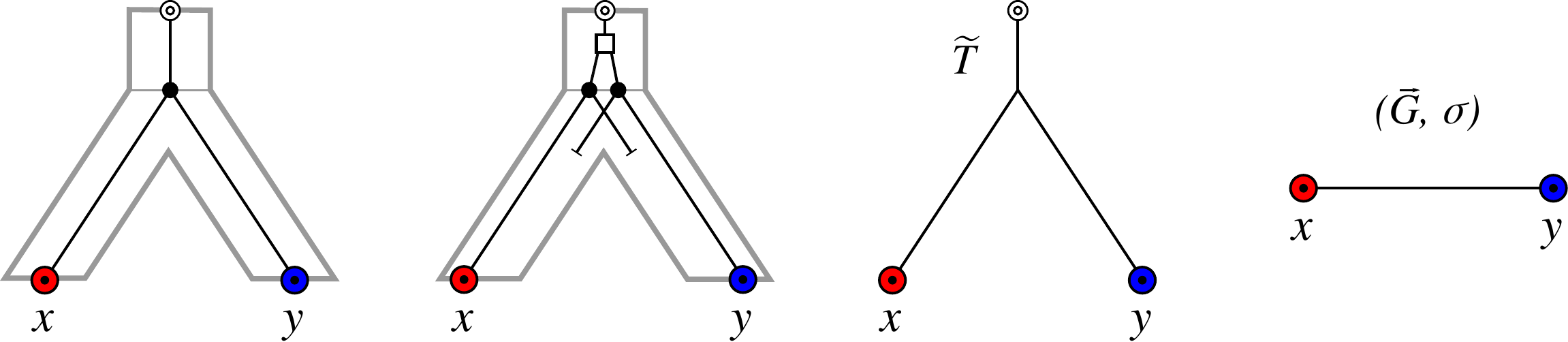}
  \end{center}
  \caption[]{Two scenarios (1st and 2nd panel to the left) for the
    evolution of a gene family embedded into a species tree (shown in
    gray), where $\SPEC$ represents speciation and $\DUPL$ duplication
    events.  The second scenario is the simplest example for a
    complementary gene loss that is not witnessed by any other species. In
    particular, the two different true histories result in the same
    topology $\widetilde{T}$ of the true (loss-free) gene tree, and thus
    explain the same BMG $(\G,\sigma)$.  However, only for the leftmost
    scenario the edge $xy$ in $(\G,\sigma)$ describes correct orthologs.}
  \label{fig:compl_loss}
\end{figure}

We first note that false positives cannot be avoided altogether, i.e., not
all false positives can be identified from a BMG alone.  The simplest
example, Fig.~\ref{fig:compl_loss} (second scenario), comprises a gene
duplication and a subsequent speciation and complementary gene losses in
the descendant lineages such that each paralog survives only in one of
them. In this situation, $xy$ is a reciprocal best match. If there are no
other descendants that harbor genes witnessing the duplication event, then
the framework of best matches provides no information to recognize $xy$ as
a false-positive assignment.

On the other hand, RBMGs and thus BMGs contain at least some information on
false positives. Since the orthology relation forms a cograph but RBMGs are
not cographs in general \cite{Geiss:19b}, incorrect orthology assignments
are associated with induced $P_4$s, the forbidden subgraphs that
characterize cographs. $P_4$s arise for instance as a consequence of the
complete loss of different paralogous groups in disjoint lineages.
\citet{Dessimoz:2006} noted that such false-positive orthology assignments
can be identified under certain circumstances, in particular, if there is
some species in which both paralogs have survived. The corresponding motif
in BMGs, the ``good quartets'', was investigated in some detail by
\citet{Geiss:19b}.  The removal of such false-positive orthologs already
leads to a substantial improvement of the orthology assignments in
simulated data \cite{Geiss:20a}.  Here, we extend the results of
\citet{Geiss:20a} to a complete characterization of false-positive
orthology assignments for a given BMG.

Good quartets cannot be defined on RBMGs because information on
non-reciprocal best matches is also needed explicitly.  This suggests to
consider BMGs rather than RBMGs as the first step in graph-based orthology
detection methods. In practice, best matches are approximated by sequence
similarity and thus are subject to noise and biases \cite{Stadler:20a}. The
empirically determined best match relation thus will usually need to be
corrected to conform to the formal definition (cf.\
Def.~\ref{def:BestMatchGraph} below) of BMGs. This naturally leads to a
graph editing problem that was recently shown to be NP-complete
\cite{Schaller:20d,Hellmuth:20a}.

Sec.~\ref{sec:prelim} establishes the notation and summarizes properties of
BMGs that are needed throughout this contribution.
Sec.~\ref{sec:false-positives} formalizes the notion of \emph{unambiguous
  false-positive} (\ufp) edges, i.e., reciprocal best matches that cannot
be orthologs w.r.t.\ to \emph{any} gene tree explaining the BMG.
Sec.~\ref{sec:main-results} contains the main mathematical contributions of
this work:
\begin{enumerate}
  \item We provide a full characterization of unambiguous false-positive
  orthology assignments in BMGs.
  \item We provide a polynomial-time algorithm to determine all unambiguous
  false-positive orthology assignments in BMGs.
\end{enumerate}
In Sec.~\ref{sec:simulations}, we complement the mathematical results with
a computational analysis of simulated scenarios and observe that at least
three quarters of all false positives fall into this class. The remaining
cases are not recognizable from best matches alone and correspond to
complementary losses without surviving witnesses, i.e., cases that cannot
be corrected without additional knowledge on the gene tree and/or the
species tree.

Since the material is extensive and very technical, we subdivide our
presentation into a main narrative part
(Secs.~\ref{sec:intro}--\ref{sec:summary}) and a technical part
(Secs.~\ref{APP:subsect:rbmg}--\ref{APP:sect:leftovers}) that contains all
proofs and additional material in full detail. Together with the
definitions and preliminaries in Sec.~\ref{sec:prelim}, the technical part
is self-contained. Definitions and results appearing in the narrative part
are therefore restated. The order of the material in the two parts is
slightly different.

\section{Preliminaries}
\label{sec:prelim}

\subsection{Graphs and trees}

We consider finite, directed graphs $\G=(V,E)$, for brevity just called
graphs throughout, with arc set
$E\subseteq V\times V\setminus\{(v,v)\mid v\in V\}$. We say that $xy$ is an
\emph{edge} in $\G$ if and only if both $(x,y)\in E(\G)$ and
$(y,x)\in E(\G)$.  If all arcs of $\G$ in a graph form edges, we call $\G$
\emph{undirected}.  A graph $H=(W,F)$ is a \emph{subgraph} of $G=(V,E)$, in
symbols $H\subseteq G$, if $W\subseteq V$ and $F\subseteq E$.  The
underlying \emph{symmetric part} of a directed graph $\G=(V,E)$ is the
subgraph $G=(V,F)$ that contains all edges of $\G$.  A subgraph $H=(W,F)$
(of $\G$) is called \emph{induced}, denoted by $\G[W]$, if for all
$u,v\in W$ it holds that $(u,v) \in E$ implies $(u,v) \in F$.  In addition,
we consider \emph{vertex-colored} graphs $(\G,\sigma)$ with vertex-coloring
$\sigma\colon V\to M$ into some set $M$ of colors. A vertex-coloring is
called \emph{proper} if $\sigma(x)\neq \sigma(y)$ for every arc $(x,y)$ in
$\G$. We write $\sigma(W) = \{\sigma(w) \mid w\in W\}$ for subsets
$W\subseteq V$ and $\sigma_{|W}$ to denote the restriction of the map
$\sigma$ to $W\subseteq V$.  In particular, $(\G[W],\sigma_{|W})$ is an
induced vertex-colored subgraph of $(\G,\sigma)$.

A \emph{path (of length $\ell$)} in a directed graph $\G$ or an undirected
graph $G$ is a subgraph induced by a nonempty sequence of pairwise distinct
vertices $P(x_0,x_{\ell}) \coloneqq (x_0, x_1, \dots, x_{\ell})$ such that
$(x_i, x_{i+1}) \in E(\G)$ or $x_ix_{i+1} \in E(G)$, resp., for
$0 \leq i \leq \ell-1$. We use the notation $P(x_0,x_{\ell})$ both for the
sequence of vertices and the subgraph they induce.

All \emph{trees} $T=(V,E)$ considered here are \emph{undirected},
\emph{planted} and \emph{phylogenetic}, that is, they satisfy (i) the root
$0_T$ has degree $1$ and (ii) all inner vertices have degree
$\deg_T(u)\ge 3$. We write $L(T)$ for the leaves (not including $0_T$) and
$V^0=V(T)\setminus(L(T)\cup\{0_T\})$ for the inner vertices (also not
including $0_T$). To avoid trivial cases, we will always assume
$|L(T)|\geq 2$. An edge $uv$ in $T$ is an inner edge if $u,v\in V^0(T)$ are
inner vertices.  The \emph{conventional root} $\rho_T$ of $T$ is the unique
neighbor of $0_T$. The main reason for using planted phylogenetic trees
instead of modeling phylogenetic trees simply as rooted trees, which is the
much more common practice in the field, is that we will often need to refer
to the time before the first branching event, i.e., the edge $0_T\rho_T$.

We define the \emph{ancestor order} on a given tree $T$ as follows: if $y$
is a vertex of the unique path connecting $x$ with the root $0_T$, we write
$x\preceq_T y$, in which case $y$ is called an ancestor of $x$ and $x$ is
called a descendant of $y$. We use $x \prec_T y$ for $x \preceq_{T} y$ and
$x \neq y$. If $x \preceq_{T} y$ or $y \preceq_{T} x$ the vertices $x$ and
$y$ are \emph{comparable} and, otherwise, \emph{incomparable}. If $xy$ is
an edge in $T$, such that $y \prec_{T} x$, then $x$ is the \emph{parent} of
$y$ and $y$ the \emph{child} of $x$. We denote by $\child_T(x)$ the set of
all children of $x$. It will be convenient for the discussion below to
extend the ancestor relation $\preceq_T$ to the union of the edge and
vertex sets of $T$. More precisely, for a vertex $x\in V(T)$ and an edge
$e=uv\in E(T)$ with $v\prec_T u$ we write $x \prec_T e$ if and only if
$x\preceq_T v$ and $e \prec_T x$ if and only if $u\preceq_T x$. For edges
$e=uv$ with $v\prec_T u$ and $f=ab$ with $b\prec_T a$ in $T$ we put
$e\preceq_T f$ if and only if $v \preceq_T b$.

For a non-empty subset $A\subseteq V\cup E$, we define $\lca_T(A)$, the
\emph{last common ancestor of $A$}, to be the unique $\preceq_T$-minimal
vertex of $T$ that is an ancestor of every vertex or edge in $A$.  For
simplicity we drop the brackets and write
$\lca_T(x_1,\dots,x_k)\coloneqq\lca_T(\{x_1,\dots,x_k\})$ whenever we
specify a set of vertices or edges explicitly.

A vertex $v\in V(T)$ is \emph{binary} if $\deg_T(v)=3$, i.e., if $v$ has
exactly two children. A tree is \emph{binary}, if all vertices $v\in V^0$
are binary.  For $v\in V(T)$ we denote by $T(v)$ the subtree of $T$ rooted
in $v$. The set of \emph{clusters} of a tree $T$ is
$\mathscr{C}(T) = \{L(T(v))\mid v\in V(T)\}$. It is well-known that
$\mathscr{C}(T)$ uniquely determines $T$ \cite{sem-ste-03a}.  We say that a
tree $T$ is a \emph{refinement} of some tree $T'$ if
$\mathscr{C}(T')\subseteq \mathscr{C}(T)$.  A tree $T'$ is \emph{displayed}
by a tree $T$, in symbols $T'\le T$, if $T'$ can be obtained from a subtree
of $T$ by contraction of edges \cite{Semple:03}, where the contraction of
an edge $e = uv$ in a tree $T = (V ,E)$ refers to the removal of $e$ and
identification of $u$ and $v$. It is easy to verify that every refinement
$T$ of $T'$ also displays $T'$. However, the converse is not always true
since $L(T')\subsetneq L(T)$ and thus,
$\mathscr{C}(T')\not\subseteq \mathscr{C}(T)$ may be possible.

\subsection{(Reciprocal) best matches}

We consider a pair $T=(V,E)$ and $S=(W,F)$ of planted phylogenetic trees
together with a map $\sigma\colon L(T)\to L(S)$.  We interpret $T$ as a
\emph{gene tree} and $S$ as a \emph{species tree}; the map $\sigma$
describes, for each gene $x\in L(T)$, in the genome of which species
$\sigma(x)\in L(S)$ it resides.  W.l.o.g.\ we assume that the
``gene-species-association'' $\sigma$ is a surjective map to avoid trivial
cases. Since $\sigma$ can be viewed as a coloring of the leaves of $T$, we
call $(T,\sigma)$ a \emph{leaf-colored tree}. For $s\in L(S)$ we write
$L[s]:=\{x\in L(T)|\sigma(x)=s\}$.

\begin{definition}
  Let $(T,\sigma)$ be a leaf-colored tree. A leaf $y\in L(T)$ is a
  \emph{best match} of the leaf $x\in L(T)$ if $\sigma(x)\neq\sigma(y)$ and
  $\lca(x,y)\preceq_T \lca(x,y')$ holds for all leaves $y'$ from species
  $\sigma(y')=\sigma(y)$.  The leaves $x,y\in L(T)$ are \emph{reciprocal
    best matches} if $y$ is a best match for $x$ and $x$ is a best match
  for $y$.
\end{definition}
Neither best matches nor reciprocal best matches are unique. That is, a
gene $x$ may have two or more (reciprocal) best matches of the same color
$r\neq \sigma(x)$. Some orthology detection tools, such as
\texttt{ProteinOrtho} \cite{Lechner:11a}, explicitly attempt to extract all
reciprocal best matches from the sequence data.  Moreover, neither of the
two relations is transitive. These two properties are at odds e.g.\ with
the \emph{clusters of orthologous groups} (COGs) concept
\cite[cf.][]{Tatusov1997,tatusov2000cog,Roth:08}, which at least
conceptually presupposes unique reciprocal best matches.

The graph $\G(T,\sigma) = (V,E)$ with vertex set $V=L(T)$, vertex coloring
$\sigma$, and with arcs $(x,y)\in E$ if and only if $y$ is a best match of
$x$ w.r.t.\ $(T,\sigma)$ is known as the (colored) \emph{best match
  graph} of $(T,\sigma)$ \cite{Geiss:19a}. The symmetric part $G(T,\sigma)$
of $\G(T,\sigma)$ obtained by retaining the edges of $\G(T,\sigma)$ is the
(colored) \emph{reciprocal best match graph} \cite{Geiss:19b}.

\begin{definition}\label{def:BestMatchGraph}
  An arbitrary vertex-colored graph $(\G,\sigma)$ is a \emph{best match
    graph (BMG)} if there exists a leaf-colored tree $(T,\sigma)$ such that
  $(\G,\sigma) = \G(T,\sigma)$. In this case, we say that $(T,\sigma)$
  \emph{explains} $(\G,\sigma)$.  An arbitrary undirected vertex-colored
  graph $(G,\sigma)$ is a \emph{reciprocal best match graph (RBMG)} if it
  is the symmetric part of a BMG $(\G,\sigma)$.
\end{definition}

For the symmetric part of the BMG $(\G,\sigma)$, i.e., the RBMG
$(G,\sigma)$, we have $xy\in E(G)$ if and only if $x$ and $y$ are
reciprocal best matches in $(T,\sigma)$. In this sense, $(T,\sigma)$ also
explains $(G,\sigma)$.  We note, furthermore, that RBMGs are not associated
with a unique least resolved tree \cite{Geiss:19b}.

\subsection{Reconciliation maps, event-labeling, and orthology relations}

An \emph{evolutionary scenario} extends the map $\sigma\colon L(T)\to L(S)$
to an embedding of the gene tree into the species tree. It (implicitly)
describes different types of evolutionary events: speciations, gene
duplications, and gene losses. In this contribution we do not consider
other types of events such as horizontal gene transfer. Gene losses do not
appear explicitly since $L(T)$ only contains extant genes.  Inner vertices
in the gene tree $T$ that designate speciations have their correspondence
in inner vertices of the species tree. In contrast, gene duplications occur
independently of speciations and thus belong to edges of the species tree.
The embedding of $T$ into $S$ is formalized by
\begin{definition}[Reconciliation Map]
  Let $S=(W,F)$ and $T=(V,E)$ be two planted phylogenetic trees and let
  $\sigma\colon L(T) \to L(S)$ be a surjective map. A reconciliation from
  $(T,\sigma)$ to $S$ is a map $\mu\colon V \to W \cup F$ satisfying
  \begin{description}
    [itemsep=0.2ex, parsep=0cm, topsep=0.7ex,]
    \item[\emph{(R0)}] \emph{Root Constraint.} $\mu(x) = 0_S$ if and only if
    $x=0_T$.
    \item[\emph{(R1)}] \emph{Leaf Constraint.} If $x \in L(T)$, then
    $\mu(x)=\sigma(x)$.
    \item[\emph{(R2)}] \emph{Ancestor Preservation.} If $x \prec_T y$, then
    $\mu(x) \preceq_S \mu(y)$.
    \item[\emph{(R3)}] \emph{Speciation Constraints.} Suppose
    $\mu(x) \in W^0$ for some $x\in V$. Then
    \begin{enumerate}[label=(\roman*), itemsep=0.2ex, topsep=0.2ex,
      parsep=0cm]
      \item $\mu(x)=\lca_S(\mu(v'),\mu(v''))$ for at least two distinct
      children $v',v''$ of $x$ in $T$.
      \item $\mu(v')$ and $\mu(v'')$ are incomparable in $S$ for any two
      distinct children $v'$ and $v''$ of $x$ in $T$.
    \end{enumerate}
  \end{description}
  \label{def:reconc_map}
\end{definition}

Several alternative definitions of reconciliation maps for duplication/loss
scenarios have been proposed in the literature, many of which have been
shown to be equivalent. This type of reconciliation map has been
established in \cite{Geiss:20a}. Moreover, it has been shown in
\cite{Geiss:20a} that the axiom set used here is equivalent to axioms that
are commonly used in the literature, see e.g.\
\cite{Gorecki:06,Vernot:08,Doyon:11,Rusin:14,Hellmuth:17,Nojgaard:18a}, and
the references therein. Without any further constraints,
Def.~\ref{def:reconc_map} gives rise to a well-known result:
\begin{lemma}{\cite[Lemma~3]{Geiss:20a}}
  For every tree $(T, \sigma)$ there is a reconciliation map $\mu$ to any
  species tree $S$ with leaf set $L(S) = \sigma (L(T ))$.
  \label{lem:reconAll}
\end{lemma}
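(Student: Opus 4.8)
The plan is to give an \emph{explicit} reconciliation map and then verify axioms (R0)--(R3) directly; no induction is needed. The key observation is that Definition~\ref{def:reconc_map} only constrains those gene-tree vertices that are mapped to \emph{inner} vertices $W^0$ of $S$ (this is all of axiom (R3)); a vertex mapped to an \emph{edge} of $S$ --- a ``duplication'' --- is subject only to ancestor preservation (R2). It therefore suffices to construct a reconciliation in which every inner vertex of $T$ is a duplication, and this is where the work essentially disappears.

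Concretely, set $\mu(0_T)\coloneqq 0_S$, $\mu(x)\coloneqq\sigma(x)$ for every leaf $x\in L(T)$, and, for every inner vertex $x\in V^0(T)$, $\mu(x)\coloneqq e_x$, where $e_x$ is the edge of $S$ whose lower endpoint is $\tilde x\coloneqq\lca_S(\sigma(L(T(x))))$ and whose upper endpoint is $\parent_S(\tilde x)$. This is well defined: $\tilde x$ is the last common ancestor of a nonempty subset of $L(S)$, so $\tilde x\preceq_S\rho_S\prec_S 0_S$, whence $\tilde x$ has a parent in $S$; if $\tilde x=\rho_S$ this parent is $0_S$ and $e_x$ is the planted edge $0_S\rho_S$, which lies in $E(S)$ and is a legitimate target for $\mu$. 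Now (R1) and the ``if'' direction of (R0) hold by construction; the ``only if'' direction of (R0) holds because leaves of $T$ are sent into $L(S)$, $0_T$ to $0_S$, and inner vertices to edges of $S$, so nothing but $0_T$ is mapped to $0_S$. For the same reason the image of $\mu$ is disjoint from $W^0$, so (R3) is vacuously satisfied.

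The only substantive check is (R2). If $x\prec_T y$ then $L(T(x))\subseteq L(T(y))$, hence $\sigma(L(T(x)))\subseteq\sigma(L(T(y)))$, and therefore $\tilde x\preceq_S\tilde y$; unwinding the extended ancestor order on $V(S)\cup E(S)$ --- edge $e\preceq_S$ edge $f$ iff the lower endpoint of $e$ is $\preceq_S$ that of $f$; leaf $\prec_S$ edge iff the leaf is $\preceq_S$ the edge's lower endpoint; edge $\prec_S$ vertex iff the edge's upper endpoint is $\preceq_S$ the vertex --- this yields $\mu(x)\preceq_S\mu(y)$ in each of the possible cases ($x,y$ both inner; $x$ a leaf and $y$ inner; $x$ inner and $y=0_T$; $x$ a leaf and $y=0_T$, the last being trivial). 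The only place that needs care is precisely this bookkeeping with the edge/vertex order: keeping ``upper'' and ``lower'' endpoints straight, and noting that (R2) asks only for the \emph{non-strict} relation, so there is no problem when several nested inner vertices of $T$ share the same $\tilde{\,\cdot\,}$ and are thus mapped to a common edge of $S$. (One could instead use the classical lca-based reconciliation, declaring $x$ a speciation whenever this is consistent with (R3); the same monotonicity $x\mapsto\tilde x$ then drives the (R2) check, but (R3)(i)--(ii) require an additional case distinction that the all-duplications map above avoids entirely.)
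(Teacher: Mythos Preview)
Your proof is correct. The paper does not actually prove this lemma; it merely cites it as \cite[Lemma~3]{Geiss:20a}, so there is no in-paper argument to compare against.

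That said, your construction is the standard ``all-duplications'' reconciliation, and your verification of (R0)--(R3) is clean. The key insight --- that mapping every inner vertex of $T$ to an edge of $S$ makes (R3) vacuous --- is exactly the right shortcut here. The classical \textsc{lca} reconciliation you mention parenthetically (map $x$ to $\tilde x=\lca_S(\sigma(L(T(x))))$ whenever the children's images are pairwise incomparable, and to the edge above $\tilde x$ otherwise) is what one typically finds in the literature, and it yields the same monotonicity for (R2); your variant simply avoids the (R3) case analysis by never landing in $W^0$. The well-definedness check (that $\tilde x\preceq_S\rho_S\prec_S 0_S$, so the edge above $\tilde x$ exists) and the case-by-case unwinding of the extended $\preceq_S$ order for (R2) are both handled correctly.
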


The reconciliation map $\mu$ from $(T,\sigma)$ to $S$ determines 
the types of evolutionary events in $T$. This can be formalized by
associating an event labeling with the vertices of $T$. We use the notation
introduced in \cite{Geiss:20a}:
\begin{definition}
  Given a reconciliation map $\mu$ from $(T,\sigma)$ to $S$, the
  \emph{event labeling on $T$ (determined by $\mu$)} is the map
  $t_\mu:V(T)\to \{\ROOT,\LEAF,\SPEC,\DUPL\}$ given by:
  \begin{equation*}
  t_\mu(u) = \begin{cases}
  \ROOT & \, \text{if } u=0_T \text{, i.e., } \mu(u)=0_S  \text{ (root)}\\
  \LEAF & \, \text{if } u\in L(T) \text{, i.e., } \mu(u)\in L(S)
  \text{ (leaf)}\\
  \SPEC & \, \text{if } \mu(u)\in V^0(S)  \text{ (speciation)}\\
  \DUPL & \, \text{else, i.e., } \mu(u)\in E(S) \text{ (duplication)}\\
  \end{cases}
  \end{equation*}
  \label{def:event-rbmg}
\end{definition}

The following result is a simple but useful consequence of combining the
axioms of the reconciliation map with the event labeling of
Def.~\ref{def:event-rbmg}.
\begin{lemma}{\cite[Lemma~3]{Geiss:20a}}
  Let $\mu$ be a reconciliation map from $(T,\sigma)$ to a tree $S$ and
  suppose that $u\in V(T)$ is a vertex with $\mu(u)\in V^0(S)$ and thus,
  $t(\mu(u))=\SPEC$.  Then,
  $\sigma(L(T(v_1)))\cap \sigma(L(T(v_2))) = \emptyset$ for any two
  distinct $v_1,v_2\in \child(u)$.
  \label{lem:duplication_witness}
\end{lemma}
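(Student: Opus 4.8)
The plan is a short proof by contradiction that turns a shared color below $v_1$ and $v_2$ into a pair of $\preceq_S$-comparable images, thereby violating the Speciation Constraint (R3)(ii).

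First I would assume, for contradiction, that there is a color $s\in\sigma(L(T(v_1)))\cap\sigma(L(T(v_2)))$ for two distinct children $v_1,v_2\in\child(u)$. By definition of $L(T(v_i))$ there exist leaves $x_1\preceq_T v_1$ and $x_2\preceq_T v_2$ with $\sigma(x_1)=\sigma(x_2)=s$ (these are distinct leaves, since $T(v_1)$ and $T(v_2)$ are vertex-disjoint, though this is not needed below). Applying the Leaf Constraint (R1) gives $\mu(x_1)=\sigma(x_1)=s=\sigma(x_2)=\mu(x_2)\in L(S)$.

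Next I would push this color up through $\mu$ using Ancestor Preservation (R2): from $x_i\preceq_T v_i$ we obtain $\mu(x_i)\preceq_S\mu(v_i)$, that is $s\preceq_S\mu(v_1)$ and $s\preceq_S\mu(v_2)$ (in the degenerate case $x_i=v_i$ this holds trivially). Hence both $\mu(v_1)$ and $\mu(v_2)$ lie in the set $\{z\in V(S)\cup E(S)\mid s\preceq_S z\}$, which, under the ancestor order extended to vertices and edges as in Sec.~\ref{sec:prelim}, is exactly the chain of vertices and edges on the unique path from the leaf $s$ up to $0_S$. Therefore $\mu(v_1)$ and $\mu(v_2)$ are comparable in $S$. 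On the other hand, $\mu(u)\in V^0(S)$, so the Speciation Constraint (R3)(ii) of Def.~\ref{def:reconc_map} forces $\mu(v_1)$ and $\mu(v_2)$ to be incomparable — a contradiction. Consequently no such $s$ exists, and $\sigma(L(T(v_1)))\cap\sigma(L(T(v_2)))=\emptyset$.

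\textbf{Main obstacle.} There is no genuine difficulty; the argument is a direct chase through Def.~\ref{def:reconc_map}. The only point worth stating explicitly is that $\mu(v_1)$ and $\mu(v_2)$ may be \emph{edges} of $S$ rather than vertices, so ``both above $s$'' and the resulting comparability must be read with respect to the ancestor order extended to $V(S)\cup E(S)$; once one notes that the elements of $V(S)\cup E(S)$ lying above a fixed leaf form a chain, (R3)(ii) immediately delivers the contradiction.
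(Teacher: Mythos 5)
Your argument is correct: pushing the shared color $s$ up via (R1) and (R2) makes $\mu(v_1)$ and $\mu(v_2)$ both lie on the chain of $V(S)\cup E(S)$ above the leaf $s$, which contradicts the incomparability required by (R3)(ii), and you rightly flag that the images may be edges so the extended order must be used. Note that the paper itself gives no proof of this lemma (it is quoted from \cite{Geiss:20a}), and your derivation is essentially the standard argument used there.
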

We will regularly make use of the observation that, by contraposition of
Lemma~\ref{lem:duplication_witness},
$\sigma(L(T(v)))\cap \sigma(L(T(v'))) \ne \emptyset$ for two distinct
$v_1,v_2\in \child(u)$ implies that $\mu(u)\in E(S)$, and thus
$t_{\mu}(u)=\DUPL$.

Lemma~\ref{lem:duplication_witness} suggests to define \emph{event-labeled
  trees} as trees $(T,t)$ endowed with a map
$t: V(T)\to \{\ROOT,\LEAF,\SPEC,\DUPL\}$ such that $t(0_T)=\ROOT$ and
$t(u)=\LEAF$ for all $u\in L(T)$.  In \cite{Geiss:20a},
Lemma~\ref{lem:duplication_witness} also served as a motivation for
\begin{definition}
  Let $(T,\sigma)$ be a leaf-colored tree.  The \emph{extremal
    event labeling} of $T$ is the map
  $\tT:V(T)\to\{\ROOT,\LEAF,\SPEC,\DUPL\}$ defined for $u\in V(T)$ by
  \begin{equation*}
  \tT(u) = \begin{cases}
  \ROOT & \, \text{if } u=0_{T} \\
  \LEAF & \, \text{if } u\in L(T) \\
  \DUPL & \, \text{if there are two children } v_1,v_2\in \child(u)
  \text{ such that}\\
  & \qquad \sigma(L(T(v_1)))\cap \sigma(L(T(v_2)))\neq\emptyset\\
  \SPEC & \, \text{otherwise} \\
  \end{cases}
  \end{equation*}
  \label{def:extremal_labeling}
\end{definition} 
An example of an extremal event labeling is shown in
Fig.~\ref{fig:contradictory_triples} (rightmost tree).  The extremal event
labeling is closely related to the concept of apparent duplication (AD)
vertices often found in the literature
\cite[e.g.][]{Swenson:12,Lafond:14b}.  For a (binary) gene tree $T$ and a
reconciliation of $T$ with a species tree $S$, a duplication vertex of $T$
is an AD vertex if its two subtrees have at least one color in common.  In
contrast, it is a non-apparent duplication (NAD) vertex if the color sets
of its subtrees are disjoint.  This notion is useful for a variety of
parsimony problems that usually aim to avoid or minimize the number of NAD
vertices \cite{Swenson:12,Lafond:14b}.  However, the extremal event
labeling $\tT$ is completely defined by $(T,\sigma)$. That is, in contrast
to both the event labeling in Def.~\ref{def:event-rbmg} and the concept of
AD and NAD vertices, $\tT$ does not depend on a specific reconciliation
map. On the other hand, there is no guarantee that there always exists a
reconciliation map $\mu$ from $(T,\sigma)$ to some species tree $S$ such
that $t_{\mu} = \tT$, cf.\ \cite[Fig.~2]{Geiss:20a} and
Fig.~\ref{fig:contradictory_triples} in Sec.~\ref{ssect:algorithms} for
counterexamples. Nevertheless, we shall see below that the extremal
labeling is a key step towards identifying false-positive orthology
assignments.

The event labeling on $T$ defines the orthology graph. 
\begin{definition}
  The \emph{orthology graph} $\Theta(T,t)$ of an event-labeled tree $(T,t)$
  has vertex set $L(T)$ and edges $uv\in E(\Theta)$ if and only if
  $t(\lca(u,v))=\SPEC$.
  \label{def:ortho-graph}
\end{definition}
The orthology graph is often referred to as the orthology relation.
Orthology graphs coincide with a well-known graph class:
\begin{theorem}{\cite[Cor.~4]{Hellmuth:13a}}
  A graph $G$ is an orthology graph for some event-labeled tree $(T,t)$,
  i.e.\ $G=\Theta(T,t)$, if and only if $G$ is a cograph.
  \label{thm:ortho-cograph}
\end{theorem}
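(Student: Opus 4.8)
The plan is to prove the two implications separately, using the standard structure theory of cographs. Recall that the cographs are precisely the graphs obtained from single vertices by repeatedly forming disjoint unions and joins, and that every cograph $G$ admits a \emph{discriminating cotree}: a rooted tree with leaf set $V(G)$ in which every inner vertex has out-degree at least two, carries a label in $\{0,1\}$, and has a label different from that of its parent, such that $uv\in E(G)$ if and only if the lowest common ancestor of $u$ and $v$ carries label $1$. I would prove ``$G=\Theta(T,t)\Rightarrow G$ is a cograph'' by induction on the gene tree, and the converse by turning a discriminating cotree into an event-labelled planted phylogenetic tree.

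For the forward direction I would induct on $|L(T)|$; the base case $|L(T)|=1$ gives $G=K_1$, a cograph. In the inductive step let $\rho_T$ be the conventional root, which is an inner vertex and hence has children $v_1,\dots,v_k$ with $k\ge 2$. For $u\in L(T(v_i))$ and $w\in L(T(v_j))$ with $i\ne j$ one has $\lca_T(u,w)=\rho_T$, so all such pairs are edges of $G$ if $t(\rho_T)=\SPEC$ and none of them is an edge otherwise; thus $G$ is the join of the graphs $G[L(T(v_i))]$ when $t(\rho_T)=\SPEC$, and their disjoint union otherwise. Each $G[L(T(v_i))]$ equals $\Theta$ of the subtree $T(v_i)$ equipped with a fresh planted root labelled $\ROOT$ and the inherited labels on all other vertices; here one checks that this re-planting preserves the planted-phylogenetic structure and every $\lca$ of two leaves of $T(v_i)$, and that $\lca_T(u,w)$ lies inside $T(v_i)$ for such leaves. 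The induction hypothesis then makes each $G[L(T(v_i))]$ a cograph, and the cograph class is closed under join and disjoint union.

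For the converse, given a cograph $G$ I would take a discriminating cotree, attach a new planted root $0_T$ above its root $\rho_T$, and define $t$ by $t(0_T)=\ROOT$, $t(x)=\LEAF$ for $x\in V(G)=L(T)$, $t(x)=\SPEC$ for every inner vertex carrying cotree-label $1$, and $t(x)=\DUPL$ for every inner vertex carrying cotree-label $0$. Since every inner vertex of a discriminating cotree has out-degree at least two, every inner vertex of the resulting tree has degree at least three, so $T$ is a planted phylogenetic tree and $(T,t)$ is event-labelled. Attaching the planted root does not disturb lowest common ancestors of leaves, and for $u\ne v$ the vertex $\lca_T(u,v)$ is an inner vertex of the cotree; hence $t(\lca_T(u,v))=\SPEC$ precisely when that vertex had cotree-label $1$, i.e.\ precisely when $uv\in E(G)$, which gives $G=\Theta(T,t)$.

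The technical content here is light, and I expect the only mild obstacles to be bookkeeping: in the forward step one must be careful that re-equipping a subtree with a $\ROOT$-labelled planted root does produce a legitimate event-labelled tree and never affects the $\lca$ of two genes already contained in that subtree; and in the converse step one must guarantee the planted-phylogenetic degree constraint, which is exactly the reason for starting from the \emph{discriminating} cotree (no degree-two inner vertices, alternating labels) rather than an arbitrary one. The genuinely degenerate cases, namely $|V(G)|\le 1$, $G$ edgeless, and $G$ complete, should be noted explicitly but are immediate.
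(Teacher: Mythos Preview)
The paper does not provide its own proof of this theorem; it is quoted verbatim as a known result from \cite[Cor.~4]{Hellmuth:13a} and used as a black box throughout. Your proposal is a correct and standard argument for this characterization: the forward direction via the join/disjoint-union decomposition at the root, and the converse via the discriminating cotree with $1\mapsto\SPEC$ and $0\mapsto\DUPL$, are exactly how this equivalence is usually established. The bookkeeping points you flag (re-planting subtrees, guaranteeing the degree-$\ge 3$ condition by starting from the discriminating cotree, and the degenerate case $|V(G)|\le 1$ which falls outside the paper's standing assumption $|L(T)|\ge 2$) are the only places where care is needed, and you have identified them correctly.
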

One of many equivalent characterizations of cographs identifies them with
the graphs that do not contain an induced path $P_4$ on four vertices
\cite{CORNEIL:81}.

The orthology graph is a subgraph of the RBMG (and thus also of the BMG)
for any given reconciliation map connecting a gene with a species tree.
\begin{theorem}{\cite[Lemma~4~\&~5]{Geiss:20a}}
  Let $(T,\sigma)$ be a leaf-colored tree and $\mu$ a reconciliation map
  from $(T,\sigma)$ to some species tree $S$. Then
  $\Theta(T,t_{\mu}) \subseteq \Theta(T,\tT)\subseteq G(T,\sigma) \subseteq
  \G(T,\sigma)$.
  \label{thm:extrem-ortho}
\end{theorem}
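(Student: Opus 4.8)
The plan is to prove the three inclusions in the chain separately. The rightmost one, $G(T,\sigma)\subseteq\G(T,\sigma)$, is immediate: $G(T,\sigma)$ is by construction the symmetric part of $\G(T,\sigma)$, hence it has the same vertex set $L(T)$ and every one of its edges is a (two-sided) arc of $\G(T,\sigma)$. The remaining two inclusions concern graphs that all share the vertex set $L(T)$, so in each case only edge containment has to be checked.

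For $\Theta(T,t_{\mu})\subseteq\Theta(T,\tT)$, take an edge $xy\in E(\Theta(T,t_{\mu}))$ and put $w\coloneqq\lca_T(x,y)$; by Def.~\ref{def:ortho-graph} this means $t_{\mu}(w)=\SPEC$, i.e.\ $\mu(w)\in V^0(S)$. Since $x\neq y$ are leaves and $0_T$ has degree one, $w$ is an inner vertex, so $\tT(w)\in\{\SPEC,\DUPL\}$. Applying Lemma~\ref{lem:duplication_witness} to $w$ yields $\sigma(L(T(v_1)))\cap\sigma(L(T(v_2)))=\emptyset$ for all distinct $v_1,v_2\in\child(w)$, which by Def.~\ref{def:extremal_labeling} rules out $\tT(w)=\DUPL$. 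Hence $\tT(w)=\SPEC$ and $xy\in E(\Theta(T,\tT))$.

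The substantive step is $\Theta(T,\tT)\subseteq G(T,\sigma)$. Let $xy\in E(\Theta(T,\tT))$, set $w\coloneqq\lca_T(x,y)$, and let $a,b\in\child(w)$ be the distinct children with $x\preceq_T a$ and $y\preceq_T b$. From $\tT(w)=\SPEC$ and Def.~\ref{def:extremal_labeling} we get $\sigma(L(T(a)))\cap\sigma(L(T(b)))=\emptyset$, so in particular $\sigma(x)\neq\sigma(y)$. To see that $y$ is a best match of $x$, suppose for contradiction that some $y'$ with $\sigma(y')=\sigma(y)$ satisfies $\lca_T(x,y')\prec_T w$. The ancestors of $x$ form a $\preceq_T$-chain on which $a$ is the predecessor of $w$, so $\lca_T(x,y')\preceq_T a$, forcing $y'\in L(T(a))$ and hence $\sigma(y')\in\sigma(L(T(a)))$; this contradicts $\sigma(y')=\sigma(y)\in\sigma(L(T(b)))$. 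Thus $\lca_T(x,y)\preceq_T\lca_T(x,y')$ for every leaf $y'$ of color $\sigma(y)$, i.e.\ $y$ is a best match of $x$; interchanging the roles of $(x,a)$ and $(y,b)$ shows $x$ is a best match of $y$, so $xy\in E(G(T,\sigma))$.

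I expect the only real obstacle to lie in this last inclusion, and within it in the observation that a hypothetical ``closer'' best match $y'$ is necessarily confined to $T(a)$ — this relies on the linearity of the ancestor order along the path from $x$ to the root together with the fact that $a$ is precisely the child of $w$ through which $x$ descends. The other two inclusions amount to bookkeeping with the definitions plus a single invocation of Lemma~\ref{lem:duplication_witness}.
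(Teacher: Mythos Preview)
Your proof is correct. The paper does not supply its own proof of this theorem; it is quoted as a known result from \cite{Geiss:20a}. Your argument follows the natural route: the first inclusion is Lemma~\ref{lem:duplication_witness} combined with Def.~\ref{def:extremal_labeling}, and your argument for $\Theta(T,\tT)\subseteq G(T,\sigma)$ is exactly the ``if'' direction of Lemma~\ref{lem:edge-xy-lca} (which the paper states and proves later in the technical part), specialized to the situation where \emph{all} pairs of children of $w$ have disjoint color sets rather than just the pair containing $\sigma(x)$ and $\sigma(y)$.
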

In particular, $t_{\mu}(v) =\SPEC$ implies $\tT(v) =\SPEC$ for any
reconciliation map. By contraposition, therefore, if $\tT(v) =\DUPL$ then
$t_{\mu}(v) =\DUPL$ for all possible reconciliation maps $\mu$ from
$(T,\sigma)$ to any species tree $S$.  A crucial implication of
Thm.~\ref{thm:extrem-ortho} is that edges in a BMG $\G(T,\sigma)$ always
correspond to either correct orthologous pairs of genes or false-positive
orthology assignments.  Hence, $\G(T,\sigma)$ never contains false-negative
orthology assignments.

\section{False-positive orthology assignments}
\label{sec:false-positives}

As discussed in the introduction, we are not concerned here with the errors
that arise in the reconstruction of best matches from sequence similarity
data. We therefore assume that we are given a BMG $(\G,\sigma)$ as
specified in Def.~\ref{def:BestMatchGraph}. More precisely, we assume that
$(\G,\sigma)$ derives from a duplication/loss scenario that is unknown to
us. Denote by $(\widetilde{T},\widetilde{t},\sigma)$ the corresponding true
leaf-colored and event-labeled gene tree.  An edge $xy$ of $(\G,\sigma)$,
or equivalently of the corresponding RBMG $(G,\sigma)$, is a false-positive
orthology assignment if $xy\in E(G)$ but
$xy\notin E(\Theta(\widetilde{T},\widetilde{t}))$.  By
Thm.~\ref{thm:extrem-ortho}, $(G,\sigma)$ cannot contain false-negative
orthology assignments, i.e., there is no
$xy\in E(\Theta(\widetilde{T},\widetilde{t}))$ with $xy\notin E(G)$. We
assume no additional information about the gene tree or the species tree,
i.e., the only data about the evolutionary scenario that is available to us
is the BMG $(\G,\sigma)$.

In order to study false-positive orthology assignments, we first consider a
tree $(T,\sigma)$ that explains the BMG $(\G,\sigma)$. We neither make the
assumption that $(T,\sigma)$ is least resolved nor that $(T,\sigma)$
reflects the true history, i.e., that $(T,\sigma)$ is related to the true
gene tree $(\widetilde{T},\sigma)$.
\begin{cdefinition}{\ref{def:Ts-fp}}[$\mathbf{(T,\sigma)}$-false-positive]
  Let $(T,\sigma)$ be a tree explaining the BMG $(\G,\sigma)$.  An edge
  $xy$ in $\G$ is called \emph{$(T,\sigma)$-false-positive}, or
  $(T,\sigma)$-\fp for short, if for every reconciliation map $\mu$ from
  $(T,\sigma)$ to any species tree $S$ we have $t_\mu(\lca_T(x,y))=\DUPL$,
  i.e., $\mu(\lca_T(x,y))\in E(S)$,
\end{cdefinition}
In other words, $xy$ is called $(T,\sigma)$-\fp whenever $x$ and $y$ cannot
be orthologous w.r.t.\ any possible reconciliation $\mu$ from $(T,\sigma)$
to any species tree. Interestingly, $(T,\sigma)$-\fp{}s can be identified
without considering reconciliation maps explicitly.

\begin{clemma}{\ref{lem:T-fp-no-mu}}
  Let $(\G,\sigma)$ be a BMG, $xy$ be an edge in $\G$ and $(T,\sigma)$ be a
  tree that explains $(\G,\sigma)$. Then, the following statements are
  equivalent:
  \begin{enumerate}[itemsep=0.2ex, topsep=0.2ex, parsep=0cm]
    \item The edge $xy$ is $(T,\sigma)$-\fp.
    \item There are two children $v_1$ and $v_2$ of $\lca_T(x,y)$ such that
    $\sigma(L(T(v_1)))\cap \sigma(L(T(v_2)))\neq\emptyset$.
    \item For the extremal labeling $\tT$ of $(T,\sigma)$ it holds that
    $\tT(\lca_T(x,y)) = \DUPL$.
  \end{enumerate}
\end{clemma}

Lemma~\ref{lem:T-fp-no-mu} implies that $(T,\sigma)$-\fp can be verified in
polynomial time for any given gene tree $(T,\sigma)$.  By contraposition of
Lemma~\ref{lem:duplication_witness}, inner vertices with two distinct
children $v_1$ and $v_2$ satisfying
$\sigma(L(T(v_1)))\cap \sigma(L(T(v_2)))\neq\emptyset$ are duplication
vertices for every possible reconciliation map to every possible species
tree.  Therefore, the property of being an AD vertex only depends on
$(T,\sigma)$. In particular, $(T,\sigma)$-\fp edges coincide with the edges
$xy$ in $(\G,\sigma)$ for which $\lca_{T}(x,y)$ is an AD vertex.

\begin{figure}[t]
  \begin{center}
    \includegraphics[width=0.75\textwidth]{./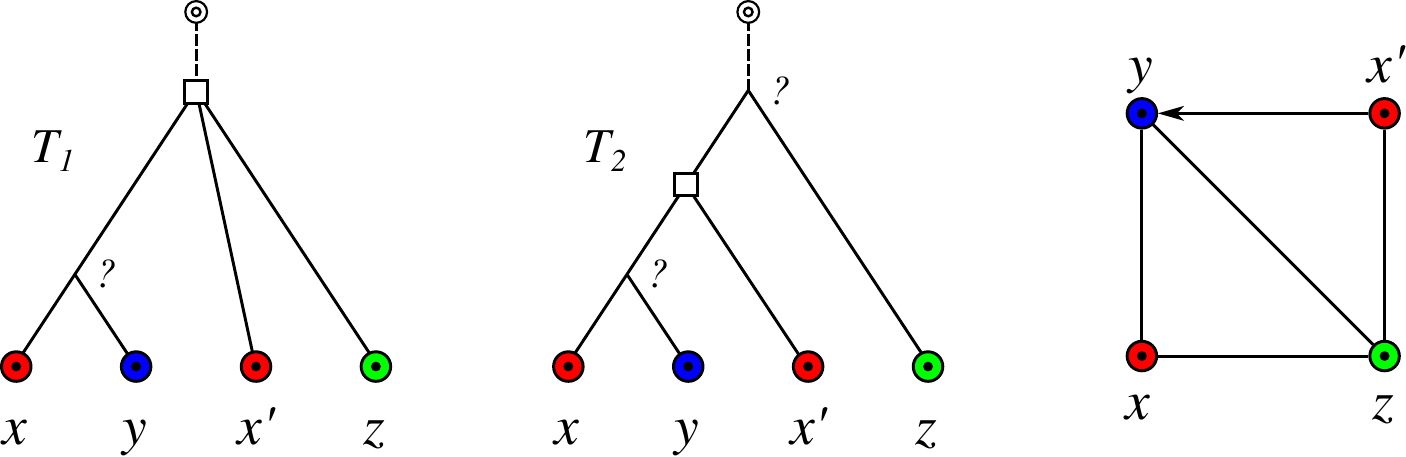}
  \end{center}
  \caption{The BMG $(\G,\sigma)$ shown on the right is explained by both
    $(T_1,\sigma)$, which is the unique least resolved tree for
    $(\G,\sigma)$, and $(T_2,\sigma)$. The vertices labeled $\DUPL$ must be
    duplications due to Lemma~\ref{lem:duplication_witness}, whereas the
    vertices labeled ``?'' could be both duplications or speciations. The
    edges $xz$, $x'z$ and $yz$ are $(T_1,\sigma)$-\fp but not
    $(T_2,\sigma)$-\fp (cf.\ Lemma~\ref{lem:T-fp-no-mu}).  Thus, neither of
    the edges $xz$, $x'z$ and $yz$ is \ufp.}
  \label{fig:non_binary_tree_example}
\end{figure}

As shown in Fig.~\ref{fig:non_binary_tree_example}, there are trees
$(T_1,\sigma)$ and $(T_2,\sigma)$ that explain the same BMG for which,
however, the edges $xz$, $x'z$, and $yz$ are $(T_1,\sigma)$-\fp but not
$(T_2,\sigma)$-\fp. Since we assume that no information on $(T,\sigma)$ is
available \emph{a priori}, it is natural to consider the set of edges that
are false positives for all trees explaining a given BMG.

\begin{cdefinition}{\ref{def:ufp}}[Unambiguous false-positive]
  Let $(\G,\sigma)$ be a BMG. An edge $xy$ in $\G$ is called
  \emph{unambiguous false-positive (\ufp)} if for all trees $(T,\sigma)$
  that explain $(\G,\sigma)$ the edge $xy$ is $(T,\sigma)$-\fp.
\end{cdefinition}

Hence, if an edge $xy$ in $\G$ is \ufp, then it is in particular
$(T,\sigma)$-\fp in the true history that explains $(\G,\sigma)$.  Thus,
\ufp edges are always correctly identified as false positives. Not all
``correct'' false-positive edges are \ufp, however. It is possible that,
for an edge $xy$ in $\G$, we have $t_\mu(\lca_T(x,y))=\DUPL$ for the true
gene tree and the true species tree, but $xy$ is not $(T',\sigma)$-\fp for
some gene tree $(T',\sigma)$ possibly different from $(T,\sigma)$. One of
the simplest examples is shown in Fig.~\ref{fig:compl_loss}, assuming that
$(\G,\sigma)$ is the ``true'' BMG. Since
$t_\mu(\lca_{\widetilde{T}}(x,y))=\SPEC$ may be possible
(Fig.~\ref{fig:compl_loss}, leftmost scenario, the edge $xy$ is not
$(\widetilde{T},\sigma)$-\fp and therefore not \ufp.

\section{Main results}
\label{sec:main-results}

\subsection{Characterization of \ufp edges}
\label{ssec:charac-ufp}

In order to adapt the concept of AD vertices for our purposes, we introduce
the color-intersection $\Scap$ associated with a gene tree
$(T,\sigma)$. For a pair of distinct leaves $x,y\in L(T)$ we denote by
$v_x, v_y \in \child_T(\lca_T(x,y))$ the unique children of the last common
ancestor of $x$ and $y$ for which $x\preceq_T v_x$ and $y\preceq_T
v_y$. That is, $T(v_x)$ and $T(v_y)$ are the subtrees of $T$ rooted in the
children of $\lca_T(x,y)$ with $x\in L(T(v_x))$ and $y\in L(T(v_y))$. The
set
\begin{equation*}
\mathcal{S}_T^{\cap}(x,y)\coloneqq\sigma(L(T(v_x)))\cap\sigma(L(T(v_y)))
\end{equation*} 
contains the colors, i.e.\ species, that are common to both subtrees.  The
existence of common colors, $\mathcal{S}_T^{\cap}(x,y)\ne\emptyset$,
determines whether or not the inner vertex $\lca_T(x,y)$ is
AD. Lemma~\ref{lem:Scap} (Sec.~\ref{APP:ssec:CI}) shows that the
color-intersection $\mathcal{S}_T^{\cap}(x,y)$ of an edge in a BMG
$(\G,\sigma)$ is independent of the corresponding tree. Hence, it suffices
to consider the color-intersection for the unique least resolved tree
$(T^*,\sigma)$ explaining $(\G,\sigma)$. From here on, we drop the explicit
reference to the tree and simply write $\Scap(x,y)$; see also
Remark~\ref{rem:Scap} in Sec.~\ref{APP:ssec:CI}. The color-intersection
provides a sufficient condition for \ufp edges in a BMG.
\begin{emptyTHM}{Prop.~\ref{prop:color_intersection_dupl} and
    Cor.~\ref{cor:Tfp-Scap-noneqi}}
  Every edge $xy$ in a BMG $(\G,\sigma)$ with $\Scap(x,y)\ne\emptyset$ is
  $(T,\sigma)$-\fp for every tree $(T,\sigma)$ that explains $(\G,\sigma)$,
  and thus \ufp.
\end{emptyTHM}

As we shall see below, the converse of
Prop.~\ref{prop:color_intersection_dupl} and Cor.~\ref{cor:Tfp-Scap-noneqi}
is not true in general. It does hold for the special case of binary trees,
however:
\begin{ctheorem}{\ref{thm:ufp-binary}}
  Let $(\G,\sigma)$ be a BMG that is explained by a binary tree
  $(T,\sigma)$. Then, for every edge $xy$ in $(\G,\sigma)$, the following
  three statements are equivalent:
  \begin{enumerate}[itemsep=0.2ex, topsep=0.2ex, parsep=0cm]
    \item The edge $xy$ is $(T,\sigma)$-\fp.
    \item $\Scap(x,y)\ne\emptyset$.
    \item The edge $xy$ is \ufp.
  \end{enumerate}
\end{ctheorem}
Prop.~\ref{prop:binary-iff-hourglass-free} in Sec.~\ref{ssect:quart}
provides a characterization of BMGs that can be explained by binary trees;
a property that can be tested in polynomial time (cf.\
Cor.~\ref{cor:binary-polytime}). However, not every BMG can be explained by
a binary tree as shown by the simple example in
Fig.~\ref{fig:hourglasses}(A).  This BMG can only be explained by the
unique non-binary tree as shown in Fig.~\ref{fig:hourglasses}(B).

Since every orthology graph is a cograph (Thm.~\ref{thm:ortho-cograph}) and
thus free of induced $P_4$s, every induced $P_4$ in the RBMG necessarily
contains a false-positive orthology assignments. The subgraphs of the BMG
spanned by a $P_4$ in its symmetric part (i.e., the RBMG) are known as
quartets. The quartets on three colors of a BMG $(\G,\sigma)$ fall into
three distinct classes depending on the coloring and the additional,
non-symmetric edges (cf.\ \cite[Lemma~32]{Geiss:19b}). We write
$\langle abcd \rangle$ or, equivalently, $\langle dcba \rangle$ for an
induced $P_4$ with edges $ab$, $bc$, and $cd$.

\begin{cdefinition}{\ref{def:GoodBadUgly}}[Good, bad, and ugly quartets]
  Let $(\G,\sigma)$ be a BMG with symmetric part $(G,\sigma)$ and vertex
  set $L$, and let $Q\coloneqq \{x,y,z,z'\} \subseteq L$ with $x\in L[r]$,
  $y\in L[s]$, and $z,z'\in L[t]$. The set $Q$, resp., the induced subgraph
  $(\G[Q],\sigma_{|Q})$ is
  \begin{itemize}[itemsep=1.2ex, topsep=0.2ex, parsep=0cm]
    \item[] a \emph{good quartet} if (i) $\langle zxyz'\rangle$ is an induced
    $P_4$ in $(G,\sigma)$ and (ii) $(z,y),(z',x)\in E(\G)$ and
    $(y,z),(x,z')\notin E(\G)$,
    \item[] a \emph{bad quartet} if (i) $\langle zxyz'\rangle$ is an induced
    $P_4$ in $(G,\sigma)$ and (ii) $(y,z),(x,z')\in E(\G)$ and
    $(z,y),(z',x)\notin E(\G)$,
    \item[] an \emph{ugly quartet} if $\langle zxz'y\rangle$ is an induced
    $P_4$ in $(G,\sigma)$.
  \end{itemize}
  \noindent The edge $xy$ in a good quartet $\langle zxyz'\rangle$ is its
  \emph{middle} edge. The edge $zx$ of an ugly quartet
  $\langle zxz'y\rangle$ or a bad quartet $\langle zxyz'\rangle$ is called
  its \emph{first} edge.  First edges in ugly quartets are uniquely
  determined due to the colors.  In bad quartets, this is not the case and
  therefore, the edge $yz'$ in $\langle zxyz'\rangle$ is a first edge as
  well.
\end{cdefinition}

\begin{figure}[t]
  \begin{center}
    \includegraphics[width=0.7\textwidth]{./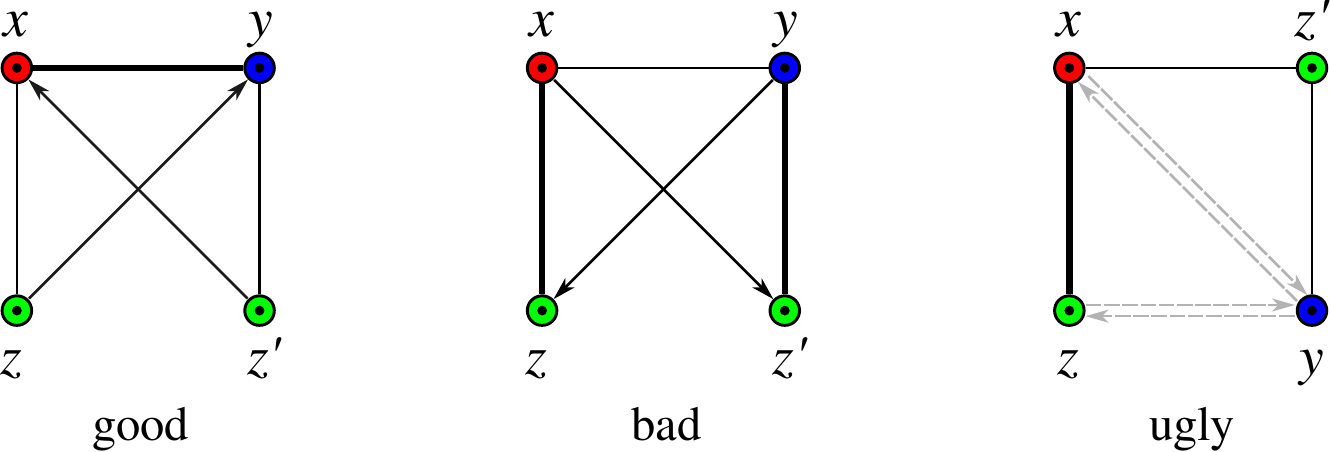}
  \end{center}
  \caption[]{The three types of quartets in BMGs. Ugly quartets may or may
    not contain either of the two (dashed) arcs between $x$ and $y$, and
    $y$ and $z$, respectively. Bold edges highlight the middle and first
    edges of the respective quartets as specified in
    Def.~\ref{def:GoodBadUgly}.}
  \label{fig:good_bad_ugly}
\end{figure}

The three different types of quartets are shown in
Fig.~\ref{fig:good_bad_ugly}. RBMGs never contain induced $P_4$s on two
colors \cite[Obs.~5]{Geiss:19b}.  This, in particular, implies that for the
induced $P_4$s in Def.~\ref{def:GoodBadUgly} the colors $r$, $s$, and $t$
must be pairwise distinct. Note that (R)BMGs may also contain induced
$P_4$s on four colors. These are investigated in some more detail in
Secs.~\ref{ssect:quart} and~\ref{APP:ssec-4colP4}.

Good quartets are characteristic of a complementary gene loss (as shown in
Fig.~\ref{fig:compl_loss}) that is ``witnessed'' by a third species in
which both child branches of the problematic duplication event
survive. That is, good quartets appear if there is a pair of genes $z$ and
$z'$ with $\sigma(z)=\sigma(z')$ and $\lca(z,z')=\lca(x,y)$ in the true
gene tree. We remark that previous work also noted that complementary gene
loss can be resolved successfully under certain circumstances
\cite{Dessimoz:2006} such as this one. An in-depth analysis of quartets
shows that they can be used to identify many of the \ufp edges. We collect
here the main results of Sec.~\ref{APP:ssec:quartets}:
\begin{emptyTHM}{Prop.~\ref{prop:good_quartet_middle_edge}, \ref{prop:bad} and 
    \ref{prop:ugly_quartet}}
  Let $\mathcal{Q} = \langle xyzw \rangle$ be a quartet in a BMG
  $(\G,\sigma)$.
  \begin{description}[nolistsep]
    \item[(i)] If $\mathcal{Q}$ is good, then its middle edge $yz$ is \ufp.
    \item[(ii)] If $\mathcal{Q}$ is ugly, then its first edge $xy$ and its
    middle edge $yz$ are \ufp.
    \item[(iii)] If $\mathcal{Q}$ is bad, then its first edges $xy$ and $zw$
    are \ufp.
  \end{description}
\end{emptyTHM}
Not surprisingly, quartets are intimately linked to color-intersections:
\begin{ccorollary}{\ref{cor:ufp-quartets}}
  Let $(\G,\sigma)$ be a BMG that contains the edge $xy$. Then,
  $\Scap(x,y)\ne\emptyset$ implies that $xy$ is either the middle edge of
  some good quartet or the first edge of some ugly quartet, which in turn
  implies that $xy$ is \ufp.
\end{ccorollary}
All \ufp edges $xy$ with $\Scap(x,y)\ne\emptyset$ in~$(\G,\sigma)$ are
therefore completely determined by the middle edges of good quartets and
the first edges of ugly quartets.  In particular, not all such edges are
the middle edge of a good quartet as the example in
Fig.~\ref{fig:ugly_quartet} shows.  Therein, the edge $xy$ must be \ufp
since $\Scap(x,y)=\{\sigma(z)\}\ne\emptyset$ (cf.\
Prop.~\ref{prop:color_intersection_dupl}). The only good quartet is
$\langle zx'yz'\rangle$ identifying $x'y$ as \ufp.  Moreover, $(\G,\sigma)$
does not contain any bad quartet.  The edge $xy$, on the other hand, is the
first edge of the ugly quartet $\langle xyx'z\rangle$.

\begin{figure}[t]
  \begin{center}
    \includegraphics[width=0.85\textwidth]{./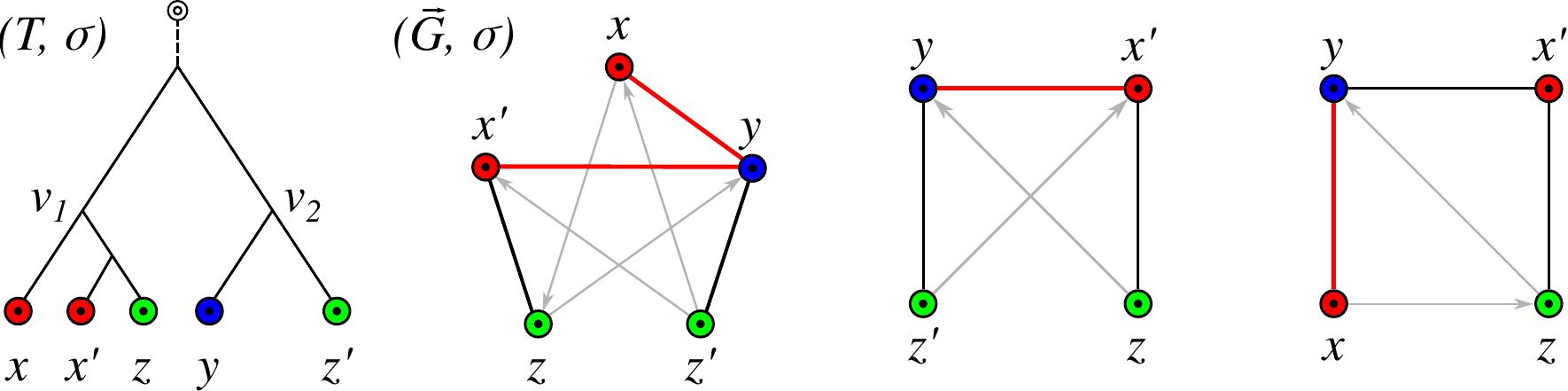}
  \end{center}
  \caption{Example for a $(T,\sigma)$-\fp edge $xy$ in $(\G,\sigma)$ which
    is not the middle edge of a good quartet, but the first edge in an ugly
    quartet (right). Note, $(\G,\sigma)$ does not contain bad quartets.}
  \label{fig:ugly_quartet}
\end{figure}

Furthermore, if an edge $xy$ is the middle edge of a good quartet, then
$\Scap(x,y)\ne\emptyset$. Therefore, only ugly quartets may provide
additional information about \ufp edges that are not identified with the
help of the color-intersection $\Scap$ (see
Fig.~\ref{fig:Scap_empty_ugly_quartet} in Sec.~\ref{APP:ssec:quartets} for
an example). Ugly quartets, however, do not convey all the missing
information on \ufp edges. The edge $xy$ in the BMG shown in
Fig.~\ref{fig:hourglasses}(A) is \ufp, but it is not contained in a good,
bad, or ugly quartet.

In order to characterize the \ufp edges that are not identified by
quartets, we first introduce an additional motif that may occur in
vertex-colored graphs.
\begin{cdefinition}{\ref{def:hourglass}}[Hourglass]
  An \emph{hourglass} in a proper vertex-colored graph $(\G,\sigma)$,
  denoted by $[xy \hourglass x'y']$, is a subgraph $(\G[Q],\sigma_{|Q})$
  induced by a set of four pairwise distinct vertices
  $Q=\{x, x', y, y'\}\subseteq V(\G)$ such that (i)
  $\sigma(x)=\sigma(x')\ne\sigma(y)=\sigma(y')$, (ii) $xy$ and $x'y'$ are
  edges in $\G$, (iii) $(x,y'),(y,x')\in E(\G)$, and (iv)
  $(y',x),(x',y)\notin E(\G)$.
\end{cdefinition}
Note that Condition (i) rules out arcs between $x,x'$ and $y,y'$,
respectively, i.e., the only arcs in an hourglass are the ones specified by
Conditions (ii) and (iii).  An example is shown in
Fig.~\ref{fig:hourglasses}(A).
\begin{cfact}{\ref{obs:hourbmg}}
  Every hourglass is a BMG since it can be explained by a tree as shown in
  Fig.~\ref{fig:hourglasses}(B).
\end{cfact}
Hourglasses are not necessarily part of an induced $P_4$. In particular, an
hourglass does not contain an induced $P_4$ (see
Fig.~\ref{fig:hourglasses}(A)).

\begin{figure}[t]
  \begin{center}
    \includegraphics[width=0.85\textwidth]{./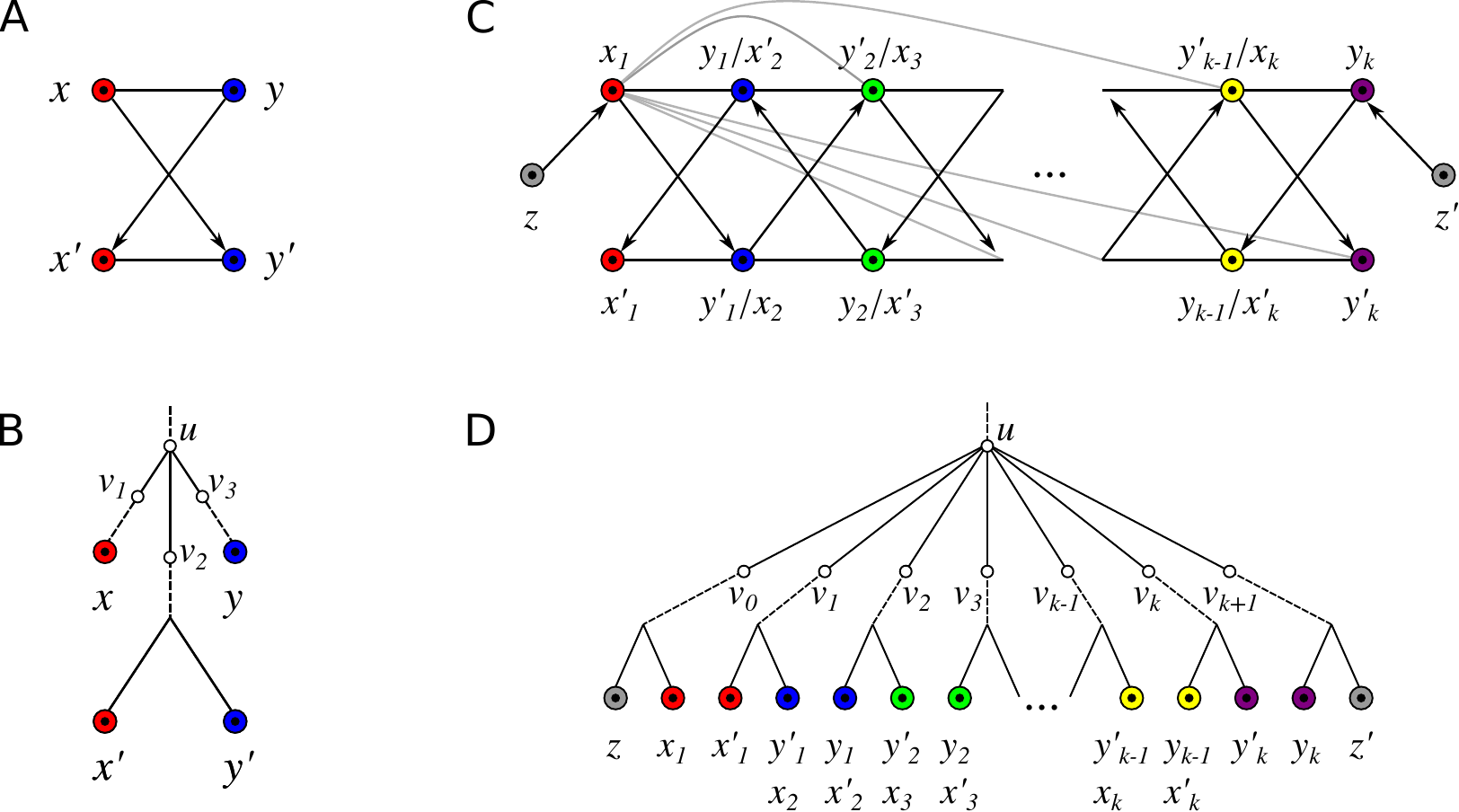}
  \end{center}
  \caption{A: Hourglass.  B: Visualization of Lemma~\ref{lem:hourglass}. C:
    Hourglass chain with left tail $z$ and right tail $z'$ for an odd
    number of hourglasses in the chain. Edges of the form
    $x_i y'_j\in E(G)$ are only shown for $x_1$, the others are omitted. An
    hourglass chain $\mathfrak{H}$ is a subgraph but not necessarily
    induced and thus additional arcs may exist.  In particular, the
    elements $e\in\{x_1y_k, zy_k, x_1z', zz'\}$ are not necessarily edges
    in an hourglass chain. However, whenever they exist, they are \ufp
    (cf.\ Lemma~\ref{lem:hourglass_chain_dupl}).  Moreover, each single
    hourglass in $\mathfrak{H}$ is an induced subgraph of the BMG; by
    definition, therefore, there are no arcs $(z,x'_1)$ or
    $(z',y'_k)$. Note, $\sigma(z)\neq \sigma(z')$ is possible.  D:
    Visualization of Lemmas~\ref{lem:hourglass_chain}
    and~\ref{lem:hourglass_chain_tails}.}
  \label{fig:hourglasses}
\end{figure}

Hourglasses $[xy \hourglass x'y']$ can be used to identify false-positive
edges $xy$ with $\Scap(x,y)=\emptyset$.  More precisely, we have
\begin{cproposition}{\ref{prop:singeHG-ufp}}
  If a BMG $(\G,\sigma)$ contains an hourglass $[xy \hourglass x'y']$, then
  the edge $xy$ is \ufp.
\end{cproposition}

Prop.~\ref{prop:singeHG-ufp} implies that there are \ufp edges that are not
contained in a quartet, see Fig.~\ref{fig:hourglasses}(A). In this example,
we have $\Scap(x,y)=\emptyset$ and no induced $P_4$. However, as shown in
Fig.~\ref{fig:hourglasses}(B), the subtree $T(v_2)$ contains both colors
$\sigma(x)$ and $\sigma(y)$ and thus, ``bridges'' the color sets of the
subtrees $T(v_1)$ and $T(v_3)$.  Similarly, in the tree $(T,\sigma)$ in
Fig.~\ref{fig:hourglasses}(D), each subtree $T(v_i)$, $1\leq i \leq k$
``bridges'' the color sets of the subtrees $T(v_{i-1})$ and
$T(v_{i+1})$. This observation suggests the concept of hourglass chains, a
generalization of hourglasses.
\begin{cdefinition}{\ref{def:hc}}[Hourglass chain]
  An \emph{hourglass chain} $\mathfrak{H}$ in a graph $(\G,\sigma)$ is a
  sequence of $k\ge 1$ hourglasses
  $[x_1 y_1 \hourglass x'_1 y'_1],\dots,[x_k y_k \hourglass x'_k y'_k]$
  such that the following two conditions are satisfied for all
  $i\in\{1,\dots,k-1\}$:
  \begin{description}[itemsep=0.2ex, topsep=0.2ex, parsep=0cm]
    \item[\emph{(H1)}] $y_i=x'_{i+1}$ and $y'_i=x_{i+1}$, and
    \item[\emph{(H2)}] $x_i y'_j$ is an edge in $\G$ for all
    $j\in\{i+1,\dots,k\}$
  \end{description}
  A vertex $z$ is called a \emph{left} (resp., \emph{right}) \emph{tail} of
  the hourglass chain $\mathfrak{H}$ if it holds that $(z,x_1)\in E(\G)$
  and $(z,x'_1)\notin E(\G)$ (resp., $(z,y_k)\in E(\G)$ and
  $(z,y'_k)\notin E(\G)$).  We call $\mathfrak{H}$ \emph{tailed} if it has
  a left or right tail.
\end{cdefinition}
In contrast to the quartets and the hourglass, an hourglass chain in
$(\G,\sigma)$ is not necessarily an induced subgraph.  Hourglass chains are
``overlapping'' hourglasses.  The additional condition that
$x_i y'_j\in E(G)$ for all $1\le i<j\le k$ ensures that the two pairs
$x'_k,y'_k$ and $x'_l,y'_l$ with $k\ne l$ cannot lie in the same subtree
below the last common ancestor $u$ which is common to all hourglasses in
the chain (cf.\ Lemma~\ref{lem:hourglass_chain}
and~\ref{lem:hourglass_chain_tails} in Sec.~\ref{APP:ssect:hourglass}).

\begin{cdefinition}{\ref{def:hug-edge}}
  An edge $xy$ in a vertex-colored graph $(\G,\sigma)$ is a
  \emph{hug-edge} if it satisfies at least one of the following
  conditions:
  \begin{description}[itemsep=0.2ex, topsep=0.2ex, parsep=0cm]
    \item[\emph{(C1)}] $xy$ is the middle edge of a good quartet in
    $(\G,\sigma)$;
    \item[\emph{(C2)}] $xy$ is the first edge of an ugly quartet in
    $(\G,\sigma)$; or
    \item[\emph{(C3)}] there is an hourglass chain
    $\mathfrak{H}=[x_1 y_1 \hourglass x'_1 y'_1],\dots,[x_k y_k \hourglass
    x'_k y'_k]$ in $(\G,\sigma)$, and one of the following cases holds:
    \begin{enumerate}[noitemsep, topsep=0.2ex, parsep=0cm, nolistsep]
      \item $x_1=x$ and $y_k=y$;
      \item $y_k=y$ and $z\coloneqq x$ is a left tail of $\mathfrak{H}$;
      \item $x_1=x$ and $z'\coloneqq y$ is a right tail of $\mathfrak{H}$; or
      \item $z\coloneqq x$ is a left tail and $z'\coloneqq y$ is a right tail
      of $\mathfrak{H}$.
    \end{enumerate}
  \end{description}
\end{cdefinition}  
The term \textbf{hug}-edge refers to the fact that $xy$ is a particular
edge of an \textbf{h}ourglass-chain, an \textbf{u}gly quartet, or a
\textbf{g}ood quartet. In
Sec.~\ref{APP::ssec-sec:augmented-extremal-labeling}, we show that
hug-edges coincide with the \ufp edges.

\begin{ctheorem}{\ref{thm:ufp-iff-hug}}
  An edge $xy$ in a BMG $(\G,\sigma)$ is \ufp if and only if $xy$ is a
  hug-edge of $(\G,\sigma)$.
\end{ctheorem}
Interestingly, bad quartets turn out to be redundant for the identification
of \ufp edges in the sense that every \ufp edge in a bad quartet appears as
a \ufp edge in a good quartet, an ugly quartet, or an hourglass chain.  At
present, we do not know whether hourglass chains in a colored graph
$(\G,\sigma)$ can be found efficiently. We shall see in the following
section, however, that the identification of \ufp edges does not require
the explicit enumeration of hourglass chains.

The fact that all hug-edges are \ufp by Thm.~\ref{thm:ufp-iff-hug} suggests
to consider the subgraph of a BMG that is left after removing all these
unambiguously recognizable false-positive orthology assignments.
\begin{cdefinition}{\ref{def:non-hug-graph}}
  Let $(\G,\sigma)$ be a BMG with symmetric part $G$ and let $F$ be the set
  of its hug-edges. The \emph{no-hug\footnote{a good advice in the time of
      COVID-19}} graph $\NH(\G,\sigma)$ is the subgraph of $G$ with
  vertex set $V(\G)$, coloring $\sigma$ and edge set $E(G)\setminus F$.
\end{cdefinition}
By Thm.~\ref{thm:ufp-iff-hug}, $\NH(\G,\sigma)$ is therefore the subgraph
of the underlying RBMG of $(\G,\sigma)$ that does not contain any \ufp
edge. Importantly, it contains the orthology graph for every reconciliation
map $\mu$ as well as the orthology graph induced by the extremal event
labeling as subgraphs:
\begin{ccorollary}{\ref{cor:NH}}
  Let $(T,\sigma)$ be a leaf-colored tree and $\mu$ a reconciliation map
  from $(T,\sigma)$ to some species tree $S$. Then,
  \begin{equation*}
  \Theta(T,t_{\mu}) \subseteq \Theta(T, \tT ) \subseteq
  \NH(\G(T,\sigma)) \subseteq \G(T,\sigma).
  \end{equation*}
\end{ccorollary}

The no-hug graph still may contain false-positive orthology assignments,
i.e., $\NH(\G(T,\sigma))=\Theta(T,\tT)$ does not hold in general.  As an
example, consider the BMG $\G(T_1,\sigma)$ in
Fig.~\ref{fig:non_binary_tree_example}. Here, none of the edges $xz$, $x'z$
and $yz$ are \ufp and thus, by Thm.~\ref{thm:ufp-iff-hug} also not
hug-edges.  Hence, they still remain in $\NH(\G(T_1,\sigma))$. However,
these edges are not contained in $\Theta(T_1,\tT)$, since
$\tT(\lca_{T_1}(x,x',y,z)) = \DUPL$ and thus,
$\Theta(T_1,\tT) \subsetneq \NH(\G(T_1,\sigma))$.

\subsection{Algorithms}
\label{ssect:algorithms}

In this section, we provide a polynomial-time algorithm to identify all
\ufp edges in a given BMG. To this end, we take a closer look at hourglass
chains and the trees that explain them. In Fig.~\ref{fig:hourglasses}(D),
each subtree $T(v_i)$, $1\leq i \leq k$, ``bridges'' the color sets of the
subtrees $T(v_{i-1})$ and $T(v_{i+1})$.  That is,
$\sigma(L(T(v_{i-1})))\cap\sigma(L(T(v_i)))$ and
$\sigma(L(T(v_i)))\cap\sigma(L(T(v_{i+1})))$ are non-empty.  This suggests
to consider the children of a vertex $u$ as the vertices of a ``color-set
intersection graph'' with edges connecting children with non-empty
color-set intersection:
\begin{definition}
  The \emph{color-set intersection graph} $\CIG_T(u)$ of an inner vertex
  $u$ of a leaf-colored gene tree $(T,\sigma)$ is the undirected graph with
  vertex set $V\coloneqq\child_T(u)$ and edge set
  \begin{equation*}
  E\coloneqq\{ v_1v_2 \mid v_1,v_2\in V \textrm{, }v_1\ne v_2
  \textrm{ and } \sigma(L(T(v_1)))\cap\sigma(L(T(v_2)))\ne\emptyset \}.
  \end{equation*}
\end{definition}
This construction is similar to the definition of intersection graphs e.g.\
used in \cite{McKee:1999}. $\CIG_T(u)$ can be viewed as a natural
generalization of $\Scap(x,y)$ in the following sense: if $u=\lca_T(x,y)$
is a binary vertex, then $\CIG_T(u)=K_2$ \emph{iff}
$\Scap(x,y)\ne\emptyset$ and therefore, $\CIG_T(u)=K_1\cup K_1$ \emph{iff}
$\Scap(x,y)=\emptyset$. In the non-binary case, there is an edge $v_1v_2$
\emph{iff} $\Scap(x,y)\ne\emptyset$ for some $x\in L(T(v_1))$ and
$y\in L(T(v_2))$. 

Every BMG $(\G,\sigma)$ contains all information necessary to determine the
trees $(T,\sigma)$ by which it is explained. Since \ufp edges are defined
in terms of the explaining trees, every BMG $(\G,\sigma)$ also contains --
at least implicitly -- all information needed to identify its \ufp edges.
Since $(\G,\sigma)$ is determined by its unique least resolved tree
$(T^*,\sigma)$, the \ufp edges must also be determined by $(T^*,\sigma)$.
It is not sufficient for this purpose, however, to find an event labeling
$t$ of the vertices of $T^*$.

To see this, consider for example the ``true'' history
$(\widetilde{T},\widetilde{t},\sigma)$ of the BMG
$\G(\widetilde{T},\sigma)$ as shown in Fig.~\ref{fig:messy_vertices-1}.
The unique least resolved tree $(T^*,\sigma)$ for
$\G(\widetilde{T},\sigma)$ is obtained by merging the two vertices $v_1$
and $v_2$ of $\widetilde{T}$ resulting in the vertex $v$ of $T^*$. We have
$\widetilde{t}(v_1)= \SPEC\neq\DUPL=\widetilde{t}(v_2)$. For vertex $v$ and
every reconciliation map $\mu$ from $(T^*,\sigma)$ to any species tree $S$,
it must hold that $\mu(v)\in E(S)$ and thus $t^*_{\mu}(v)=\DUPL$, since $v$
has two children with overlapping color sets and by
Lemma~\ref{lem:duplication_witness}. Thus, the edges $cx$ with
$x\in \{a_1,a_2,b_1,b_2\}$ are $(T^*,\sigma)$-\fp although they are not
false positives at all. Since speciation and duplication vertices may be
merged into the same vertex $v$ of $T^*$, the least resolved tree $T^*$ in
general cannot simply inherit the event labeling from the true gene
history, and thus there may not be a ``correct'' labeling $t^*$ of $T^*$
that provides evidence for all \ufp edges.

\begin{figure}[t]
  \begin{center}
    \includegraphics[width=0.85\linewidth]{./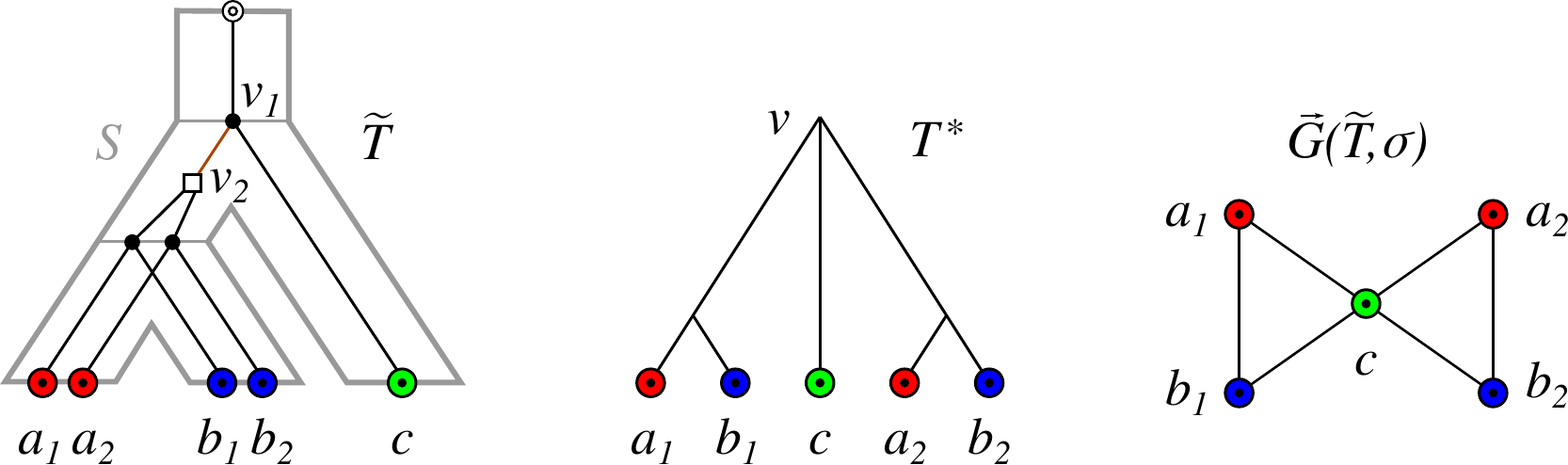}
  \end{center}
  \caption{The evolutionary scenario (left) shows the event-labeled gene
    tree $(\widetilde{T},\widetilde{t},\sigma)$ embedded into a species
    tree $S$. In the least resolved tree $(T^*,\sigma)$ of
    $\G(\widetilde{T},\sigma)$, the edge $v_1v_2$ of $\widetilde{T}$ has
    been contracted into vertex $v$. The BMG $\G(\widetilde{T},\sigma)$
    does not contain any \ufp edge.  \emph{See text for further
      explanations.}}
  \label{fig:messy_vertices-1}
\end{figure}

The example in Fig.~\ref{fig:messy_vertices-1} shows that the least
resolved tree $T^*$ simply may not be ``resolved enough''. In the
following, we therefore describe how the unique least resolved tree can be
resolved further to provide more evidence about \ufp edges. Eventually,
this will lead us to a characterization of the \ufp edges.  To this end, we
need to gain more insights into the structure of redundant edges, i.e.,
those edges $e$ in $T$ for which $(T_e,\sigma)$ still explains
$\G(T,\sigma)$.

Since the color sets of distinct subtrees below a speciation vertex cannot
overlap by Lemma~\ref{lem:duplication_witness},
Cor.~\ref{cor:edge_redundant} (Sec.~\ref{APP:subsect:rbmg}) implies that all 
edges below a 
speciation
vertex are redundant and thus can be contracted. More precisely, we have
\begin{cfact}{\ref{obs:speciations_merged}}
  Let $\mu$ be a reconciliation map from $(T,\sigma)$ to $S$ and assume
  that there is a vertex $u\in V^0(T)$ such that $\mu(u)\in V^0(S)$ and
  thus, $t_{\mu}(u)=\SPEC$.  Then every inner edge $uv$ of $T$ with
  $v\in\child_{T}(u)$ is redundant w.r.t.\ $\G(T,\sigma)$.  Moreover, if an
  inner edge $uv$ with $v\in\child_{T}(u)$ is non-redundant, then $u$ must
  have two children with overlapping color sets, and hence,
  $t_{\mu}(u)=\DUPL$.
\end{cfact}

Our goal is to identify those vertices in $(T^*,\sigma)$ that can be
expanded to yield a tree that still explains $\G(T^*,\sigma)$. To this end,
we need to introduce a particular way of ``augmenting'' a leaf-colored
tree.
\begin{cdefinition}{\ref{def:augmenting}}
  Let $(T,\sigma)$ be a leaf-colored tree, $u$ be an inner vertex of $T$,
  $\CIG_T(u)$ the corresponding color-set intersection graph, and
  $\mathcal{C}$ the set of connected components of $\CIG_T(u)$. Then
  the tree $T_u$ \emph{augmented at vertex $u$} is obtained by
  applying the following editing steps to $T$:
  \begin{itemize}[itemsep=0.2ex, topsep=0.2ex, parsep=0cm]
    \item If $\CIG_T(u)$ is connected, do nothing.
    \item Otherwise, for each $C\in\mathcal{C}$ with $|C|>1$
    \begin{itemize}[noitemsep,nolistsep]
      \item introduce a vertex $w$ and attach it as a child of $u$, i.e., add
      the edge $uw$,
      \item for every element $v_i\in C$, substitute the edge $uv_i$ by the
      edge $wv_i$.
    \end{itemize}
  \end{itemize}
  The augmentation step is \emph{trivial} if $T_u=T$, in which case
  we say that \emph{no edit step was performed}.
\end{cdefinition}
An example of an augmentation is shown in
Fig.~\ref{fig:augmenting_labeling_algo}.  The tree $T_u$ obtained by an
augmentation of a phylogenetic tree $T$ is again a phylogenetic tree.

\begin{figure}[t]
  \begin{center}
    \includegraphics[width=0.85\textwidth]{./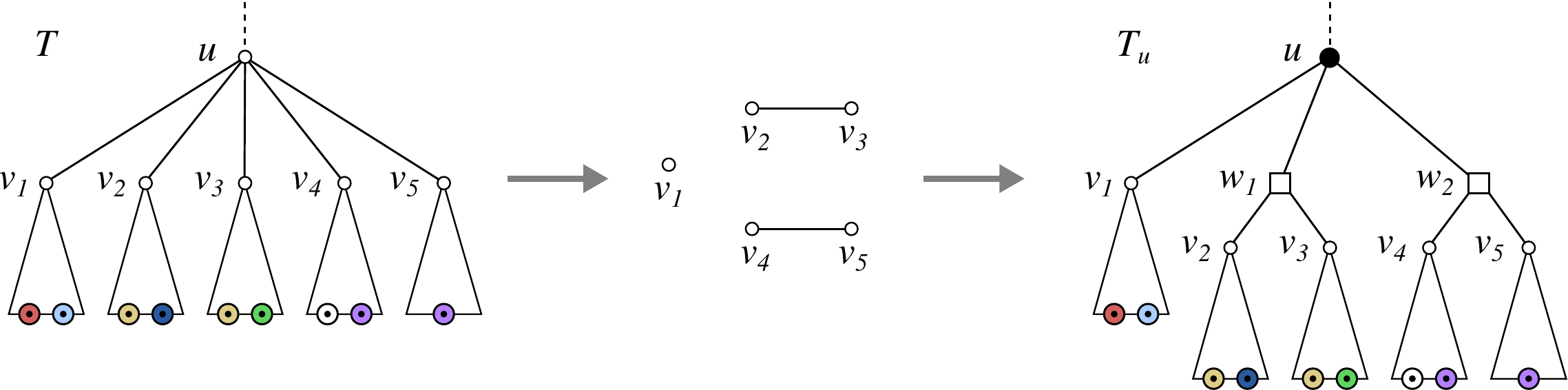}
  \end{center}
  \caption{Left, a (part of a) leaf-colored tree $(T,\sigma)$.  The tree
    $(T_u,\sigma)$ on the right is obtained from $(T,\sigma)$ by augmenting
    $T$ at vertex $u$. The color-set intersection graph $\CIG_T(u)$ (shown
    in the middle) has more than one connected component and there are
    connected components consisting of more than two vertices
    $v_i\in\child_T(u)$. According to Lemma~\ref{lem:augmenting_color_disjoint},
    $\sigma(L(T_u(v)))\cap\sigma(L(T_u(v')))=\emptyset$ for any two
    distinct vertices $v,v'\in\child_{T_u}(u) = \{v_1,w_1,w_2\}$.
    By Cor.~\ref{cor:edge_redundant} (Sec.~\ref{APP:subsect:rbmg}), the edges 
    $uw_1$ and $uw_2$ 
    are redundant w.r.t.\ $\G(T_u,\sigma)$ and thus, both trees explain the
    same BMG.}
  \label{fig:augmenting_labeling_algo}
\end{figure}

A key property of the procedure in Def.~\ref{def:augmenting} is that
repeated augmentation of the same inner vertex leads to at most one
expansion and that the order of augmenting multiple vertices does not
matter.  More precisely, Lemma~\ref{lem:augment_extremal} in
Sec.~\ref{APP:ssect:augtree} ensures the existence of a unique augmented
tree:
\begin{cdefinition}{\ref{def:aug-tree}}[Augmented tree]
  Let $(T,\sigma)$ be a leaf-colored tree. The \emph{augmented tree of
    $(T,\sigma)$}, denoted by $(\aug(T),\sigma)$, is obtained by augmenting
  all inner vertices of $(T,\sigma)$ (in an arbitrary order).
\end{cdefinition}
In particular, the augmented tree preserves the best match relation:
\begin{cproposition}{\ref{prop:aug-bmg}}
  For every leaf-colored tree $(T,\sigma)$, it holds
  $\G(T,\sigma)=\G(\aug(T),\sigma)$.
\end{cproposition}

We now have everything in place to present the main results of this
section.
\begin{ctheorem}{\ref{thm:MAIN}}
  Let $(\G,\sigma)$ be a BMG, $(T^*,\sigma)$ its unique least resolved
  tree, and $\wt\coloneqq \tTps$ the extremal event labeling of the
  augmented tree $(\aug(T^*),\sigma)$. Then
  $(\Theta(\aug(T^*),\wt),\sigma) = \NH(\G,\sigma)$. 
\end{ctheorem}

Since $(\Theta(\aug(T^*),\wt),\sigma) = \NH(\G,\sigma)$ is the subgraph of
the underlying RBMG of $(\G,\sigma)$ that does not contain any \ufp edges
(cf.\ Def.~\ref{def:non-hug-graph} and Thm.~\ref{thm:ufp-iff-hug}), the set
of all \ufp edges can readily be obtained by comparing the edges of
$(\G,\sigma)$ with the edges in the orthology graph obtained from
$(\aug(T^*),\wt)$.  Since only \ufp edges have been removed to obtain
$(\Theta(\aug(T^*),\wt),\sigma)$ and since $(\aug(T^*),\sigma)$ still
explains $(\G,\sigma)$, the graph $(\Theta(\aug(T^*),\wt),\sigma)$ is, in
the sense of an unambiguous editing, the best estimate of the orthology
relation that we can make by solely utilizing the structural information of
a given BMG $(\G,\sigma)$. Note, Thm.~\ref{thm:ortho-cograph} implies that
$\NH(\G,\sigma)$ must, in particular, be a cograph.

Since $(\Theta(\aug(T^*),\wt),\sigma) = \NH(\G,\sigma)$, the computation of
$\NH(\G,\sigma)$ can be achieved in polynomial time and avoids the need to
find the hourglass chains of $(\G,\sigma)$. In fact, the effort is
dominated by computing the least resolved tree $(T^*,\sigma)$ for a given
BMG.
\begin{ctheorem}{\ref{thm:time}}
  For a given BMG $(\G,\sigma)$, the set of all \ufp edges can be computed
  in $O(|L|^3 |\mathscr{S}|)$ time, where $L=V(\G)$ and
  $\mathscr{S} = \sigma(L(T))$ is the set of species under consideration.
\end{ctheorem}
As argued in \cite[Sec.~5]{Geiss:19a}, the number of genes between
different species will be comparable in practical applications, i.e.,
$O(\ell) = O(|L|/|\mathscr{S}|)$ with
$\ell = \max_{s\in \mathscr{S}} |L[s]|$.  In this case, the running time to
compute $(T^*,\sigma)$ reduces to $O(|L|^3/|\mathscr{S}|)$ and we obtain an
overall running time to compute the set of all \ufp edges of
$O(|L|^3/|\mathscr{S}| + |L|^2 |\mathscr{S}|)$.  Thms.~\ref{thm:MAIN}
and~\ref{thm:time} imply that we do not need to find induced quartets and
hourglasses explicitly, nor do we need to identify the hourglass chains.
Instead, it is more efficient to compute the least resolved tree
$(T^*,\sigma)$, its augmented tree $(\aug(T^*),\sigma)$, and the
corresponding extremal event labeling $\wt$.

Deletion of all \ufp edges is necessary to obtain an orthology relation
without false positives. It is not sufficient, however, since
$\NH(\G,\sigma)$ may contain additional false-positive orthology
assignments. In order to construct an example, we consider for a BMG
$(\G,\sigma)$ the set $\mathfrak{T}$ of all trees $(T,t,\sigma)$ for which
$\NH(\G,\sigma) = (\Theta(T,t),\sigma)$.  The example in
Fig.~\ref{fig:contradictory_triples} shows that it may be the case that
none of the trees $(T,t,\sigma)\in\mathfrak{T}$ admits a reconciliation map
$\mu$ to any species tree such that $t_{\mu} =
t$. Lemma~\ref{lem:reconc-aug-all} in Sec.~\ref{ssec:APP:further-fp} shows
that the augmented tree $(\aug(T^*),\wt,\sigma)$ is sufficient to test in
polynomial time whether or not $\mathfrak{T}$ contains a reconcilable
tree. In the negative case, we have clear evidence that $\NH(\G,\sigma)$
still contains a false-positive edge and thus must be edited further. This
type of false-positive orthology assignments is the topic of ongoing work.

\begin{figure}[t]
  \begin{center}
    \includegraphics[width=0.85\textwidth]{./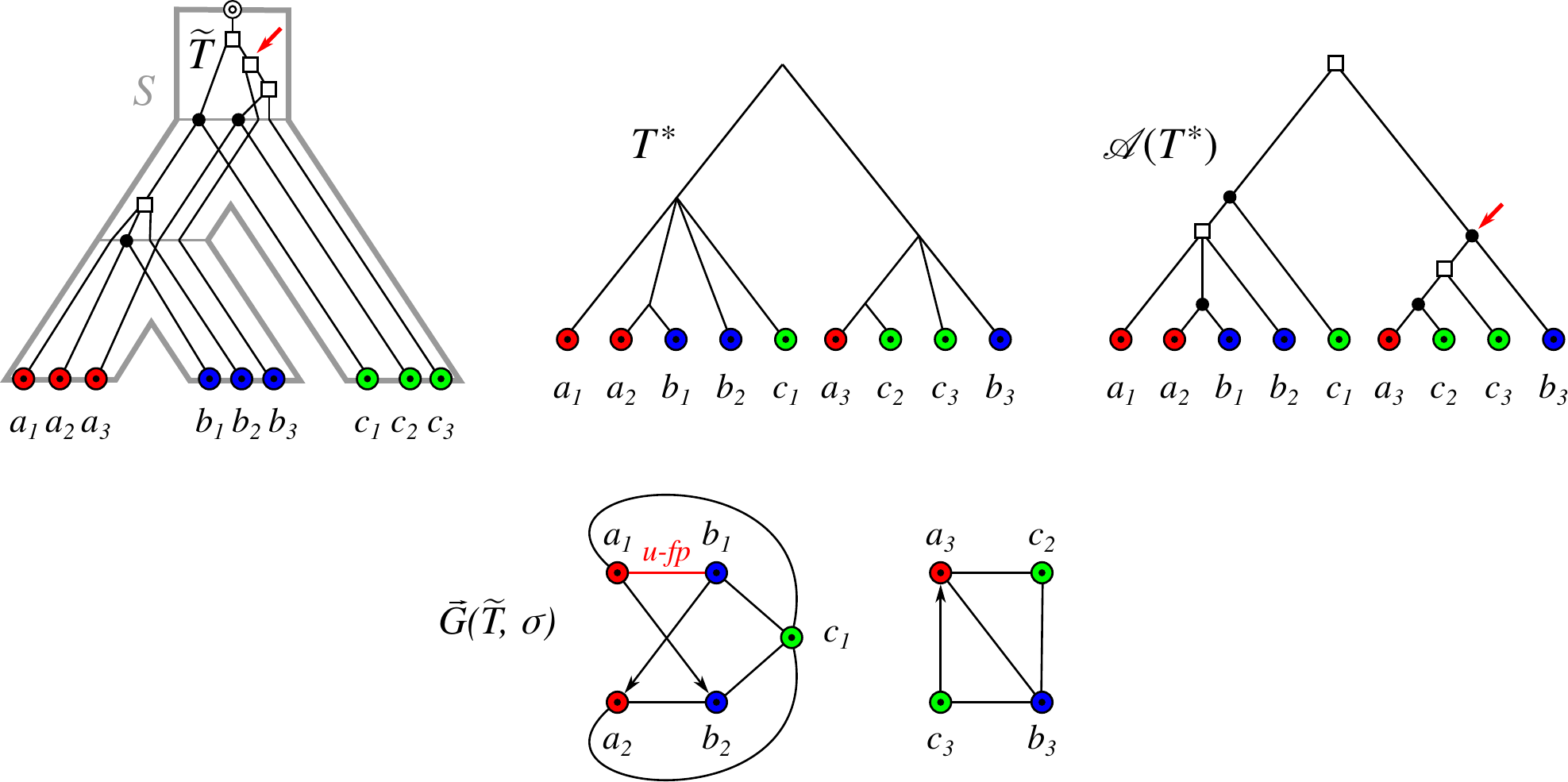}
  \end{center}
  \caption{An evolutionary scenario (left) with a no-hug graph
    $\NH(\G,\sigma)$ that still contains false-positive edges.  Deletion of
    the highlighted \ufp edge $a_1b_1$ for $\G(\widetilde{T},\sigma)$
    yields $\NH(\G,\sigma) = (\Theta(\aug(T^*),\wt),\sigma)$ and thus, an
    orthology graph. However, none of its cotrees can be reconciled with
    any species tree since each of them contains the contradictory species
    triples $\sigma(a_1)\sigma(b_1)|\sigma(c_1)$ and
    $\sigma(a_1)\sigma(c_1)|\sigma(b_1)$ (see e.g.\
    \cite{HernandezRosales:12a,Hellmuth:17}).  Note, the trees
    $(\widetilde{T},\widetilde{t})$ and $(\aug(T^*),\wt)$ differ in the
    event label marked by the arrows, resulting in the three additional \fp
    edges $a_3b_3$, $c_2b_3$ and $c_3b_3$ in $\NH(\G,\sigma)$.  }
  \label{fig:contradictory_triples}
\end{figure}

In contrast to the LRT of a BMG, its augmented tree is not necessarily
displayed by the true gene tree of the underlying evolutionary
scenario. Hence, we advocate the augmented tree endowed with the
corresponding extremal event labeling $(\aug(T^*),\wt,\sigma)$ primarily as
convenient tool to identify false-positive orthology assignments. Whether
or not $(\aug(T^*),\wt,\sigma)$ is a plausible representation of the gene
phylogeny depends on whether it admits a reconciliation of the
(phylogenetically correct) species tree. As discussed above, this is not
always the case. The following result, however, shows that
$(\aug(T^*),\wt,\sigma)$ is informative in an important special case.

\begin{clemma}{\ref{lem:T-displays-aug-tree}}
  Let $(T,t,\sigma)$ be an event-labeled tree explaining the BMG
  $(\G,\sigma)$, and let $(T^*,\sigma)$ be the least resolved tree of
  $(\G,\sigma)$.  If $(\Theta(T,t),\sigma) = \NH(\G,\sigma)$, then
  $\aug(T^*)$ is displayed by $T$.
\end{clemma}

Lemma~\ref{lem:T-displays-aug-tree} guarantees that $\aug(T^*)$ is
displayed by the true gene tree $\widetilde{T}$ whenever $\NH(\G,\sigma)$
equals the true orthology relation.  In a practical workflow, it can be
checked efficiently whether there is evidence for additional false-positive
edges because $\mathfrak{T}$ contains no reconcilable tree. If this is not
the case, then it is likely that $\NH(\G,\sigma)$ equals the true orthology
relation. In this case, $\widetilde{T}$ also displays the unique
discriminating cotree of $\NH(\G,\sigma)$.

One has to keep in mind, however, that it is not possible to find a
mathematical guarantee for $\NH(\G,\sigma)$ to be the true orthology
relation, because it cannot be ruled out that the true scenario contains
unwitnessed duplications that are compensated by additional gene losses. In
the extreme case, it is logically possible for every BMG that, in the true
scenario, all inner vertices of the gene tree predate the root of the
species tree, resulting in a true orthology graph without any edges
\cite{Guigo:96,Page:97,Geiss:20a}. Of course, this is extremely unlikely
for real data.

\subsection{Quartets, hourglasses, and the structure of reciprocal best
  match graph}
\label{ssect:quart}

The characterization of \ufp edges is in a way surprising when compared to
previous results on the structure of RBMGs \cite{Geiss:20a,Geiss:19b},
which were focused on $P_4$s and quartets. The expected connection between
good and ugly quartets and \ufp edges is captured by
Cor.~\ref{cor:ufp-quartets}.  However, Prop.~\ref{prop:singeHG-ufp} implies
that there are also \ufp edges entirely unrelated to quartets and thus
induced $P_4$s. In this section, we aim to close this gap in our
understanding.

\paragraph{Hourglass-free BMGs.}
We start with an important special case for which quartets are sufficient.
\begin{cdefinition}{\ref{def:hourglass-free}}
  A BMG $(\G,\sigma)$ is \emph{hourglass-free} if it does not contain an
  hourglass as an induced subgraph.
\end{cdefinition}
In particular, an hourglass-free BMG does not contain an hourglass chain.
It turns out that hourglasses are the forbidden induced subgraph
characterizing BMGs that can be explained by binary trees.
\begin{emptyTHM}{Prop.~\ref{prop:binary-iff-hourglass-free} and 
    Cor.~\ref{cor:binary-polytime}.}
  A BMG $(\G,\sigma)$ can be explained by a binary tree if and only if it is
  hourglass-free.  In particular, it can be decided in polynomial time
  whether $(\G,\sigma)$ can be explained by a binary tree.
\end{emptyTHM}

The RBMGs that are already cographs are called \emph{co-RBMGs}. 
As shown in Sec.~\ref{APP:ssec:hourglass-free}, we  obtain
\begin{ccorollary}{\ref{cor:hourglass-free}}
  Let $(\G,\sigma)$ be an hourglass-free BMG.  Then its symmetric part
  $(G,\sigma)$ is either a co-RBMG or it contains an induced $P_4$ on three
  colors whose endpoints have the same color, but no induced cycle $C_n$ on
  $n\geq 5$ vertices.
\end{ccorollary}
As outlined in Sec.~\ref{APP:ssec:hourglass-free}, all \ufp edges in an
hourglass-free BMG are identified by the good and ugly quartets, which are
3-colored by construction. In hourglass-free BMGs, it is indeed sufficient
to consider only the 3-colored $P_4$s to identify all \ufp edges and thus,
to obtain an orthology graph, even though the BMG may also contain
4-colored $P_4$s.  Since hourglasses can only appear in BMGs that require
multifurcations for their explanation (cf.\ Lemma~\ref{lem:hourglass}), the
case of hourglass-free BMGs is the most relevant for practical
applications.

Since all \ufp edges in an hourglass-free BMG are contained in quartets, it
is also easy to identify the hourglass-free BMGs that are already orthology
graphs.
\begin{ccorollary}{\ref{cor:hourglass-free-coRBMG}}
  Let $(\G,\sigma)$ be an hourglass-free BMG. Then, its symmetric part
  $(G,\sigma)$ is a co-RBMG if and only if there are no \ufp edges in
  $(\G,\sigma)$.
\end{ccorollary}

\paragraph{\ufp Edges in Hourglass Chains.}
The situation is much more complicated in the presence of hourglasses. We
start by providing sufficient conditions for \ufp edges that are identified
by hourglass chains.
\begin{cproposition}{\ref{prop:hourglass-ufp}} 	
  Let
  $\mathfrak{H}=[x_1 y_1 \hourglass x'_1 y'_1],\dots,[x_k y_k \hourglass
  x'_k y'_k]$ be an hourglass chain in $(\G,\sigma)$, possibly with a left
  tail $z$ or a right tail $z'$.  Then, an edge in $\G$ is \ufp if it is
  contained in the set
  \begin{align*}
  F =
  & \{x_iy_j\mid 1\leq i \leq j \leq k\}
  \cup\{zz'\}
  \cup\{zy_{i}, x_iz', zy'_{i}, x'_{i}z' \mid 1 \leq i \leq k \}\\
  & \cup\{ x_{i}x_{j+1} \mid 1\le i < j < k \} 
  \cup \{ y_{i}y_{j+1} \mid 1\le i < j < k \} \\
  & \cup\{x'_1 y'_i, x'_1 y_i \mid 2 \leq i \leq k \}
  \cup\{x_i y'_k, x'_i y'_k \mid 1 \leq i \leq k-1 \} \\
  & \cup\{x'_1 z, x'_1 z', y'_k z, y'_k z'\}
  \end{align*} 
\end{cproposition}

As outlined in Sec.~\ref{APP:ssec:hchain}, hourglass chains identify
false-positive edges that are not associated with quartets in the BMG and,
in particular, false-positive edges that are not even part of an induced
$P_4$.  This observation limits the use of cograph editing in the context
of orthology detection, at least in the case of gene trees with polytomies:
On one hand, an RBMG can be a cograph and still contain \ufp edges and, on
the other hand, there are examples where deletion of the \ufp edge
identified by quartets (and thus, by induced $P_4$s) is not sufficient to
arrive at a cograph (cf.\ Sec.~\ref{APP:ssec:hchain}).

\paragraph{Four-colored $P_4$s}
\citet[Thm.~8]{Geiss:19b} established that the RBMG $(G,\sigma)$ is a
co-RBMG, i.e., a cograph, if and only if every subgraph induced on three
colors is a cograph. Therefore, if $(G,\sigma)$ contains an induced
4-colored $P_4$, it also contains an induced 3-colored $P_4$. For
hourglass-free BMGs $(\G,\sigma)$ it is clear that a 4-colored $P_4$ always
overlaps with a 3-colored $P_4$: In this case $\NH(\G,\sigma)$ is obtained
by deleting middle edges of good quartets and first edges of ugly quartets.
Since $\NH(\G,\sigma)$ is a cograph, there is no $P_4$ left, and thus at
least one edge of any 4-colored $P_4$ was among the deleted edges. It is
natural to ask whether this is true for BMGs in general. However, as shown
in Sec.~\ref{APP:ssec-4colP4}, good and ugly quartets are not sufficient on
their own and there are examples with 4-colored $P_4$s that do not overlap
with the middle edge of a good quartet or the first edge of an ugly
quartet.

Still, in the context of cograph-editing approaches it is of interest
whether the 3-colored $P_4$s are sufficient. In the following we provide
an affirmative answer.
\begin{clemma}{\ref{lem:4col-P4}}
  Let $(\G,\sigma)$ be a BMG and $\mathscr{P}$ a 4-colored induced
  $P_4$ in the symmetric part of $(\G,\sigma)$. Then at least one of
  the edges of $\mathscr{P}$ is either the middle edge of some good
  quartet or the first edge of a bad or ugly quartet in $(\G,\sigma)$.
\end{clemma}

It is important to recall in this context, however, that the deletion of
all \ufp-edges identified by quartets does not necessarily lead to a
cograph (see Fig.~\ref{fig:evenhg}(C) in Sec.~\ref{APP:ssec-4colP4} for an
example). Hence, the quartets alone therefore cannot provide a complete
algorithm for correcting an RBMG to an orthology graph.

\section{Simulation results}
\label{sec:simulations}

We illustrate the potential impact of our mathematical results discussed in
the previous sections with the help of simulated data. To this end, we
focus on the accuracy of the inferred orthology graph \emph{assuming} that
the best matches are accurate. Of course, this is only one of several
components in complete orthology detection pipeline, which would also need
to consider the genome annotation, pairwise alignments of genes or
predicted protein sequences, and the conversion of sequence similarities
into best match data. The latter step has been investigated in considerable
detail by \citet{Stadler:20a}. Here, we start from simulated evolutionary
scenarios and extract the BMG directly from the ground truth using the
simulation library \texttt{AsymmeTree} \cite{Stadler:20a}.

In brief, \texttt{AsymmeTree} generates realistic evolutionary scenarios in
four steps. (1)~A planted species tree $S$ is generated using the
Innovation Model \cite{Keller:2012}, which models observed phylogenies
well. (2)~A dating map $\tau$ assigns time points to all vertices of $S$
and thus branch lengths to the edges of $S$. (3)~On~$S$, we use a variant of
the well-known constant-rate birth-death process with a given age
\citep[see e.g.][]{Kendall:1948,Hagen:2018} to simulate an event-labeled
gene tree $(T,t,\sigma)$ containing duplication and loss events.
Speciations are included as additional branching events that generate
copies of all genes present at a speciation vertex in all descendant
lineages. The simulated gene trees are constrained to have at least one
surviving gene in each species to avoid trivial cases. (4)~The observable
part of the gene tree is extracted by recursively removing leaves that
correspond to loss events and suppressing inner vertices with a single
child.  \texttt{AsymmeTree} can also assign rates to edges of
$(T,t,\sigma)$ to convert evolutionary time differences into general
additive distances; however, this is not relevant here since the rates do
not affect evolutionary relatedness and thus the BMG.

\begin{figure}[t]
  \begin{center}
    \includegraphics[width=0.85\textwidth]{./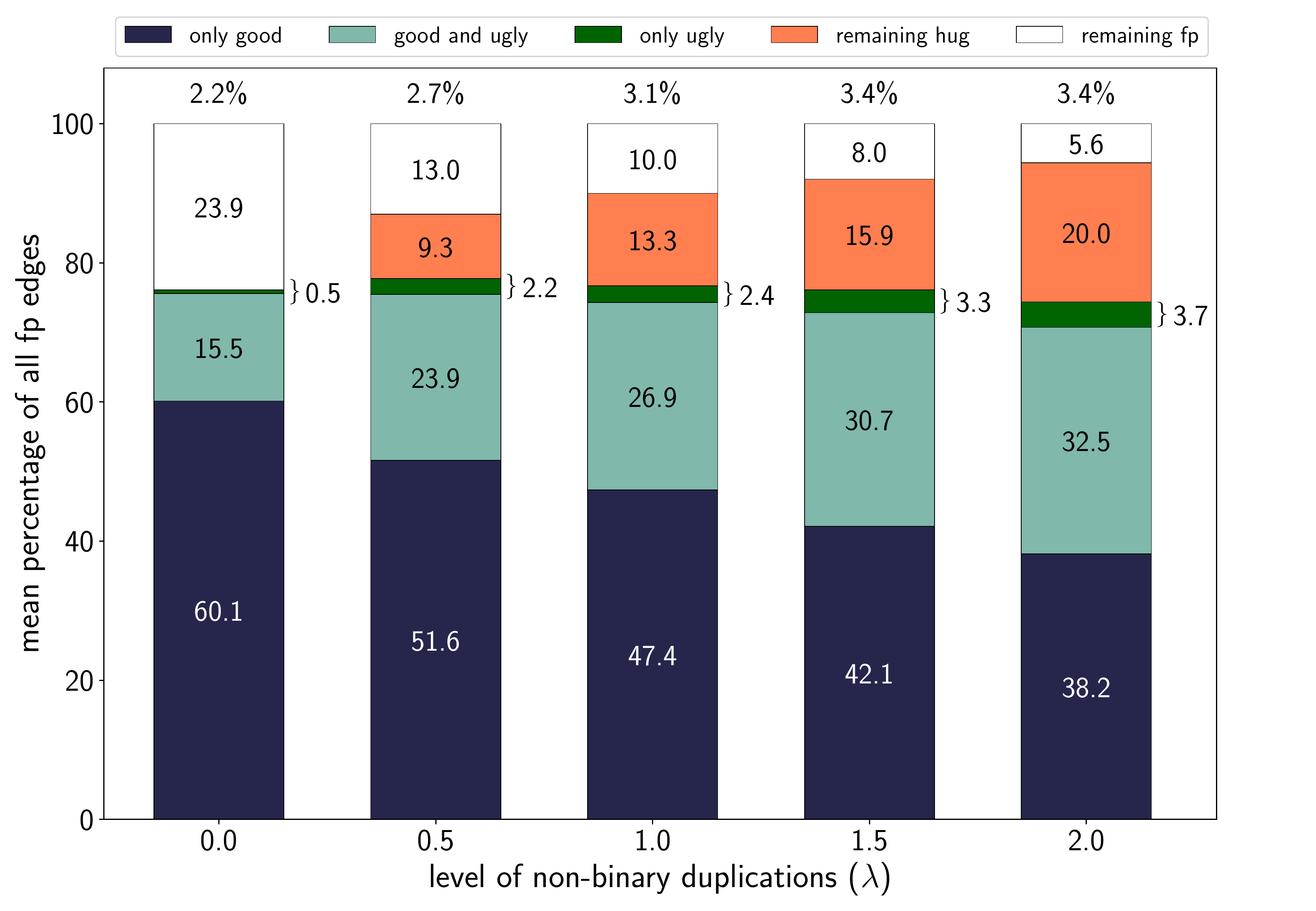}
  \end{center}
  \caption{Average relative abundance of the different types of hug-edges
    and undetectable false positives in the BMGs of simulated evolutionary
    scenarios. We distinguish hug-edges in good and ugly quartets as well
    as hug-edges appearing only in hourglass chains (orange).  In the
    simulations, the fraction of \ufp edges that are first edges of bad
    quartets is too small too be visible and therefore not shown here.  The
    undetectable false positives correspond to complementary gene losses
    without surviving witnesses of the duplication event.  Species trees
    are binary, while gene trees contain multifurcations. The number of
    offsprings is modeled as $2+k$, where $k$ is drawn from a Poisson
    distribution with parameter $\lambda$. For $\lambda=0$, the gene trees
    are binary.  In the experiments, we observed that on average 62.4\% of
    the $25000$ simulated BMGs do not contain any false-positive edge (cf.\
    Fig.~\ref{fig:fdr_heatmap}). Those instances are included in the
    computation of the fraction $|\mathfrak{F}|/|E(G)|$ (percentage above
    the bars).  However, for the computation of all other values only
    scenarios that contain false-positives are considered. }
  \label{fig:hug_percentage}
\end{figure}

Extending the simulations used in \cite{Geiss:20a,Stadler:20a}, we also
consider non-binary gene trees. This is important here since, by
Lemma~\ref{lem:hourglass}, hourglasses cannot appear in BMGs that are
explained by a binary tree. There is an ongoing discussion to what extent
polytomies in phylogenetic trees are biological reality as opposed to an
artifact of insufficient resolution. At the level of species trees, the
assumption that cladogenesis occurs by a series of bifurcations
\citep[e.g.][]{Maddison:89,DeSalle:94} seems to be prevailing, several
authors have argued quite convincingly that there is evidence for a least
some \textit{bona fide} multifurcations of species
\cite{Kliman:00,Takahashi:01,Sayyari:18}. In the simulation, polytomies in
species trees are introduced after the first step by edge contraction with
a user-defined probability $p$.

The reality of polytomies is less clear for gene trees. One reason is the
abundance of tandem duplications. Although the majority of tandem arrays
comprises only a pair of genes, larger clusters are not at all rare
\cite{Pan:08}. Although one may argue that mechanistically they likely
arise by stepwise duplications, such arrangements are often subject to gene
conversion and non-homologous recombination that keeps the sequences nearly
identical for some time before they eventually escape from concerted
evolution and diverge functionally \cite{Liao:99,Hanada:18}. As a
consequence, duplications in tandem arrays may not be resolvable unless
witnesses of different stages of an ongoing duplication process have
survived. To model polytomies in the gene tree, we modify step (3) of the
simulation procedure by replacing a simple duplication by the generation of
$2+k$ offspring genes. The number $k$ of additional copies is drawn from a
Poisson distribution with parameter $\lambda>0$.

The simulated data set of evolutionary scenarios comprises species trees
with 10 to 30 species (drawn uniformly). The time difference between the
planted root and the leaves of $S$ is set to unity. The duplication and
loss rates in the gene trees are drawn i.i.d.\ from the uniform
distribution on the interval $[0.5,1.5)$. Multifurcating gene trees were
produced for $\lambda=\{0.0, 0.5, 1.0, 1.5, 2.0\}$. In total, we generated
5000 scenarios for each choice of $p$ and $\lambda$. Since the true
scenarios, and thus the true gene tree $T$, the true BMG $\G$, and the
corresponding RBMG $G$ are known, we can also determine the set
\begin{equation}
\mathfrak{F}\coloneqq\left\{xy \;\mid\; xy\in E(G) \;\;\textrm{and}\;\;
t(\lca_T(x,y))=\DUPL\right\}. \
\end{equation}
of false-positive edges. From the BMG, we compute the set $\mathfrak{U}$ of
\ufp edges as well as the subsets $\mathfrak{U}_M$ and $\mathfrak{U}_U$ of
\ufp edges that are middle edges of a good or first edges of an ugly quartet,
respectively. Note that in general we have
$\mathfrak{U}_M\cap \mathfrak{U}_U\ne\emptyset$.  We only discuss the
results for binary species trees in some detail, since species trees with
polytomies yield qualitatively similar results. We observe that the
relative abundance of \ufp edges in good and ugly quartets increases
moderately for larger $p$.

\begin{figure}[t]
  \begin{center}
    \includegraphics[width=0.85\textwidth]{./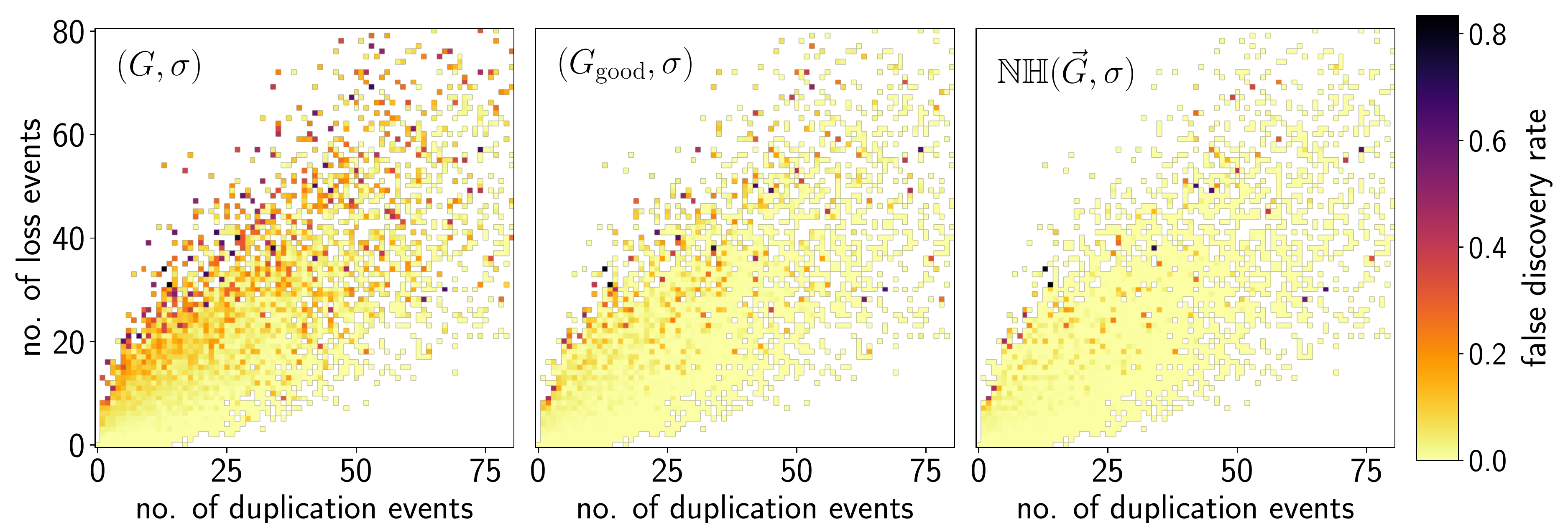}
  \end{center}
  \caption{False discovery rates computed as proportion of \fp among all
    edges averaged over all scenarios with given number of duplications and
    losses. \emph{Left:} RBMGs $(G,\sigma)$, i.e., $|\mathfrak{F}|/|E(G)|$.
    \emph{Middle:} edited RBMG $(G_\textrm{good},\sigma)$ with all middle
    edges of good quartets removed, i.e.,
    $|\mathfrak{F}\setminus\mathfrak{U}_M|/|E(G_\textrm{good})|$. \emph{Right:}
    no-hug graphs $\NH(\G,\sigma)$, i.e.,
    $|\mathfrak{F}\setminus\mathfrak{U}|/|E(\NH)|$.  Scenarios with more
    than 80 duplication/loss events are not shown.}
  \label{fig:fdr_heatmap}
\end{figure}

First, we note that, consistent with \cite{Geiss:20a,Stadler:20a}, the
fraction $|\mathfrak{F}|/|E(G)|$ of false positive orthology assignments is
small in our data set, on the order of $3\%$. This indicates that, in
real-life data, the main source of errors is likely the accurate
determination of best matches from sequence data rather than false-positive
edges contained in the BMG. Considering the fraction
$|\mathfrak{U}|/|\mathfrak{F}|$ of \ufp edges in
Fig.~\ref{fig:hug_percentage}, we find that even in the most adverse case
of all gene trees being binary, the BMG identifies more than three quarters
of $\mathfrak{F}$. It may be surprising at first glance that the problem
becomes easier with increasing $\lambda$ and barely $6\%$ of the false
positives escape discovery. A likely explanation is that multifurcations
increase the likelihood that an inner vertex has two surviving lineages
that serve as witnesses of the event; in addition, multifurcations increase
the vertex degree in the BMG, so that in principle more information is
available to resolve the tree structure.  It is also interesting to note
that $\mathfrak{U}_U\setminus \mathfrak{U}_M$ is small, i.e., there are few
cases of first edges in an ugly quartet that are not also middle edges in a
good quartet. The fraction of \ufp edges that appear only as first edges of
bad quartets is even smaller; only 2-3\% of the \ufp edges associated with
hourglass chains, i.e., less than 0.15\% of all \ufp edges are of this
type. The overwhelming majority of \ufp edges associated with quartets thus
appear (also) as middle edges of good quartets.  This observation provides
an explanation for the excellent performance of removing the
$\mathfrak{U}_M$-edges proposed in \cite{Geiss:20a}. In particular in the
case of binary trees, which was considered by \citet{Geiss:20a}, there is
only a small number of other \ufp edges, which are completely covered by
$\mathfrak{U}_U$.  Fig.~\ref{fig:fdr_heatmap} visualizes the appearance of
false-positive edges depending on the number of duplication and loss
events. Not surprisingly, $\mathfrak{F}$ is enriched in scenarios with a
large number of losses compared to the duplications, and depleted when
losses are rare. In fact, in the absence of losses, the RBMG equals the
orthology graph, i.e., $\mathfrak{F}=\emptyset$
\cite[Thm.~4]{Geiss:20a}. Removal of $\mathfrak{U}_M$, already reduced the
false positives considerably.

\section{Summary and outlook}
\label{sec:summary}

We have shown here how all unambiguously false-positive orthology
assignments can be identified in polynomial time provided that all best
matches are known. In particular, we have provided several
characterizations for \ufp edges in terms of underlying subgraphs and
refinements of trees.  Since the best match graph contains only false
positives, we have obtained a characterization of \emph{all} unambiguously
incorrect orthology assignments. Simulations showed that the majority of
false positives comprises middle edges of good quartets, while \ufp edges
that appear only as first edges of an ugly quartet are rare. Not
surprisingly, the hourglass-related \ufp edges become important in gene
trees with many multifurcations. They do not appear in scenarios derived
from binary gene trees. For the theory developed here, it makes no
difference whether polytomies in the gene tree appear as genuine features,
or whether limited accuracy of the approximation from underlying sequence
data produced the equivalent of a soft polytomy in the BMG.

The augmented tree $(\aug(T^*),\sigma)$ is the least resolved tree that
admits an event labeling such that all inner vertices with child trees
that have overlapping colors are designated as duplications while all inner
vertices with color-disjoint child trees are designated as speciations.
The tree $(\aug(T^*),\sigma)$ therefore does not contain ``non-apparent
duplications'' in the sense of \cite{Lafond:14b}, i.e., duplication
vertices with species-disjoint subtrees. This is an interesting connection
linking the literature concerned with polytomy refinement in given gene
trees \cite{Chang:06,Lafond:14b} with best match graphs. 

The extremal event labeling $\wt$ of $(\aug(T^*),\sigma)$ is the one that
minimizes the necessary number of duplications on $(\aug(T^*),\sigma)$. In
a conceptual sense, therefore, $(\aug(T^*),\wt)$ is a ``most parsimonious''
solution, matching the idea of most parsimonious reconciliations
\cite{Guigo:96,Page:97}. From a technical point of view, however, the
problem we solve here is very different. Instead of considering a given
pair of gene tree $T$ and species tree $S$, we ask here about the
information contained in the BMG $(\G,\sigma)$, i.e., we only consider the
information on the species tree that is already implicitly contained in
$(\G,\sigma)$.  The construction of the event-labeled gene tree
$(\aug(T^*),\wt)$ in fact \emph{implies} a set $\mathfrak{S}$ of
informative triples, namely those $\sigma(x)\sigma(y)|\sigma(z)$ with
$\sigma(x)$, $\sigma(y)$, $\sigma(z)$ pairwise distinct and
$\wt(\lca_{\aug(T^*)}(x,y,z))=\SPEC$, that are displayed by the species
tree $S$ \cite{HernandezRosales:12a,Hellmuth:17}. Nothing in our theory,
however, ensures that $\mathfrak{S}$ is a consistent set of triples, much
less that $\mathfrak{S}$ is consistent with a given species tree $S$. A
lack of consistency, however, implies that the no-hug graph
$\NH(\G,\sigma)$ cannot be the correct orthology relation, and thus,
necessarily contains additional false-positive edges. Consistency, on the
other hand, cannot provide a mathematical proof for biological
correctness. It makes $\NH(\G,\sigma)$ a very likely candidate for the true
orthology relation, however, because alternative scenarios require
additional gene duplications and multiple, strategically placed gene losses
to compensate for them.

Since constraints on reconciliation maps deriving from the species
phylogeny are fully expressed by informative triples, no such constraint
exists in particular for any vertex $u$ of $\aug(T^*)$ that has only leaves
as children. That is, false-positive orthology assignments among the
children of $u$ cannot be identified from the BMG alone because there are
no further descendants to witness $u$ as duplication event. Additional
evidence, such as the assumption of a molecular clock or synteny must be
used to resolve situations such as the complementary loss shown in
Fig.~\ref{fig:compl_loss}.

Every gene tree $T$ can be reconciled with every species tree $S$
\cite{Guigo:96,Page:97,Geiss:20a} at the expense of reassigning events as
duplications.  If $\aug(T^*)$ is already binary, consistency will require
the relabeling of some speciation nodes as duplications. Can one
characterize and efficiently compute the minimal relabelings? In the
general case, a further refinement of $\aug(T^*)$ may be sufficient. Is a
refinement of speciation nodes sufficient, or are there in general
speciation nodes in $(\aug(T^*),\wt)$ that need to be refined into separate
speciation and duplication events?

Since orthology graphs are cographs contained in the RBMG $(G,\sigma)$, it
is of interest to compare the deletion of all \ufp edges in $(G,\sigma)$
with finding a (minimal) edge-deletion set to obtain a cograph.  These two
problems are clearly distinct: The simplest example is the BMG
$(\G,\sigma)$ in Fig.~\ref{fig:hourglasses}(A): its symmetric part $G$ is
already a cograph but $(\G,\sigma)$ contains the hug-edge $xy$, which must
be deleted. Despite its practical use \cite{Hellmuth:15,Lafond:16}, this
observation relegates cograph editing \cite{Liu:12,Hellmuth:20b,Tsur:20} to
the status of a heuristic approximation for the purpose of orthology
detection.

For practical applications, one has to keep in mind that best matches are
inferred from sequence similarity data. Despite efforts to convert best
(blast) hits into evolutionary best matches in a systematic manner
\cite{Stadler:20a}, estimated BMGs will contain errors, which in most cases
will violate the definition of best match graphs.  This begs the question
how an empirical estimate of a BMG can be corrected to a closest
``correct'' BMG that (approximately) fits the data. Not surprisingly, BMG
editing \cite{Schaller:20d} and the analogous RBMG editing problem
\cite{Hellmuth:20a} are NP-hard. Efficient, accurate heuristics are a topic
of ongoing research.

Orthology prediction tools intended for large data sets often do not
attempt to infer the orthology graph, but instead are content with
summarizing the information as \emph{clusters of orthologous groups} (COGs)
in an empirically estimated RBMG \cite{Tatusov1997,Roth:08}. Formally, this
amounts to editing the BMG to a set of disjoint cliques. The example in
Fig.~\ref{fig:messy_vertices-1} shows that this approach can destroy
correct orthology information: the BMG $(\G,\sigma)$ does not contain \ufp
edges and thus, it is the closest orthology graph. However, $(\G,\sigma)$
is not the disjoint union of cliques.

\vspace{5mm}
\textbf{Acknowledgments.} 
We thank Carsten R.\ Seemann for fruitful discussions and his helpful
comments.  Moreover, we thank the anonymous reviewers for their important
and valuable comments that helped to significantly improve the
paper. This work was supported in part by the Austrian Federal Ministries
BMK and BMDW and the Province of Upper Austria in the frame of the COMET
Programme managed by FFG, and by the German Research Foundation (DFG,
grant no.\ STA 850/49-1).

\begin{appendix}
  \section*{TECHNICAL PART}
  
  \section{(Reciprocal) best matches}
  \label{APP:subsect:rbmg}
  
  We start by collecting some useful properties of BMGs and RBMGs that will
  be needed for later reference.
  \begin{lemma}{\cite[Lemma~10]{Geiss:19b}}
    Let $(T,\sigma)$ be a leaf-colored tree on $L$ and let $v\in V(T)$. Then,
    for any two distinct colors $r,s\in \sigma(L(T(v)))$, there is an edge
    $xy$ in $\G(T,\sigma)$ with $x\in L[r]\cap L(T(v))$ and
    $y\in L[s]\cap L(T(v))$. 
    \label{lem:exEdge}
  \end{lemma}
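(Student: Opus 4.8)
The plan is to exploit the elementary reformulation of the best-match condition in terms of last common ancestors: a leaf $y$ with $\sigma(y)\neq\sigma(x)$ is a best match of $x$ precisely when $\lca_T(x,y)$ is $\preceq_T$-minimal among the vertices $\lca_T(x,y')$ taken over all leaves $y'$ of $T$ with $\sigma(y')=\sigma(y)$. This is nothing but the definition of best match rewritten, using that all these vertices lie on the path from $x$ to $0_T$ and are therefore pairwise $\preceq_T$-comparable, so that ``minimal'' and ``minimum'' coincide here.

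First I would fix the two distinct colors $r,s\in\sigma(L(T(v)))$ and consider the non-empty finite set of pairs $(x,y)$ with $x\in L[r]\cap L(T(v))$ and $y\in L[s]\cap L(T(v))$. Choose such a pair for which $w\coloneqq\lca_T(x,y)$ is $\preceq_T$-minimal (i.e.\ no pair of this type has a strictly smaller last common ancestor). The claim is that this $x$ and $y$ are reciprocal best matches, which immediately yields the desired edge $xy$ in $\G(T,\sigma)$ with the prescribed colors and the prescribed membership in $L(T(v))$.

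The core of the argument is the observation that $w=\lca_T(x,y)\preceq_T v$, since both $x$ and $y$ lie in $L(T(v))$. Consequently, any vertex $w'$ with $w'\prec_T w$ also satisfies $w'\prec_T v$, so $L(T(w'))\subseteq L(T(v))$. Now suppose, for contradiction, that $y$ is not a best match of $x$. By the reformulation above there is a leaf $y'$ with $\sigma(y')=s$ and $\lca_T(x,y')\prec_T\lca_T(x,y)=w$; hence $\lca_T(x,y')\prec_T v$ and therefore $y'\in L(T(v))$. But then $(x,y')$ is again a pair of the considered type, with $\lca_T(x,y')\prec_T w$, contradicting the minimal choice of $w$. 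Thus $y$ is a best match of $x$, and the symmetric argument (interchanging $x\leftrightarrow y$ and $r\leftrightarrow s$) shows that $x$ is a best match of $y$. Hence $xy\in E(\G(T,\sigma))$, completing the proof.

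I expect the only genuinely non-trivial step to be the \emph{localisation} observation that the hypothetical closer witness $y'$ is forced to lie inside $L(T(v))$; this is exactly what makes a purely local minimal choice inside $T(v)$ interact correctly with the globally defined best-match relation, and it rests entirely on $\lca_T(x,y)\preceq_T v$. Everything else is a direct unwinding of Definition~1 and of the ancestor order. Note that no case distinction on the degree of $v$ or on binarity of $T$ is required, so the argument applies verbatim to arbitrary planted phylogenetic trees.
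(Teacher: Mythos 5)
Your proof is correct. Note that the paper itself does not prove this lemma but imports it from Gei{\ss} et al.\ (2019b, Lemma~10); your argument --- choosing among all bi-colored pairs inside $L(T(v))$ one whose last common ancestor is $\preceq_T$-minimal and showing that a closer witness of either color would have to lie below that ancestor, hence inside $L(T(v))$, contradicting minimality --- is the standard direct proof of this fact and matches the approach of the cited source.
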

  
  \begin{lemma}\label{lem:edge-xy-lca}
    Let $(\G,\sigma)$ be a BMG explained by a tree $(T,\sigma)$.  Moreover,
    let $x,y \in L(T)$ with $\sigma(x)\neq \sigma(y)$ and
    $v_x,v_y\in\child(\lca_T(x,y))$ with $x\preceq_Tv_x$ and $y\preceq_Tv_y$.
    Then, $\sigma(x)\notin \sigma(L(T(v_y)))$ and
    $\sigma(y)\notin \sigma(L(T(v_x)))$ if and only if $xy$ is an edge in
    $\G$.
  \end{lemma}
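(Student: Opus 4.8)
Write $u\coloneqq\lca_T(x,y)$, and recall that, by definition, $xy$ is an edge in $\G$ precisely if $y$ is a best match of $x$ \emph{and} $x$ is a best match of $y$, where $y$ is a best match of $x$ iff $\sigma(x)\ne\sigma(y)$ and $\lca_T(x,y)\preceq_T\lca_T(x,y')$ for every leaf $y'$ with $\sigma(y')=\sigma(y)$. Note first that $v_x\ne v_y$ and that $u$ is the parent of both $v_x$ and $v_y$, since $u$ is the last common ancestor of the distinct leaves $x$ and $y$. In particular $\lca_T(x,y)=u$, so $y$ itself is a leaf of colour $\sigma(y)$ whose last common ancestor with $x$ is exactly $u$, and symmetrically for $x$. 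The whole statement will follow by translating the two colour-containment conditions into statements about which leaves can have a last common ancestor with $x$ (resp.\ $y$) strictly below $u$.

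For the direction ``$\Rightarrow$'' of the equivalence assume the two colour conditions hold, and let $y'$ be any leaf with $\sigma(y')=\sigma(y)$. Since $\sigma(y)\notin\sigma(L(T(v_x)))$ we have $y'\notin L(T(v_x))$, i.e.\ $v_x$ is not an ancestor of $y'$. As $\lca_T(x,y')$ is an ancestor of $x$ it is comparable with $v_x$; if it were $\preceq_T v_x$ then also $y'\preceq_T\lca_T(x,y')\preceq_T v_x$, contradicting $y'\notin L(T(v_x))$. Hence $\lca_T(x,y')\succ_T v_x$, and since $u$ is the parent of $v_x$ this yields $\lca_T(x,y')\succeq_T u=\lca_T(x,y)$. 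As $y'$ was arbitrary, $y$ is a best match of $x$. The symmetric argument using $\sigma(x)\notin\sigma(L(T(v_y)))$ shows that $x$ is a best match of $y$, so $xy$ is an edge in $\G$.

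For ``$\Leftarrow$'' assume $xy$ is an edge in $\G$ and, for contradiction, that $\sigma(y)\in\sigma(L(T(v_x)))$. Then there is a leaf $y'\in L(T(v_x))$ with $\sigma(y')=\sigma(y)$, and since $x,y'\in L(T(v_x))$ we get $\lca_T(x,y')\preceq_T v_x\prec_T u=\lca_T(x,y)$. Thus $y'$ witnesses that $y$ is not a best match of $x$, contradicting $(x,y)\in E(\G)$. Hence $\sigma(y)\notin\sigma(L(T(v_x)))$, and the analogous argument with the roles of $x$ and $y$ exchanged (using $(y,x)\in E(\G)$) gives $\sigma(x)\notin\sigma(L(T(v_y)))$.

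The argument is entirely elementary once one fixes the bookkeeping of the ancestor order; the only points that need care are that $u$ is the common parent of the two children $v_x\ne v_y$, and the short ``comparability'' step in the ``$\Rightarrow$'' direction that promotes $\lca_T(x,y')\succ_T v_x$ to $\lca_T(x,y')\succeq_T u$. I do not anticipate a genuine obstacle here; related facts such as Lemma~\ref{lem:exEdge} are not needed, and the proof does not use that $(\G,\sigma)$ is a BMG beyond the fact that some tree $(T,\sigma)$ explains it (so it in fact characterises edges in $\G(T,\sigma)$ for an arbitrary leaf-coloured tree).
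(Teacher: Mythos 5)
Your proof is correct and follows essentially the same route as the paper's: both unfold the definition of (reciprocal) best matches and observe that $\lca_T(x,y)\preceq_T\lca_T(x,y')$ for all $y'\in L[\sigma(y)]$ is equivalent to $\sigma(y)\notin\sigma(L(T(v_x)))$, and symmetrically for $x$. You merely spell out the comparability/ancestor bookkeeping that the paper compresses into a ``clearly''.
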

  \begin{proof}
    By the definition of best matches, it holds that $xy$ is an edge in $\G$
    if and only if $\lca_T(x,y) \preceq_T \lca_T(x,y')$ for all $y'\in L(T)$
    of color $\sigma(y)$ and $\lca_T(x,y) \preceq_T \lca_T(x',y)$ for all
    $x'\in L(T)$ of color $\sigma(x)$.  Clearly,
    $\lca_T(x,y) \preceq_T \lca_T(x,y')$ for all such $y'$ if and only if
    $\sigma(y)\notin \sigma(L(T(v_x)))$, and
    $\lca_T(x,y) \preceq_T \lca_T(x',y)$ for all such $x'$ if and only if
    $\sigma(x)\notin \sigma(L(T(v_y)))$.  
  \end{proof}
  
  \begin{definition}
    Suppose that $(T,\sigma)$ explains $(\G,\sigma)$. Then we say that
    $(T,\sigma)$ is \emph{least resolved} (w.r.t.\ $(\G,\sigma)$) if no tree
    $(T',\sigma)$ displayed by $(T,\sigma)$ explains $(\G,\sigma)$.
  \end{definition}
  Recall all trees in this contribution are planted, and thus least resolved
  trees (LRTs) are also considered as planted. Strictly speaking, this
  differs from the construction in \cite{Geiss:19a,Geiss:19b,Geiss:20a}, the
  additional (non-contractible) edge $0_T\rho_T$ is a trivial detail that
  does not affect the properties of LRTs.
  \begin{theorem}{\cite[Thm.~8 and Cor.~4]{Geiss:19a}}
    Every BMG $(\G,\sigma )$ is explained by a unique least resolved tree
    $(T^*,\sigma )$.  In particular, every other tree $(T,\sigma)$ explaining
    $(\G,\sigma)$ is a refinement of $(T^*,\sigma)$.  The least resolved tree
    $(T^* , \sigma )$ of a BMG $(\G, \sigma )$ can be constructed in
    polynomial time.
    \label{thm:LRT}
  \end{theorem}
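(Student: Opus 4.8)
All three claims hinge on understanding the \emph{redundant} edges of an explaining tree, where an inner edge $e=uv$ of $(T,\sigma)$ (with $v$ a child of $u$) is \emph{redundant} if the tree $(T_e,\sigma)$ obtained by contracting $e$ still explains $(\G,\sigma)$. First observe that every tree explaining $(\G,\sigma)$ has leaf set exactly $L=V(\G)$; hence, if $(T',\sigma)\le (T,\sigma)$ also explains $(\G,\sigma)$ and $T'\ne T$, then $T'$ is obtained from $T$ by a nonempty sequence of inner-edge contractions. Consequently $(T,\sigma)$ is least resolved if and only if it has no redundant inner edge, and the theorem reduces to showing that contracting redundant edges exhaustively produces a tree that is \emph{independent of the starting explaining tree}; the refinement statement then follows because each contraction only coarsens the cluster set.

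The engine is a local test for redundancy. Contracting $e=uv$ identifies $u$ with $v$; the induced vertex map $\phi$ is monotone for the ancestor order and satisfies $\lca_{T_e}(a,b)=\phi(\lca_T(a,b))$, so it can never destroy a best-match arc and $\G(T,\sigma)\subseteq\G(T_e,\sigma)$ always holds. Thus $e$ is redundant if and only if no new arc is created, and a short case analysis shows that a new arc $(x,z)$ can appear only when $\lca_T(x,z)=u$, the leaf $x$ lies in some child subtree $T(v_x)$ of $v$, and the color $s\coloneqq\sigma(z)$ satisfies $s\notin\sigma(L(T(v_x)))$ while $s$ occurs both in another child subtree of $v$ and in a sibling subtree of $v$ under $u$; conversely, any such configuration does produce a new arc. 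This is a condition on the color sets $\sigma(L(T(w)))$ of the subtrees incident to $e$ only, so redundancy of a given edge is decidable in polynomial time from $(T,\sigma)$, and one records the equivalent formulation that a non-redundant edge $e=uv$ comes with a \emph{witness} $(x,s)$ as above.

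For uniqueness I would argue in two stages. Stage~1 (order-independence): for a fixed explaining tree $T$, exhaustive contraction of redundant edges yields a well-defined tree $T^{*}(T)$; by a diamond/confluence argument it suffices to check that contracting one redundant edge does not turn a previously redundant edge into a non-redundant one, the only delicate case being that of two incident redundant edges, which is handled by tracing the witness condition through the contraction. Stage~2 (independence of $T$) is the main obstacle: one must show that the cluster $L(T(v))$ of every non-redundant inner edge $uv$ of $T$ is also a cluster of \emph{every} tree $T'$ explaining $(\G,\sigma)$. The plan is to reconstruct $L(T(v))$ from $(\G,\sigma)$ alone, starting from the witness $(x,s)$: since $s\notin\sigma(L(T(v_x)))$, the best matches of $x$ in color $s$ all lie in $L(T(v))$, and one builds $L(T(v))$ as the closure of $\{x\}$ under a reachability process that alternately passes to best matches of suitable colors and enlarges the color set, using Lemmas~\ref{lem:exEdge} and~\ref{lem:edge-xy-lca} together with the hypothesis that $T$ explains $(\G,\sigma)$ to show the closure is neither too small (it contains all of $L(T(v))$) nor too large (it never escapes $L(T(v))$). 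Granting this, every explaining tree $T'$ refines $T^{*}(T)$, hence $T^{*}(T')$ is displayed by $T^{*}(T)$, and by symmetry $T^{*}(T')=T^{*}(T)$, a tree we denote $T^{*}$; this gives both the uniqueness of the least resolved tree and the claim that every explaining tree refines it.

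Finally, polynomial-time constructibility follows by turning this into an algorithm: first build \emph{some} tree explaining $(\G,\sigma)$ --- possible in polynomial time via the recursive structure of BMGs --- and then repeatedly locate a redundant inner edge with the local color-set test of the second paragraph and contract it. There are at most $O(|L|)$ such contractions and each test runs in polynomial time, so $(T^{*},\sigma)$ is computed in polynomial time. I expect Stage~2 of the uniqueness argument --- proving that the combinatorially defined reachability closure equals exactly the cluster $L(T(v))$ --- to be where essentially all of the work lies.
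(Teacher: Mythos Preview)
The paper does not prove this theorem; it is imported verbatim from \cite{Geiss:19a} (their Thm.~8 and Cor.~4). The surrounding material in the present paper, however, makes clear what the intended argument is: the least resolved tree is obtained from the set $\mathscr{R}(\G,\sigma)$ of \emph{informative triples} (Def.~\ref{def:informative_triples}) via the \textsc{Build}/Aho construction. Lemma~\ref{lem:informative_triples} shows that every tree explaining $(\G,\sigma)$ displays every informative triple, hence every such tree refines the Aho tree of $\mathscr{R}(\G,\sigma)$; one then checks that this Aho tree itself explains $(\G,\sigma)$, which simultaneously yields existence, uniqueness, the refinement property, and polynomial-time constructibility, since \textsc{Build} runs in polynomial time.

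Your plan takes a genuinely different route, working bottom-up from redundant edges rather than top-down from triples, and your Stage~1 and your local redundancy test are essentially correct and match Lemma~\ref{lem:redundant_edges}; note, however, that the paper's Lemma~\ref{lem:redundant_commutative} \emph{invokes} the cited Thm.~8 in its proof, so you cannot appeal to it and must carry out the incident-edge confluence check by hand as you indicate. The real gap is Stage~2. You want to show that for every non-redundant edge $uv$ of one explaining tree $T$, the cluster $L(T(v))$ is a cluster of \emph{every} explaining tree $T'$, and you propose to recover $L(T(v))$ from $(\G,\sigma)$ by a ``reachability closure'' seeded at the witness $(x,s)$. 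As stated this is not well-defined: you never say which colors are ``suitable'' at each step, and the obvious candidates either leak out of $L(T(v))$ (taking best matches of a leaf $y\in L(T(v))$ in a color absent from $\sigma(L(T(v)))$ lands outside) or fail to reach all of it (restricting to colors you have already seen need not discover new ones). More fundamentally, even if such a closure could be made to coincide with $L(T(v))$ for the particular tree $T$, you would still need to argue, without reference to $T$, that the collection of all such closures is the same for every explaining tree --- which is exactly the global statement you are trying to prove. The informative-triple route sidesteps this entirely: the triples are read off directly from $(\G,\sigma)$, so the Aho tree is manifestly independent of any explaining tree, and the refinement statement drops out of Lemma~\ref{lem:informative_triples}. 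I would recommend replacing your Stage~2 by that argument; your redundancy analysis then becomes a pleasant alternative description of the same object rather than the load-bearing step.
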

  
  \noindent The following definition of informative triples is equivalent to
  the version given by \cite{Geiss:19a}.
  \begin{definition}\label{def:informative_triples}
    Let $(\G,\sigma)$ be a colored digraph. We say that a triple $ab|b'$ is
    \emph{informative} for $(\G,\sigma)$ if $a$, $b$ and $b'$ are three
    different vertices with $\sigma(a)\neq\sigma(b)=\sigma(b')$ in $\G$ such
    that $(a,b)\in E(\G)$ and $(a,b')\notin E(\G)$.
  \end{definition}
  
  \begin{lemma}
    \label{lem:informative_triples}
    Let $(\G,\sigma)$ be a BMG and $ab|b'$ an informative triple for
    $(\G,\sigma)$.  Then, every tree $T$ that explains $(\G,\sigma)$ displays
    the triple $ab|b'$, i.e. $\lca_T(a,b)\prec_T\lca_T(a,b')=\lca_T(b,b')$.
  \end{lemma}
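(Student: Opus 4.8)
The plan is to work entirely inside a fixed explaining tree and just unwind the definitions of ``best match'' and ``informative triple'' together with the ancestor order; no appeal to least resolved trees or to the construction from informative triples is needed. Fix a tree $(T,\sigma)$ that explains $(\G,\sigma)$ and set $u\coloneqq\lca_T(a,b)$ and $u'\coloneqq\lca_T(a,b')$. From $(a,b)\in E(\G)$ we know $b$ is a best match of $a$, i.e.\ $\lca_T(a,b)\preceq_T\lca_T(a,y')$ for every leaf $y'$ with $\sigma(y')=\sigma(b)$; since $\sigma(b')=\sigma(b)$, instantiating this with $y'=b'$ gives immediately $u\preceq_T u'$.

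Next I would upgrade this to a strict inequality by contraposition. Suppose $u=u'$. Then for every leaf $y'$ with $\sigma(y')=\sigma(b')=\sigma(b)$ we have $\lca_T(a,b')=u=\lca_T(a,b)\preceq_T\lca_T(a,y')$, using again that $b$ is a best match of $a$. Together with $\sigma(a)\ne\sigma(b')$, this is precisely the statement that $b'$ is a best match of $a$, i.e.\ $(a,b')\in E(\G)$, contradicting that $ab|b'$ is informative. Hence $u\ne u'$, and with $u\preceq_T u'$ we obtain $\lca_T(a,b)\prec_T\lca_T(a,b')$.

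It then remains to identify $\lca_T(a,b')$ with $\lca_T(b,b')$. Since $a\preceq_T u\prec_T u'$ and $b\preceq_T u\prec_T u'$, the vertex $u'$ is an ancestor of $a,b,b'$, so $\lca_T(a,b,b')\preceq_T u'$; and $u'=\lca_T(a,b')\preceq_T\lca_T(a,b,b')$, whence $\lca_T(a,b,b')=u'$. Let $c$ be the child of $u'$ with $a\preceq_T c$. As $u$ lies on the $a$--$u'$ path and $u\prec_T u'$, we get $u\preceq_T c$, hence $b\preceq_T u\preceq_T c$, so $a,b\in L(T(c))$. On the other hand $b'\not\preceq_T c$, since otherwise $\lca_T(a,b')\preceq_T c\prec_T u'$, contradicting $\lca_T(a,b')=u'$. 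Because subtrees rooted at distinct children of $u'$ have disjoint leaf sets, $\lca_T(b,b')$ cannot lie strictly below $u'$ (it is an ancestor of $b\in L(T(c))$ but also of $b'\notin L(T(c))$), so $\lca_T(b,b')=u'=\lca_T(a,b')$. Since $(T,\sigma)$ was an arbitrary explaining tree, this proves $\lca_T(a,b)\prec_T\lca_T(a,b')=\lca_T(b,b')$ in every such tree.

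I do not anticipate a real obstacle: the argument is essentially bookkeeping with $\preceq_T$. The one point that deserves care is the contrapositive step showing $u\ne u'$ — one has to exploit that the best-match inequality for $b$ is quantified over \emph{all} leaves of color $\sigma(b)$, so that an equality $\lca_T(a,b)=\lca_T(a,b')$ would make $b'$ inherit the best-match property verbatim. The final $\lca$-identification is routine once the child $c$ of $\lca_T(a,b')$ above $a$ is singled out.
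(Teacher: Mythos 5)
Your proof is correct and follows essentially the same route as the paper's: the paper's own argument is just the two-line unwinding of the best-match definition together with $(a,b)\in E(\G)$, $(a,b')\notin E(\G)$ and $\sigma(b)=\sigma(b')$, which you carry out in full detail (including the routine identification $\lca_T(a,b')=\lca_T(b,b')$ that the paper leaves implicit). No gaps.
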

  \begin{proof}
    The definition of informative triples implies that $(a,b)\in E(\G)$ and
    $(a,b')\notin E(\G)$.  Using $\sigma(b)=\sigma(b')$ and the definition of
    best matches we immediately conclude $\lca_T(a,b)\prec_T\lca_T(a,b')$.
  \end{proof}
  
  \begin{lemma}\label{lem:inf_triples_overlap}
    Let $ab|b'$ and $cb'|b$ be informative triples for a BMG
    $(\G,\sigma)$. Then every tree $(T,\sigma)$ that explains $(\G,\sigma)$
    contains two distinct children $v_1, v_2\in\child_{T}(\lca_{T}(a,c))$
    such that $a,b\prec_T v_1$ and $b',c\prec_T v_2$.
  \end{lemma}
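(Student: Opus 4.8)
The plan is to deduce the statement almost immediately from Lemma~\ref{lem:informative_triples}. That lemma tells us that any tree $(T,\sigma)$ explaining $(\G,\sigma)$ displays both informative triples, so applying it to $ab|b'$ and to $cb'|b$ yields
\[
\lca_T(a,b) \prec_T \lca_T(a,b') = \lca_T(b,b') \qquad\text{and}\qquad \lca_T(c,b') \prec_T \lca_T(c,b) = \lca_T(b,b').
\]
I would set $u \coloneqq \lca_T(b,b')$; both displayed relations then say that $\lca_T(a,b)$ and $\lca_T(c,b')$ are \emph{strict} descendants of $u$.

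The next step is to localise $a,b,c,b'$ among the children of $u$. Let $v_1$ be the child of $u$ lying on the path from $b$ to $u$ and $v_2$ the child of $u$ lying on the path from $b'$ to $u$; since $\lca_T(b,b') = u$, we have $v_1 \ne v_2$. Because $\lca_T(a,b)$ is an ancestor of $b$ that lies strictly below $u$, it is comparable to $v_1$ and hence $\lca_T(a,b) \preceq_T v_1$, so $a,b \preceq_T v_1$; the degenerate case $v_1 \in \{a,b\}$ is excluded because it would force $a = b$, contradicting $\sigma(a) \ne \sigma(b)$, so in fact $a,b \prec_T v_1$. The symmetric argument applied to $\lca_T(c,b') \prec_T u$ gives $b',c \prec_T v_2$.

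Finally I would identify $u$ with $\lca_T(a,c)$: since $a \prec_T v_1$ and $c \prec_T v_2$ with $v_1 \ne v_2$ both children of $u$, no proper descendant of $u$ can be an ancestor of both $a$ and $c$, while $u$ itself is; hence $\lca_T(a,c) = u$. Thus $v_1,v_2 \in \child_T(\lca_T(a,c))$ are distinct children with $a,b \prec_T v_1$ and $b',c \prec_T v_2$, as claimed.

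I do not expect a genuine obstacle here: once Lemma~\ref{lem:informative_triples} supplies the two displayed triples, everything is forced. The only part requiring care is the bookkeeping in the ancestor order of $T$ — verifying the relevant comparabilities, keeping the inequalities strict, and ruling out the degenerate cases in which a named leaf coincides with a child of $u$ — which is routine.
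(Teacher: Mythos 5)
Your proposal is correct and follows essentially the same route as the paper: both apply Lemma~\ref{lem:informative_triples} to the two informative triples, identify $u=\lca_T(b,b')=\lca_T(a,b')=\lca_T(c,b)$, place $a,b$ and $b',c$ strictly below two distinct children of $u$, and (you explicitly, the paper implicitly) conclude $u=\lca_T(a,c)$. Your extra care with the degenerate leaf cases and with $\lca_T(a,c)=u$ just spells out steps the paper leaves implicit.
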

  \begin{proof}
    Let $(T,\sigma)$ be an arbitrary tree that explains $(\G,\sigma)$.  By
    Lemma~\ref{lem:informative_triples}, $T$ displays the informative triples
    $ab|b'$ and $cb'|b$.  Thus we have
    $\lca_{T}(a,b)\prec_T\lca_{T}(a,b')=\lca_{T}(b,b')$ and
    $\lca_{T}(c,b')\prec_T\lca_{T}(c,b)=\lca_{T}(b,b')$.  In particular,
    $\lca_{T}(a,b')=\lca_{T}(b,b')=\lca_{T}(c,b)\eqqcolon u$. Therefore,
    $a\preceq_{T} v_1$ and $b'\preceq_{T} v_2$ for distinct
    $v_1, v_2\in\child_{T}(u)$.  Since $\lca_{T}(a,b)\prec_T u$, we have
    $a,b\prec_T v_1$ and thus $v_1$ is an inner node.  Likewise,
    $\lca_{T}(b',c)\prec_{T}u$ implies $b',c\prec_{T}v_2$.  
  \end{proof}
  
  Given a tree $T$ and an edge $e$, denote by $T_e$ the tree obtained from
  $T$ by contracting the edge $e$. An edge $e\ne0_T\rho_T$ in $(T,\sigma)$ is
  \emph{redundant (w.r.t.\ $(\G,\sigma)$)} if $(T,\sigma)$ explains
  $(\G,\sigma)$ and $\G(T_e,\sigma)=\G(T,\sigma)$. Redundant edges have
  already been characterized in \cite[Lemma~15, Thm.~8]{Geiss:19a} in terms
  of equivalence classes using a more complicated notation. Here we
  give a simpler characterization:
  
  \begin{lemma}
    \label{lem:redundant_edges}
    Let $(\G,\sigma)$ be a BMG explained by a tree $(T,\sigma)$.  The edge 
    $e=uv$
    with $v\prec_T u$ in $(T,\sigma)$ is redundant w.r.t.\ $(\G,\sigma)$ if
    and only if (i) $e$ is an inner edge of $T$ and (ii) there is no arc
    $(a,b)\in E(\G)$ such that $\lca_T(a,b)=v$ and
    $\sigma(b)\in \sigma(L(T(u))\setminus L(T(v)))$.
  \end{lemma}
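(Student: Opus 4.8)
The plan is to work directly with the way contracting $e=uv$ (where $v\prec_T u$, so $v$ is a child of $u$) transforms last common ancestors, and hence the best-match arcs. The only structural fact about contraction I need is: for all leaves $a,b$ one has $\lca_{T_e}(a,b)=\lca_T(a,b)$ unless $\lca_T(a,b)=v$, in which case $\lca_{T_e}(a,b)=u$; and $\preceq_{T_e}$ restricted to $V(T)\setminus\{v\}$ coincides with $\preceq_T$. Informally, contracting $e$ never lets an lca descend, and it moves up exactly the lca's of pairs that are "split at $v$". As a first, easy step I would handle condition (i): since $e\ne 0_T\rho_T$ by definition of redundancy, $e$ fails to be an inner edge only if the lower endpoint $v$ is a leaf, in which case $L(T_e)\ne L(T)$, so $\G(T_e,\sigma)\ne\G(T,\sigma)$ and $e$ is not redundant. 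Hence for the rest I may assume $e$ is an inner edge, so $L(T_e)=L(T)$, and redundancy is equivalent to $E(\G(T_e,\sigma))=E(\G(T,\sigma))$.

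Next I would show that contracting an inner edge never destroys an arc, i.e.\ $E(\G(T,\sigma))\subseteq E(\G(T_e,\sigma))$. Fix a best-match arc $(x,y)$ of $T$ and any leaf $y'$ with $\sigma(y')=\sigma(y)$; since $\lca_T(x,y)$ and $\lca_T(x,y')$ are both ancestors of $x$ they are comparable, and $y$ being a best match gives $\lca_T(x,y)\preceq_T\lca_T(x,y')$. A short case check on whether $\lca_T(x,y)$ or $\lca_T(x,y')$ equals $v$, using only that contraction moves lca's up and that $u$ is the parent of $v$, yields $\lca_{T_e}(x,y)\preceq_{T_e}\lca_{T_e}(x,y')$; as colors are untouched, $(x,y)$ survives. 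Consequently $\G(T_e,\sigma)\ne\G(T,\sigma)$ if and only if contraction creates a new arc, and the whole problem reduces to characterizing when this happens.

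For the implication (i)\,$\wedge$\,(ii)\,$\Rightarrow$\,$e$ redundant I would argue by contradiction: suppose $(x,y)\in E(\G(T_e,\sigma))\setminus E(\G(T,\sigma))$. Because $(x,y)$ is not a best match in $T$ there is $y'$ with $\sigma(y')=\sigma(y)=:t\ne\sigma(x)$ and $\lca_T(x,y')\prec_T\lca_T(x,y)$, while $(x,y)$ being a best match in $T_e$ gives $\lca_{T_e}(x,y)\preceq_{T_e}\lca_{T_e}(x,y')$. Chaining these two inequalities forces $\lca_T(x,y')=v$ (so $\lca_{T_e}(x,y')=u$) and then, since $u$ is the parent of $v$, $\lca_T(x,y)=u$. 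Thus $x\in L(T(v))$ and $y\in L(T(u))\setminus L(T(v))$, so $t=\sigma(y)\in\sigma(L(T(u))\setminus L(T(v)))$; and because $y$ stays a best match of $x$ in $T_e$, no $t$-colored leaf may lie in the child-subtree $T(v_x)$ of $v$ containing $x$. But then every $t$-colored leaf $y''$ satisfies $\lca_T(x,y'')\succeq_T v$, with equality whenever $y''$ lies in another child-subtree of $v$, so $\lca_T(x,y')=v$ is $\preceq_T$-minimal: $(x,y')$ is an arc of $\G(T,\sigma)$ with $\lca_T(x,y')=v$ and $\sigma(y')\in\sigma(L(T(u))\setminus L(T(v)))$, contradicting (ii).

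For the converse, $e$ redundant\,$\Rightarrow$\,(i)\,$\wedge$\,(ii), (i) was already handled, and it remains to show that if (ii) fails then $e$ is not redundant. Given an arc $(a,b)\in E(\G(T,\sigma))$ with $\lca_T(a,b)=v$ and $\sigma(b)=t\in\sigma(L(T(u))\setminus L(T(v)))$, the leaves $a,b$ lie in distinct child-subtrees of $v$, and since $b$ is a best match of $a$ no $t$-colored leaf lies in the child-subtree of $v$ containing $a$. Choosing a $t$-colored leaf $c\in L(T(u))\setminus L(T(v))$, in $T$ the leaf $c$ is not a best match of $a$ (as $\lca_T(a,b)=v\prec_T u=\lca_T(a,c)$), whereas in $T_e$ the $t$-colored leaves inside $T(v)$ move up to lca $u$ with $a$ and those outside $T(v)$ are unchanged at $\succeq_T u$, so the minimum is $u$ and $c$ becomes a best match of $a$; hence $(a,c)$ is an arc of $\G(T_e,\sigma)$ but not of $\G(T,\sigma)$. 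I expect the main obstacle to be the forward direction: extracting, from the mere existence of a new arc $(x,y)$, the exact locations $v$ and $u$ of the two relevant lca's and the color-disjointness of $T(v_x)$, and then recognizing that this very color-disjointness is what promotes the auxiliary pair $(x,y')$ to a genuine best-match arc with lca exactly $v$ — it is in the two nested chains of $\preceq$-inequalities and the $\lca_T(x,y)=v$ versus $\ne v$ split that a careless argument goes wrong.
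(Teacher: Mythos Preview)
Your proof is correct and follows essentially the same ideas as the paper's, but with a different organization that is arguably cleaner. You first isolate the monotonicity fact ``contracting an inner edge never removes an arc'' and then reduce redundancy to the absence of \emph{new} arcs; the paper instead proves this monotonicity separately as a later lemma (Lemma~\ref{lem:contract-subgraph}) and, for the direction (i)$\wedge$(ii)$\Rightarrow$redundant, carries out a direct case analysis on the position of an arbitrary leaf $c$ and of $\lca_T(c,d)$ relative to $u$ and $v$. Your contradiction argument---forcing $\lca_T(x,y')=v$ and $\lca_T(x,y)=u$ from the two opposing $\preceq$-inequalities and then upgrading $(x,y')$ to a genuine best-match arc via the color-disjointness of $T(v_x)$---replaces that case split and directly produces the witness violating~(ii). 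For the converse direction both proofs are essentially identical: given a forbidden arc $(a,b)$, one exhibits a leaf $c\in L(T(u))\setminus L(T(v))$ of color $\sigma(b)$ that becomes a new best match of $a$ after contraction.
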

  \begin{proof}
    Let $w_e$ be the vertex in $T_e$ resulting from the contraction $e=uv$
    with $v\prec_T u$ in $T$. By assumption we have
    $(\G,\sigma)=\G(T,\sigma)$.
    
    First, assume that $e$ is redundant and thus,
    $\G(T_e,\sigma)=\G(T,\sigma)$.  Then $e$ must be an inner edge, since
    otherwise $L(T)\ne L(T_e)$ and, therefore, $(T_e,\sigma)$ does not
    explain $(\G,\sigma)$.  Now assume, for contradiction, that there is an
    arc $(a,b)\in E(\G)$ such that $\lca_T(a,b)=v$ and
    $\sigma(b)\in \sigma( L(T(u))\setminus L(T(v)))$. Then there is a leaf
    $b'\in L(T(u))\setminus L(T(v))$ with $\sigma(b')=\sigma(b)$ and
    $\lca_T(a,b)=v\prec_{T} u = \lca_T(a,b')$.  Thus, $(a,b')\notin
    E(\G)$. After contraction of $e$, we have
    $\lca_T(a,b)=\lca_T(a,b') = w_e$. Hence, by definition of best matches,
    $(a,b)$ is an arc in $\G(T_e,\sigma)$ if and only if $(a,b')$ is an arc
    in $\G(T_e,\sigma)$; a contradiction to the assumption that
    $(T_e,\sigma)$ explains $(\G,\sigma)$.
    
    Conversely, assume that $e=uv$ with $v\prec_T u$ is an inner edge in $T$
    and that there is no arc $(a,b)\in E(\G)$ such that $\lca_T(a,b)=v$ and
    $\sigma(b)\in \sigma(L(T(u))\setminus L(T(v)))$.  In order to show that
    an edge $e$ is redundant, we need to verify that
    $\G(T,\sigma) = \G(T_e,\sigma)$. To this end, consider an arbitrary leaf
    $c\in L(T)$. Then we have either Case (1) $c\in L(T)\setminus L(T(v))$,
    or Case (2) $c\in L(T(v))$.
    
    In Case (1) it is easy to verify that $\lca_{T}(c,d)=\lca_{T_e}(c,d)$ for
    every $d\in L(T)$. In particular, therefore, $(c,d)\in E(\G(T,\sigma))$
    if and only if $(c,d)\in E(\G(T_e,\sigma))$.
    
    In Case (2), i.e.\ $c\in L(T(v))$, consider another, arbitrary, leaf
    $d\in L(T)$. Note, if $\sigma(c)=\sigma(d)$, then $c$ and $d$ never form
    a best match. Thus, we assume $\sigma(c)\neq\sigma(d)$. Now, we consider
    three mutually exclusive Subcases (a) $\lca_T(c,d)\preceq_{T} v$, (b)
    $\lca_T(c,d)=u$ and (c) $\lca_T(c,d)\succ_T u$.
    
    \textit{Case (a).} Since no edge below $v$ is contracted, we have for
    every $d'$ with $\sigma(d')=\sigma(d)$,
    $\lca_{T}(c,d')\prec_{T}\lca_{T}(c,d)\preceq_{T}v$ if and only if
    $\lca_{T_e}(c,d')\prec_{T_e}\lca_{T_e}(c,d)\preceq_{T_e} w_e$.  In
    particular, therefore, $(c,d)\in E(\G(T,\sigma))$ if and only if
    $(c,d)\in E(\G(T_e,\sigma))$.
    
    \textit{Case (b).} $\lca_T(c,d)=u$ and $c\prec_T v$ implies that
    $d\in L(T(u)\setminus L(T(v))$ and thus,
    $\sigma(d)\in\sigma( L(T(u))\setminus L(T(v)) )$. If
    $(c,d)\in E(\G(T,\sigma))$, then $\sigma(d)\notin \sigma(L(T(v)))$ must
    hold. Therefore, $(c,d)$ is still an arc after contraction of $e$. For
    the case $(c,d)\notin E(\G(T,\sigma))$, assume for contradiction
    $(c,d)\in E(\G(T_e,\sigma))$. Then $(c,d)\notin E(\G(T,\sigma))$ implies
    that there must be a vertex $d'$ with $\sigma(d')=\sigma(d)$ and
    $\lca_T(c,d')\preceq_T v\prec_T u=\lca_T(c,d)$. In particular,
    $d'\in L(T(v))$ can be chosen such that $\lca_T(c,d')$ is farthest away
    from $v$ and thus, $(c,d')\in E(\G(T,\sigma))$. Now,
    $\lca_T(c,d')\preceq_T v$ and $(c,d)\in E(\G(T_e,\sigma))$ imply that
    $\lca_{T_e}(c,d')= w_e=\lca_{T_e}(c,d)$, which is only possible if
    $\lca_T(c,d')= v$. In summary, we found an arc
    $(c,d')\in E(\G(T,\sigma))$ with $\lca_T(c,d')= v$ and
    $\sigma(d') \in\sigma( L(T(u))\setminus L(T(v)))$; a contradiction to our
    assumption.  Hence, in Case (b) we have $(c,d)\in E(\G(T,\sigma))$ if and
    only if $(c,d)\in E(\G(T_e,\sigma))$.
    
    \textit{Case (c).} Since $\lca_T(c,d)\succ_T u$, it is again easy to see
    that, for every $d'$ with $\sigma(d')=\sigma(d)$,
    $\lca_T(c,d')\prec_T \lca_T(c,d)$ if and only if
    $\lca_{T_e}(c,d')\prec_{T_e} \lca_{T_e}(c,d)$ and thus,
    $(c,d)\in E(\G(T,\sigma))$ if and only if $(c,d)\in E(\G(T_e,\sigma))$.
    
    In summary, we have $(c,d)\in E(\G(T,\sigma))$ if and only if
    $(c,d)\in E(\G(T_e,\sigma))$ for all $c,d\in L(T)$. Thus, $e$ is
    redundant.  
  \end{proof}
  
  As a consequence of Lemma~\ref{lem:redundant_edges}, we obtain
  \begin{corollary}
    \label{cor:edge_redundant}
    Let $(T,\sigma)$ be a leaf-colored tree explaining $(G,\sigma)$ and $uv$
    an inner edge inner of $T$ with $v\prec_T u$.  If
    $\sigma(L(T(v)))\cap\sigma(L(T(v')))=\emptyset$ for every
    $v'\in\child_{T}(u)\setminus\{v\}$, then $uv$ is redundant in $T$
    (w.r.t.\ $(G,\sigma)$).
  \end{corollary}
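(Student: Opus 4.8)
The plan is to derive this directly from the characterization of redundant edges in Lemma~\ref{lem:redundant_edges}. That lemma states that an edge $e=uv$ with $v\prec_T u$ is redundant w.r.t.\ $(G,\sigma)$ if and only if (i) $e$ is an inner edge of $T$ and (ii) there is no arc $(a,b)\in E(\G)$ with $\lca_T(a,b)=v$ and $\sigma(b)\in\sigma(L(T(u))\setminus L(T(v)))$. Condition~(i) is part of the hypothesis, so the entire task reduces to verifying~(ii) under the assumption that the subtree $T(v)$ is color-disjoint from all of its sibling subtrees.

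First I would argue by contradiction: suppose there is an arc $(a,b)\in E(\G)$ with $\lca_T(a,b)=v$ and $\sigma(b)\in\sigma(L(T(u))\setminus L(T(v)))$. From $\lca_T(a,b)=v$ we get $a\preceq_T v$ and $b\preceq_T v$, hence in particular $b\in L(T(v))$ and therefore $\sigma(b)\in\sigma(L(T(v)))$. On the other hand, since $T$ is a planted phylogenetic tree, the leaf set $L(T(u))$ is the disjoint union of the sets $L(T(v'))$ over $v'\in\child_T(u)$, so $L(T(u))\setminus L(T(v)) = \bigcup_{v'\in\child_T(u)\setminus\{v\}} L(T(v'))$. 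Consequently $\sigma(b)\in\sigma(L(T(v')))$ for some sibling $v'\in\child_T(u)$ with $v'\neq v$.

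Combining the two displays yields $\sigma(b)\in\sigma(L(T(v)))\cap\sigma(L(T(v')))$ for this $v'\neq v$, contradicting the hypothesis that $\sigma(L(T(v)))\cap\sigma(L(T(v')))=\emptyset$ for every $v'\in\child_T(u)\setminus\{v\}$. Hence no such arc exists, condition~(ii) holds, and Lemma~\ref{lem:redundant_edges} gives that $uv$ is redundant w.r.t.\ $(G,\sigma)$.

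There is essentially no hard step here; the only point requiring a little care is the observation that $\lca_T(a,b)=v$ forces \emph{both} endpoints (in particular $b$) to lie in $L(T(v))$, together with the elementary partition of $L(T(u))$ into the leaf sets of the child subtrees of $u$. Everything else is a purely set-theoretic manipulation of color sets, so the proof is short and self-contained once Lemma~\ref{lem:redundant_edges} is invoked.
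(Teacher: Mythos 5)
Your proof is correct and follows essentially the same route as the paper's: both invoke Lemma~\ref{lem:redundant_edges} and observe that any arc $(a,b)$ with $\lca_T(a,b)=v$ has $b\in L(T(v))$, so the color-disjointness of $T(v)$ from its sibling subtrees rules out $\sigma(b)\in\sigma(L(T(u))\setminus L(T(v)))$. Phrasing it as a contradiction rather than a direct verification of condition~(ii) is an immaterial difference.
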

  \begin{proof}
    If there is an arc $e=(a,b)\in E(\G)$ with $\lca_T(a,b)=v$ we have
    $\sigma(b)\notin L(T(u))\setminus L(T(v)) = \cup_{v'\in
      \child(u)\setminus \{v\}} L(T(v'))$ because
    $\sigma(L(T(v)))\cap\sigma(L(T(v')))=\emptyset$ for every
    $v'\in\child_{T}(u)\setminus\{v\}$. By Lemma~\ref{lem:redundant_edges},
    the inner edge $uv$ is redundant.  
  \end{proof}
  
  Both Lemma~\ref{lem:redundant_edges} and Cor.~\ref{cor:edge_redundant} are
  illustrated in Fig.~\ref{fig:redundant_edge}: In~(A), $uv$ is a
  non-redundant inner edge since $(a,b)$ is a best match such that $a$ and
  $b$ have $v$ as their last common ancestor and the color of $b$ is present
  in another subtree below vertex $u$.  Contraction of the edge $uv$ would
  result in a tree $T_{uv}$ in which
  $\lca_{T_{uv}}(a,b)=\lca_{T_{uv}}(a,b')$, and thus, introduce the
  additional best match $(a,b')$.  Clearly, this cannot occur whenever the
  other subtrees of $u$ do not share any colors with the subtree $T(v)$, a
  situation that is shown in~(B), i.e., the edge $uv$ is redundant w.r.t.\
  the BMG $\G(T,\sigma)$.
  
  \begin{figure}[t]
    \begin{center}
      \includegraphics[width=0.69\textwidth]{./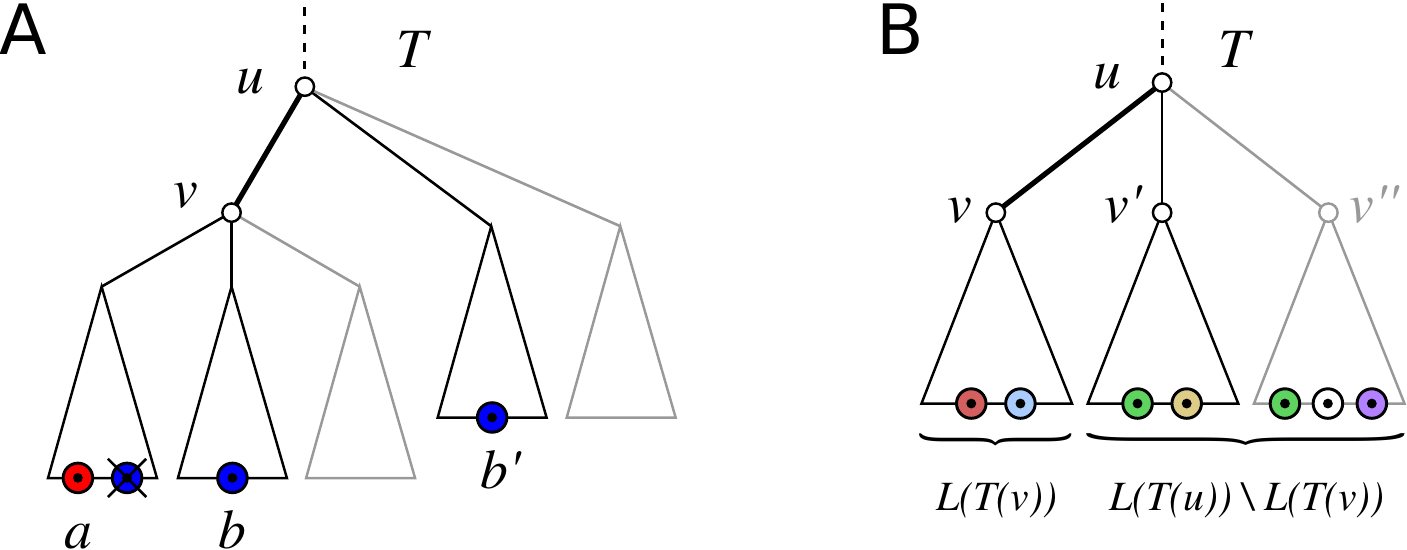}
    \end{center}
    \caption{Visualization of (A) a \emph{non-}redundant edge $uv$ for
      Lemma~\ref{lem:redundant_edges} and (B) a redundant edge $uv$ as in
      Cor.~\ref{cor:edge_redundant}. The gray subtrees may or may not
      exist. In (A), the crossed out leaf indicates that the blue color must
      not be present in this subtree and thus $(a,b)$ is a best match. In
      (B), $\sigma(L(T(v)))$ must not have elements in common with
      $\sigma(L(T(u))\setminus L(T(v)))$.  See text for further details.}
    \label{fig:redundant_edge}
  \end{figure}
  
  Finally, we show that redundant edges can be contracted in arbitrary order,
  similar to \cite[Lemma~6 \& Cor.~2]{Geiss:19a}. To this end, we first prove
  a more general statement.
  \begin{lemma}
    \label{lem:contract-subgraph}
    If $T_A$ is obtained from $T$ by contracting all edges in a subset $A$ of
    inner edges in $T$, then $\G(T,\sigma)\subseteq \G(T_A,\sigma)$.
  \end{lemma}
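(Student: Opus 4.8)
The plan is to reduce to a single edge contraction and then track what happens to lowest common ancestors. Since contractions of distinct edges commute, $T_A$ can be obtained by contracting the edges of $A$ one after another, and each step produces a leaf-colored tree on the same leaf set in which the not-yet-contracted edges of $A$ remain inner edges (contracting an inner edge $uv$ replaces $u,v$ by a single vertex of degree $\deg_T(u)+\deg_T(v)-2\ge 4$ and does not touch the root edge). Hence it suffices to prove $\G(T,\sigma)\subseteq\G(T_e,\sigma)$ whenever $T_e$ arises from $T$ by contracting one inner edge $e=uv$ with $v\prec_T u$; the general statement then follows by induction on $|A|$ (base case $A=\emptyset$, where $T_A=T$).

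For the one-edge case I would introduce the natural vertex map $\varphi\colon V(T)\to V(T_e)$ that sends both $u$ and $v$ to the new vertex $w_e$ and fixes every other vertex; since only an inner edge is contracted, $\varphi$ restricts to a bijection on leaves, and I identify $L(T)=L(T_e)$. Then I would record two elementary facts. (P1) $\varphi$ is order preserving: if $p\preceq_T q$ then $\varphi(p)\preceq_{T_e}\varphi(q)$, because the path from $p$ to the root in $T$ maps onto the path from $\varphi(p)$ to the root in $T_e$ and passes through $q$. (P2) $\lca_{T_e}(a,b)=\varphi(\lca_T(a,b))$ for all leaves $a,b$. For (P2) I would use that contracting the inner edge $uv$ deletes exactly the cluster $L(T(v))$, i.e.\ $\mathscr{C}(T_e)=\mathscr{C}(T)\setminus\{L(T(v))\}$ with $L(T_e(w_e))=L(T(u))$: if $\lca_T(a,b)\neq v$ the $\subseteq$-minimal cluster containing $\{a,b\}$ is unchanged and equals $L(T_e(\varphi(\lca_T(a,b))))$; and if $\lca_T(a,b)=v$, then since no cluster lies strictly between $L(T(v))$ and $L(T(u))$, that minimal cluster becomes $L(T(u))=L(T_e(w_e))=L(T_e(\varphi(v)))$.

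With (P1) and (P2) in hand the conclusion is immediate: if $(x,y)\in E(\G(T,\sigma))$ then $\sigma(x)\neq\sigma(y)$ and $\lca_T(x,y)\preceq_T\lca_T(x,y')$ for every leaf $y'$ with $\sigma(y')=\sigma(y)$; applying $\varphi$ and (P1) gives $\varphi(\lca_T(x,y))\preceq_{T_e}\varphi(\lca_T(x,y'))$, and (P2) rewrites this as $\lca_{T_e}(x,y)\preceq_{T_e}\lca_{T_e}(x,y')$, so $y$ is still a best match of $x$ in $(T_e,\sigma)$. The only step requiring genuine care is (P2) — the bookkeeping that a single inner-edge contraction removes precisely one cluster and leaves every other lowest common ancestor intact; everything else is routine. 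It is worth noting that the inclusion is in general strict: by Lemma~\ref{lem:redundant_edges}, a non-redundant inner edge $uv$ carries an arc $(a,b)$ with $\lca_T(a,b)=v$ and $\sigma(b)\in\sigma(L(T(u))\setminus L(T(v)))$, and contracting $uv$ creates the new arc $(a,b')$ for a leaf $b'\in L(T(u))\setminus L(T(v))$ of color $\sigma(b)$.
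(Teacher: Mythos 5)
Your proposal is correct and follows essentially the same route as the paper: reduce to a single inner-edge contraction and then chain the inclusions over a sequence of one-edge contractions. The only difference is that you fill in, via the map $\varphi$ and the cluster bookkeeping $\mathscr{C}(T_e)=\mathscr{C}(T)\setminus\{L(T(v))\}$, precisely the step the paper's proof dismisses as ``easy to verify'', namely that no strictly smaller last common ancestor $\lca_{T_e}(x,y')$ can appear after the contraction.
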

  \begin{proof}
    First note that $L(T_A)=L(T)$ since $A$ only contains inner edges.  Let
    $(x,y)$ be an arc in $\G(T,\sigma)$. This implies that there is no $y'$
    with $\sigma(y')=\sigma(y)$ such that
    $\lca_T(x,y')\prec_T\lca_T(x,y)$. It is easy to verify that the latter is
    still true after contraction of an arbitrary edge $e$, i.e.\ there is no
    $y'$ with $\sigma(y')=\sigma(y)$ such that
    $\lca_{T_e}(x,y')\prec_{T_e}\lca_{T_e}(x,y)$. Hence, $(x,y)$ is an arc in
    $\G(T_e,\sigma)$.  Now consider the subsets
    $A_1\subset A_2\subset \cdots \subset A_{|A|}=A$ where each $|A_i|=i$,
    $1\leq i\leq |A|$. The argument above implies
    $\G(T,\sigma)\subseteq \G(T_{A_1},\sigma) \subseteq \cdots \subseteq
    \G(T_{A},\sigma)$, which completes the proof.  
  \end{proof}
  
  \begin{lemma}
    \label{lem:redundant_commutative}
    Let $A$ and $B$ be disjoint sets of redundant edges in $(T,\sigma)$
    w.r.t.\ $(\G,\sigma)$ and denote by $T_A$ the tree obtained by
    contraction of all edges in $A$ in arbitrary order.  Then $B$ is a set of
    redundant edges in $T_A$ w.r.t.\ $\G(T_A,\sigma)=\G(T,\sigma)$.
  \end{lemma}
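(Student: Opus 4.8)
The plan is to avoid the local arc-based characterization of Lemma~\ref{lem:redundant_edges} and instead argue entirely at the level of clusters, using the uniqueness of the least resolved tree. Let $(T^*,\sigma)$ be the unique least resolved tree of $(\G,\sigma)$ (Thm.~\ref{thm:LRT}). The workhorse I would establish first is the following \emph{cluster criterion}: if $(T',\sigma)$ is \emph{any} tree explaining $(\G,\sigma)$, then an inner edge $e=uv$ of $T'$ with $v\prec_{T'}u$ is redundant w.r.t.\ $(\G,\sigma)$ if and only if $L(T'(v))\notin\mathscr{C}(T^*)$. (Recall that every redundant edge is an inner edge, so this covers all cases of interest.) For the ``only if'' direction, redundancy means $T'_e$ explains $(\G,\sigma)$, hence by Thm.~\ref{thm:LRT} $T'_e$ is a refinement of $T^*$, i.e.\ $\mathscr{C}(T^*)\subseteq\mathscr{C}(T'_e)=\mathscr{C}(T')\setminus\{L(T'(v))\}$, so $L(T'(v))\notin\mathscr{C}(T^*)$. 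For the ``if'' direction, from $L(T'(v))\notin\mathscr{C}(T^*)$ and $\mathscr{C}(T^*)\subseteq\mathscr{C}(T')$ (again Thm.~\ref{thm:LRT}) I get $\mathscr{C}(T^*)\subseteq\mathscr{C}(T'_e)\subseteq\mathscr{C}(T')$; consequently $T'_e$ is obtained from $T'$ by contracting inner edges and $T^*$ is obtained from $T'_e$ by contracting inner edges, so Lemma~\ref{lem:contract-subgraph} sandwiches $\G(T',\sigma)\subseteq\G(T'_e,\sigma)\subseteq\G(T^*,\sigma)$. Since the two outer graphs both equal $(\G,\sigma)$, so does $\G(T'_e,\sigma)$, i.e.\ $e$ is redundant.

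Granting the criterion, the first claim $\G(T_A,\sigma)=\G(T,\sigma)$ follows quickly. I would view the contraction of the whole set $A$ through its ``blocks'', the connected components of the graph $(V(T),A)$: each block is a connected subtree of $T$, has a unique $\preceq_T$-maximal vertex, and becomes a single vertex of $T_A$ whose cluster is the cluster of that top vertex. A vertex of $T$ fails to be the top of its block precisely when its parent edge lies in $A$, and this yields $\mathscr{C}(T_A)=\mathscr{C}(T)\setminus\{L(T(v))\mid uv\in A,\ v\prec_T u\}$. Because every edge of $A$ is redundant, the cluster criterion says each removed cluster lies outside $\mathscr{C}(T^*)$, so $\mathscr{C}(T^*)\subseteq\mathscr{C}(T_A)\subseteq\mathscr{C}(T)$; the same Lemma~\ref{lem:contract-subgraph} sandwich argument as above (with $T_A$ replacing $T'_e$) then gives $\G(T_A,\sigma)=(\G,\sigma)=\G(T,\sigma)$. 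In particular $(T^*,\sigma)$ is again the least resolved tree of $\G(T_A,\sigma)$.

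For the second claim, take $f=ab\in B$ with $b\prec_T a$, so $a=\parent_T(b)$ and, by $A\cap B=\emptyset$, $f\notin A$. Since $f$ is the only $T$-edge on the path between $a$ and $b$, the vertices $a$ and $b$ lie in distinct blocks, so $f$ maps to a genuine edge $\hat f$ of $T_A$; and since $b$'s parent edge $f$ is not in $A$, the vertex $b$ is the top of its block, so the lower endpoint of $\hat f$ still has cluster $L(T(b))$. Checking that both endpoints of $\hat f$ remain inner vertices shows $\hat f$ is an inner edge of $T_A$. As $f$ is redundant in $T$, the cluster criterion gives $L(T(b))\notin\mathscr{C}(T^*)$; feeding this into the cluster criterion for $T_A$ (which explains $(\G,\sigma)$ with least resolved tree $(T^*,\sigma)$) shows $\hat f$ is redundant in $T_A$ w.r.t.\ $\G(T_A,\sigma)=(\G,\sigma)$. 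Since $f\in B$ was arbitrary, $B$ (identified with the corresponding edges of $T_A$) is a set of redundant edges of $T_A$, as claimed.

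The one step that needs genuine care — the main obstacle — is the combinatorial bookkeeping in the middle: verifying that contracting an arbitrary set $A$ of inner edges, possibly nested or sharing endpoints and processed in any order, deletes exactly the clusters $\{L(T(v))\mid uv\in A\}$ and leaves every other subtree leaf-set untouched, and that each edge of $B$ persists in $T_A$ as an inner edge whose lower-endpoint cluster is unchanged. The block viewpoint reduces all of this to elementary tree combinatorics and, as a bonus, makes the independence from the contraction order manifest. Everything else is just repeated application of Thm.~\ref{thm:LRT} and Lemma~\ref{lem:contract-subgraph}; notably, this route never invokes Lemma~\ref{lem:redundant_edges} beyond the remark that redundant edges are inner.
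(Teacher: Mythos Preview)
Your proof is correct and takes a genuinely different route from the paper's. Both arguments ultimately rest on the same sandwich $\G(T,\sigma)\subseteq\G(T',\sigma)\subseteq\G(T^*,\sigma)$ coming from Lemma~\ref{lem:contract-subgraph} together with the refinement statement in Thm.~\ref{thm:LRT}, but they organize the reasoning differently. The paper invokes an additional fact from \cite{Geiss:19a} (that contracting \emph{all} redundant edges, in any order, yields the LRT) and argues globally: since $A$ and $A\cup B$ are subsets of the full redundant set, both $T_A$ and $T_{A\cup B}=(T_A)_B$ still explain $(\G,\sigma)$, whence each edge of $B$ is redundant in $T_A$ by one more sandwich. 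You instead first prove a self-contained \emph{cluster criterion} (an inner edge $uv$ of any explaining tree is redundant iff $L(T'(v))\notin\mathscr{C}(T^*)$) and then apply it edge-by-edge, tracking via the block decomposition that each $f\in B$ survives in $T_A$ as an inner edge whose lower-endpoint cluster is unchanged and hence still absent from $\mathscr{C}(T^*)$.

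What each approach buys: the paper's proof is shorter once the external result is granted. Your approach is more self-contained (it needs nothing from \cite{Geiss:19a} beyond what is already packaged in Thm.~\ref{thm:LRT}) and yields the cluster criterion as a byproduct, which is a clean complement to the arc-based characterization in Lemma~\ref{lem:redundant_edges}. The block bookkeeping you flag as the ``main obstacle'' is indeed routine: since all edges in $A\cup B$ are inner (Lemma~\ref{lem:redundant_edges}), leaves and $0_T$ are singleton blocks, so images of inner vertices stay inner and $\mathscr{C}(T_A)=\mathscr{C}(T)\setminus\{L(T(v))\mid uv\in A,\ v\prec_T u\}$ holds exactly as you state.
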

  \begin{proof}
    By Lemma~\ref{lem:contract-subgraph}, contraction of \emph{any} inner
    edge $e=uv\in E(T)$ never leads to a loss of arcs in the BMG
    $(\G,\sigma) = \G(T,\sigma)$.  Furthermore, the redundant edges in $T$
    w.r.t.\ $(G,\sigma)$ are completely characterized by
    Lemma~\ref{lem:redundant_edges}.  Thm.~8 in \cite{Geiss:19a} states
    that by contraction of all redundant edges (in an arbitrary order), one
    obtains the unique least resolved tree $(T^*,\sigma)$ of $(\G,\sigma)$.
    As argued above, no arc of $\G(T,\sigma)$ can be lost in the stepwise
    contraction of redundant edges.  Together with
    $\G(T,\sigma)=\G(T^*,\sigma)=(\G,\sigma)$ this implies
    $\G(T_A,\sigma)=(\G,\sigma)$.  Since by assumption $A\cap B=\emptyset$
    and $A\cup B$ is a set of redundant edges w.r.t.\ $(\G,\sigma)$, we have
    $(T_A)_B=T_{A\cup B}$ and
    $\G(T_A,\sigma)=(\G,\sigma)=\G(T_{A\cup B},\sigma)=\G((T_A)_B,\sigma)$.
    Hence, $B$ is a set of redundant edges in $T_A$ w.r.t.\ $\G(T_A,\sigma)$.
  \end{proof}
  
  \section{False-positive orthology assignments}
  
  \subsection{$(T,\sigma)$-fp and \ufp edges}
  \label{APP:ssect:ufp}
  
  The aim of this contribution is to characterize all those false-positive
  edges in a given BMG $(\G,\sigma)$ that can be identified from the
  structure of the BMG alone, i.e., without any \emph{a priori} knowledge
  about the gene tree, the species tree, or the reconciliation map. In this
  section, we start by considering false-positive edges identifiable with
  respect to a given $(T,\sigma)$ that explains $(\G,\sigma)$ and then
  proceed by considering those edges that are identified by \emph{all} trees
  explaining $(\G,\sigma)$.
  
  \begin{definition}[$\mathbf{(T,\sigma)}$-false-positive]
    Let $(T,\sigma)$ be a tree explaining the BMG $(\G,\sigma)$.  An edge
    $xy$ in $\G$ is called \emph{$(T,\sigma)$-false-positive}, or
    $(T,\sigma)$-\fp for short, if for every reconciliation map $\mu$ from
    $(T,\sigma)$ to any species tree $S$ we have $t_\mu(\lca_T(x,y))=\DUPL$,
    i.e., $\mu(\lca_T(x,y))\in E(S)$.
    \label{def:Ts-fp}
  \end{definition}
  In other words, $xy$ is called
  $(T,\sigma)$-\fp whenever $x$ and $y$ cannot be orthologous w.r.t.\ every
  possible reconciliation $\mu$ from $(T,\sigma)$ to any species
  tree. Interestingly, $(T,\sigma)$-\fp{}s can be identified without
  considering reconciliation maps explicitly.
  
  \begin{lemma}
    \label{lem:T-fp-no-mu}
    Let $(\G,\sigma)$ be a BMG, $xy$ be an edge in $\G$ and $(T,\sigma)$ be a
    tree that explains $(\G,\sigma)$. Then, the following statements are
    equivalent:
    \begin{enumerate}[itemsep=0.2ex, topsep=0.2ex, parsep=0cm]
      \item The edge $xy$ is $(T,\sigma)$-\fp.
      \item There are two children $v_1$ and $v_2$ of $\lca_T(x,y)$ such that
      $\sigma(L(T(v_1)))\cap \sigma(L(T(v_2)))\neq\emptyset$.
      \item For the extremal labeling $\tT$ of $(T,\sigma)$ it holds that
      $\tT(\lca_T(x,y)) = \DUPL$.
    \end{enumerate}
  \end{lemma}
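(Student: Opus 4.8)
The plan is to prove the chain of implications $(2)\Leftrightarrow(3)\Rightarrow(1)\Rightarrow(2)$, which establishes the equivalence of all three statements. The equivalence $(2)\Leftrightarrow(3)$ is essentially immediate from the definition of the extremal labeling $\tT$ (Def.~\ref{def:extremal_labeling}): by definition $\tT(u)=\DUPL$ precisely when $u$ has two children $v_1,v_2$ with $\sigma(L(T(v_1)))\cap\sigma(L(T(v_2)))\neq\emptyset$, and here we apply this with $u=\lca_T(x,y)$, which is an inner vertex since $xy$ is an edge (so $\sigma(x)\ne\sigma(y)$ and $x,y$ lie below distinct children of $\lca_T(x,y)$). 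The implication $(2)\Rightarrow(1)$ is the contrapositive of Lemma~\ref{lem:duplication_witness}: if $u=\lca_T(x,y)$ has two children whose subtrees share a color, then $\mu(u)\notin V^0(S)$ for any reconciliation map $\mu$ to any species tree $S$; combined with (R2) and the fact that $u$ is not the root (it has $\ge 2$ children below it with leaves, so $\mu(u)\ne 0_S$ by (R0)) and not a leaf, this forces $\mu(u)\in E(S)$, i.e.\ $t_\mu(u)=\DUPL$. Since this holds for \emph{every} $\mu$ and every $S$, the edge $xy$ is $(T,\sigma)$-\fp.

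The only implication requiring real work is $(1)\Rightarrow(2)$, and I would prove its contrapositive: assume that for $u=\lca_T(x,y)$ the color sets $\sigma(L(T(v_1)))\cap\sigma(L(T(v_2)))=\emptyset$ hold for \emph{all} pairs of distinct children $v_1,v_2$ of $u$, and exhibit \emph{one} species tree $S$ and \emph{one} reconciliation map $\mu$ with $t_\mu(u)=\SPEC$, i.e.\ $\mu(u)\in V^0(S)$. This will show $xy$ is not $(T,\sigma)$-\fp. The natural construction is to take $S$ to be a species tree in which the leaf sets $\sigma(L(T(v_i)))$, as $v_i$ ranges over $\child_T(u)$, form a ``fan'' directly below some inner vertex $w$ — that is, $S$ has an inner vertex $w$ whose children's clusters are exactly the sets $\sigma(L(T(v_i)))$ (this is possible precisely because these sets are pairwise disjoint), and below each such child $S$ refines arbitrarily, while above $w$ the species tree can be chosen arbitrarily subject to displaying all colors. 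Then define $\mu$ on $T(u)$ by mapping $u\mapsto w$ and extending appropriately downward and upward. One must check that $\mu$ can be chosen to satisfy all of (R0)--(R3): downward from $u$, each child $v_i$ of $u$ maps into $S(w_i)$ (the subtree below the corresponding child $w_i$ of $w$), which keeps all descendant-images within distinct, incomparable subtrees, so (R3)(ii) holds at $u$, and (R3)(i) holds because $\lca_S$ of the images of any two children of $u$ is exactly $w$; above $u$, we can push everything up toward the root as duplications (using Lemma~\ref{lem:reconAll}-style freedom, or explicitly mapping each strict ancestor of $u$ to a suitable edge of $S$ above $w$), which never violates the axioms.

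The main obstacle, and the point needing the most care, is verifying that the partial map just described on $T(u)$ genuinely extends to a \emph{global} reconciliation map of the whole tree $(T,\sigma)$ to $S$ — the subtrees hanging off the path from $u$ to the root carry leaves of colors that must be placed consistently in $S$, and we must ensure (R2) and the speciation constraints at ancestors of $u$ are not violated. The cleanest way to handle this is to invoke Lemma~\ref{lem:reconAll}: first fix any species tree $S_0$ on the full color set $\sigma(L(T))$ that has an inner vertex $w$ realizing the fan over $\{\sigma(L(T(v_i)))\}_{v_i\in\child_T(u)}$ (such an $S_0$ exists by the pairwise-disjointness hypothesis), then apply Lemma~\ref{lem:reconAll} to get \emph{some} reconciliation $\mu_0$ from $(T,\sigma)$ to $S_0$; this $\mu_0$ need not satisfy $\mu_0(u)=w$, but one shows that it can be locally modified at $u$ — or, more robustly, one constructs $\mu$ directly and checks the axioms vertex by vertex, using that the constraints (R2), (R3) are ``local'' and that mapping ancestors of $u$ to edges of $S$ above $w$ trivially satisfies (R3) (vacuously, since those images are not in $V^0(S)$). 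I would write out this direct construction, as it makes the verification of (R0)--(R3) transparent and avoids any subtlety in ``editing'' an existing reconciliation.
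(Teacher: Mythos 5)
Your plan is correct and follows essentially the same route as the paper: $(2)\Leftrightarrow(3)$ by definition of $\tT$, $(2)\Rightarrow(1)$ via (the contrapositive of) Lemma~\ref{lem:duplication_witness}, and $(1)\Rightarrow(2)$ by contraposition with an explicit ``fan'' species tree whose children's clusters are the pairwise disjoint sets $\sigma(L(T(v_i)))$, mapping $u=\lca_T(x,y)$ to the fan vertex, inner vertices of each $T(v_i)$ to the edge above the corresponding child, and all ancestors of $u$ (and remaining colors) essentially to the root edge so that only $u$ needs the speciation check. The paper's proof is exactly the concrete instance of your sketch (fan vertex $=\rho_S$, star subtrees below each $u_i$, leftover colors attached to $\rho_S$, ancestors mapped to $0_S\rho_S$), so the remaining work is only the vertex-by-vertex verification of (R0)--(R3) that you already anticipate.
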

  \begin{proof}
    \par\noindent\textit{(2) implies (1).} Suppose that there are two
    children $v_1$ and $v_2$ of $\lca_T(x,y)$ such that
    $\sigma(L(T(v_1)))\cap \sigma(L(T(v_2)))\neq\emptyset$.  By 
    Lemma~\ref{lem:duplication_witness}, $\mu(\lca_T(x,y))\in E(S)$ and thus,
    $t_{\mu}(\lca_T(x,y))=\DUPL$ for all possible reconciliation maps $\mu$
    from $(T,\sigma)$ to any species tree $S$. Hence, $xy$ is
    $(T,\sigma)$-\fp.
    
    \par\noindent\textit{(1) implies (2).}  By contraposition,
    let $v = \lca_T(x,y)$ and suppose that for all distinct children
    $v_i,v_j\in \child(v)=\{v_1,\dots,v_k\}$, $k\geq 2$ we have
    $\sigma(L(T(v_i)))\cap \sigma(L(T(v_j)))=\emptyset$.  In the following, we
    show that there is a species tree $S$ and a reconciliation map $\mu$ from
    $(T,\sigma)$ to $S$ such that $t_{\mu}(\lca(x,y))=\SPEC$, which implies
    that $xy$ is not $(T,\sigma)$-\fp.
    
    We construct the species tree $S$ as follows: $S$ has root edge
    $0_S\rho_S$. Now add $k$ children $u_1,\dots,u_k$ to $\rho_S$.  For each
    of these children $u_i$ with $|\sigma(L(T(v_i)))|>1$, we add a leaf $t$
    for every color $t\in\sigma(L(T(v_i)))$ and the edge $u_it$. Any other
    $u_i$ is considered to be a leaf in $S$, and we identify $u_i$ with the
    single element in $\sigma(L(T(v_i)))$. Furthermore, add for all
    $t\in \sigma(L(T))\setminus \sigma(L(T(v)))$ a leaf $t$ that is adjacent
    to $\rho_S$. Since the color sets
    $\sigma(L(T))\setminus \sigma(L(T(v))), \sigma(L(T(v_1))), \dots,
    \sigma(L(T(v_k))$ are pairwise distinct, $S$ is well-defined, and, by
    construction, a planted phylogenetic tree. To construct a reconciliation
    map we put (i) $\mu(0_T)= 0_S$; (ii) $\mu(x)=\sigma(x)$ for all
    $x\in L(T)$; (iii) $\mu(v)=\rho_S$; (iv) $\mu(w)= 0_S\rho_S$ for all
    $w\in V^0(T \setminus T(v))$; and (v) $\mu(w)=\rho_Su_i$ for all
    $w\in V^0(T(v_i))$.  By Condition (i) and (ii), the Axioms \AX{(R0)} and
    \AX{(R1)} are satisfied, respectively.  By Condition (v), we have
    $\mu(v_i)=\rho_Su_i$ if $v_i$ is an inner vertex.  Otherwise, $v_i$ is a
    leaf and $|\sigma(L(T(v_i)))|=1$.  Therefore, $\mu(v_i)=\sigma(v_i)=u_i$
    by (ii) and by construction.  It is easy to verify that $\mu$ satisfies
    \AX{(R2)}. A sketch of construction of the species tree $S$ and the
    reconciliation map $\mu$ is provided in
    Fig.~\ref{fig:reconc_construction}.
    
    \begin{figure}[H]
      \begin{center}
        \includegraphics[width=0.65\textwidth]{./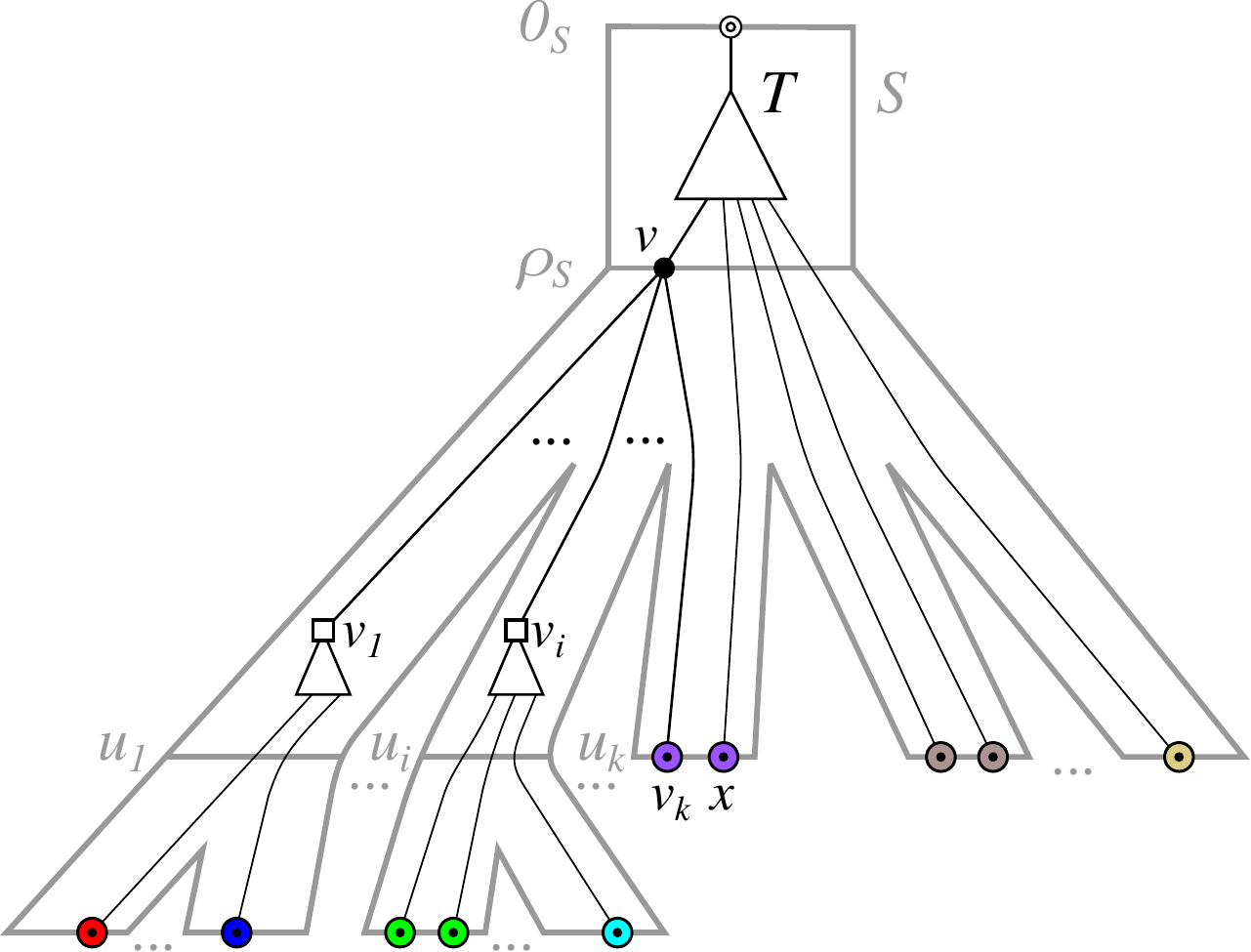}
      \end{center}
      \caption{Visualization of the construction of a species tree $S$ and
        reconciliation map $\mu$ as described in the proof of
        Lemma~\ref{lem:T-fp-no-mu}.  Note that, in the example, $v_k$ is
        already a leaf in the gene tree $T$. Hence, the corresponding $u_k$
        is also a leaf since $|\sigma(L(T(v_k)))|=1$.  Moreover, note that
        for $x\in L(T)\setminus L(T(v))$, it is possible that $\mu(x)=u_j$ or
        $\mu(x)=t$ with $t\in\child_{S}(u_j)$ for some $u_j$.}
      \label{fig:reconc_construction}
    \end{figure}
    
    The only vertex of $T$ that is mapped to a vertex in $S$ is $v$. Hence,
    it remains to show that $\mu(v)=\rho_S\in V^0(S)$ satisfies \AX{(R3)}.
    Note that for every two distinct children $v_i, v_j$ of $v$ we have
    $\mu(v_i)\in\{\rho_Su_i, u_i\}$ and $\mu(v_j)\in\{\rho_Su_j, u_j\}$.  In
    any case, $\mu(v_i)$ and $\mu(v_j)$ are incomparable in $S$.  Hence,
    (R3.ii) is satisfied. In particular,
    $\mu(v) = \rho_S = \lca_S(\mu(v_i),\mu(v_j))$ for all distinct
    $v_i,v_j\in \child(v)$.  Hence, (R3.i) is satisfied. In summary, $\mu$ is
    a reconciliation map from $(T,\sigma)$ to $S$. Since
    $\mu(v)=\rho_S\in V^0(S)$, we have $t_{\mu}(v)=\SPEC$.
    
    Statements (2) and (3) are equivalent by definition of the extremal event
    labeling.  
  \end{proof}
  Lemma~\ref{lem:T-fp-no-mu} implies that $(T,\sigma)$-\fp can be verified in
  polynomial time for any given gene tree $(T,\sigma)$.
  
  \begin{definition}[Unambiguous false-positive]  \label{def:ufp}
    Let $(\G,\sigma)$ be a BMG. An edge $xy$ in $\G$ is called
    \emph{unambiguous false-positive (\ufp)} if for all trees $(T,\sigma)$
    that explain $(\G,\sigma)$ the edge $xy$ is $(T,\sigma)$-\fp.
  \end{definition}
  Hence, if an edge $xy$ in $\G$ is \ufp, then it is in particular
  $(T,\sigma)$-\fp in the true history that explains $(\G,\sigma)$.  Thus, \ufp 
  edges are always ``correct'' false-positives.

  \subsection{The color-intersection $\Scap$}
  \label{APP:ssec:CI}
  
  Given a gene tree $(T,\sigma)$ and a pair of
  distinct leaves $x,y\in L(T)$, we denote by
  $v_x, v_y \in \child_T(\lca_T(x,y))$ the unique children of the last common
  ancestor of $x$ and $y$ for which $x\preceq_T v_x$ and $y\preceq_T
  v_y$. That is, $T(v_x)$ and $T(v_y)$ are the subtrees of $T$ rooted in the
  children of $\lca_T(x,y)$ with $x\in L(T(v_x))$ and $y\in L(T(v_y))$. The
  set
  \begin{equation}
  \mathcal{S}_T^{\cap}(x,y)\coloneqq\sigma(L(T(v_x)))\cap\sigma(L(T(v_y)))
  \end{equation} 
  contains the colors, i.e.\ species, that are common to both subtrees.
  Lemma~\ref{lem:edge-xy-lca} immediately implies
  \begin{corollary}
    \label{cor:sigma-xy-notin-Scap}
    Let $xy$ be an edge in a BMG $(\G,\sigma)$. Then
    $\sigma(\{x,y\})\cap \mathcal{S}_T^{\cap}(x,y)=\emptyset$ for all
    trees $(T,\sigma)$ that explain $(\G,\sigma)$.
  \end{corollary}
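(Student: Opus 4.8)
The plan is to derive this directly from Lemma~\ref{lem:edge-xy-lca}, of which it is essentially an immediate restatement. First I would fix an arbitrary tree $(T,\sigma)$ explaining $(\G,\sigma)$ and recall the relevant notation: $v_x,v_y\in\child_T(\lca_T(x,y))$ are the two (necessarily distinct) children of $\lca_T(x,y)$ with $x\preceq_T v_x$ and $y\preceq_T v_y$, so that $\mathcal{S}_T^{\cap}(x,y)=\sigma(L(T(v_x)))\cap\sigma(L(T(v_y)))$. Since $(\G,\sigma)$ is a BMG, its coloring is proper, so the edge $xy$ satisfies $\sigma(x)\neq\sigma(y)$ --- equivalently, this follows at once from the definition of a best match --- which is exactly the hypothesis needed to invoke Lemma~\ref{lem:edge-xy-lca}.

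Next I would apply Lemma~\ref{lem:edge-xy-lca} in the ``only if'' direction: because $xy$ is an edge in $\G$, we obtain $\sigma(x)\notin\sigma(L(T(v_y)))$ and $\sigma(y)\notin\sigma(L(T(v_x)))$. Now observe that $\mathcal{S}_T^{\cap}(x,y)\subseteq\sigma(L(T(v_y)))$, so $\sigma(x)\notin\mathcal{S}_T^{\cap}(x,y)$; symmetrically, $\mathcal{S}_T^{\cap}(x,y)\subseteq\sigma(L(T(v_x)))$ yields $\sigma(y)\notin\mathcal{S}_T^{\cap}(x,y)$. Hence $\sigma(\{x,y\})\cap\mathcal{S}_T^{\cap}(x,y)=\emptyset$. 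Since $(T,\sigma)$ was an arbitrary tree explaining $(\G,\sigma)$, the conclusion holds for all of them.

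There is essentially no obstacle here; the only point worth spelling out is that $x\in L(T(v_x))$ and $y\in L(T(v_y))$ already place $\sigma(x)\in\sigma(L(T(v_x)))$ and $\sigma(y)\in\sigma(L(T(v_y)))$, so membership of $\sigma(x)$ (resp.\ $\sigma(y)$) in the intersection $\mathcal{S}_T^{\cap}(x,y)$ would force $\sigma(x)\in\sigma(L(T(v_y)))$ (resp.\ $\sigma(y)\in\sigma(L(T(v_x)))$), contradicting Lemma~\ref{lem:edge-xy-lca}. This is also where properness of the coloring (or merely $\sigma(x)\neq\sigma(y)$) is used, namely to license the application of that lemma.
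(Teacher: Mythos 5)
Your proof is correct and takes exactly the route the paper intends: the paper states Cor.~\ref{cor:sigma-xy-notin-Scap} as an immediate consequence of Lemma~\ref{lem:edge-xy-lca}, and your argument just spells out that deduction, namely that the edge $xy$ gives $\sigma(x)\notin\sigma(L(T(v_y)))$ and $\sigma(y)\notin\sigma(L(T(v_x)))$, while $\mathcal{S}_T^{\cap}(x,y)$ is contained in both color sets.
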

  
  The following result shows that the color-intersection of
  a given edge in a BMG $(\G,\sigma)$ in fact does not depend on the tree
  representation of $(\G,\sigma)$.
  \begin{lemma}
    Let $(\G,\sigma)$ be a BMG and $(T^*,\sigma )$ the corresponding unique
    least resolved tree explaining $(\G,\sigma)$.  Then, for each tree
    $(T,\sigma)$ that explains $(\G,\sigma)$, every edge $xy$ in
    $(\G,\sigma)$ satisfies $\Scap_{T^*}(x,y)=\Scap_T(x,y)$.  Thus, in
    particular, $\mathcal{S}_{T^*}^{\cap}(x,y)\neq\emptyset$ if and only if
    $\mathcal{S}_T^{\cap}(x,y) \neq \emptyset$.
    \label{lem:Scap}
  \end{lemma}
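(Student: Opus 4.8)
The plan is to reduce the statement to a single edge contraction and then invoke the characterization of redundant edges in Lemma~\ref{lem:redundant_edges}. By Thm.~\ref{thm:LRT}, every tree $(T,\sigma)$ explaining $(\G,\sigma)$ is a refinement of $(T^*,\sigma)$; as noted in the proof of Lemma~\ref{lem:redundant_commutative}, $(T^*,\sigma)$ is obtained from $(T,\sigma)$ by contracting a set $A$ of redundant inner edges in an arbitrary order, and by Lemma~\ref{lem:redundant_commutative} every tree occurring along such a contraction sequence again explains $(\G,\sigma)$, with the not-yet-contracted edges of $A$ still redundant in it. Hence it suffices to prove the single-step claim: if $(T,\sigma)$ explains $(\G,\sigma)$ and $e=ab$ with $b\prec_T a$ is a redundant inner edge of $T$, then $\Scap_{T_e}(x,y)=\Scap_T(x,y)$ for every edge $xy$ in $\G$. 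Iterating this claim along the sequence $T\to T^*$ then gives $\Scap_T(x,y)=\Scap_{T^*}(x,y)$ for every explaining tree, and the ``in particular'' part about non-emptiness is immediate.

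For the single-step claim, write $w_e$ for the vertex of $T_e$ obtained by identifying $a$ and $b$, fix an edge $xy$ of $\G$, and set $u\coloneqq\lca_T(x,y)$; note $L(T_e)=L(T)$ since $e$ is inner. I would argue by cases on the position of $u$. If $u\notin\{a,b\}$, or if $u=b$, or if $u=a$ and neither $v_x$ nor $v_y$ equals $b$, one checks directly that $\lca_{T_e}(x,y)$ is the image of the relevant vertex and that the children of $\lca_{T_e}(x,y)$ containing $x$ and $y$ have exactly the same leaf sets as the $T$-children $v_x,v_y$, so that $\Scap$ is unchanged; this is the routine part. The only genuinely new situation is $u=a=\lca_T(x,y)$ with, say, $v_x=b$ (the case $v_y=b$ is symmetric, and $v_x=v_y=b$ is impossible). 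Then $\lca_{T_e}(x,y)=w_e$, the child of $w_e$ containing $y$ is still $v_y$ with $L(T_e(v_y))=L(T(v_y))$, while the child of $w_e$ containing $x$ is the child $v_x'$ of $b$ on the path to $x$; hence $\Scap_{T_e}(x,y)=\sigma(L(T(v_x')))\cap\sigma(L(T(v_y)))$ whereas $\Scap_T(x,y)=\sigma(L(T(b)))\cap\sigma(L(T(v_y)))$.

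The heart of the argument is to show these two color-intersections coincide, and this is where redundancy of $e$ enters. The inclusion $\subseteq$ is immediate from $v_x'\preceq_T b$. For $\supseteq$, assume $c\in\sigma(L(T(b)))\cap\sigma(L(T(v_y)))$ but $c\notin\sigma(L(T(v_x')))$. Since $v_x=b$, Lemma~\ref{lem:edge-xy-lca} applied to the edge $xy$ yields $\sigma(x)\notin\sigma(L(T(v_y)))$, so $c\neq\sigma(x)$ and $x$ has a best match $q'$ of color $c$. Every ancestor of $x$ that is a proper descendant of $b$ lies in $T(v_x')$ and hence misses the color $c$, while $c\in\sigma(L(T(b)))$; therefore $b$ is the $\preceq_T$-minimal ancestor of $x$ whose subtree contains a leaf of color $c$, which forces $\lca_T(x,q')=b$ for every best match $q'$ of $x$ of color $c$. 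But then $(x,q')\in E(\G)$, $\lca_T(x,q')=b$, and $\sigma(q')=c\in\sigma(L(T(v_y)))\subseteq\sigma(L(T(a))\setminus L(T(b)))$ since $v_y\in\child_T(a)\setminus\{b\}$, contradicting redundancy of $e=ab$ via condition~(ii) of Lemma~\ref{lem:redundant_edges}. Hence $c\in\sigma(L(T(v_x')))$, the two intersections are equal, and the single-step claim follows. I expect the only places requiring real care to be the bookkeeping of the cases for $u$ versus $e$ (in particular verifying $\lca_{T_e}(x,y)=w_e$ in each case) and the redundancy argument just sketched; everything else is routine.
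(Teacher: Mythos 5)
Your proposal is correct, but it follows a genuinely different route than the paper. The paper proves the two inclusions directly between $T$ and $T^*$: for $\Scap_{T^*}(x,y)\subseteq\Scap_T(x,y)$ it uses the refinement property of Thm.~\ref{thm:LRT} (cluster containment $\mathscr{C}(T^*)\subseteq\mathscr{C}(T)$ places $L(T^*(v^*_x))$ inside $L(T(v_x))$ and likewise for $y$), and for the converse it picks, for each shared color $t$, witnesses $z_1,z_2$ that are best matches of $x$ and $y$, extracts the informative triples $xz_1|z_2$ and $yz_2|z_1$, and applies Lemma~\ref{lem:inf_triples_overlap} to force $z_1$ and $z_2$ below the two relevant children of $\lca_{T^*}(x,y)$; note this second argument in fact works for any explaining tree in place of $T^*$. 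You instead reduce to a single contraction of a redundant edge along the sequence from $T$ down to $T^*$ (justified via Thm.~\ref{thm:LRT}, Lemma~\ref{lem:contract-subgraph} and Lemma~\ref{lem:redundant_commutative}, which indeed give that every intermediate tree explains $(\G,\sigma)$ and the remaining edges stay redundant), and in the only nontrivial case ($\lca_T(x,y)=a$, $v_x=b$ for the contracted edge $ab$) you use the characterization of redundancy in Lemma~\ref{lem:redundant_edges} together with Lemma~\ref{lem:edge-xy-lca} to show that a color of $\Scap_T(x,y)$ missing from $T(v_x')$ would produce an arc $(x,q')$ with $\lca_T(x,q')=b$ and $\sigma(q')\in\sigma(L(T(a))\setminus L(T(b)))$, contradicting redundancy of $ab$; this argument is sound, including the existence of a best match $q'$ of color $c\neq\sigma(x)$ and the verification that all such $q'$ have $\lca_T(x,q')=b$. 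What each approach buys: the paper's proof is shorter and exhibits the invariance as a direct consequence of the informative triples encoded in the BMG, whereas yours makes completely explicit where and how redundancy of the contracted edges is responsible for preserving the color intersection, at the price of the reduction bookkeeping and the case analysis on the position of $\lca_T(x,y)$ relative to the contracted edge.
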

  \begin{proof}
    Let $(T,\sigma)$ be an arbitrary tree that explains $(\G,\sigma)$.
    Moreover, let $xy$ be an edge in $\G$ and denote by $v_x$ and $v_y$ be
    the unique children $v_x, v_y \in \child_T(\lca_T(x,y))$ with
    $x\preceq_T v_x$ and $y\preceq_T v_y$. Analogously, $v^*_x$ and $v^*_y$
    are the unique children $v^*_x, v^*_y \in \child_{T^*}(\lca_{T^*}(x,y))$
    with $x\preceq_{T^*} v^*_x$ and $y\preceq_{T^*} v^*_y$.
    
    First, we show that $t\in\Scap_{T^*}(x,y)$ implies
    $t\in\mathcal{S}_T^{\cap}(x,y)$.  Since $(T,\sigma)$ explains
    $(\G, \sigma)$, we apply Thm.~\ref{thm:LRT} to conclude that $T$ is a
    refinement of $T^*$ and thus, $\mathscr{C}(T^*)\subseteq \mathscr{C}(T)$.
    Therefore, $L(T^*(\lca_{T^*}(x,y))$, $L(T^*(v^*_x))$ and $L(T^*(v^*_y))$
    are contained in $\mathscr{C}(T)$.  This implies that there must be
    vertices $u$, $w_x$, and $w_y$ in $T$ with
    $L(T(u))= L(T^*(\lca_{T^*}(x,y))$, $L(T(w_x))=L(T^*(v^*_x))$ and
    $L(T(w_y))=L(T^*(v^*_y))$.  Note that
    $L(T^*(v^*_x))\cap L(T^*(v^*_y))=\emptyset$, and thus
    $L(T(w_x))\cap L(T(w_y))=\emptyset$.  In particular, $w_x$ and $w_y$ are
    incomparable in $T$. Moreover, $u = \lca_T(x,y)=\lca_T(w_x,w_y)$, thus we
    have $w_x\preceq_T v_x$ and $w_y\preceq_T v_y$.  Therefore,
    $L(T^*(v^*_x))\subseteq L(T(v_x))$ and
    $L(T^*(v^*_y))\subseteq L(T(v_y))$.  Therefore,
    $t\in \mathcal{S}_{T^*}^{\cap}(x,y)$ implies
    $t\in \mathcal{S}_T^{\cap}(x,y)$.
    
    Now, we show that $t\in\mathcal{S}_T^{\cap}(x,y)$ implies
    $t\in\Scap_{T^*}(x,y)$.  Let $t\in\Scap_T(x,y)\neq\emptyset$.  In this
    case, $t\in \sigma(L(T(v_x)))$ and we can choose a vertex
    $z_1\in L(T(v_x))$ such that $\sigma(z_1)=t$ and $\lca_T(x,z_1)$ is as
    far away as possible from $v_x$ compared to all $\lca_T(x,z)$ with
    $z\in L[t]$, i.e., $\lca_T(x,z_1) \preceq_T \lca_T(x,z)$ for all
    $z\in L[t]$. Thus, $(x,z_1)\in E(\G)$. An analogous argument ensures that
    there is a vertex $z_2\in L(T(v_y))$ such that $\sigma(z_2)=t$ and
    $(y,z_2)\in E(\G)$.  Clearly,
    $\lca_{T}(x,z_2)=\lca_{T}(x,y)=\lca_{T}(y,z_1)$ and thus
    $\lca_{T}(x,z_1)\preceq_{T}v_x\prec_T\lca_{T}(x,z_2)$, which in turn
    implies that $(x,z_2)\notin E(\G)$.  Since $(x,z_1)\in E(\G)$ and
    $(x,z_2)\notin E(\G)$, we obtain the informative triple $xz_1|z_2$ for
    $(\G,\sigma)$.  Analogously, $yz_2|z_1$ is an informative triple for 
    $(\G,\sigma)$.
    Lemma~\ref{lem:inf_triples_overlap} and the fact that $T^*$ explains
    $(\G,\sigma)$ implies that there are distinct vertices
    $v_1,v_2\in\child_{T^*}(\lca_{T^*}(x,y))$ such that
    $x,z_1\preceq_{T^*}v_1$ and $y,z_2\preceq_{T^*}v_2$.  Since
    $t=\sigma(z_1)=\sigma(z_2)$, we have $t\in\Scap_{T^*}(x,y)$.
    
    Finally, $t\in\Scap_{T^*}(x,y)$ if and only if $t\in\Scap_{T}(x,y)$
    implies both $\Scap_{T^*}(x,y)=\Scap_{T}(x,y)$ and
    $\Scap_{T^*}(x,y)\neq\emptyset$ \emph{if and only if}
    $\Scap_{T}(x,y)\neq\emptyset$.
  \end{proof}
  
  \begin{remark}\label{rem:Scap}
    By Lemma~\ref{lem:Scap}, we have $\Scap_{T}(x,y)=\Scap_{T^*}(x,y)$ for
    every tree $(T,\sigma)$ explaining a BMG $(\G,\sigma)$ with corresponding
    least resolved tree $(T^*,\sigma)$.  Therefore, it is sufficient to
    consider $\Scap_{T^*}(x,y)$. We will therefore drop the explicit
    reference to the tree and simply write $\Scap(x,y)$. We can verify in
    polynomial time whether or not $\Scap(x,y)=\emptyset$ because the least
    resolved tree $(T^*,\sigma)$ explaining $(\G,\sigma)$ can be computed in
    polynomial time.
  \end{remark}
  
  \begin{proposition}\label{prop:color_intersection_dupl}
    Every edge $xy$ in a BMG $(\G,\sigma)$ with $\Scap(x,y)\ne\emptyset$
    is \ufp.
  \end{proposition}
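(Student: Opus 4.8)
The plan is to reduce the statement directly to the two structural facts already established, namely the tree-independence of the color-intersection (Lemma~\ref{lem:Scap}) and the characterization of $(T,\sigma)$-\fp edges in terms of color-overlapping children (Lemma~\ref{lem:T-fp-no-mu}). Concretely, I would argue as follows. Fix an edge $xy$ in $(\G,\sigma)$ with $\Scap(x,y)\ne\emptyset$ and let $(T,\sigma)$ be an \emph{arbitrary} tree explaining $(\G,\sigma)$. By Definition~\ref{def:ufp}, it suffices to show that $xy$ is $(T,\sigma)$-\fp, since $(T,\sigma)$ is then an arbitrary explaining tree.

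First I would invoke Lemma~\ref{lem:Scap}: the color-intersection of the edge $xy$ does not depend on the chosen explaining tree, i.e.\ $\Scap_T(x,y)=\Scap(x,y)$, so the hypothesis gives $\Scap_T(x,y)\ne\emptyset$. By definition of the color-intersection, $\Scap_T(x,y)=\sigma(L(T(v_x)))\cap\sigma(L(T(v_y)))$, where $v_x,v_y\in\child_T(\lca_T(x,y))$ are the (distinct) children of $\lca_T(x,y)$ with $x\preceq_T v_x$ and $y\preceq_T v_y$. Hence $v_1\coloneqq v_x$ and $v_2\coloneqq v_y$ are two children of $\lca_T(x,y)$ with $\sigma(L(T(v_1)))\cap\sigma(L(T(v_2)))\ne\emptyset$, which is precisely statement~(2) of Lemma~\ref{lem:T-fp-no-mu}. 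The equivalence (2)$\Leftrightarrow$(1) of that lemma then yields that $xy$ is $(T,\sigma)$-\fp.

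Since $(T,\sigma)$ was an arbitrary tree explaining $(\G,\sigma)$, every explaining tree makes $xy$ a $(T,\sigma)$-\fp edge, so by Definition~\ref{def:ufp} the edge $xy$ is \ufp, completing the proof. There is essentially no hard step here: the work has already been done in Lemma~\ref{lem:Scap} (whose proof uses the informative-triple machinery and Lemma~\ref{lem:inf_triples_overlap} to transfer color-intersections across refinements) and in the direction (2)$\Rightarrow$(1) of Lemma~\ref{lem:T-fp-no-mu} (which rests on the contrapositive of Lemma~\ref{lem:duplication_witness}). The only point worth stating carefully is that the quantifier ``for all trees explaining $(\G,\sigma)$'' in the definition of \ufp is discharged simply because the argument above used no special property of $(T,\sigma)$ beyond its explaining $(\G,\sigma)$.
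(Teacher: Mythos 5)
Your proposal is correct and essentially matches the paper's proof: both combine the tree-independence of $\Scap$ (Lemma~\ref{lem:Scap} with Remark~\ref{rem:Scap}) with the fact that overlapping color sets of two children of $\lca_T(x,y)$ force a duplication under every reconciliation. The only cosmetic difference is that you route this second step through the equivalence (1)$\Leftrightarrow$(2) of Lemma~\ref{lem:T-fp-no-mu}, whereas the paper cites Lemma~\ref{lem:duplication_witness} directly — but that is exactly how (2)$\Rightarrow$(1) of Lemma~\ref{lem:T-fp-no-mu} is proved, so the arguments coincide.
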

  \begin{proof}
    By Lemma~\ref{lem:Scap} and Remark~\ref{rem:Scap},
    $\Scap(x,y)\ne\emptyset$ if and only if $\Scap_T(x,y)\ne\emptyset$ for
    all trees $(T,\sigma)$ that explain $(\G,\sigma)$.  By 
    Lemma~\ref{lem:duplication_witness}, $\mu(\lca_T(x,y))\in E(S)$ and thus,
    $t_{\mu}(\lca_T(x,y))=\DUPL$ for all trees $(T,\sigma)$ that explain
    $(\G,\sigma)$.  Hence, $xy$ is \ufp.
  \end{proof}
  As we shall see later, the converse of
  Prop.~\ref{prop:color_intersection_dupl} is not always satisfied (cf.\ also
  Fig.~\ref{fig:Scap_empty_ugly_quartet}).
  \noindent An immediate consequence of
  Prop.~\ref{prop:color_intersection_dupl} is:
  \begin{corollary}\label{cor:Tfp-Scap-noneqi}
    An edge $xy$ in a BMG $\G(T,\sigma)$ with $\Scap(x,y)\ne\emptyset$ is 
    $(T,\sigma)$-\fp.
  \end{corollary}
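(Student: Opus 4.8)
The plan is to obtain this as a one-line consequence of Prop.~\ref{prop:color_intersection_dupl} together with the definition of \ufp edges (Def.~\ref{def:ufp}). Prop.~\ref{prop:color_intersection_dupl} already establishes that any edge $xy$ with $\Scap(x,y)\ne\emptyset$ is \ufp, which by Def.~\ref{def:ufp} means that $xy$ is $(T',\sigma)$-\fp for \emph{every} tree $(T',\sigma)$ that explains $(\G,\sigma)$. The tree $(T,\sigma)$ with $\G(T,\sigma)=(\G,\sigma)$ is one such tree, so $xy$ is in particular $(T,\sigma)$-\fp, which is the claim.

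If one prefers a self-contained argument that does not pass through the notion of \ufp, the same reasoning can be carried out directly. By Lemma~\ref{lem:Scap} and Remark~\ref{rem:Scap} the hypothesis $\Scap(x,y)\ne\emptyset$ is equivalent to $\Scap_T(x,y)\ne\emptyset$, i.e., the children $v_x,v_y\in\child_T(\lca_T(x,y))$ with $x\preceq_T v_x$ and $y\preceq_T v_y$ satisfy $\sigma(L(T(v_x)))\cap\sigma(L(T(v_y)))\ne\emptyset$. Then, for an arbitrary reconciliation map $\mu$ from $(T,\sigma)$ to any species tree $S$, the contrapositive of Lemma~\ref{lem:duplication_witness} yields $\mu(\lca_T(x,y))\in E(S)$, hence $t_\mu(\lca_T(x,y))=\DUPL$ by Def.~\ref{def:event-rbmg}. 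Since $\mu$ was arbitrary, this is exactly the definition of $(T,\sigma)$-\fp.

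Since the statement is a genuine corollary, there is essentially no obstacle here; the only subtlety is the tree-independence of the color-intersection, but that is precisely what Lemma~\ref{lem:Scap} provides and it is already invoked in the proof of Prop.~\ref{prop:color_intersection_dupl}. Hence nothing new is required beyond quoting the appropriate earlier results in the correct order.
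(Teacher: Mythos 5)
Your proposal is correct and matches the paper's approach: the paper states this corollary as an immediate consequence of Prop.~\ref{prop:color_intersection_dupl}, exactly as in your first paragraph, since $(T,\sigma)$ is one of the trees explaining $\G(T,\sigma)$ and \ufp means $(T',\sigma)$-\fp for all such trees. Your alternative self-contained argument simply unfolds the proof of the proposition (Lemma~\ref{lem:Scap} plus the contrapositive of Lemma~\ref{lem:duplication_witness}) and is equally valid.
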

  Although not necessarily true in general, we show next that the converse of
  Prop.~\ref{prop:color_intersection_dupl} and Cor.~\ref{cor:Tfp-Scap-noneqi}
  does hold for the special case of binary trees.
  
  \begin{lemma}
    \label{lem:bin-v-Scap}
    Let $xy$ be an edge in $\G(T,\sigma)$ and suppose $\lca_T(x,y)$ is
    a binary vertex.  Then, the following three statements are equivalent:
    \begin{enumerate}[itemsep=0.2ex, topsep=0.2ex, parsep=0cm]
      \item The edge $xy$ is $(T,\sigma)$-\fp.
      \item $\Scap(x,y)\neq \emptyset$.
      \item The edge $xy$ is \ufp.  
    \end{enumerate}
  \end{lemma}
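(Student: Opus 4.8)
The plan is to prove the cycle of implications $(1)\Rightarrow(2)\Rightarrow(3)\Rightarrow(1)$. The only step that is not immediate is $(1)\Rightarrow(2)$, and even there the substantive work has already been carried out in Lemma~\ref{lem:T-fp-no-mu}; the binarity hypothesis enters merely to pin down \emph{which} two children of $\lca_T(x,y)$ must have overlapping color sets.

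For $(1)\Rightarrow(2)$, I would set $u\coloneqq\lca_T(x,y)$ and let $v_x,v_y\in\child_T(u)$ be the children with $x\preceq_T v_x$ and $y\preceq_T v_y$. If $xy$ is $(T,\sigma)$-\fp, then by the equivalence of statements (1) and (2) in Lemma~\ref{lem:T-fp-no-mu} there exist \emph{two} distinct children $v_1,v_2$ of $u$ with $\sigma(L(T(v_1)))\cap\sigma(L(T(v_2)))\neq\emptyset$. Since $u$ is a binary vertex it has exactly two children, so necessarily $\{v_1,v_2\}=\{v_x,v_y\}$; hence $\Scap_T(x,y)=\sigma(L(T(v_x)))\cap\sigma(L(T(v_y)))\neq\emptyset$. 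By Lemma~\ref{lem:Scap} and Remark~\ref{rem:Scap} the color-intersection of an edge in a BMG is independent of the explaining tree, so $\Scap(x,y)\neq\emptyset$, which is statement (2).

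The implication $(2)\Rightarrow(3)$ is exactly Proposition~\ref{prop:color_intersection_dupl}. Finally, $(3)\Rightarrow(1)$ is immediate from Definition~\ref{def:ufp}: an edge that is $(T',\sigma)$-\fp for \emph{every} tree explaining $(\G,\sigma)$ is in particular $(T,\sigma)$-\fp, since $(T,\sigma)$ explains $\G(T,\sigma)$ by hypothesis. I do not anticipate any genuine obstacle; the lemma is essentially a repackaging of Lemma~\ref{lem:T-fp-no-mu}, Proposition~\ref{prop:color_intersection_dupl}, and the elementary observation that a binary $\lca_T(x,y)$ forces any color overlap among its children to occur between $T(v_x)$ and $T(v_y)$, i.e.\ to be witnessed by $\Scap(x,y)$.
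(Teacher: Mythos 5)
Your proposal is correct and follows essentially the same route as the paper: the implication cycle $(1)\Rightarrow(2)\Rightarrow(3)\Rightarrow(1)$, with Lemma~\ref{lem:T-fp-no-mu} plus binarity (together with Lemma~\ref{lem:Scap} and Remark~\ref{rem:Scap}) giving $(1)\Rightarrow(2)$, Prop.~\ref{prop:color_intersection_dupl} giving $(2)\Rightarrow(3)$, and the definition of \ufp giving $(3)\Rightarrow(1)$. No gaps.
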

  \begin{proof}
    \noindent\textit{(1) implies (2).}
    Suppose $xy$ is $(T,\sigma)$-\fp. Since $v$ is binary, it has precisely
    two children $v_1$ and $v_2$. In particular, $v=\lca_T(x,y)$ implies that
    that $x\preceq_T v_i$ and $x\preceq_T v_j$ for $i,j\in \{1,2\}$ being
    distinct.  By Lemma~\ref{lem:T-fp-no-mu}, the two children $v_1$ and
    $v_2$ of $v$ satisfy
    $\sigma(L(T(v_1)))\cap \sigma(L(T(v_2)))\neq\emptyset$.  By 
    Lemma~\ref{lem:Scap} and Remark~\ref{lem:Scap}, we have
    $\Scap(x,y)\ne\emptyset$.
    \par\noindent\textit{(2) implies (3).} 
    If $\Scap(x,y)\ne\emptyset$, we can apply
    Prop.~\ref{prop:color_intersection_dupl} to conclude that $xy$ is \ufp.
    \par\noindent\textit{(3) implies (1).}
    By definition, if $xy$ is \ufp, then it is in particular also
    $(T,\sigma)$-\fp.
  \end{proof}
  
  \begin{theorem}
    Let $(\G,\sigma)$ be a BMG that is explained by a binary tree
    $(T,\sigma)$. Then, for every edge $xy$ in $(\G,\sigma)$, the following
    three statements are equivalent:
    \begin{enumerate}[itemsep=0.2ex, topsep=0.2ex, parsep=0cm]
      \item The edge $xy$ is $(T,\sigma)$-\fp.
      \item $\Scap(x,y)\ne\emptyset$.
      \item The edge $xy$ is \ufp.
    \end{enumerate}
    \label{thm:ufp-binary}
  \end{theorem}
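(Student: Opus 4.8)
The plan is to obtain Theorem~\ref{thm:ufp-binary} as an essentially immediate consequence of Lemma~\ref{lem:bin-v-Scap}. That lemma already establishes the three-way equivalence between an edge $xy$ being $(T,\sigma)$-\fp, having $\Scap(x,y)\ne\emptyset$, and being \ufp, under the single hypothesis that the vertex $\lca_T(x,y)$ is binary. The only additional observation needed is that when the \emph{entire} explaining tree $(T,\sigma)$ is binary, every inner vertex of $T$ is binary, so in particular $\lca_T(x,y)$ is binary for \emph{every} edge $xy$ of $(\G,\sigma)$; hence the hypothesis of Lemma~\ref{lem:bin-v-Scap} is automatically met for each edge.

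Concretely, I would fix an arbitrary edge $xy$ in $(\G,\sigma)$. Since $xy$ is an edge, we have $\sigma(x)\ne\sigma(y)$ (best matches require distinct colors), so $x$ and $y$ are distinct leaves and $\lca_T(x,y)$ is an inner vertex of $T$; because $T$ is binary this vertex has exactly two children and is therefore a binary vertex in the sense of the preliminaries. Applying Lemma~\ref{lem:bin-v-Scap} to this edge yields the equivalence of statements (1), (2), and (3). As $xy$ was chosen arbitrarily, the equivalence holds for all edges of $(\G,\sigma)$, which is exactly the assertion of the theorem.

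There is no real obstacle here: the substantive work has already been done in Lemma~\ref{lem:bin-v-Scap}, which rests in turn on Lemma~\ref{lem:T-fp-no-mu}, Lemma~\ref{lem:duplication_witness}, Proposition~\ref{prop:color_intersection_dupl}, and the tree-independence of the color-intersection. The one point worth keeping in mind while writing out the final proof is precisely that tree-independence: statement (2) refers to $\Scap(x,y)$ without reference to a particular explaining tree, and this is well-posed by Lemma~\ref{lem:Scap} and Remark~\ref{rem:Scap}. Consequently the implication ``(2)$\Rightarrow$(3)'' is legitimately a global statement about all trees explaining $(\G,\sigma)$, even though $\Scap(x,y)$ can be read off the given binary tree $(T,\sigma)$, and this is what makes the reduction to Lemma~\ref{lem:bin-v-Scap} airtight.
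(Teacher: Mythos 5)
Your proposal is correct and matches the paper's own proof, which likewise observes that in a binary tree $\lca_T(x,y)$ is binary for every edge $xy$ and then invokes Lemma~\ref{lem:bin-v-Scap}. The extra remarks on the tree-independence of $\Scap(x,y)$ via Lemma~\ref{lem:Scap} are consistent with how that lemma is established in the paper.
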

  \begin{proof}
    For every edge $xy$ in $\G$ the last common ancestor $\lca_T(x,y)$ is
    binary. Now apply Lemma~\ref{lem:bin-v-Scap}.
  \end{proof}
  
  Thm.~\ref{thm:ufp-binary} implies that all \ufp edges can be detected in
  a BMG that is explained by a known binary gene tree. However, not all BMGs
  $(\G,\sigma)$ can be explained by a binary tree, as e.g.\ the BMG in
  Fig.~\ref{fig:hourglasses}(A). Thm.~\ref{thm:ufp-binary} does not
  generalize to the non-binary case, and $\Scap(x,y)$ is not sufficient to
  identify all \ufp edges. Furthermore, it is not difficult to find
  non-binary trees in which $(T,\sigma)$-\fp and \ufp edges are not the same:
  As show in Fig.~\ref{fig:non_binary_tree_example}, the edge $xz$ in is
  $(T_1,\sigma)$-\fp but not $(T_2,\sigma)$-\fp according to Lemma
  \ref{lem:T-fp-no-mu}. Since both trees explain the same BMG, the edge $xy$
  is not \ufp.
  
  \subsection{$\Scap(x,y)\ne\emptyset$: quartets}
  \label{APP:ssec:quartets}
  
  Since every orthology graph is a cograph (cf.\
  Thm.~\ref{thm:ortho-cograph}), we know that every induced $P_4$ in the RBMG
  is associated with false-positive edges. The induced subgraphs of the BMG
  spanned by a $P_4$ in its symmetric part (i.e., the RBMG) are called
  quartets.  We write $\langle abcd \rangle$ or, equivalently,
  $\langle dcba \rangle$ for an induced $P_4$ with edges $ab$, $bc$, and
  $cd$. The quartets on three colors fall into three classes:
  \begin{definition}[Good, bad, and ugly quartets]
    Let $(\G,\sigma)$ be a BMG with symmetric part $(G,\sigma)$ and vertex
    set $L$, and let $Q\coloneqq \{x,y,z,z'\} \subseteq L$ with $x\in L[r]$,
    $y\in L[s]$, and $z,z'\in L[t]$. The set $Q$, resp., the induced subgraph
    $(\G[Q],\sigma_{|Q})$ is
    \begin{itemize}[itemsep=1.2ex, topsep=0.2ex, parsep=0cm]
      \item[] a \emph{good quartet} if (i) $\langle zxyz'\rangle$ is an induced
      $P_4$ in $(G,\sigma)$ and (ii) $(z,y),(z',x)\in E(\G)$ and
      $(y,z),(x,z')\notin E(\G)$,
      \item[] a \emph{bad quartet} if (i) $\langle zxyz'\rangle$ is an induced
      $P_4$ in $(G,\sigma)$ and (ii) $(y,z),(x,z')\in E(\G)$ and
      $(z,y),(z',x)\notin E(\G)$,
      \item[] an \emph{ugly quartet} if $\langle zxz'y\rangle$ is an induced
      $P_4$ in $(G,\sigma)$.
    \end{itemize}
    \noindent The edge $xy$ in a good quartet $\langle zxyz'\rangle$ is its
    \emph{middle} edge. The edge $zx$ of an ugly quartet
    $\langle zxz'y\rangle$ or a bad quartet $\langle zxyz'\rangle$ is called
    its \emph{first} edge.  First edges in ugly quartets are uniquely
    determined due to the colors.  In bad quartets, this is not the case and
    therefore, the edge $yz'$ in $\langle zxyz'\rangle$ is a first edge as
    well.
    \label{def:GoodBadUgly}
  \end{definition}
  An RBMG never contains induced $P_4$s on two colors
  \cite[Obs.~5]{Geiss:19b}.  This, in particular, implies that for the
  induced $P_4$s in Def.~\ref{def:GoodBadUgly} the colors $r$, $s$, and $t$
  must be pairwise distinct. Induced $P_4$s on four colors are investigated
  in some more detail in Sec.~\ref{APP:ssec-4colP4} below.
  
  The key property of good quartets is a consequence of
  \cite[Cor.~5]{Geiss:20a}:
  \begin{proposition}
    \label{prop:good_quartet_middle_edge}
    If $\langle zxyz'\rangle$ is a good quartet in the BMG $(\G,\sigma)$,
    then $\Scap(x,y)\neq\emptyset$ and thus, $xy$ is \ufp.
  \end{proposition}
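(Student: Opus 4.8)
The plan is to read off two informative triples from the arc structure of the good quartet and then invoke Lemma~\ref{lem:inf_triples_overlap} to locate $z$ and $z'$ in any explaining tree. First I would unpack the hypotheses: since $\langle zxyz'\rangle$ is an induced $P_4$ in the symmetric part $(G,\sigma)$, the pairs $zx$ and $yz'$ are edges of $\G$, so $(x,z)\in E(\G)$ and $(y,z')\in E(\G)$; and condition~(ii) of a good quartet gives $(x,z')\notin E(\G)$ and $(y,z)\notin E(\G)$. Because $z\ne z'$ are leaves of the common color $t$, with $t\ne\sigma(x)$ and $t\ne\sigma(y)$ (the three colors being pairwise distinct), these four arc relations say exactly that $xz|z'$ and $yz'|z$ are informative triples for $(\G,\sigma)$ in the sense of Def.~\ref{def:informative_triples}.

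Next I would apply Lemma~\ref{lem:inf_triples_overlap} to these two triples with $a=x$, $b=z$, $b'=z'$, $c=y$: for every tree $(T,\sigma)$ explaining $(\G,\sigma)$ there are distinct children $v_1,v_2\in\child_T(\lca_T(x,y))$ with $x,z\prec_T v_1$ and $z',y\prec_T v_2$. Since the child of $\lca_T(x,y)$ lying above a given leaf is unique, $v_1$ coincides with the vertex $v_x$ and $v_2$ with the vertex $v_y$ from the definition of $\Scap_T(x,y)$. Then $z\prec_T v_x$ yields $t=\sigma(z)\in\sigma(L(T(v_x)))$ and $z'\prec_T v_y$ yields $t=\sigma(z')\in\sigma(L(T(v_y)))$, so $t\in\Scap_T(x,y)$; in particular $\Scap_T(x,y)\ne\emptyset$. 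By Lemma~\ref{lem:Scap} and Remark~\ref{rem:Scap} the color-intersection does not depend on the chosen explaining tree, hence $\Scap(x,y)\ne\emptyset$, and Prop.~\ref{prop:color_intersection_dupl} then immediately gives that $xy$ is \ufp.

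I expect no real obstacle here: the argument is a short chain of implications built entirely from machinery already in place. The only points that need a moment of care are (i) making sure all four arc relations required for the two informative triples genuinely follow from the good-quartet conditions — the subtlety being that the non-edges $zy$ and $xz'$ in the induced $P_4$ only say that not both arcs are present, so one must cite condition~(ii) explicitly to obtain $(y,z)\notin E(\G)$ and $(x,z')\notin E(\G)$ — and (ii) checking that the children delivered by Lemma~\ref{lem:inf_triples_overlap} are precisely the $v_x,v_y$ appearing in the definition of $\Scap_T$, which is immediate from the uniqueness of the child of $\lca_T(x,y)$ above each of $x$ and $y$.
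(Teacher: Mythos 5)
Your proof is correct, and it takes a slightly different route from the paper's. The paper settles the structural fact in one stroke by citing an external result (\cite[Lemma~36]{Geiss:19b}), which directly asserts that in every explaining tree $v\coloneqq\lca_T(x,y,z,z')$ has two distinct children $v_1,v_2$ with $x,z\preceq_T v_1$ and $y,z'\preceq_T v_2$; from there $\sigma(z)\in\sigma(L(T(v_1)))\cap\sigma(L(T(v_2)))$ gives $\Scap(x,y)\neq\emptyset$ and Prop.~\ref{prop:color_intersection_dupl} finishes. You instead re-derive that structural fact inside the paper's own machinery: the good-quartet conditions yield the informative triples $xz|z'$ and $yz'|z$ (correctly using condition~(ii) for the directed non-arcs $(x,z')\notin E(\G)$ and $(y,z)\notin E(\G)$, which the induced $P_4$ alone does not supply), and Lemma~\ref{lem:inf_triples_overlap} with $a=x$, $b=z$, $b'=z'$, $c=y$ places $x,z$ and $y,z'$ below distinct children of $\lca_T(x,y)$, which by uniqueness are $v_x$ and $v_y$, so $t\in\Scap_T(x,y)$ for every explaining tree. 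The endgame is identical (tree-independence via Lemma~\ref{lem:Scap}/Remark~\ref{rem:Scap} is in fact not even needed, since your argument is quantified over all explaining trees, but citing it does no harm; then Prop.~\ref{prop:color_intersection_dupl}). What your version buys is self-containedness — no appeal to the earlier paper — at the cost of a few extra lines; what the paper's version buys is brevity by outsourcing exactly the lemma whose content you reprove. Both are sound.
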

  \begin{proof}
    Let $\langle zxyz' \rangle$ in $(\G,\sigma)$ be a good quartet in
    $(\G,\sigma)$ and let $(T,\sigma)$ be an arbitrary tree explaining
    $(\G,\sigma)$. Then \cite[Lemma~36]{Geiss:19b} implies that
    $v\coloneqq \lca_T(x,y,z,z')$ has two distinct children
    $v_1, v_2\in \child(v)$ such that $x,z \preceq_T v_1$ and
    $y,z'\preceq_T v_2$. Hence, $v=\lca_T(x,y)$.  Since
    $\sigma(z)\in \sigma(L(T(v_1)))\cap \sigma(L(T(v_2)))$, we have
    $\Scap(x,y)\neq\emptyset$ and, by Prop.~\ref{prop:color_intersection_dupl},
    the edge $xy$ is \ufp.  
  \end{proof}
  Prop.~\ref{prop:good_quartet_middle_edge} provides a convenient
  way to identify unambiguous false-positive edges in a BMG.
  
  \begin{lemma}
    \label{lem:good_quartet_existence-2}
    If $xy$ is an edge in a BMG $\G(T,\sigma)$ and $t\in\Scap(x,y)$, then
    there is a good quartet $\langle z_1x^*y^*z_2\rangle$ such that
    \begin{description}[itemsep=0.2ex, topsep=0.2ex, parsep=0cm]
      \item[(a)] $\sigma(x^*)=\sigma(x)$, $\sigma(y^*)=\sigma(y)$, and
      $\sigma(z_1)=\sigma(z_2)=t$;
      \item[(b)] $x^*,z_1\in L(T(v_x))$ and $y^*,z_2\in L(T(v_y))$ with $v_x$
      and $v_y$ being the unique children in $\child_T (\lca_T (x, y))$ such
      that with $x \preceq_T v_x$ and $y \preceq_T v_y$.
    \end{description}
  \end{lemma}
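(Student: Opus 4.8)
The plan is a direct construction followed by a verification against Def.~\ref{def:GoodBadUgly}. Write $u \coloneqq \lca_T(x,y)$, $r \coloneqq \sigma(x)$, $s \coloneqq \sigma(y)$, and let $v_x, v_y$ be as in the statement. First I would record that $r$, $s$, $t$ are pairwise distinct: $r \ne s$ because $xy$ is an edge and colorings of BMGs are proper, while $r,s \ne t$ follows from $t \in \Scap(x,y)$ together with Cor.~\ref{cor:sigma-xy-notin-Scap}, which gives $\sigma(\{x,y\}) \cap \Scap(x,y) = \emptyset$. In particular $T(v_x)$ contains the two distinct colors $r,t$ and $T(v_y)$ the two distinct colors $s,t$, so both $v_x$ and $v_y$ are inner vertices. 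Lemma~\ref{lem:exEdge} applied to $T(v_x)$ with the colors $r,t$ then yields an edge $x^*z_1$ of $\G(T,\sigma)$ with $x^* \in L[r]\cap L(T(v_x))$ and $z_1 \in L[t]\cap L(T(v_x))$; symmetrically, Lemma~\ref{lem:exEdge} applied to $T(v_y)$ with $s,t$ yields an edge $y^*z_2$ with $y^* \in L[s]\cap L(T(v_y))$ and $z_2 \in L[t]\cap L(T(v_y))$. This choice immediately gives (a) and (b), and since $T(v_x)$ and $T(v_y)$ are disjoint and $r,s,t$ are pairwise distinct, the four leaves $z_1, x^*, y^*, z_2$ are pairwise distinct.

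It remains to check that $\langle z_1 x^* y^* z_2\rangle$ is a good quartet. The geometric facts I would rely on are that $\lca_T(a,b) = u$ for every $a \in \{x^*,z_1\}$ and $b \in \{y^*,z_2\}$, as well as for the pair $z_1,z_2$ (because $v_x \ne v_y$ are distinct children of $u$), while $\lca_T(x^*,z_1) \preceq_T v_x \prec_T u$ and $\lca_T(y^*,z_2) \preceq_T v_y \prec_T u$. The three edges are handled as follows: $z_1x^*$ and $y^*z_2$ are edges by Lemma~\ref{lem:exEdge}; for $x^*y^*$, note that since $xy$ is an edge, Lemma~\ref{lem:edge-xy-lca} gives $r \notin \sigma(L(T(v_y)))$ and $s \notin \sigma(L(T(v_x)))$, and applying the same lemma to the pair $x^*,y^*$ (whose relevant children of $u$ are again $v_x$ and $v_y$) shows $x^*y^* \in E(G)$. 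For the directed arcs I would argue straight from the definition of best match and the lca facts: any leaf of color $s$ lies outside $L(T(v_x))$, hence its lca with $z_1$ is $u$ or an ancestor, so $y^*$ is a best match of $z_1$, i.e.\ $(z_1,y^*) \in E(\G)$; conversely $z_2 \in L(T(v_y))$ has color $t$ with $\lca_T(y^*,z_2) \prec_T u = \lca_T(y^*,z_1)$, so $z_1$ is not a best match of $y^*$, i.e.\ $(y^*,z_1) \notin E(\G)$. The arcs $(z_2,x^*) \in E(\G)$ and $(x^*,z_2) \notin E(\G)$ follow by the symmetric argument (swapping $x,v_x$ with $y,v_y$ and using $r \notin \sigma(L(T(v_y)))$). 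Finally, $(y^*,z_1),(x^*,z_2) \notin E(\G)$ rule out the edges $z_1y^*$ and $x^*z_2$ in $G$, and $z_1z_2 \notin E(G)$ is automatic since $\sigma(z_1) = \sigma(z_2)$; thus $\langle z_1 x^* y^* z_2\rangle$ is an induced $P_4$ of $(G,\sigma)$ satisfying condition (ii) of Def.~\ref{def:GoodBadUgly}, i.e.\ a good quartet.

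I do not anticipate a genuine obstacle here; the argument is essentially routine once the two leaves per subtree are produced by Lemma~\ref{lem:exEdge}. The only point requiring care is the repeated, correct application of Lemma~\ref{lem:edge-xy-lca} with the appropriate pair of children of $u$, and keeping track of which color sets are known to be disjoint from which subtrees ($r$ absent from $T(v_y)$, $s$ absent from $T(v_x)$, and $z_1,z_2$ witnessing a strictly closer $t$-leaf on the respective sides).
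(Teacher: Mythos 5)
Your construction and verification coincide with the paper's own proof: both obtain $x^*z_1$ and $y^*z_2$ from Lemma~\ref{lem:exEdge} applied inside $T(v_x)$ and $T(v_y)$, get $x^*y^*$ from the color absences guaranteed by the edge $xy$ (Lemma~\ref{lem:edge-xy-lca}/Cor.~\ref{cor:sigma-xy-notin-Scap}), and then check the arc pattern of a good quartet via the same lca comparisons. The argument is correct; you merely spell out a few routine details (pairwise distinctness, the non-edges $z_1y^*$, $x^*z_2$, $z_1z_2$) that the paper leaves implicit.
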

  \begin{proof}
    Consider an edge $xy$ of $\G(T,\sigma)$ and a color $t\in\Scap(x,y)$.  By
    Cor.~\ref{cor:sigma-xy-notin-Scap}, $t\neq \sigma(x),\sigma(y)$.
    Lemma~\ref{lem:exEdge} ensures the existence of an edge $x^*z_1$ in $\G$
    for some leaves $x^*\in L(T(v_x))\cap L[\sigma(x)]$ and
    $z_1\in L(T(v_x))\cap L[t]$.  By the same arguments as in the proof of
    Cor.~\ref{cor:sigma-xy-notin-Scap}, we can conclude that $z_1y'$ is
    not an edge in $\G$ for all $y'\in L(T(v_y))\cap L[\sigma(y)]$.  However,
    $(z_1,y')\in E(\G)$ since the color of $y'$ is not present in
    $T(v_x)$. Likewise, there are leaves $y^*\in L(T(v_y))\cap L[\sigma(y)]$
    and $z_2\in L(T(v_y))\cap L[t]$ such that $y^*z_2$ forms an edge in
    $\G$. Reusing the arguments from $L(T(v_x))$, we find that $x'z_2$ is not
    an edge in $\G$ and $(z_2,x')\in E(\G)$ for any
    $x'\in L(T(v_x))\cap L[\sigma(x)]$.  Finally,
    $\sigma(x)\notin\sigma(L(T(v_y)))$ and $\sigma(y)\notin\sigma(L(T(v_x)))$
    implies that $x^*y^*$ forms an edge in $\G$.  Hence,
    $\langle z_1x^*y^*z_2\rangle$ is a good quartet.  
  \end{proof}
  The edge $x^*y^*$ in Lemma~\ref{lem:good_quartet_existence-2} is the middle
  edge of a good quartet.  For completeness, we also provide a result for the
  identification of \ufp edges using bad quartets:
  \begin{proposition}
    Let $\langle zxyz'\rangle$ be a bad quartet in a BMG $(\G,\sigma)$.
    Then, the edges $xz$ and $yz'$ are \ufp and every tree that explains
    $(\G,\sigma)$ is non-binary.
    \label{prop:bad}
  \end{proposition}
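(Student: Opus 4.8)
The plan is to fix an arbitrary tree $(T,\sigma)$ that explains $(\G,\sigma)$ and to analyse the vertex $v:=\lca_T(x,y,z,z')$. I want to show that, in \emph{every} such tree, $z$ and $z'$ lie in two distinct children $v_z\ne v_{z'}$ of $v$, that $x$ lies in yet a third child of $v$, and that $\lca_T(x,z)=\lca_T(y,z')=v$. Granting this, the proposition follows at once: $v$ then has at least three children, hence is not binary, so $(T,\sigma)$ — and therefore every tree explaining $(\G,\sigma)$ — is non-binary; and since $v_z,v_{z'}$ are distinct children of $v=\lca_T(x,z)$ whose subtrees both contain the color $t:=\sigma(z)=\sigma(z')$, Lemma~\ref{lem:T-fp-no-mu} (statement~(2) implies~(1)) shows that $xz$ is $(T,\sigma)$-\fp; the same pair of children witnesses that $yz'$ is $(T,\sigma)$-\fp, because $v=\lca_T(y,z')$ as well. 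As $(T,\sigma)$ was arbitrary, Def.~\ref{def:ufp} then yields that $xz$ and $yz'$ are \ufp.

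Two observations are immediate from the definition of best matches and hold in every explaining tree: since $(x,z),(x,z')\in E(\G)$ with $\sigma(z)=\sigma(z')$, the best-match condition for $x$ applied once to $z'$ and once to $z$ gives $\lca_T(x,z)=\lca_T(x,z')$; symmetrically $(y,z),(y,z')\in E(\G)$ give $\lca_T(y,z)=\lca_T(y,z')$. The heart of the argument is to rule out that $z$ and $z'$ lie below a common child $c$ of $v$. Suppose they do. If $x\not\preceq_T c$, then $\lca_T(z,x)=v=\lca_T(z',x)$; from $(z,x)\in E(\G)$ the leaf $x$ is a best match of $z$, so no leaf of color $r:=\sigma(x)$ can lie in $L(T(c))$ (such a leaf would have a last common ancestor with $z$ strictly below $v$); but $(z',x)\notin E(\G)$ means $x$ is \emph{not} a best match of $z'$, forcing some leaf of color $r$ in the subtree rooted at the child of $v$ containing $z'$, i.e.\ in $L(T(c))$ — a contradiction. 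If instead $x\preceq_T c$, then $y\not\preceq_T c$ (otherwise $v\preceq_T c\prec_T v$), so $\lca_T(z,y)=v=\lca_T(z',y)$; now $(z',y)\in E(\G)$ makes $y$ a best match of $z'$, so no leaf of color $s:=\sigma(y)$ lies in $L(T(c))$, whereas $(z,y)\notin E(\G)$ forces such a leaf into $L(T(c))$ — again a contradiction. Hence $z$ and $z'$ lie in distinct children $v_z\ne v_{z'}$ of $v$.

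It remains to pin down the last common ancestors. If $\lca_T(x,z)\prec_T v$, then $x$ and $z$ would share a child of $v$ and, by $\lca_T(x,z)=\lca_T(x,z')$, so would $x$ and $z'$; since $x$ lies below a unique child of $v$, this would put $z$ and $z'$ in a common child, contradicting the previous step — so $\lca_T(x,z)=v$, and likewise $\lca_T(y,z')=\lca_T(y,z)=v$. Moreover $\lca_T(x,z)=\lca_T(x,z')=v$ forces $x$ into a child of $v$ distinct from both $v_z$ and $v_{z'}$, so $v$ has at least three children. This establishes the structural claim, and with it the proposition as laid out in the first paragraph.

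I expect the only delicate step to be the case distinction in the second paragraph: the bookkeeping with the best-match axiom for the two non-edges $(z',x)$ and $(z,y)$ must be carried out carefully, treating separately whether $x$ (respectively $y$) lies below the hypothetical common child $c$. An alternative would be to quote the structural description of bad quartets in explaining trees from \cite{Geiss:19b} — in the spirit of \cite[Lemma~36]{Geiss:19b}, used above for good quartets — which already fixes exactly this child structure; the direct argument sketched here keeps the statement self-contained.
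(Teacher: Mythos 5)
Your proof is correct. It is essentially the paper's argument run from a different anchor vertex: the paper fixes $u\coloneqq\lca_T(x,z)$, notes $\lca_T(x,z')=u$ from the two arcs leaving $x$, and refutes $v_z=v_{z'}$ using only the non-arc $(z',x)$ together with the edge $xz$ (the missing arc forces a leaf of color $\sigma(x)$ below the common child, which would destroy the best match $(z,x)$); the edge $yz'$ is then handled by symmetry, and non-binarity follows since $v_x,v_z,v_{z'}$ are pairwise distinct children of $u$. You anchor instead at $v=\lca_T(x,y,z,z')$, which creates the extra subcase in which $x$ itself lies below the hypothetical common child; there you correctly switch to $y$ and the color $\sigma(y)$, playing $(z',y)\in E(\G)$ against $(z,y)\notin E(\G)$. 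The payoff of your choice is a slightly stronger structural conclusion, namely $\lca_T(x,z)=\lca_T(y,z')=v$ with $x$ below a third child, so both \ufp claims and non-binarity are read off at a single vertex via Lemma~\ref{lem:T-fp-no-mu}; the cost is the additional case distinction, which the paper's choice of $u=\lca_T(x,z)$ avoids because there $x$ automatically sits below a child distinct from those of $z$ and $z'$. The key lemma and the contradiction mechanism are the same in both versions, so the difference is organizational rather than substantive.
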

  \begin{proof}
    Let $(T,\sigma)$ be an arbitrary tree that explains $(\G,\sigma)$, set
    $u\coloneqq\lca_T(x,z)$ and let $v_x,v_z\in\child_T(u)$ be the two
    distinct children of $u$ such that $x\preceq_T v_x$ and $z\preceq_T
    v_z$. By symmetry, it suffices to show that $xz$ is \ufp.  Since
    $\langle zxyz'\rangle$ is a bad quartet, we have $(x,z),(x,z')\in E(\G)$
    and thus $\lca_T(x,z')=\lca_T(x,z)=u$.  Let $v_{z'}\in\child_T(u)$ be the
    child of $u$ such that $z'\preceq_T v_{z'}$. Since $\lca_T(x,z')=u$ we
    have $v_x\ne v_{z'}$.  Now, assume for contradiction that $v_z=v_{z'}$,
    and thus $z'\in L(T(v_z))$.  Since $\langle zxyz'\rangle$ is a bad
    quartet, we have $(z',x)\notin E(\G)$, which implies the existence of a
    vertex $x'$ with $\sigma(x)=\sigma(x')$ and
    $\lca_T(x', z')\prec_T\lca_T(x,z')=u$ and therefore, $x'\in L(T(v_z))$.
    However, this implies that
    $\lca_T(x',z)\preceq_{T}v_z\prec_T u=\lca_T(x,z)$, which together with
    $\sigma(x)=\sigma(x')$ contradicts the fact that $xz$ is an edge in $\G$.
    Hence, $v_z\ne v_{z'}$.  Therefore,
    $\sigma(z)=
    \sigma(z')\in\sigma(L(T(v_z)))\cap\sigma(L(T(v_{z'})))\ne\emptyset$ for
    distinct children $v_z,v_{z'}\in\child_T(u)$.  By
    Lemma~\ref{lem:T-fp-no-mu}, the edge $xz$ is $(T,\sigma)$-\fp and since
    $(T,\sigma)$ was chosen arbitrarily, the edge $xz$ is \ufp.  Moreover, we
    have shown that $v_x$, $v_z$ and $v_{z'}$ must be pairwise distinct and
    thus, $(T,\sigma)$ is non-binary.
  \end{proof}
  
  Fig.~\ref{fig:ugly_quartet} shows that \ufp edges $xy$ with
  $\Scap(x,y)\ne\emptyset$ exist that are neither middle edges of good
  quartets or first edges of bad quartets. Thus we next consider ugly
  quartets.
  \begin{proposition}
    \label{prop:ugly_quartet}
    If $\langle xyx'z\rangle$ is an ugly quartet in a BMG $(\G,\sigma)$, then
    the edges $xy$ and $yx'$ are \ufp.
  \end{proposition}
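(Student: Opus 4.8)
Following Definition~\ref{def:GoodBadUgly}, write the ugly quartet so that $\sigma(x)=\sigma(x')$, the edges of the induced $P_4$ in the symmetric part are $xy$, $yx'$ and $x'z$, the pairs $xz$ and $yz$ are non-edges, and the three colors $\sigma(x)$, $\sigma(y)$, $\sigma(z)$ are pairwise distinct by \cite[Obs.~5]{Geiss:19b}. Fix an arbitrary tree $(T,\sigma)$ explaining $(\G,\sigma)$. Since $xy,x'y\in E(G)$ and $\sigma(x)=\sigma(x')$, both $x$ and $x'$ are best matches of $y$, so $\lca_T(x,y)=\lca_T(x',y)\eqqcolon u$; let $v_x,v_{x'},v_y\in\child_T(u)$ be the children with $x\preceq_T v_x$, $x'\preceq_T v_{x'}$ and $y\preceq_T v_y$, and note $v_y\notin\{v_x,v_{x'}\}$. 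The plan is to prove that $u$ has two children with overlapping color sets; by Lemma~\ref{lem:T-fp-no-mu} this certifies $xy$ and $yx'$ simultaneously as $(T,\sigma)$-\fp (both have last common ancestor $u$), and since $(T,\sigma)$ is arbitrary both are then \ufp.

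If $v_x\neq v_{x'}$ we are done at once, since $\sigma(x)=\sigma(x')\in\sigma(L(T(v_x)))\cap\sigma(L(T(v_{x'})))$. The substantive case is $v_x=v_{x'}\eqqcolon v$, where the vertex $z$ and the remaining quartet relations enter. First I would show $z\in L(T(v))$. Assume not. Since $x,x'\in L(T(v))$, the last common ancestors $\lca_T(x,z)$ and $\lca_T(x',z)$ then coincide --- equal to $u$ if $z$ lies below a child of $u$ distinct from $v$, and equal to $w\coloneqq\lca_T(u,z)$ if $z\notin L(T(u))$. As $x'z\in E(G)$ makes $x'$ a best match of $z$, this equidistance forces $x$ to be a best match of $z$ as well, i.e.\ $(z,x)\in E(\G)$; together with $xz\notin E(G)$ this gives $(x,z)\notin E(\G)$, so some leaf $z''$ of color $\sigma(z)$ satisfies $\lca_T(x,z'')\prec_T\lca_T(x,z)$. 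Such a $z''$ is necessarily forced into a subtree containing both $x$ and $x'$ --- namely $T(v)$ in the first sub-case, and $T(w_0)$ with $w_0$ the child of $w$ satisfying $u\preceq_T w_0$ in the second --- so $\lca_T(x',z'')\prec_T\lca_T(x',z)$, contradicting that $z$ is a best match of $x'$. Hence $z\in L(T(v))$, and in particular $\lca_T(y,z)=u$ because $v\neq v_y$.

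It remains to show $z\in L(T(v))$ forces two children of $u$ to share a color. If $L(T(v))$ already contains a leaf of color $\sigma(y)$, then $\sigma(y)\in\sigma(L(T(v)))\cap\sigma(L(T(v_y)))$ and we are done. Otherwise $y$ is a best match of $z$ (no leaf of color $\sigma(y)$ is closer to $z$), so $(z,y)\in E(\G)$, and since $yz\notin E(G)$ we get $(y,z)\notin E(\G)$; hence some leaf $z''$ of color $\sigma(z)$ has $\lca_T(y,z'')\prec_T u$, which forces $z''\in L(T(v_y))$ and yields $\sigma(z)\in\sigma(L(T(v_y)))\cap\sigma(L(T(v)))$. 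In every case $\tT(u)=\DUPL$, which by Lemma~\ref{lem:T-fp-no-mu} is what we needed. I expect the main obstacle to be the bookkeeping in the case $v_x=v_{x'}$: one must treat separately where $z$ sits relative to $u$ and to $v$, identify in each position the leaf witnessing a failed best-match relation, and check that it falls into the subtree shared by $x$ and $x'$ (resp.\ below $v_y$). Keeping the arc orientations of $(\G,\sigma)$ straight --- in particular that $x'z$ is an edge in both directions while $(x,z)\notin E(\G)$ --- is what makes these contradictions go through, so one cannot argue with the symmetric RBMG alone.
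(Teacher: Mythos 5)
Your proof is correct and takes essentially the same route as the paper's: establish $\lca_T(x,y)=\lca_T(x',y)=u$, split on whether $x$ and $x'$ lie below the same child of $u$, and in the hard case force $z$ into that common subtree and use the non-edge $yz$ to place a leaf of color $\sigma(z)$ below $v_y$, so two children of $u$ have overlapping color sets and both edges are $(T,\sigma)$-\fp for the arbitrary explaining tree $(T,\sigma)$. The differences are cosmetic: the paper shows $z\in L(T(v_x))$ by comparing $\lca_T(x',z)$ with $\lca_T(x,x')$ instead of your sub-case analysis on the position of $z$, rules out your (vacuous) case $\sigma(y)\in\sigma(L(T(v)))$ via Cor.~\ref{cor:sigma-xy-notin-Scap}, and concludes with Lemma~\ref{lem:duplication_witness} rather than Lemma~\ref{lem:T-fp-no-mu}.
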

  \begin{proof}
    Consider an ugly quartet $\langle xyx'z\rangle$. Let $(T,\sigma)$ be an
    arbitrary tree explaining $(\G,\sigma)$, put $u\coloneqq\lca_T(x,y)$ and
    let $v_x,v_y\in\child_T(u)$ be the two distinct children of $u$ such that
    $x\preceq_T v_x$ and $y\preceq_T v_y$.
    
    Since $x'y$ and $xy$ are edges in $\G$ we have $\lca_T(x',y)\preceq_T u$.
    Moreover, Cor.~\ref{cor:sigma-xy-notin-Scap} implies
    $\sigma(x')=\sigma(x)\notin \sigma(L(T(v_y)))$ and thus
    $x'\notin L(T(v_y))$.  Therefore, $\lca_T(x',y)=\lca_T(x,y)=u$.
    
    Now consider an arbitrary reconciliation map $\mu$ from $(T,\sigma)$ to
    some species tree $S$. The existence of $\mu$ is guaranteed by 
    Lemma~\ref{lem:reconAll}.  If $x'\notin L(T(v_x))$, then there is a vertex
    $v_3\in\child_T(u)$, $v_3\ne v_x,v_y$ such that $x'\preceq_T v_3$ and
    $\sigma(x)=\sigma(x')\in\sigma(L(T(v_x)))\cap\sigma(L(T(v_3)))
    \ne\emptyset$, which by Lemma~\ref{lem:duplication_witness} implies
    $t_\mu(u)=\DUPL$.
    
    Now suppose $x'\in L(T(v_x))$ and recall that $x'z$ is an edge in $\G$ by
    assumption. Since $\lca_T(x',z)$ and $\lca_T(x,x')$ are both ancestors of
    $x'$ they are comparable. If $\lca_T(x',z)\succ_T \lca_T(x,x')$, then
    $\lca_T(x,z)=\lca_T(x',z)$.  Together with the fact that $x'z$ is an
    edge in $\G$ but not $xz$, this implies that there is a
    $z'\in L[\sigma(z)]$ such that $\lca_T(x,z')\prec_T\lca_T(x,z)$. This in
    turn implies $\lca_T(x',z')\prec_T\lca_T(x',z)$, which contradicts that
    $x'z$ is an edge in $\G$. Therefore, $x'\in L(T(v_x))$ implies
    $\lca_T(x',z)\preceq_T \lca_T(x,x')$ and $x,x',z\in L(T(v_x))$. Since
    $yz$ is not an edge in $\G$ by assumption and
    Cor.~\ref{cor:sigma-xy-notin-Scap} implies
    $\sigma(y)\notin \sigma(L(T(v_x))$, there is a leaf $z'$ with color
    $\sigma(z')=\sigma(z)$ such that $\lca_T(y,z')\prec_T \lca_T(y,z)$. This
    is only possible if $z'\in L(T(v_y))\cap L[\sigma(z)]$.  Therefore,
    $\sigma(z)\in\sigma(L(T(v_x)))\cap\sigma(L(T(v_y)))$ and
    Lemma~\ref{lem:duplication_witness} implies that $t_\mu(u)=\DUPL$.
    
    In summary, $\lca_T(x',y)=\lca_T(x,y)=u$ and $t_\mu(u)=\DUPL$ for
    every tree explaining $(\G,\sigma)$ and every possible reconciliation
    map $\mu$ from $(T,\sigma)$ to any species tree. Thus both $xy$
    and $x'y$ are \ufp.
  \end{proof}
  
  \begin{proposition}
    \label{prop:good_or_ugly}
    Let $(\G,\sigma)$ be a BMG and $xy$ an edge in $\G$ with
    $\Scap(x,y)\ne\emptyset$. Then $xy$ is either the middle edge of some
    good quartet $\langle zxyz'\rangle$ or the first edge in some ugly
    quartet $\langle xyx'z\rangle$ or $\langle yxy'z\rangle$.
  \end{proposition}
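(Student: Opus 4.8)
The plan is to fix any tree $(T,\sigma)$ explaining $(\G,\sigma)$, put $u\coloneqq\lca_T(x,y)$, and let $v_x,v_y\in\child_T(u)$ be the children with $x\preceq_T v_x$ and $y\preceq_T v_y$. By Lemma~\ref{lem:Scap} we may pick $t\in\Scap(x,y)=\sigma(L(T(v_x)))\cap\sigma(L(T(v_y)))$, and Cor.~\ref{cor:sigma-xy-notin-Scap} gives $t\notin\{\sigma(x),\sigma(y)\}$, while $\sigma(x)\neq\sigma(y)$ since $xy$ is an edge. Since $x$ and some $t$-colored leaf both lie in $L(T(v_x))$ (and likewise for $v_y$), both $v_x$ and $v_y$ are inner vertices. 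Because $xy$ is an edge, Lemma~\ref{lem:edge-xy-lca} supplies the ``color-separation'' facts $\sigma(y)\notin\sigma(L(T(v_x)))$ and $\sigma(x)\notin\sigma(L(T(v_y)))$; combined with $\lca_T(p,q)=u$ whenever $p\preceq_T v_x$ and $q\preceq_T v_y$, these certify every arc we need across $u$, while every non-arc we need follows from the presence of a strictly closer leaf of the relevant color inside $v_x$ or $v_y$.

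I would then split into cases by whether $x$ and $y$ each are reciprocal best matches of some $t$-colored leaf lying in $L(T(v_x))$, resp.\ $L(T(v_y))$. First, if there are $z_1\in L[t]\cap L(T(v_x))$ with $xz_1\in E(G)$ and $z_2\in L[t]\cap L(T(v_y))$ with $yz_2\in E(G)$, then $\langle z_1xyz_2\rangle$ is a good quartet with middle edge $xy$: the four vertices are pairwise distinct; $z_1x$, $xy$, $yz_2$ are edges of $G$ and $z_1z_2$ is a non-edge (same color); $(z_1,y),(z_2,x)\in E(\G)$ by the color-separation facts together with $\lca_T(z_1,y)=\lca_T(z_2,x)=u$; and $(y,z_1),(x,z_2)\notin E(\G)$ because $z_2\in L(T(v_y))$, resp.\ $z_1\in L(T(v_x))$, is a strictly closer $t$-colored leaf. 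Matching these against Def.~\ref{def:GoodBadUgly} settles this case.

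In the remaining case, by symmetry assume there is no $z_1\in L[t]\cap L(T(v_x))$ with $xz_1\in E(G)$. As $\sigma(x),t\in\sigma(L(T(v_x)))$ are distinct, Lemma~\ref{lem:exEdge} applied at $v_x$ produces an edge $az'$ of $\G$ with $a\in L[\sigma(x)]\cap L(T(v_x))$ and $z'\in L[t]\cap L(T(v_x))$; the case hypothesis forces $a\neq x$. The path $\langle xyaz'\rangle$ then has color sequence $\sigma(x),\sigma(y),\sigma(x),t$, which is exactly the ugly pattern, and it is an induced $P_4$ in $G$: $xy$, $ya$, $az'$ are edges (for $ya$ use $\lca_T(y,a)=u$ with the color-separation facts), $xa$ is a non-edge (same color), $xz'$ is a non-edge by the case hypothesis, and $yz'$ is a non-edge because some $t$-colored leaf of $L(T(v_y))$ is strictly closer to $y$ than $z'$, so $(y,z')\notin E(\G)$. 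Hence $xy$ is the first edge of the ugly quartet $\langle xyx'z\rangle$ with $x'=a$ and $z=z'$; the symmetric subcase (no suitable $z_2$ for $y$) produces an ugly quartet of the form $\langle yxy'z\rangle$. As the two cases are exhaustive, the proof is complete.

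The main obstacle to flag is that the obvious shortcut, invoking Lemma~\ref{lem:good_quartet_existence-2}, only yields a good quartet whose middle edge $x^*y^*$ may use copies $x^*\neq x$, $y^*\neq y$, with no evident way to move it back onto $xy$ itself. The case split above is tailored to this: $xy$ can fail to be a good-quartet middle edge only when $x$ (or $y$) is \emph{not} a reciprocal best match of any $t$-colored leaf of $L(T(v_x))$ (resp.\ $L(T(v_y))$), and that very failure hands us a $t$-colored leaf together with a strictly closer $\sigma(x)$-colored (resp.\ $\sigma(y)$-colored) partner, which is precisely the tail of an ugly quartet anchored at $xy$. Everything else is the routine arc bookkeeping indicated above.
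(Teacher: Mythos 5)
Your proof is correct: all the adjacency and non-adjacency checks (the arcs $(z_1,y),(z_2,x)$, the non-arcs $(y,z_1),(x,z_2)$, the edge $ya$ via Lemma~\ref{lem:edge-xy-lca}, the non-edges $xz'$ and $yz'$, and the pairwise distinctness and coloring constraints of the quartets) go through, and your two cases are exhaustive. The route differs from the paper's mainly in its entry point: the paper starts by invoking Lemma~\ref{lem:good_quartet_existence-2} to obtain a good quartet $\langle z_1x^*y^*z_2\rangle$ whose middle edge may sit on copies $x^*\neq x$, $y^*\neq y$, and then repairs this by subcases on whether $xz_1\in E(G)$ and whether $y$ has a $t$-colored reciprocal best match $z'$ in $L(T(v_y))$, yielding $\langle xyx^*z_1\rangle$, $\langle z'yxz_1\rangle$, or $\langle yxy^*z_2\rangle$. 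You bypass that lemma and split symmetrically at the outset on whether $x$ (resp.\ $y$) has a $t$-colored reciprocal best match inside $L(T(v_x))$ (resp.\ $L(T(v_y))$), building the good quartet directly when both exist and otherwise applying Lemma~\ref{lem:exEdge} inside the offending subtree to produce the $\sigma(x)$-colored partner and $t$-colored leaf that form the tail of an ugly quartet anchored at $xy$. The quartets you construct coincide with the paper's (your Case-1 quartet is its $\langle z'yxz_1\rangle$ up to orientation, your Case-2 quartet is its $\langle xyx^*z_1\rangle$/$\langle yxy^*z_2\rangle$), and both arguments rest on the same two ingredients: Lemma~\ref{lem:exEdge} applied to a subtree containing both $\sigma(x)$ (or $\sigma(y)$) and $t$, and the color separation from Lemma~\ref{lem:edge-xy-lca} forced by $xy$ being an edge. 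What your organization buys is a transparent, symmetric dichotomy that never introduces the auxiliary leaves $x^*,y^*$ and makes explicit why failure of the good-quartet case automatically hands you an ugly quartet at $xy$; what the paper's buys is reuse of Lemma~\ref{lem:good_quartet_existence-2}, which it has established anyway, so that the remaining work is confined to the repair step.
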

  \begin{proof}    
    Let $(T,\sigma)$ be a leaf-colored tree explaining the BMG $(\G,\sigma)$
    with symmetric part $(G,\sigma)$. Let
    $v_x, v_y \in \child_T(\lca_T(x,y))$ such that $x\preceq_T v_x$ and
    $y\preceq_T v_y$. Since $\Scap(x,y)\ne\emptyset$,
    Lemma~\ref{lem:good_quartet_existence-2} implies that there is a good
    quartet $\langle z_1x^*y^*z_2\rangle$ with $\sigma(x^*)=\sigma(x)$,
    $\sigma(y^*)=\sigma(y)$, $\sigma(z_1)=\sigma(z_2)=t\in\Scap(x,y)$,
    $x^*,z_1\in L(T(v_x))$ and $y^*,z_2\in L(T(v_y))$.
    
    If $x=x^*$ and $y=y^*$ we are done. By symmetry it suffices to
    consider the case $x\ne x^*$. Before we proceed, we consider the
    (non-)existence of certain edges in the RBMG $G(T,\sigma)$ and the BMG
    $\G(T,\sigma)$. By definition of good quartets, we have
    $x^*z_1,x^*y^*,y^*z_2\in E(G)$ and
    Cor.~\ref{cor:sigma-xy-notin-Scap} implies
    $\sigma(x),\sigma(y)\notin\Scap(x,y)$. Hence,
    $\sigma(x^*)=\sigma(x)\notin \sigma(L(T(v_y)))$ and
    $\sigma(y^*)=\sigma(y)\notin \sigma(L(T(v_x)))$, and thus $x^*y\in E(G)$
    and $xy^*\in E(G)$. Moreover, since $\lca_T(y,z_2)\prec_T\lca_T(y,z_1)$,
    we have $yz_1\notin E(G)$.  Similarly, $xz_2\notin E(G)$.  However,
    $\sigma(x)\notin\sigma(L(T(v_y)))$ implies that
    $\lca_T(z_2,x) = \lca_T(x,y)\preceq\lca_T(z_2,x')$ for all
    $x'\in L[\sigma(x)]$ and thus, $(z_2,x)\in E(\G)$. Similarly,
    $(z_1,y)\in E(\G)$. Furthermore, we note that neither $x$ and $x^*$ nor
    $y$ and $y^*$ can be adjacent in $G$ or $\G$ since
    $\sigma(x)=\sigma(x^*)$ and $\sigma(y)=\sigma(y^*)$.
    
    If $xz_1 \notin E(G)$, then $\langle xyx^*z_1\rangle$ forms an ugly
    quartet. Now suppose that $xz_1 \in E(G)$.  Assume that there is an edge
    $yz'\in E(G)$ with $z'\in L(T(v_y))\cap L[t]$. Then,
    $\lca(x,z_1)\prec_T\lca(x,z')$ implies $xz'\notin E(G)$. Moreover, since
    $\sigma(x)\notin\sigma(L(T(v_y)))$ we have, by similar arguments as
    above, that $(z',x)\in E(\G)$. Thus, $\langle z'yxz_1\rangle$ forms a
    good quartet. Finally, if there is no such edge $yz'\in E(G)$ then, in
    particular, $yz_2\notin E(G)$ and $y\neq y^*$. In this case,
    $\langle yxy^*z_2\rangle$ forms an ugly quartet. 
  \end{proof}
  The example Fig.~\ref{fig:Scap_empty_ugly_quartet} shows that the converse
  of Prop.~\ref{prop:good_or_ugly} is not true in general.
  
  \begin{figure}[t]
    \begin{center}
      \includegraphics[width=0.6\textwidth]{./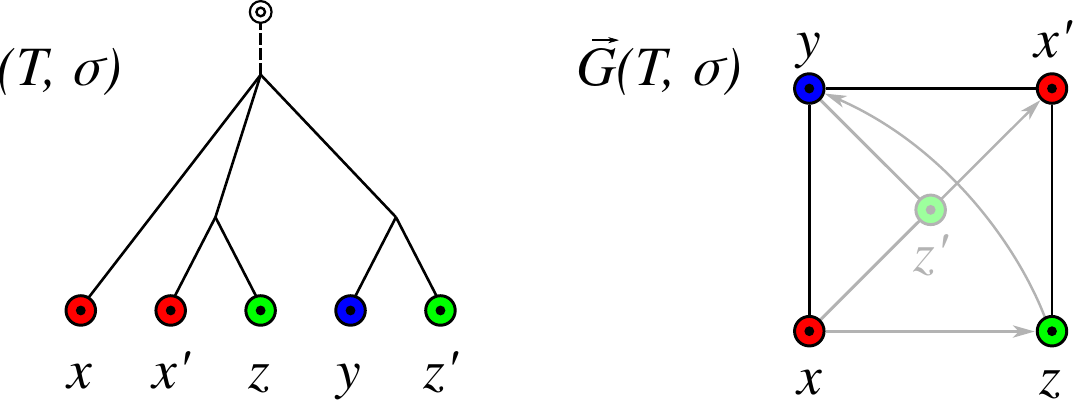}
    \end{center}
    \caption{The edge $xy$ is \ufp since it is the first edge of an ugly
      quartet. However, $\Scap(x,y)=\emptyset$ and thus, the converse of
      Prop.~\ref{prop:good_or_ugly} is not satisfied.}
    \label{fig:Scap_empty_ugly_quartet}
  \end{figure}
  
  \noindent We summarize the results of
  Props.~\ref{prop:color_intersection_dupl},
  \ref{prop:good_quartet_middle_edge}, \ref{prop:ugly_quartet} and
  \ref{prop:good_or_ugly} in the following
  \begin{corollary}\label{cor:ufp-quartets}
    Let $(\G,\sigma)$ be a BMG that contains the edge $xy$. Then,
    $\Scap(x,y)\ne\emptyset$ implies that $xy$ is either the middle edge of
    some good quartet or the first edge of some ugly quartet, which in turn
    implies that $xy$ is \ufp.
  \end{corollary}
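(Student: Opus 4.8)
The plan is to obtain Corollary~\ref{cor:ufp-quartets} as a direct synthesis of the four preceding propositions; there is essentially no new combinatorial work here, only the bookkeeping of chaining implications, so I would present it as a short two-step argument.

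\emph{First implication.} Let $xy$ be an edge of $(\G,\sigma)$ with $\Scap(x,y)\ne\emptyset$. Proposition~\ref{prop:good_or_ugly} applies verbatim and yields that $xy$ is either the middle edge of some good quartet $\langle zxyz'\rangle$ or the first edge of an ugly quartet of the form $\langle xyx'z\rangle$ or $\langle yxy'z\rangle$. Since by Definition~\ref{def:GoodBadUgly} the first edge of $\langle xyx'z\rangle$ and of $\langle yxy'z\rangle$ is in both cases $xy$, this is exactly the asserted dichotomy, and nothing further is needed for the first half.

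\emph{Second implication.} Now suppose $xy$ is the middle edge of a good quartet, resp.\ the first edge of an ugly quartet. In the good-quartet case, Proposition~\ref{prop:good_quartet_middle_edge} gives immediately that $xy$ is \ufp (it in fact re-establishes $\Scap(x,y)\ne\emptyset$, so one could equally route through Proposition~\ref{prop:color_intersection_dupl}). In the ugly-quartet case, the quartet has the form $\langle xyx'z\rangle$ (or its mirror $\langle yxy'z\rangle$) with first edge $xy$, and Proposition~\ref{prop:ugly_quartet} states precisely that this first edge is \ufp. In either case $xy$ is \ufp, which closes the chain of implications.

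I do not anticipate any obstacle at this level: the genuine content sits one layer below. Proposition~\ref{prop:good_or_ugly} is where the quartet is actually constructed, via Lemma~\ref{lem:good_quartet_existence-2} together with a case analysis according to whether $xz_1\in E(G)$; and Proposition~\ref{prop:ugly_quartet} is where the reconciliation-free argument showing $t_\mu(\lca_T(x,y))=\DUPL$ for every explaining tree and every reconciliation map is carried out. The only point that merits a line of care when stating Corollary~\ref{cor:ufp-quartets} itself is to phrase ``first edge of some ugly quartet'' so that it covers both orientations produced by Proposition~\ref{prop:good_or_ugly}; this is harmless, given the colour-uniqueness of first edges recorded in Definition~\ref{def:GoodBadUgly}.
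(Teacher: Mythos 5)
Your proposal is correct and follows exactly the route the paper intends: the corollary is stated there as a direct summary of Props.~\ref{prop:color_intersection_dupl}, \ref{prop:good_quartet_middle_edge}, \ref{prop:ugly_quartet} and~\ref{prop:good_or_ugly}, with Prop.~\ref{prop:good_or_ugly} giving the dichotomy and the quartet propositions giving \ufp-ness. Nothing is missing, and your remark about covering both orientations of the ugly quartet is the only point of care the paper's phrasing also relies on.
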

  
  \subsection{$\Scap(x,y)=\emptyset$: hourglasses}
  \label{APP:ssect:hourglass}
  
  The case $\Scap(x,y)\neq\emptyset$ is sufficient to detect the edge $xy$ as
  \ufp. In this section we turn to the case $\Scap(x,y)=\emptyset$ and show
  how to identify further \ufp edges.
  \begin{definition}[Hourglass] \label{def:hourglass}
    An \emph{hourglass} in a proper vertex-colored graph $(\G,\sigma)$,
    denoted by $[xy \hourglass x'y']$, is a subgraph $(\G[Q],\sigma_{|Q})$
    induced by a set of four pairwise distinct vertices
    $Q=\{x, x', y, y'\}\subseteq V(\G)$ such that (i)
    $\sigma(x)=\sigma(x')\ne\sigma(y)=\sigma(y')$, (ii) $xy$ and $x'y'$ are
    edges in $\G$, (iii) $(x,y'),(y,x')\in E(\G)$, and (iv)
    $(y',x),(x',y)\notin E(\G)$.
  \end{definition}
  Note that Condition (i) rules out arcs between $x,x'$ and $y,y'$,
  respectively, i.e., the only arcs in an hourglass are the ones specified by
  Conditions (ii) and (iii).  An example is shown in
  Fig.~\ref{fig:hourglasses}(A).
  \begin{fact}\label{obs:hourbmg}
    Every hourglass is a BMG since it can be explained by a tree as shown in
    Fig.~\ref{fig:hourglasses}(B).
  \end{fact}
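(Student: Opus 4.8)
The plan is to exhibit the explicit leaf-colored tree depicted in Fig.~\ref{fig:hourglasses}(B) and verify directly that its best match graph, restricted to the four leaves, is exactly the prescribed digraph. Write $\sigma(x)=\sigma(x')=r$ and $\sigma(y)=\sigma(y')=s$ with $r\neq s$. I would take $(T,\sigma)$ to be the planted phylogenetic tree whose conventional root $\rho_T$ has exactly three children, namely the leaf $x$, the leaf $y$, and an inner vertex $v$ with two children, the leaves $x'$ and $y'$; here $\rho_T$ has degree $4$ and $v$ has degree $3$, so this is indeed a planted phylogenetic tree. In the notation of the figure, $v_1=x$, $v_3=y$, and $v_2=v$, and the subtree $T(v)$ is the one carrying both colors $r$ and $s$, ``bridging'' the single-colored subtrees $T(x)$ and $T(y)$.

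First I would record the relevant last common ancestors: $\lca_T(x,y)=\lca_T(x,y')=\lca_T(x',y)=\lca_T(x,x')=\lca_T(y,y')=\rho_T$, whereas $\lca_T(x',y')=v\prec_T\rho_T$. The two symmetric edges then follow from Lemma~\ref{lem:edge-xy-lca}: for $xy$, the children of $\rho_T$ below $x$ and $y$ are the leaves themselves, so $\sigma(y)\notin\sigma(L(T(x)))=\{r\}$ and $\sigma(x)\notin\sigma(L(T(y)))=\{s\}$, whence $xy\in E(\G)$; applying the same argument at $v$ yields the edge $x'y'$. For the remaining arcs I would argue straight from the definition of best matches. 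The only $s$-colored leaves are $y$ and $y'$, and $\lca_T(x,y')=\lca_T(x,y)=\rho_T$, so $y'$ is a best match of $x$, i.e.\ $(x,y')\in E(\G)$; symmetrically $(y,x')\in E(\G)$. On the other hand $\lca_T(y',x')=v\prec_T\rho_T=\lca_T(y',x)$, so $x$ is not a best match of $y'$ and $(y',x)\notin E(\G)$; symmetrically $(x',y)\notin E(\G)$. Finally, there are no arcs within $\{x,x'\}$ or within $\{y,y'\}$ because best matches connect only leaves of distinct colors. Hence $\G(T,\sigma)$ on the leaf set $\{x,x',y,y'\}$ realizes precisely Conditions (i)--(iv) of Def.~\ref{def:hourglass}.

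I do not expect a genuine obstacle here. The only points requiring minor care are that the tree written down is a legitimate planted phylogenetic tree (the degree constraints on $\rho_T$ and $v$) and that the verification is carried out on a tree whose leaf set is exactly the four vertices of the hourglass, so that the constructed BMG coincides with the induced subgraph rather than merely containing it. Conceptually, the statement just records that the ``bridging'' middle subtree $T(v)$, sharing a color with each of the flanking leaves $x$ and $y$, is exactly what produces the asymmetric adjacency pattern characteristic of an hourglass.
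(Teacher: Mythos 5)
Your proposal is correct and is essentially the paper's own argument: the paper simply asserts that the tree of Fig.~\ref{fig:hourglasses}(B) explains the hourglass, and your construction is exactly that tree (leaves $x$ and $y$ as children of $\rho_T$ together with an inner vertex $v$ above $x',y'$), with the best-match verification spelled out in full. The details you check (degree constraints, the $\lca$ comparisons giving the four arcs and excluding $(y',x)$, $(x',y)$, and the fact that the leaf set is exactly $\{x,x',y,y'\}$) are precisely what the paper leaves implicit, so there is nothing to add.
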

  
  We first show that hourglasses cannot appear in a BMG that can be explained
  by a binary tree.
  \begin{lemma}
    \label{lem:hourglass}
    If $(\G,\sigma)$ is a BMG containing the hourglass
    $[xy \hourglass x'y']$, then every tree $(T,\sigma)$ that explains
    $(\G,\sigma)$ contains a vertex $u\in V^0(T)$ with three distinct
    children $v_1$, $v_2$, and $v_3$ such that $x\preceq_T v_1$,
    $\lca_T(x',y')\preceq_T v_2$ and $y\preceq_T v_3$.
  \end{lemma}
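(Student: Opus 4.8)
The plan is to fix an arbitrary tree $(T,\sigma)$ explaining $(\G,\sigma)$, set $u\coloneqq\lca_T(x,y)$, and show that $x$, $y$, and the two vertices $x',y'$ together lie below three pairwise distinct children of $u$. First I would use only the arcs of the hourglass to identify several last common ancestors with $u$: since $(x,y),(x,y')\in E(\G)$ and $\sigma(y)=\sigma(y')$, the minimality in the definition of best matches forces $\lca_T(x,y)\preceq_T\lca_T(x,y')$ and $\lca_T(x,y')\preceq_T\lca_T(x,y)$, hence $\lca_T(x,y')=u$; by the symmetric argument from $(y,x),(y,x')\in E(\G)$ with $\sigma(x)=\sigma(x')$, also $\lca_T(y,x')=u$. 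As $u$ is the last common ancestor of the distinct leaves $x,y$, it is an inner vertex $u\in V^0(T)$, and the children $v_x,v_y\in\child_T(u)$ with $x\preceq_Tv_x$, $y\preceq_Tv_y$ are distinct (otherwise $\lca_T(x,y)\preceq_Tv_x\prec_Tu$).

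Next I would invoke informative triples. By the hourglass conditions (i), (ii) and (iv) we have $\sigma(y')\neq\sigma(x')=\sigma(x)$, $(y',x')\in E(\G)$ and $(y',x)\notin E(\G)$, so $y'x'|x$ is an informative triple for $(\G,\sigma)$ in the sense of Def.~\ref{def:informative_triples}. By Lemma~\ref{lem:informative_triples}, $T$ displays it, i.e.\ $\lca_T(x',y')\prec_T\lca_T(y',x)$; since $\lca_T(y',x)=\lca_T(x,y')=u$ by the previous step, this yields $\lca_T(x',y')\prec_Tu$. Hence $x'$ and $y'$ lie in a common child subtree of $u$: there is $v_2\in\child_T(u)$ with $\lca_T(x',y')\preceq_Tv_2$. (The symmetric triple $x'y'|y$ would work as well, but is not needed.)

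Finally I would rule out $v_2=v_x$ and $v_2=v_y$: if $v_2=v_x$ then $x,y'\preceq_Tv_x$ would give $\lca_T(x,y')\preceq_Tv_x\prec_Tu$, contradicting $\lca_T(x,y')=u$; if $v_2=v_y$ then $y,x'\preceq_Tv_y$ would give $\lca_T(y,x')\preceq_Tv_y\prec_Tu$, contradicting $\lca_T(y,x')=u$. Together with $v_x\neq v_y$, the children $v_x,v_2,v_y$ of $u$ are pairwise distinct, so $v_1\coloneqq v_x$, $v_2$, $v_3\coloneqq v_y$ satisfy $x\preceq_Tv_1$, $\lca_T(x',y')\preceq_Tv_2$, $y\preceq_Tv_3$; since $(T,\sigma)$ was arbitrary, the lemma follows. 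I do not expect a real obstacle here; the one point that needs attention is that the hourglass supplies informative triples only through the \emph{non-}arcs $(y',x),(x',y)\notin E(\G)$ of condition (iv) — the edge $xy$ and the arcs $(x,y'),(y,x')$ give none, as $x$ has both $y,y'$ and $y$ has both $x,x'$ as best matches — and a single such triple already suffices.
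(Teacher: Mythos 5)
Your proof is correct and follows essentially the same route as the paper: both arguments pin down the relevant last common ancestors via the best-match definition together with the informative-triple machinery (Lemma~\ref{lem:informative_triples}) and then conclude that $x$, $\lca_T(x',y')$, and $y$ lie below three pairwise distinct children. The only difference is organizational — you anchor at $u=\lca_T(x,y)$ and use the arcs $(x,y')$ and $(y,x')$ directly to get $\lca_T(x,y')=\lca_T(y,x')=u$, needing just one informative triple, whereas the paper anchors at $\lca_T(x',y')$ and uses both triples $x'y'|x$ and $x'y'|y$ plus the edge $xy$; this is a mild streamlining, not a different method.
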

  \begin{proof}
    By assumption, $xy$ and $x'y'$ are edges in $\G$,
    $(x,y'),(y,x')\in E(\G)$, and $(y',x),(x',y)\notin E(\G)$.  By 
    Lemma~\ref{lem:informative_triples}, the informative triples $x'y'|x$ and
    $x'y'|y$ thus must be displayed by every tree $(T,\sigma)$ that explains
    $(\G,\sigma)$. Thus
    $u_{x'y'}\coloneqq\lca_T(x',y') \prec_T u_x\coloneqq\lca_T(x,u_{x'y'})$
    and $u_{x'y'} \prec_T u_y\coloneqq\lca_T(y,u_{x'y'})$.  Furthermore,
    $u_x$ and $u_y$ are both ancestors of $u_{x'y'}$ and thus comparable
    w.r.t. $\preceq_T$.  If $u_x\prec_T u_y$, then
    $\lca_T(x,y')\prec_T\lca_T(x,y)$ which implies that $xy$ cannot form an
    edge in $\G$; a contradiction.  By similar arguments, $u_y\prec_T u_x$ is
    not possible and therefore, $u_x=u_y\eqqcolon u$.
    
    Since $u_{x'y'}\prec_T u$, there are two distinct children
    $v_1,v_2\in\child_T(u)$ of $u$ such that $x\preceq_T v_1$ and
    $u_{x'y'}\preceq_T v_2$.  Clearly, $y\notin L(T(v_2))$ since
    $\lca_T(y,u_{x'y'})=u\succ_T v_2$.  We also have $y\notin L(T(v_1))$
    since $y\in L(T(v_1))$ would imply
    $\lca_T(x,y)\preceq_T v_1\prec_T u=\lca_T(x,u_{x'y'})=\lca_T(x,y')$,
    contradicting $(x,y')\in E(\G)$.  Together with $y\in L(T(u))$, this
    implies the existence of a vertex $v_3\in\child(u)$ such that
    $v_3\notin\{v_1,v_2\}$ and $y\preceq_T v_3$.
  \end{proof}
  The result shows that hourglasses $[xy \hourglass x'y']$ can be used to
  identify false-positive edges $xy$ with $\Scap(x,y)=\emptyset$.
  \begin{proposition}\label{prop:singeHG-ufp}
    If a BMG $(\G,\sigma)$ contains an hourglass $[xy \hourglass x'y']$, then
    the edge $xy$ is \ufp.
  \end{proposition}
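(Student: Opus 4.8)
The plan is to reduce the statement to Lemma~\ref{lem:T-fp-no-mu} and then feed it the structural information supplied by Lemma~\ref{lem:hourglass}. So I would fix an arbitrary tree $(T,\sigma)$ explaining $(\G,\sigma)$ and recall that, by the equivalence ``(1)$\Leftrightarrow$(2)'' in Lemma~\ref{lem:T-fp-no-mu}, it suffices to exhibit two distinct children $v,v'$ of $\lca_T(x,y)$ with $\sigma(L(T(v)))\cap\sigma(L(T(v')))\neq\emptyset$; doing this for every explaining tree yields that $xy$ is \ufp by Def.~\ref{def:ufp}.

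Next I would invoke Lemma~\ref{lem:hourglass} applied to the hourglass $[xy \hourglass x'y']$: it produces a vertex $u\in V^0(T)$ with three pairwise distinct children $v_1,v_2,v_3$ such that $x\preceq_T v_1$, $\lca_T(x',y')\preceq_T v_2$, and $y\preceq_T v_3$. Since $x\preceq_T v_1$ and $y\preceq_T v_3$ lie below distinct children of $u$, we immediately get $\lca_T(x,y)=u$. Then the key observation is that the ``middle'' child $v_2$ bridges the color $\sigma(x)$: from $x'\preceq_T\lca_T(x',y')\preceq_T v_2$ we obtain $x'\in L(T(v_2))$, hence $\sigma(x)=\sigma(x')\in\sigma(L(T(v_2)))$ by the hourglass color condition~(i). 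On the other hand $x\in L(T(v_1))$ gives $\sigma(x)\in\sigma(L(T(v_1)))$. Therefore $\sigma(x)\in\sigma(L(T(v_1)))\cap\sigma(L(T(v_2)))$, so $v_1$ and $v_2$ are the desired distinct children of $\lca_T(x,y)$ with overlapping color sets, and Lemma~\ref{lem:T-fp-no-mu} concludes that $xy$ is $(T,\sigma)$-\fp; since $(T,\sigma)$ was arbitrary, $xy$ is \ufp.

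I do not expect a genuine obstacle, as Lemma~\ref{lem:hourglass} already carries the weight (it itself relies on the informative-triple machinery of Lemmas~\ref{lem:informative_triples}--\ref{lem:hourglass}). The only point requiring a moment's care is choosing the correct pair of children: one could equally well argue with $\sigma(y)=\sigma(y')\in\sigma(L(T(v_2)))\cap\sigma(L(T(v_3)))$, but in either case the witness color comes from $v_2$ overlapping with a neighbouring child, which is exactly the ``bridging'' intuition of Fig.~\ref{fig:hourglasses}(B).
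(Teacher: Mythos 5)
Your proposal is correct and follows essentially the same route as the paper's own proof: it applies Lemma~\ref{lem:hourglass} to an arbitrary explaining tree, notes that $u=\lca_T(x,y)$ and that $\sigma(x)\in\sigma(L(T(v_1)))\cap\sigma(L(T(v_2)))$, and then concludes via Lemma~\ref{lem:T-fp-no-mu} that $xy$ is $(T,\sigma)$-\fp for every explaining tree, hence \ufp. No gaps.
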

  \begin{proof}
    According to Lemma~\ref{lem:hourglass}, every tree $(T,\sigma)$ that
    explains $(\G,\sigma)$ contains a vertex $u\in V^0(T)$ with three
    distinct children $v_1$, $v_2$, and $v_3$ such that $x\preceq_T v_1$,
    $\lca_T(x',y')\preceq_T v_2$ and $y\preceq_T v_3$.  Thus, $u=\lca_T(x,y)$
    and $\sigma(x)\in \sigma(L(T(v_1)))\cap \sigma(L(T(v_2)))$.  Hence, we
    can apply Lemma~\ref{lem:T-fp-no-mu} to conclude that $xy$ is
    $(T,\sigma)$-\fp for every tree that explains $(\G,\sigma)$.  Therefore,
    the edge $xy$ is \ufp.
  \end{proof}
  Prop.~\ref{prop:singeHG-ufp} implies that there are \ufp edges that are not
  contained in a quartet, since an hourglass (see
  Fig.~\ref{fig:hourglasses}(A)) does not contain a $P_4$.  We next
  generalize the concept of hourglasses.
  \begin{definition}[Hourglass chain]
    \label{def:hc}
    An \emph{hourglass chain} $\mathfrak{H}$ in a graph $(\G,\sigma)$ is a
    sequence of $k\ge 1$ hourglasses
    $[x_1 y_1 \hourglass x'_1 y'_1],\dots,[x_k y_k \hourglass x'_k y'_k]$
    such that the following two conditions are satisfied for all
    $i\in\{1,\dots,k-1\}$:
    \begin{description}[itemsep=0.2ex, topsep=0.2ex, parsep=0cm]
      \item[\emph{(H1)}] $y_i=x'_{i+1}$ and $y'_i=x_{i+1}$, and
      \item[\emph{(H2)}] $x_i y'_j$ is an edge in $\G$ for all
      $j\in\{i+1,\dots,k\}$
    \end{description}
    A vertex $z$ is called a \emph{left} (resp., \emph{right}) \emph{tail} of
    the hourglass chain $\mathfrak{H}$ if it holds that $(z,x_1)\in E(\G)$
    and $(z,x'_1)\notin E(\G)$ (resp., $(z,y_k)\in E(\G)$ and
    $(z,y'_k)\notin E(\G)$).  We call $\mathfrak{H}$ \emph{tailed} if it has
    a left or right tail.
  \end{definition}
  
  Note that in contrast to good and bad quartets as well as individual
  hourglasses, an hourglass chain in $(\G,\sigma)$ is not necessarily an
  induced subgraph.
  \begin{fact}\label{fact:subchain}
    If
    $\mathfrak{H}=[x_1 y_1 \hourglass x'_1 y'_1], \dots, [x_k y_k \hourglass
    x'_k y'_k]$ be an hourglass chain in $(\G,\sigma)$, then
    $[x_i y_i \hourglass x'_i y'_i],\dots,[x_j y_j \hourglass x'_j y'_j]$ is
    an hourglass chain in $(\G,\sigma)$ for every $1\leq i < j \leq k$.
  \end{fact}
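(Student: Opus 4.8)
The statement is a direct consequence of the definition of an hourglass chain (Def.~\ref{def:hc}), so the plan is simply to check that all the defining data of an hourglass chain, when restricted to the contiguous index block $\{i,\dots,j\}$, still satisfy the required conditions. First I would observe that each $[x_l y_l \hourglass x'_l y'_l]$ with $i\le l\le j$ is, by hypothesis, already an hourglass in $(\G,\sigma)$; nothing about the individual hourglasses changes when we pass to a subsequence, so Condition~(i)--(iv) of Def.~\ref{def:hourglass} are inherited verbatim.

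Next I would verify Condition~(H1) for the subchain. For the subchain we need $y_l = x'_{l+1}$ and $y'_l = x_{l+1}$ for all $l\in\{i,\dots,j-1\}$. Since $\{i,\dots,j-1\}\subseteq\{1,\dots,k-1\}$ and (H1) holds for all indices in $\{1,\dots,k-1\}$ by assumption, this is immediate. Then I would verify Condition~(H2): for the subchain we need $x_l y'_m$ to be an edge in $\G$ for all $m\in\{l+1,\dots,j\}$ whenever $l\in\{i,\dots,j-1\}$. Again $l\in\{1,\dots,k-1\}$, so by hypothesis $x_l y'_m$ is an edge for every $m\in\{l+1,\dots,k\}$, and since $\{l+1,\dots,j\}\subseteq\{l+1,\dots,k\}$ the claim follows. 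Thus $[x_i y_i \hourglass x'_i y'_i],\dots,[x_j y_j \hourglass x'_j y'_j]$ satisfies all requirements of Def.~\ref{def:hc} and is an hourglass chain in $(\G,\sigma)$.

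There is no genuine obstacle here: the only thing to be careful about is that the block $\{i,\dots,j\}$ is \emph{contiguous}, which is exactly what guarantees that consecutive pairs in the subchain are consecutive pairs in the original chain, so that (H1) transfers; and that the index sets appearing in (H2) shrink rather than grow. I would present this as a one-paragraph observation rather than a formal multi-step argument.
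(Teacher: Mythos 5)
Your verification is correct and is exactly the routine definition-check the paper has in mind; the paper states this as an Observation without proof precisely because the individual hourglasses, condition (H1), and condition (H2) all transfer verbatim to a contiguous subsequence, as you argue.
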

  Hourglass chains are composed of ``overlapping'' hourglasses.  The
  additional condition that $x_i y'_j\in E(G)$ for all $1\le i<j\le k$
  ensures that the two pairs $x'_k,y'_k$ and $x'_l,y'_l$ with $k\ne l$ cannot
  lie in the same subtree below the last common ancestor $u$ which is common
  to all hourglasses in the chain.
  \begin{lemma}
    \label{lem:hourglass_chain}
    Let
    $\mathfrak{H}=[x_1 y_1 \hourglass x'_1 y'_1],\dots,[x_k y_k \hourglass
    x'_k y'_k]$ be an hourglass chain in a BMG $(\G,\sigma)$.  Then, for
    every tree $(T,\sigma)$ that explains $(\G,\sigma)$ there is a vertex
    $u\in V^0(T)$ with pairwise distinct children $v_0,v_1,\dots,v_k,v_{k+1}$
    such that $x_1\in L(T(v_0))$, $y_k\in L(T(v_{k+1}))$, and, for
    all $1\le i\le k$, we have $x'_i,y'_i\in L(T(v_i))$.
  \end{lemma}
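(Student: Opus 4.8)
The plan is to prove the statement by induction on the number $k$ of hourglasses in $\mathfrak{H}$, fixing throughout an arbitrary tree $(T,\sigma)$ that explains $(\G,\sigma)$. For the base case $k=1$, I would apply Lemma~\ref{lem:hourglass} to the single hourglass $[x_1 y_1\hourglass x'_1 y'_1]$. This yields an inner vertex $u\in V^0(T)$ with three distinct children, one of which, say $v_0$, satisfies $x_1\preceq_T v_0$, one, say $v_1$, satisfies $\lca_T(x'_1,y'_1)\preceq_T v_1$, and one, say $v_2$, satisfies $y_1\preceq_T v_2$. Since $x'_1,y'_1\preceq_T\lca_T(x'_1,y'_1)\preceq_T v_1$ we get $x'_1,y'_1\in L(T(v_1))$, and together with $x_1\in L(T(v_0))$ and $y_1=y_k\in L(T(v_2))=L(T(v_{k+1}))$ this is exactly the assertion for $k=1$.

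For the inductive step $k\ge 2$, I would first apply the induction hypothesis to the length-$(k-1)$ sub-chain $[x_1y_1\hourglass x'_1y'_1],\dots,[x_{k-1}y_{k-1}\hourglass x'_{k-1}y'_{k-1}]$, which is again an hourglass chain by Fact~\ref{fact:subchain}. This provides an inner vertex $u'$ with pairwise distinct children $w_0,w_1,\dots,w_{k-1},w_k$ such that $x_1\in L(T(w_0))$, $y_{k-1}\in L(T(w_k))$, and $x'_i,y'_i\in L(T(w_i))$ for all $1\le i\le k-1$. Next I would apply Lemma~\ref{lem:hourglass} to the last hourglass $[x_ky_k\hourglass x'_ky'_k]$, obtaining an inner vertex $u''$ with three distinct children $a_1,a_2,a_3$ satisfying $x_k\preceq_T a_1$, $\lca_T(x'_k,y'_k)\preceq_T a_2$, and $y_k\preceq_T a_3$. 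The gluing step is to show $u'=u''$: by (H1) we have $x_k=y'_{k-1}\in L(T(w_{k-1}))$ and $x'_k=y_{k-1}\in L(T(w_k))$, so $\lca_T(x_k,x'_k)=u'$ because $w_{k-1}\ne w_k$; on the other hand $x_k\in L(T(a_1))$ and $x'_k\preceq_T\lca_T(x'_k,y'_k)\preceq_T a_2$, so $\lca_T(x_k,x'_k)=u''$ because $a_1\ne a_2$. Hence $u\coloneqq u'=u''$, and matching the children of $u$ through which $x_k$ and $x'_k$ pass forces $a_1=w_{k-1}$ and $a_2=w_k$; in particular $y'_k\preceq_T\lca_T(x'_k,y'_k)\preceq_T a_2=w_k$, so $x'_k,y'_k\in L(T(w_k))$.

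It then remains to show that $a_3$ (which contains $y_k$) is distinct from every $w_i$, $0\le i\le k$; once this is done we set $v_i\coloneqq w_i$ for $0\le i\le k$ and $v_{k+1}\coloneqq a_3$, and all required containments have already been verified above. We obtain $a_3\ne a_1=w_{k-1}$ and $a_3\ne a_2=w_k$ immediately. For each $i\in\{0,\dots,k-2\}$, I would invoke (H2) with first index $i+1\in\{1,\dots,k-1\}$ and second index $k$ (legitimate since $k\ge i+2$): the pair $x_{i+1}y'_k$ is an edge of $\G$, and $x_{i+1}\in L(T(w_i))$ by the induction hypothesis (for $i=0$ this is $x_1\in L(T(w_0))$, for $i\ge 1$ it is $x_{i+1}=y'_i\in L(T(w_i))$ via (H1)). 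Since $w_i\ne w_k$ we have $\lca_T(x_{i+1},y'_k)=u$, so $y'_k$ being a best match of $x_{i+1}$ means that no vertex of color $\sigma(y'_k)$ can lie in $L(T(w_i))$ — otherwise its last common ancestor with $x_{i+1}$ would lie strictly below $u$, contradicting the definition of best match. As $\sigma(y_k)=\sigma(y'_k)$ by the hourglass colour condition, this gives $y_k\notin L(T(w_i))$ and hence $a_3\ne w_i$, completing the induction.

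The main obstacle is precisely the distinctness argument in the last paragraph: it is the only place where condition (H2) of the hourglass-chain definition enters, and it is exactly what (H2) is designed to enforce — that the ``inner'' pairs $x'_i,y'_i$ of distinct hourglasses, together with $x_1$ and $y_k$, sit in $k+2$ pairwise distinct subtrees below the common vertex $u$. Everything else (the identification $u'=u''$ and the child matchings) is a routine comparison of last common ancestors using (H1) and Lemma~\ref{lem:hourglass}.
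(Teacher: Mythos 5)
Your proof is correct and follows essentially the same route as the paper's: induction on $k$, with the base case from Lemma~\ref{lem:hourglass}, the identification of the common vertex $u$ and the matching of children via (H1), and the distinctness of the new child established by exactly the (H2)-based best-match contradiction (the paper phrases it as: $v'_{k+1}=v_j$ would give $\lca_T(x_{j+1},y_k)\prec_T u=\lca_T(x_{j+1},y'_k)$ with $\sigma(y_k)=\sigma(y'_k)$, contradicting the edge $x_{j+1}y'_k$). No gaps; only the indexing bookkeeping differs.
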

  \begin{proof}
    We prove the statement by induction on $k$.  For the base case $k=1$,
    observe that the hourglass $[x_1 y_1 \hourglass x'_1 y'_1]$ together with
    Lemma~\ref{lem:hourglass} implies that there is a vertex $u\in V^0(T)$
    with pairwise distinct children $v_0,v_1$ and $v_2$ such that
    $x_1\preceq_T v_0$, $\lca_T(x'_1,y'_1)\preceq_T v_1$ (thus
    $x'_1,y'_1\preceq_T v_1$) and $y_1\preceq_T v_2$.
    
    Now let $k>1$ and assume that the statement is true for all hourglass
    chains containing less than $k$ hourglasses.  Let
    $\mathfrak{H}=[x_1 y_1 \hourglass x'_1 y'_1],\dots,[x_k y_k \hourglass
    x'_k y'_k]$ be an hourglass chain.  By induction hypothesis, for every
    subsequence
    $\mathfrak{H}_{i|}\coloneqq [x_1 y_1 \hourglass x'_1 y'_1],\dots,[x_i y_i
    \hourglass x'_i y'_i]$ of $\mathfrak{H}$ with $1\le i< k$, which by
    Obs.~\ref{fact:subchain} is again an hourglass chain,
    the statement is true.
    
    Consider the subsequence $\mathfrak{H}_{i|}$ with $i=k-1$. By assumption,
    there is a vertex $u\in V^0(T)$ with pairwise distinct children
    $v_0,v_1,\dots,v_i,v_{i+1}$ such that it holds $x_1\in L(T(v_0))$,
    $y_i\in L(T(v_{i+1}))$, and, for all $1\le j\le i$, we have
    $x'_j,y'_j\in L(T(v_j))$. The hourglass
    $[x_{i+1} y_{i+1} \hourglass x'_{i+1} y'_{i+1}]$ and
    Lemma~\ref{lem:hourglass} imply the existence of a vertex $u'\in V^0(T)$
    with pairwise distinct children $v'_i,v'_{i+1}$ and $v'_{i+2}$ such that
    $x_{i+1}\preceq_T v'_i$, $\lca_T(x'_{i+1},y'_{i+1})\preceq_T v'_{i+1}$
    and $y_{i+1}\preceq_T v'_{i+2}$.  By the definition of hourglass chains,
    we have $y_i=x'_{i+1}$ and $y'_i=x_{i+1}$. Therefore,
    $u'=\lca_T(x'_{i+1},x_{i+1})=\lca_T(y_i,y'_i)=u$. Since $v_i$ and $v'_i$
    are both children of $u$, $y'_i=x_{i+1}$ and it holds both that
    $y'_i\preceq_T v_i$ and $x_{i+1}\preceq_T v'_i$, we conclude that
    $v_i=v'_i$. Similarly, it holds $v_{i+1}=v'_{i+1}$ since
    $v_{i+1},v'_{i+1}\in\child(u)$ and $y_i=x'_{i+1}$. In particular, we have
    $v'_{i+2}\ne v'_{i+1}=v_{i+1}$ and $v'_{i+2}\ne v'_{i}=v_{i}$. It remains
    to show that $v'_{i+2}\ne v_j$ for $0\le j<i$. Assume, for contradiction,
    that $v'_{i+2}=v_j$ for some fixed $j$ with $0\le j<i$. By assumption,
    $x_1\preceq_T v_j$ if $j=0$, and otherwise, $x_{j+1}=y'_j\preceq_T
    v_j$. Moreover, since $v'_{i+2}=v_j$, we have $y_{i+1}\preceq_T
    v_j$. Hence, $\lca_T(x_{j+1},y_{i+1})\preceq_T v_j$.  Furthermore, since
    $y'_{i+1}\preceq_T v_{i+1}\ne v_j$, it holds
    $\lca_T(x_{j+1},y'_{i+1})=u\succ_T v_j$. Since
    $\sigma(y_{i+1})=\sigma(y'_{i+1})$ by the definition of hourglasses, the
    latter two arguments contradict $x_{j+1}y'_{i+1}\in E(G)$, which must
    hold by the definition of hourglass chains. Hence, we can conclude that
    $v'_{i+2}\ne v_j$ for and $0\le j<i$ and we set
    $v_{i+2}\coloneqq v'_{i+2}$.  In summary, the statement holds for the
    hourglass chain $\mathfrak{H}_{i+1|} = \mathfrak{H}$.  
  \end{proof}
  It is straightforward to generalize the latter statement to tailed
  hourglass chains.
  \begin{lemma}
    \label{lem:hourglass_chain_tails}
    Let
    $\mathfrak{H}=[x_1 y_1 \hourglass x'_1 y'_1],\dots,[x_k y_k \hourglass
    x'_k y'_k]$ be an hourglass chain with left (resp.\ right) tail $z$ in a
    BMG $(\G,\sigma)$. Then, every tree $(T,\sigma)$ that explains
    $(\G,\sigma)$ contains a vertex $u\in V^0(T)$ with pairwise distinct
    children $v_0,v_1,\dots,v_k,v_{k+1}$ such that it holds
    $x_1\in L(T(v_0))$, $y_k\in L(T(v_{k+1}))$, and, for all $1\le i\le k$,
    we have $x'_i,y'_i\in L(T(v_i))$.  Furthermore, we have $z\preceq_T v_0$
    (resp.\ $z\preceq_T v_{k+1}$).
  \end{lemma}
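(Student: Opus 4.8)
The statement strengthens Lemma~\ref{lem:hourglass_chain} by additionally locating the tail, so the plan is to invoke that lemma and then pin down where $z$ sits. Concretely, Lemma~\ref{lem:hourglass_chain} already guarantees, for every tree $(T,\sigma)$ explaining $(\G,\sigma)$, a vertex $u\in V^0(T)$ with pairwise distinct children $v_0,v_1,\dots,v_k,v_{k+1}$ such that $x_1\in L(T(v_0))$, $y_k\in L(T(v_{k+1}))$, and $x'_i,y'_i\in L(T(v_i))$ for $1\le i\le k$. Hence the only new work is to show $z\preceq_T v_0$ when $z$ is a left tail and $z\preceq_T v_{k+1}$ when $z$ is a right tail; I would treat the left-tail case, the other being symmetric.

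First I would convert the tail condition into an informative triple. By Definition~\ref{def:hc}, a left tail $z$ satisfies $(z,x_1)\in E(\G)$ and $(z,x'_1)\notin E(\G)$. By condition~(i) of Definition~\ref{def:hourglass} we have $\sigma(x_1)=\sigma(x'_1)$, and since arcs of a BMG join vertices of distinct colors we also have $z\neq x_1$ and $z\neq x'_1$ (indeed $(x'_1,x_1)\notin E(\G)$ because $\sigma(x'_1)=\sigma(x_1)$). Thus $zx_1\mid x'_1$ is an informative triple for $(\G,\sigma)$ in the sense of Definition~\ref{def:informative_triples}, and Lemma~\ref{lem:informative_triples} yields $\lca_T(z,x_1)\prec_T\lca_T(z,x'_1)=\lca_T(x_1,x'_1)$ in every explaining tree $(T,\sigma)$.

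Next I would read off the position of $z$. Since $x_1\in L(T(v_0))$ and $x'_1\in L(T(v_1))$ with $v_0\neq v_1$ both children of $u$, we get $\lca_T(x_1,x'_1)=u$ and therefore $\lca_T(z,x_1)\prec_T u$. As $\lca_T(z,x_1)$ is an ancestor of $x_1$ lying strictly below $u$, and $x_1\preceq_T v_0$, it must be a descendant of $v_0$, i.e.\ $\lca_T(z,x_1)\preceq_T v_0$; being also an ancestor of $z$, this gives $z\preceq_T v_0$, as claimed. For a right tail $z$ the argument is mirror-symmetric: $(z,y_k)\in E(\G)$, $(z,y'_k)\notin E(\G)$ and $\sigma(y_k)=\sigma(y'_k)$ give the informative triple $zy_k\mid y'_k$, hence $\lca_T(z,y_k)\prec_T\lca_T(y_k,y'_k)=u$ using $y_k\in L(T(v_{k+1}))$ and $y'_k\in L(T(v_k))$ with $v_k\neq v_{k+1}$, and therefore $z\preceq_T v_{k+1}$.

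I do not expect a real obstacle here: the result is essentially a corollary of Lemma~\ref{lem:hourglass_chain} together with the informative-triple machinery of Lemma~\ref{lem:informative_triples}. The only place where a little care is needed is the elementary tree-theoretic step that a common ancestor of $z$ and $x_1$ lying strictly below $u$ is forced to be a descendant of the unique child $v_0$ of $u$ through which $x_1$ hangs; this is immediate from the fact that the ancestors of a leaf form a chain.
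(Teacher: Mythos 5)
Your proposal is correct and follows essentially the same route as the paper's own proof: invoke Lemma~\ref{lem:hourglass_chain} for the vertex $u$ and its children, then use the informative triple $zx_1|x'_1$ (resp.\ $zy_k|y'_k$) together with Lemma~\ref{lem:informative_triples} to force $\lca_T(z,x_1)\prec_T u$ and hence $z\preceq_T v_0$ (resp.\ $z\preceq_T v_{k+1}$). Your extra check that $z$, $x_1$, $x'_1$ are distinct and properly colored is a minor point the paper leaves implicit; otherwise the arguments coincide.
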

  \begin{proof}
    By Lemma~\ref{lem:hourglass_chain}, there is a vertex $u\in V^0(T)$ with
    pairwise distinct children $v_0,v_1,\dots,v_k,v_{k+1}$ such that it holds
    $x_1\in L(T(v_0))$, $y_k\in L(T(v_{k+1}))$, and, for all $1\le i\le k$,
    we have $x'_i,y'_i\in L(T(v_i))$.
    
    Suppose that $z$ is a left tail of $\mathfrak{H}$. We need to show
    that $z\preceq_T v_0$. By definition, $(z,x_1)\in E(\G)$,
    $(z,x'_1)\notin E(\G)$, and $\sigma(x_1)=\sigma(x'_1)$.  Therefore,
    $zx_1|x'_1$ is an informative triple for $(\G,\sigma)$, and hence
    $\lca_{T}(z,x_1) \prec_T\lca_{T}(z,x'_1)=\lca_{T}(x_1,x'_1)=u$.  Since
    $v_0$ is the unique child of $u$ with $x_1\prec_T v_0$, we can conclude
    that $\lca_{T}(z,x_1)\preceq_{T} v_0$ and thus, $z\preceq_{T}v_0$.
    
    If $z$ is a right tail of $\mathfrak{H}$, a similar argument using the
    informative triple $z'y_k|y'_k$, which must be displayed by $T$ because
    $(z,y_k)\in E(\G)$ and $(z,y'_k)\notin E(\G)$, implies
    $z\preceq_T v_{k+1}$.
  \end{proof}
  
  We are now in the position to show that hourglass chains identify
  additional \ufp edges that are not contained in a single hourglass.
  \begin{lemma}
    \label{lem:hourglass_chain_dupl}
    Let
    $\mathfrak{H}=[x_1 y_1 \hourglass x'_1 y'_1],\dots,[x_k y_k \hourglass
    x'_k y'_k]$ be an hourglass chain in $(\G,\sigma)$, possibly with a left
    tail $z$ or a right tail $z'$. Then every edge
    $e\in\{x_1y_k, zy_k, x_1z', zz'\}\cap E(G)$ is \ufp, where $G$ denotes
    the symmetric part of $\G$.
  \end{lemma}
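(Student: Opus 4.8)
The plan is to reduce everything to Lemma~\ref{lem:T-fp-no-mu}, fed by the structural information obtained from Lemmas~\ref{lem:hourglass_chain} and~\ref{lem:hourglass_chain_tails}. Fix an arbitrary tree $(T,\sigma)$ explaining $(\G,\sigma)$. By Lemma~\ref{lem:hourglass_chain} there is a vertex $u\in V^0(T)$ with pairwise distinct children $v_0,v_1,\dots,v_k,v_{k+1}$ such that $x_1\in L(T(v_0))$, $y_k\in L(T(v_{k+1}))$, and $x'_i,y'_i\in L(T(v_i))$ for $1\le i\le k$. If a left tail $z$ (resp.\ a right tail $z'$) is present, Lemma~\ref{lem:hourglass_chain_tails} additionally yields $z\preceq_T v_0$ (resp.\ $z'\preceq_T v_{k+1}$); inspecting its proof shows that when both tails occur, both conclusions hold simultaneously for the \emph{same} vertex $u$ and the \emph{same} children $v_0,\dots,v_{k+1}$, because they are derived from the two independent informative triples $zx_1|x'_1$ and $z'y_k|y'_k$ which are each displayed by $T$.

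The crucial point is that $u$ is an apparent duplication vertex of $(T,\sigma)$: since $x_1\in L(T(v_0))$, $x'_1\in L(T(v_1))$, and $\sigma(x_1)=\sigma(x'_1)$ by Definition~\ref{def:hourglass}(i), we get $\sigma(x_1)\in\sigma(L(T(v_0)))\cap\sigma(L(T(v_1)))\ne\emptyset$ for the two distinct children $v_0,v_1$ of $u$ (symmetrically, $\sigma(y_k)\in\sigma(L(T(v_k)))\cap\sigma(L(T(v_{k+1})))\ne\emptyset$). Hence, by Lemma~\ref{lem:T-fp-no-mu}, every edge $ab$ in $\G$ with $\lca_T(a,b)=u$ is $(T,\sigma)$-\fp.

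It therefore only remains to verify, for each $e\in\{x_1y_k,\,zy_k,\,x_1z',\,zz'\}$ that actually lies in $E(G)$, that $\lca_T$ of its two endpoints equals $u$. This is immediate: in every case one endpoint lies in $L(T(v_0))$ (namely $x_1$ or the left tail $z$) and the other lies in $L(T(v_{k+1}))$ (namely $y_k$ or the right tail $z'$), and since $v_0\ne v_{k+1}$ are children of $u$, the last common ancestor of the two endpoints is exactly $u$. By the previous paragraph, $e$ is then $(T,\sigma)$-\fp, and since $(T,\sigma)$ was an arbitrary tree explaining $(\G,\sigma)$, Definition~\ref{def:ufp} gives that $e$ is \ufp.

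The one step I expect to require genuine care is the claim that a left tail and a right tail are witnessed by one and the same vertex $u$ together with its children $v_0,\dots,v_{k+1}$, which is exactly what is needed for the edge $e=zz'$; the remaining cases are a short bookkeeping argument on top of Lemma~\ref{lem:T-fp-no-mu} and the hourglass definition.
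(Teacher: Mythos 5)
Your proposal is correct and follows essentially the same route as the paper: invoke Lemmas~\ref{lem:hourglass_chain} and~\ref{lem:hourglass_chain_tails} to locate the common vertex $u$ with children $v_0,\dots,v_{k+1}$ (the same $u$ works for both tails, exactly as you argue, since the tails are placed via the informative triples $zx_1|x'_1$ and $z'y_k|y'_k$ after $u$ is fixed), observe $\sigma(x_1)=\sigma(x'_1)$ forces overlapping color sets below $v_0$ and $v_1$, note each listed edge has its endpoints below the distinct children $v_0$ and $v_{k+1}$ so its $\lca_T$ is $u$, and conclude via Lemma~\ref{lem:T-fp-no-mu} and the arbitrariness of $(T,\sigma)$.
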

  \begin{proof}
    Let $(T,\sigma)$ be an arbitrary tree that explains $(\G,\sigma)$.  By
    the definition of hourglass chains, we have $k\ge 1$. Hence, the sequence
    contains at least the hourglass $[x_1 y_1 \hourglass x'_1 y'_1]$. Since
    $\mathfrak{H}=[x_1 y_1 \hourglass x'_1 y'_1],\dots,[x_k y_k \hourglass
    x'_k y'_k]$ in $\G(T,\sigma)$, Lemma~\ref{lem:hourglass_chain_tails}
    implies the existence of a vertex $u\in V^0(T)$ with pairwise distinct
    children $v_0,v_1,\dots,v_k,v_{k+1}$ such that it holds
    $x_1\in L(T(v_0))$, $y_k\in L(T(v_{k+1}))$, and, for all $1\le i\le k$,
    we have $x'_i,y'_i\in L(T(v_i))$. Furthermore, this lemma also implies 
    $z\preceq_T v_0$ if $z$ is a left tail of $\mathfrak{H}$, and
    $z'\preceq_T v_{k+1}$ if $z'$ is a right tail of $\mathfrak{H}$.  Note
    that $\lca_T(x_1,x'_1)=u$, and $x_1$ and $x'_1$ lie below distinct
    children of $u$. More precisely $x_1\preceq_T v_0$ and
    $x'_1\preceq_T v_1$. Since $\sigma(x_1)=\sigma(x'_1)$, we have
    $\sigma(L(T(v_0)))\cap\sigma(L(T(v_1)))\ne\emptyset$. Moreover,
    $\lca_{T}(a,b)=u$ for every edge $e=ab$ in $\G$ that coincides with one of
    $x_1y_k$, $zy_k$, $x_1z'$, and $zz'$. The latter two arguments together
    with Lemma~\ref{lem:T-fp-no-mu} imply that every such edge is
    $(T,\sigma)$-\fp. Since $(T,\sigma)$ was chosen arbitrarily, every such
    edge is also \ufp.
  \end{proof}
  
  It is important to note that the construction of hourglass chains does not
  imply that an edge $e\in\{x_1y_k, zy_k, x_1z', zz'\}$ must exist in
  $(\G,\sigma)$. Nevertheless, whenever such an edge occurs, it is \ufp. We
  will take a closer look at the properties of hourglass chains in
  Sec.~\ref{APP:sect:leftovers}.
  
  \section{Characterization of unambiguous false-positive edges}
  
  \subsection{Color-set intersection graphs}
  
  In this section, we take a closer look at the trees that explain a given
  BMG. In particular, we consider the color allocation to the subtrees below
  each vertex of a tree explaining a given BMG. This leads us to the idea of
  a color intersection graph.
  \begin{definition}
    The \emph{color-set intersection graph} $\CIG_T(u)$ of an inner vertex
    $u$ of a leaf-colored gene tree $(T,\sigma)$ is the undirected graph with
    vertex set $V\coloneqq\child_T(u)$ and edge set
    \begin{equation*}
    E\coloneqq\{ v_1v_2 \mid v_1,v_2\in V \textrm{, }v_1\ne v_2
    \textrm{ and } \sigma(L(T(v_1)))\cap\sigma(L(T(v_2)))\ne\emptyset \}.
    \end{equation*}
  \end{definition}
  
  Shortest paths in the color-set intersection graphs will play an important
  role in identifying many \ufp edges.
  \begin{lemma}
    \label{lem:shortest_path}
    Let $v_1$ and $v_k$ be two distinct vertices in the same connected
    component of the color-set intersection graph $\CIG_T(u)$ of a
    leaf-colored gene tree $(T,\sigma)$, and let
    $P(v_1,v_{k}) = (v_1, \dots, v_{k})$ be a shortest path in $\CIG_T(u)$
    connecting $v_1$ and $v_k$. Then
    $\sigma(L(T(v_i)))\cap\sigma(L(T(v_j)))=\emptyset$ for all $i$ and $j$
    satisfying $1\leq i<i+2 \leq j\leq k$.
  \end{lemma}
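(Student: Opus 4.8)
The plan is to recognize this statement as an instance of the elementary fact that shortest paths are chordless (induced). The key link is the \emph{definition} of the color-set intersection graph: two children $v_i, v_j \in \child_T(u)$ are joined by an edge in $\CIG_T(u)$ if and only if $\sigma(L(T(v_i)))\cap\sigma(L(T(v_j)))\ne\emptyset$. Hence the claimed disjointness of the color sets of $v_i$ and $v_j$ for $j \ge i+2$ is exactly the statement that the shortest path $P(v_1,v_k)=(v_1,\dots,v_k)$ has no chord, i.e.\ that $v_iv_j\notin E(\CIG_T(u))$ whenever $v_i$ and $v_j$ are non-consecutive along $P$.

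First I would argue by contradiction: assume $\sigma(L(T(v_i)))\cap\sigma(L(T(v_j)))\ne\emptyset$ for some pair of indices with $1\le i<i+2\le j\le k$. By the definition of $\CIG_T(u)$ this gives the edge $v_iv_j\in E(\CIG_T(u))$. Since $v_1,\dots,v_k$ are pairwise distinct (they are the vertices of the path $P$), the sequence $(v_1,\dots,v_i,v_j,v_{j+1},\dots,v_k)$ is again a path in $\CIG_T(u)$ from $v_1$ to $v_k$. A short index bookkeeping step shows its length equals $(i-1)+1+(k-j)=k-(j-i)$, which is at most $k-2$ because $j-i\ge 2$, hence strictly smaller than the length $k-1$ of $P$. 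This contradicts the assumption that $P(v_1,v_k)$ is a \emph{shortest} path connecting $v_1$ and $v_k$ in $\CIG_T(u)$. Therefore no such pair $i,j$ exists, which is precisely the assertion of the lemma.

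I do not expect any real obstacle here; the only points that require (routine) care are verifying that the shortcut sequence is a genuine simple path — which follows immediately since all $v_\ell$ are distinct — and checking that its length is strictly less than that of $P$. Both are immediate once the edge criterion for $\CIG_T(u)$ is spelled out, so the proof will be only a few lines.
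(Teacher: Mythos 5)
Your proof is correct and follows essentially the same route as the paper: assume a nonempty color intersection for a non-consecutive pair, conclude that $v_iv_j$ is an edge of $\CIG_T(u)$ by definition, and derive a contradiction with $P(v_1,v_k)$ being a shortest path. You merely spell out the shortcut path and its length explicitly, which the paper leaves implicit.
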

  \begin{proof}
    Assume, for contradiction, that
    $\sigma(L(T(v_i)))\cap\sigma(L(T(v_j)))\neq \emptyset$ for some $i,j$
    with $1\leq i<i+2 \leq j\leq k$. Then the edge $v_iv_j$ must be contained
    in $\CIG_T(u)$, contradicting the fact that $P(v_1,v_{k})$ is a
    \emph{shortest} path.
  \end{proof}
  
  The following lemma establishes a close connection between color-set
  intersection graphs and hourglass chains.
  \begin{lemma}
    \label{lem:hourglass_construction}
    Let $(\G,\sigma)$ be a BMG that is explained by $(T,\sigma)$ and suppose
    that $x,y\in L(T)$ are two distinct leaves with $u\coloneqq\lca_T(x,y)$
    and $v_x, v_y\in\child_T(u)$ such that (i) $x\preceq_T v_x$ and
    $y\preceq_T v_y$, and (ii) there is a shortest path
    $(v_x=v_0,v_1, \dots, v_{k}, v_{k+1}=v_y)$ of length at least two in
    $\CIG_T(u)$. Then there is an hourglass chain
    $\mathfrak{H}=[x_1 y_1 \hourglass x'_1 y'_1],\dots,[x_k y_k \hourglass
    x'_k y'_k]$ in $(\G,\sigma)$.  In particular, precisely one of the
    following conditions is satisfied:
    \begin{enumerate}[itemsep=0.2ex, topsep=0.2ex, parsep=0cm]
      \item $x_1=x$ and $y_k=y$;
      \item $y_k=y$ and $z\coloneqq x$ is a left tail of $\mathfrak{H}$;
      \item $x_1=x$ and $z'\coloneqq y$ is a right tail of $\mathfrak{H}$; or
      \item $z\coloneqq x$ is a left tail and $z'\coloneqq y$ is a right tail
      of $\mathfrak{H}$.
    \end{enumerate}
  \end{lemma}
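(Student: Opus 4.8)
My plan is to build the hourglass chain $\mathfrak{H}$ explicitly, using the shortest path $(v_x=v_0,v_1,\dots,v_k,v_{k+1}=v_y)$ in $\CIG_T(u)$ as a scaffold and relying on Lemma~\ref{lem:shortest_path} to control which colors occur in which of the subtrees $T(v_p)$. First I would fix colors along the path: for each $j\in\{0,\dots,k\}$ the edge $v_jv_{j+1}$ of $\CIG_T(u)$ provides a color $c_j\in\sigma(L(T(v_j)))\cap\sigma(L(T(v_{j+1})))$, and at the two ends I would choose $c_0=\sigma(x)$ whenever $\sigma(x)\in\sigma(L(T(v_1)))$ (exactly the case in which this is an admissible choice, since $\sigma(x)\in\sigma(L(T(v_0)))$ holds trivially) and, symmetrically, $c_k=\sigma(y)$ whenever $\sigma(y)\in\sigma(L(T(v_k)))$; otherwise $c_0$, respectively $c_k$, is an arbitrary admissible color. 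The key invariant, which I would establish by repeated application of Lemma~\ref{lem:shortest_path}, is that $c_j\notin\sigma(L(T(v_p)))$ for every $p\in\{0,\dots,k+1\}$ with $p\notin\{j,j+1\}$: otherwise $c_j$ would lie in two subtrees whose path-indices differ by at least two, contradicting that $(v_0,\dots,v_{k+1})$ is a \emph{shortest} path. In particular $c_{j-1}\ne c_j$ for $1\le j\le k$; and when $k=1$ and both ends are color-compatible one gets $\sigma(x)\ne\sigma(y)$, since a common color would lie in $\sigma(L(T(v_0)))$ and $\sigma(L(T(v_2)))$, which are disjoint because $v_0v_2$ is not an edge of $\CIG_T(u)$.

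Next I would build the hourglasses. For $i=1,\dots,k$ I apply Lemma~\ref{lem:exEdge} to $v_i$ and the two distinct colors $c_{i-1},c_i\in\sigma(L(T(v_i)))$ to obtain an edge $x'_iy'_i$ of $\G$ with $x'_i\in L[c_{i-1}]\cap L(T(v_i))$ and $y'_i\in L[c_i]\cap L(T(v_i))$; then I set $x_{i+1}\coloneqq y'_i$ and $y_i\coloneqq x'_{i+1}$ for the applicable indices and fix $x_1$ and $y_k$ as described below, so that in every case $x_i\in L[c_{i-1}]\cap L(T(v_{i-1}))$ and $y_i\in L[c_i]\cap L(T(v_{i+1}))$. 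Since $x_i,x'_i,y'_i,y_i$ sit in the pairwise distinct subtrees $T(v_{i-1}),T(v_i),T(v_i),T(v_{i+1})$ and $c_{i-1}\ne c_i$, these four leaves are pairwise distinct and $\lca_T(x_i,y_i)=\lca_T(x_i,y'_i)=\lca_T(y_i,x'_i)=u$. All arc and non-arc requirements of Definition~\ref{def:hourglass} then reduce to the best-match criterion together with the color-isolation invariant: e.g.\ $y'_i$ is a best match of $x_i$ because no leaf of color $c_i$ lies below $v_{i-1}$, while $x_i$ is not a best match of $y'_i$ because $x'_i$ has the same color $c_{i-1}$ and lies below $v_i$; the edge $x_iy_i$ follows the same way from $c_i\notin\sigma(L(T(v_{i-1})))$ and $c_{i-1}\notin\sigma(L(T(v_{i+1})))$. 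Condition (H1) of Definition~\ref{def:hc} holds by construction, and (H2), $x_iy'_j\in E(G)$ for $i<j$, again follows from the invariant since then $c_j\notin\sigma(L(T(v_{i-1})))$ and $c_{i-1}\notin\sigma(L(T(v_j)))$ (the indices involved differ by at least two). Hence $\mathfrak{H}=[x_1y_1\hourglass x'_1y'_1],\dots,[x_ky_k\hourglass x'_ky'_k]$ is an hourglass chain in $(\G,\sigma)$.

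Finally I would settle the two ends and the case distinction. If $\sigma(x)\in\sigma(L(T(v_1)))$ then, having set $c_0=\sigma(x)$, I take $x_1\coloneqq x$; otherwise I let $x_1$ be a leaf of color $c_0$ for which $\lca_T(x,x_1)$ is $\preceq_T$-minimal, which forces $x_1\in L(T(v_0))$, $(x,x_1)\in E(\G)$ and $(x,x'_1)\notin E(\G)$ (because $\lca_T(x,x_1)\preceq_T v_0\prec_T u=\lca_T(x,x'_1)$), so that $z\coloneqq x$ is a left tail of $\mathfrak{H}$. Symmetrically, at the right end I obtain either $y_k\coloneqq y$ (when $\sigma(y)\in\sigma(L(T(v_k)))$) or a right tail $z'\coloneqq y$. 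Since a tail $z$ must satisfy $(z,x_1)\in E(\G)$, which is impossible for $z=x_1$ (as $\sigma(x_1)=\sigma(x'_1)$ and there are no loops), $x$ cannot be both $x_1$ and a left tail, and likewise for $y$; consequently exactly one of the four listed combinations occurs. I expect the only real difficulty to be this bookkeeping: keeping the color-isolation invariant under control across all index ranges appearing in the hourglass conditions and in (H2), and checking that the forced end choices never conflict with $c_{j-1}\ne c_j$. No ingredient beyond Lemmas~\ref{lem:exEdge} and~\ref{lem:shortest_path} and the definition of best matches is needed.
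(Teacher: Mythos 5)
Your construction is correct and follows essentially the same route as the paper's proof: the same path-indexed choice of colors $c_0,\dots,c_k$ (with the ends forced to $\sigma(x)$, $\sigma(y)$ exactly when admissible), the same use of Lemma~\ref{lem:exEdge} inside each $T(v_i)$, the same gluing $x_{i+1}=y'_i$, $y_i=x'_{i+1}$, and the same verification of the hourglass conditions, (H2), and the tails via the shortest-path color-disjointness of Lemma~\ref{lem:shortest_path}. Your ``color-isolation invariant'' is just a compact packaging of the disjointness facts the paper derives piecemeal, and the only detail left implicit (that $c_0\neq\sigma(x)$, resp.\ $c_k\neq\sigma(y)$, in the tail cases, so that $(x,x_1)$, resp.\ $(y,y_k)$, can be arcs) follows immediately from the defining condition of those cases.
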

  \begin{proof}
    Lemma~\ref{lem:shortest_path} implies
    $\Scap(x,y)=\sigma(L(T(v_x)))\cap\sigma(L(T(v_y))) =
    \sigma(L(T(v_0)))\cap\sigma(L(T(v_{k+1})))=\emptyset$. We proceed by
    showing that the BMG $\G(T,\sigma)$ contains an hourglass chain
    $\mathfrak{H}=[x_1 y_1 \hourglass x'_1 y'_1],\dots,[x_k y_k \hourglass
    x'_k y'_k]$ possibly with left tail $z$ and right tail $z'$ such that one
    of the Conditions 1--4 is satisfied.
    
    We first consider the two cases: either (A)
    $\sigma(x)\in\sigma(L(T(v_1)))$ or (B)
    $\sigma(x)\notin\sigma(L(T(v_1)))$.  In Case (A), we set $x_1\coloneqq x$
    and $c_0\coloneqq\sigma(x)$. In Case (B), we set $z\coloneqq x$, choose
    $c_0 \in \sigma(L(T(v_0)))\cap\sigma(L(T(v_1)))$ arbitrarily (note
    $v_0v_1$ forms an edge in $\mathfrak{C}_T(u)$ and thus, the latter
    intersection is non-empty) and we set $x_1 = v$ for some
    $v\in L(T(v_0))\cap L[c_0]$ such that
    $\lca(v,x)\preceq_T \lca_T(v',x)\preceq_T v_0$ for all
    $v'\in L(T(v_0))\cap L[c_0]$.  Clearly, such a vertex $v$
    exists. Moreover, $c_0\neq \sigma(x)$ and we obtain
    $(x, v) = (z,x_1)\in E(\G)$ as necessary requirement for left tails.  In
    summary, we have in Case (A) $x_1=x$ and in Case (B) $x$ plays the role
    of the left tail $z$ and $x_1$ is some other vertex.  Moreover, in both
    Cases (A) and (B), we have
    $\sigma(x_1)=c_0\in\sigma(L(T(v_0)))\cap\sigma(L(T(v_1)))$.
    
    We now consider the ``other end'' of the hourglass chain, that is, vertex
    $y_k$ and the possible right tail.  Again, we have two cases: either (A')
    $\sigma(y)\in\sigma(L(T(v_{k+1})))$ or (B')
    $\sigma(y)\notin\sigma(L(T(v_{k+1})))$.  In Case (A'), we set
    $y_k\coloneqq y$ and $c_k\coloneqq\sigma(y)$. In Case (B'), we set
    $z'\coloneqq y$, and , by similar arguments as in Case (A) and (B), we
    can choose $c_k \in \sigma(L(T(v_k)))\cap\sigma(L(T(v_{k+1})))$
    arbitrarily and set $y_k = w$ for some vertex
    $w\in L(T(v_{k+1}))\cap L[c_k]$ such that $(y,w) = (z',y_k)\in E(\G)$ as
    a necessary requirement for right tails.  Again, for both cases (A') and
    (B') we have
    $\sigma(y_k)=c_k\in\sigma(L(T(v_k)))\cap\sigma(L(T(v_{k+1})))$.
    
    We continue by picking an arbitrary color $c_i$ from
    $\sigma(L(T(v_i)))\cap\sigma(L(T(v_{i+1})))$ for each $1\le i < k$. This
    is possible because $v_i v_{i+1}\in E(\mathfrak{C}_T(u))$, and thus
    $\sigma(L(T(v_i)))\cap\sigma(L(T(v_{i+1})))\ne\emptyset$. Note that now
    $c_i\in\sigma(L(T(v_i)))\cap\sigma(L(T(v_{i+1})))$ holds for all
    $0\le i \le k$. In particular, the colors $c_0,c_1,\dots,c_k$ are
    pairwise distinct. To see this, assume, for contradiction, that $c_i=c_j$
    for some $i,j$ with $i<j$. Then $c_i\in\sigma(L(T(v_i)))$ and
    $c_i=c_j\in\sigma(L(T(v_{j+1})))$ which implies
    $c_i\in\sigma(L(T(v_i)))\cap\sigma(L(T(v_{j+1})))$. This contradicts
    Lemma~\ref{lem:shortest_path} for $j+1\ge i+2$.
    
    For each $1\le i\le k$, we have $c_{i-1}, c_i\in\sigma(L(T(v_i)))$. Thus
    Lemma~\ref{lem:exEdge} ensures the existence of vertices
    $x'_i\in L(T(v_i))\cap L[c_{i-1}]$ and $y'_i\in L(T(v_i))\cap L[c_{i}]$
    that form an edge $x'_i y'_i$ in $\G$. By assumption we have
    $x'_i y'_i\in E(G)$ for all $1\le i\le k$ since
    $[x_i y_i \hourglass x'_i y'_i]$ is an hourglass.  We already set $x_1$
    and $y_k$. We furthermore set $x_i\coloneqq y'_{i-1}$ for all $1<i\le k$,
    and $y_i\coloneqq x'_{i+1}$ for all $1\le i < k$. Thus ensures that (H1)
    in Def.~\ref{def:hc} is satisfied.  Moreover, since
    $\sigma(x_1)=c_0=\sigma(x'_1)$ and $\sigma(x_i)=\sigma(y'_{i-1})=c_{i-1}$
    for all $1<i\le k$, we have $\sigma(x_i)=c_{i-1}=\sigma(x'_i)$ for all
    $1\le i\le k$. Similar arguments imply $\sigma(y_i)=c_{i}=\sigma(y'_i)$
    for all $1\le i\le k$.
    
    We next show that the induced subgraph $\G[x_i,x'_i,y_i,y'_i]$ is an
    hourglass for $1\le i \le k$ and thus $x_i y'_j$ is an edge in $\G$ for
    all $i<j\le k$. We also know, by construction, that $x'_i y'_i$ is an
    edge in $\G$. 
    
    Independent of whether $x_1$ was constructed based on the cases (A) or
    (B), we have $x_i\preceq_T v_0$ if $i=1$ and
    $x_i=y'_{i-1}\preceq_T v_{i-1}$ otherwise. Thus $x_i\preceq_T
    v_{i-1}$. Likewise, independent of whether $y_k$ was constructed based on
    the cases (A') or (B'), we have $y_i\preceq_T v_{k+1}$ if $i=k$ and
    $y_i=x'_{i+1}\preceq_T v_{i+1}$ otherwise. Thus $y_i\preceq_T v_{i+1}$.
    In summary, we have $x_i\preceq_T v_{i-1}$; $x'_i,y'_i\preceq_T v_{i}$;
    and $y_i\preceq_T v_{i+1}$ for all $i\in \{1,\dots,k\}$.  This implies
    $\lca_T(x_i,y'_i)=\lca_T(x_i,y_i)=\lca_T(x'_i,y_i)=u$. Since
    $i+1 \ge (i-1)+2$ and $P(v_0,v_{k+1})$ is a shortest path,
    Lemma~\ref{lem:shortest_path} implies
    $\sigma(L(T(v_{i-1})))\cap\sigma(L(T(v_{i+1})))=\emptyset$.
    
    From $\sigma(x_i)\in\sigma(L(T(v_{i-1})))$ and
    $\sigma(y_i)\in\sigma(L(T(v_{i+1})))$ we obtain
    $\sigma(x_i)\notin\sigma(L(T(v_{i+1})))$ and
    $\sigma(y_i)\notin\sigma(L(T(v_{i-1})))$. Thus, there is no $\widetilde{y}$
    such that $\sigma(\widetilde{y})=\sigma(y'_i)=\sigma(y_i)$ and
    $\lca_T(x_i,\widetilde{y})\prec_T u =\lca_T(x_i,y'_i)=\lca_T(x_i,y_i)$, and
    no $\widetilde{x}$ such that 
    $\sigma(\widetilde{x})=\sigma(x'_i)=\sigma(x_i)$ 
    and
    $\lca_T(y_i,\widetilde{x})\prec_T u =\lca_T(y_i,x'_i)=\lca_T(y_i,x_i)$.
    Hence, $\G$ contains the arcs $(x_i,y'_i)$, $(x_i,y_i)$, $(y_i,x_i)$ and
    $(y_i,x'_i)$. Moreover, $x_i y_i$ is an edge in $\G$. However, since
    $\sigma(x'_i)=\sigma(x_i)$ and
    $\lca_T(x'_i,y'_i)\preceq_T v_i\prec_T u = \lca_T(x_i,y'_i)$ we conclude
    $(y'_i,x_i)\notin E(\G)$. Likewise, $\sigma(y'_i)=\sigma(y_i)$ and
    $\lca_T(x'_i,y'_i)\preceq_T v_i\prec_T u = \lca_T(x'_i,y_i)$ imply that
    $(x'_i,y_i)\notin E(\G)$. In summary,
    $\G[x_i,x'_i,y_i,y'_i]=[x_i y_i \hourglass x'_i y'_i]$ is an hourglass,
    for all $i\in \{1,\dots,k\}$, and $x_i\preceq_T v_{i-1}$ and
    $y'_j\preceq_T v_{j}$ for all $1\le i<j\le k$. 
    
    Since $j\ge (i-1)+2$ and $P(v_0,v_{k+1})$ is a shortest path,
    Lemma~\ref{lem:shortest_path} implies that
    $\sigma(L(T(v_{i-1})))\cap\sigma(L(T(v_{j})))=\emptyset$. Thus, there is
    no $\widetilde{y}$ such that $\sigma(\widetilde{y})=\sigma(y'_j)$ and
    $\lca_T(x_i,\widetilde{y})\prec_T u =\lca_T(x_i,y'_j)$, and no 
    $\widetilde{x}$
    such that $\sigma(\widetilde{x})=\sigma(x_i)$ and
    $\lca_T(y'_j,\widetilde{x})\prec_T u =\lca_T(y'_j,x_i)$. This implies that
    $(x_i,y'_j)\in E(\G)$ and $(y'_j,x_i)\in E(\G)$, respectively. Therefore
    $x_i y'_j$ is an edge in $\G$ for $1\le i<j\le k$.  In summary, (H2) of
    in Def.~\ref{def:hc} is always satisfied.
    
    Hence, if $x_1$ and $y_1$ are constructed based on Case (A) and (A'),
    respectively, we are done.
    
    It remains to show that $z$ and $z'$ are a left and a right tail, resp.,
    of the hourglass chain in Case (B) or (B').  First assume Case (B), and
    thus $z=x$.  We have $z,x_1\preceq_T v_0$ by construction and
    $(z, x_1)\in E(\G)$ as shown above.  Together with $x'_1\preceq_T v_1$,
    this implies that
    $\lca_T(z,x_1)\preceq_T v_0 \prec_T u = \lca_T(z,x'_1)$. Using
    $\sigma(x_1)=\sigma(x'_1)$ we therefore obtain $(z,x'_1)\notin E(\G)$.
    and hence $z$ is a left tail of the constructed hourglass chain.  Now
    assume Case (B'), and thus, $z'=y$.  We have $z',y_k\preceq_T v_{k+1}$
    and $(z', y_k)\in E(\G)$ by construction.  Together with
    $y'_k\preceq_T v_k$ this implies
    $\lca_T(z',y_k)\preceq_T v_{k+1} \prec_T u = \lca_T(z',y'_k)$. Using
    $\sigma(y_k)=\sigma(y'_k)$, we obtain $(z',y'_k)\notin E(\G)$ and hence
    $z'$ is a right tail of the constructed hourglass chain.
    
    In summary,
    $\mathfrak{H}=[x_1 y_1 \hourglass x'_1 y'_1],\dots,[x_k y_k \hourglass
    x'_k y'_k]$ is an hourglass chain, possibly with left tail $z$ and right
    tail $z'$.  Furthermore, precisely one of the Conditions 1--4 in the
    statement holds by construction.
  \end{proof}
  
  \subsection{Hug-edges and no-hug graphs}
  \label{APP:ssect:hug}
  
  \begin{definition}\label{def:hug-edge}
    An edge $xy$ in a vertex-colored graph $(\G,\sigma)$ is a
    \emph{hug-edge} if it satisfies at least one of the following
    conditions:
    \begin{description}[itemsep=0.2ex, topsep=0.2ex, parsep=0cm]
      \item[\emph{(C1)}] $xy$ is the middle edge of a good quartet in
      $(\G,\sigma)$;
      \item[\emph{(C2)}] $xy$ is the first edge of an ugly quartet in
      $(\G,\sigma)$; or
      \item[\emph{(C3)}] there is an hourglass chain
      $\mathfrak{H}=[x_1 y_1 \hourglass x'_1 y'_1],\dots,[x_k y_k \hourglass
      x'_k y'_k]$ in $(\G,\sigma)$, and one of the following cases holds:
      \begin{enumerate}[noitemsep, topsep=0.2ex, parsep=0cm, nolistsep]
        \item $x_1=x$ and $y_k=y$;
        \item $y_k=y$ and $z\coloneqq x$ is a left tail of $\mathfrak{H}$;
        \item $x_1=x$ and $z'\coloneqq y$ is a right tail of $\mathfrak{H}$; or
        \item $z\coloneqq x$ is a left tail and $z'\coloneqq y$ is a right tail
        of $\mathfrak{H}$.
      \end{enumerate}
    \end{description}
  \end{definition}  
  The term \textbf{hug}-edge refers to the fact $xy$ is a particular edge of
  an \textbf{h}ourglass-chain, an \textbf{u}gly quartet, or a \textbf{g}ood
  quartet.
  
  \begin{theorem}
    \label{thm:good_ugly_or_hourglass}
    An edge $xy$ in $\G(T,\sigma)$ with $u\coloneqq\lca_T(x,y)$,
    $v_x, v_y\in\child_T(u)$, $x\preceq_T v_x$, and $y\preceq_T v_y$ is a
    hug-edge if $v_x$ and $v_y$ belong to the same connected component of
    $\CIG_T(u)$.  Moreover, every hug-edge is \ufp.
  \end{theorem}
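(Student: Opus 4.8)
The statement splits into two claims. For the first, suppose $xy$ is an edge in $\G(T,\sigma)$ with $u\coloneqq\lca_T(x,y)$ and children $v_x,v_y\in\child_T(u)$, $x\preceq_T v_x$, $y\preceq_T v_y$, lying in the same connected component of $\CIG_T(u)$. Note first that $v_x\ne v_y$: since $u=\lca_T(x,y)$, the leaves $x$ and $y$ cannot both lie in a single subtree $T(v)$ with $v\in\child_T(u)$. Hence $\CIG_T(u)$ contains a shortest path $P=(v_x=v_0,v_1,\dots,v_k,v_{k+1}=v_y)$. I would split on the length of $P$. If $P$ has length one, then $v_xv_y\in E(\CIG_T(u))$, i.e.\ $\Scap_T(x,y)=\sigma(L(T(v_x)))\cap\sigma(L(T(v_y)))\ne\emptyset$, so $\Scap(x,y)\ne\emptyset$ by Lemma~\ref{lem:Scap} and Remark~\ref{rem:Scap}. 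Then Prop.~\ref{prop:good_or_ugly} shows $xy$ is the middle edge of a good quartet or the first edge of an ugly quartet, i.e.\ $xy$ satisfies (C1) or (C2) and is a hug-edge. If $P$ has length at least two, the hypotheses of Lemma~\ref{lem:hourglass_construction} are met verbatim, and it produces an hourglass chain $\mathfrak{H}$ together with exactly one of its four listed alternatives; these are precisely the four sub-cases of condition (C3), so $xy$ is again a hug-edge.

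For the second claim — every hug-edge is \ufp — I would trace through Def.~\ref{def:hug-edge}. If $xy$ satisfies (C1), it is the middle edge of a good quartet, hence \ufp by Prop.~\ref{prop:good_quartet_middle_edge}. If $xy$ satisfies (C2), it is the first edge of an ugly quartet, hence \ufp by Prop.~\ref{prop:ugly_quartet} (an ugly quartet $\langle zxz'y\rangle$ has first edge $zx$, matching the roles there after renaming). If $xy$ satisfies (C3), there is an hourglass chain $\mathfrak{H}=[x_1y_1\hourglass x'_1y'_1],\dots,[x_ky_k\hourglass x'_ky'_k]$, possibly with a left tail $z$ and/or right tail $z'$, and one of the four sub-cases holds: in (1) $xy=x_1y_k$; in (2) $xy=zy_k$ with $z=x$; in (3) $xy=x_1z'$ with $z'=y$; in (4) $xy=zz'$ with $z=x$, $z'=y$. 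In each sub-case $xy$ is an edge (by the definition of a hug-edge) that lies in $\{x_1y_k,zy_k,x_1z',zz'\}\cap E(G)$, so Lemma~\ref{lem:hourglass_chain_dupl} yields that $xy$ is \ufp.

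The proof is essentially an assembly of results already established; no new technical machinery is needed. The only point requiring care is the bookkeeping in the (C3) cases: matching the four alternatives of Lemma~\ref{lem:hourglass_construction} (equivalently, of Def.~\ref{def:hug-edge}(C3)) with the four edge types $x_1y_k$, $zy_k$, $x_1z'$, $zz'$ appearing in Lemma~\ref{lem:hourglass_chain_dupl}, and verifying that the preconditions of Prop.~\ref{prop:good_or_ugly} and of Lemma~\ref{lem:hourglass_construction} (notably $v_x\ne v_y$, which follows from $u=\lca_T(x,y)$) hold before invoking them.
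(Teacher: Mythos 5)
Your proposal is correct and follows essentially the same route as the paper's proof: split on the length of the shortest path in $\CIG_T(u)$, using Prop.~\ref{prop:good_or_ugly} in the length-one case and Lemma~\ref{lem:hourglass_construction} in the longer case, and then invoke Prop.~\ref{prop:good_quartet_middle_edge}, Prop.~\ref{prop:ugly_quartet}, and Lemma~\ref{lem:hourglass_chain_dupl} for the three hug-edge conditions. Your extra bookkeeping matching the (C3) sub-cases with the edge set of Lemma~\ref{lem:hourglass_chain_dupl} is done only implicitly in the paper but changes nothing substantive.
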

  \begin{proof}
    We show first that $xy$ satisfies one of the Conditions \AX{(C1)},
    \AX{(C2)}, or \AX{((C3)}, and hence is hug-edge. First, note that
    $v_x\ne v_y$. Moreover, Lemma~\ref{lem:edge-xy-lca} implies
    $\sigma(x)\notin\sigma(L(T(v_y)))$ and
    $\sigma(y)\notin\sigma(L(T(v_x)))$. Since by assumption $v_x,v_y$ belong
    to the same connected component, there is a shortest path
    $P\coloneqq (v_x=v_0,\dots,v_{k+1}=v_y)$ in $\CIG_T(u)$. For $k=0$,
    $v_x v_y\in E(\CIG_T(u))$. This implies 
    $\Scap(x,y)=\sigma(L(T(v_x)))\cap\sigma(L(T(v_y)))\ne\emptyset$. By
    Prop.~\ref{prop:good_or_ugly}, the edge $xy$ is either the middle edge
    of a good quartet or the first edge of an ugly quartets in $(\G,\sigma)$.
    Hence, Condition \AX{(C1)} or \AX{(C2)} is satisfied. If $k>0$,
    Lemma~\ref{lem:hourglass_construction} implies Condition \AX{(C3)}.
    
    For each of the three cases we have already shown  that $xy$ is \ufp: For
    \AX{(C1)} Prop.~\ref{prop:good_quartet_middle_edge} applies, for
    \AX{(C2)} Prop.~\ref{prop:ugly_quartet} provides the desired result, and
    for \AX{(C3)} we use Lemma~\ref{lem:hourglass_chain_dupl}.
  \end{proof}
  
  \begin{lemma}
    \label{lem:edited_RBMG_color_intersection}
    If the BMG $\G(T,\sigma)$ contains a hug-edge $xy$ in a BMG
    $\G(T,\sigma)$, then there are distinct vertices
    $v_1,v_2\in\child_{T}(\lca_T(x,y))$ such that
    $\sigma(L(T(v_1)))\cap\sigma(L(T(v_2)))\ne\emptyset$.
  \end{lemma}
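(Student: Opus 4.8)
The plan is to derive the statement from results already in place, via a short chain of implications, rather than by a fresh case analysis on the three types of hug-edges. By Theorem~\ref{thm:good_ugly_or_hourglass}, every hug-edge of a BMG is \ufp. Since $(T,\sigma)$ explains its own BMG $\G(T,\sigma)$, Definition~\ref{def:ufp} then tells us that the hug-edge $xy$ is in particular $(T,\sigma)$-\fp. Finally, invoke the equivalence of statements (1) and (2) in Lemma~\ref{lem:T-fp-no-mu}: being $(T,\sigma)$-\fp is exactly the existence of two children $v_1,v_2$ of $\lca_T(x,y)$ with $\sigma(L(T(v_1)))\cap\sigma(L(T(v_2)))\neq\emptyset$. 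This is precisely the desired conclusion.

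The one point to check is that the cited results genuinely cover all three defining conditions \AX{(C1)}--\AX{(C3)} of a hug-edge; this is handled inside the proof of Theorem~\ref{thm:good_ugly_or_hourglass}, which treats \AX{(C1)} via Prop.~\ref{prop:good_quartet_middle_edge}, \AX{(C2)} via Prop.~\ref{prop:ugly_quartet}, and \AX{(C3)} via Lemma~\ref{lem:hourglass_chain_dupl}. If a proof not routing through Theorem~\ref{thm:good_ugly_or_hourglass} is preferred, one can argue the cases directly: for \AX{(C1)}, the proof of Prop.~\ref{prop:good_quartet_middle_edge} yields $\Scap(x,y)=\sigma(L(T(v_x)))\cap\sigma(L(T(v_y)))\neq\emptyset$, so take $v_1=v_x$, $v_2=v_y$; for \AX{(C2)}, the case distinction in the proof of Prop.~\ref{prop:ugly_quartet} produces in every branch two distinct children of $\lca_T(x,y)$ sharing a color (either $v_x$ and a third child both containing $\sigma(x)$, or $v_x$ and $v_y$ both containing $\sigma(z)$); for \AX{(C3)}, Lemma~\ref{lem:hourglass_chain_tails} places $x_1$ and $x'_1$ below distinct children $v_0\neq v_1$ of a common vertex $u$ with $\sigma(x_1)=\sigma(x'_1)$, and each of the four cases of Definition~\ref{def:hug-edge}(C3) forces $\lca_T(x,y)=u$.

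There is essentially no obstacle: the substance of the lemma is already contained in Lemma~\ref{lem:T-fp-no-mu} and Theorem~\ref{thm:good_ugly_or_hourglass}. The only care needed is bookkeeping --- matching the children $v_x,v_y$ (resp.\ $v_0,v_1$ in the hourglass-chain case) of $\lca_T(x,y)$ against the vertices $v_1,v_2$ named in the statement, and confirming that the vertex $u$ produced by each quoted lemma really is $\lca_T(x,y)$.
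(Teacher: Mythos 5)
Your proposal is correct, and your main argument takes a genuinely different (and shorter) route than the paper. The paper proves Lemma~\ref{lem:edited_RBMG_color_intersection} by a direct case analysis on the three hug-edge conditions: for \AX{(C1)} it invokes Lemma~36 of \citet{Geiss:19b} to place $x,z$ and $y,z'$ below two distinct children of $\lca_T(x,y)$; for \AX{(C2)} it re-uses the argument inside the proof of Prop.~\ref{prop:ugly_quartet}; and for \AX{(C3)} it applies Lemma~\ref{lem:hourglass_chain_tails} to locate $x$ below $v_0$, $y$ below $v_{k+1}$, and $x_1,x_1'$ with $\sigma(x_1)=\sigma(x_1')$ below the distinct children $v_0,v_1$. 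You instead chain three already-established facts: every hug-edge is \ufp (the ``moreover'' part of Thm.~\ref{thm:good_ugly_or_hourglass}), \ufp specializes to $(T,\sigma)$-\fp because $(T,\sigma)$ explains its own BMG (Def.~\ref{def:ufp}), and $(T,\sigma)$-\fp is equivalent to the existence of two children of $\lca_T(x,y)$ with overlapping color sets (Lemma~\ref{lem:T-fp-no-mu}, (1)$\Rightarrow$(2)). This is logically sound and non-circular, since Thm.~\ref{thm:good_ugly_or_hourglass} and Lemma~\ref{lem:T-fp-no-mu} are established before and independently of this lemma; the only cost is that your route quietly passes through the reconciliation-map machinery (the contrapositive construction inside Lemma~\ref{lem:T-fp-no-mu}), whereas the paper's proof stays purely combinatorial on the tree and re-extracts the two children explicitly. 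Your fallback sketch in the second paragraph is essentially the paper's own proof, so if the shortcut were disallowed you would still land on the published argument.
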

  \begin{proof}
    Let $xy$ be a hug-edge in the BMG $(\G,\sigma) = \G(T,\sigma)$, i.e.\ one
    of \AX{(C1)}, \AX{(C2)}, or \AX{(C3)} applies.
    
    If $e=xy$ satisfies \AX{(C1)}, then $xy$ is the middle edge of a good
    quartet $\langle zxyz'\rangle$ in $(\G,\sigma)$.  By 
    \cite[Lemma~36]{Geiss:19b}, there is a vertex 
    $u\coloneqq\lca_{T}(x,y,z,z')$ 
    such that $x,z\preceq_{T}v_1$ and $y,z'\preceq_{T}$ for some distinct
    $v_1,v_2\in\child_{T}(u)$. Thus, $u=\lca_{T}(x,y)$. Moreover, since
    $\sigma(z)=\sigma(z')$, we have
    $\sigma(L(T(v_1)))\cap\sigma(L(T(v_2)))\ne\emptyset$ for two distinct
    vertices $v_1,v_2\in\child_{T}(u)$.
    
    If $e=xy$ satisfies \AX{(C2)}, then it is the first edge of some ugly
    quartet, which w.l.o.g.\ has the form $\langle xyx'z\rangle$. Re-using
    the arguments in the proof of Prop.~\ref{prop:ugly_quartet} shows that
    there must be two distinct children $v_1$ and $v_2$ of vertex
    $u=\lca_{T}(x,y)$ such that
    $\sigma(L(T(v_1)))\cap\sigma(L(T(v_2)))\ne\emptyset$.
    
    If $e=xy$ satisfies \AX{(C3)}, then there is a (tailed) hourglass chain
    $\mathfrak{H}=[x_1 y_1 \hourglass x'_1 y'_1],\dots,[x_k y_k \hourglass
    x'_k y'_k]$, $k\ge 1$, in $\G(T,\sigma)$, such that either $x=x_1$ or
    $z\coloneqq x$ is a left tail of $\mathfrak{H}$, and either $y=y_k$ or
    $z'\coloneqq y$ is a right tail of $\mathfrak{H}$.  In either case,
    Lemma~\ref{lem:hourglass_chain_tails} implies $x\preceq_{T} v_0$ and
    $y\preceq_{T} v_{k+1}$.  Since $x_1$ and $x'_1$ lie below distinct
    children $v_0$ and $v_1$ of vertex $\lca_T(x,y)$ and
    $\sigma(x_1)=\sigma(x'_1)$ by the definition of hourglasses, it holds
    that $\sigma(L(T(v_0)))\cap\sigma(L(T(v_1)))\ne\emptyset$.
    
    In each case, therefore, there are distinct vertices
    $v_1,v_2\in\child_{T}(\lca_T(x,y))$ such that
    $\sigma(L(T(v_1)))\cap\sigma(L(T(v_2)))\ne\emptyset$.
  \end{proof}
  
  The fact that all hug-edges are \ufp by
  Thm.~\ref{thm:good_ugly_or_hourglass} suggests to consider the subgraph of
  a BMG that is left after removing all these unambiguously recognizable
  false-positive orthology assignments.
  \begin{definition}
    \label{def:non-hug-graph}
    Let $(\G,\sigma)$ be a BMG with symmetric part $G$ and let $F$ be the set
    of its hug-edges. The \emph{no-hug} graph $\NH(\G,\sigma)$ is the
    subgraph of $G$ with vertex set $V(\G)$, coloring $\sigma$ and edge set
    $E(G)\setminus F$.
  \end{definition}
  The $\NH(\G,\sigma)$ is therefore the subgraph of the underlying RBMG of
  $\G$ that contains all edges that cannot be identified as \ufp by using
  only good quartets, ugly quartets and (tailed) hourglass chains as outlined
  in Thm.~\ref{thm:good_ugly_or_hourglass}.
  
  \begin{corollary}
    \label{cor:NH}
    Let $(T,\sigma)$ be a leaf-colored tree and $\mu$ a reconciliation map
    from $(T,\sigma)$ to some species tree $S$. Then,
    \begin{equation*}
    \Theta(T,t_{\mu}) \subseteq \Theta(T, \tT ) \subseteq
    \NH(\G(T,\sigma)) \subseteq \G(T,\sigma).
    \end{equation*}
  \end{corollary}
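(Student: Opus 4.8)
The plan is to prove the chain of inclusions from left to right, relying on Theorem~\ref{thm:extrem-ortho} and Lemma~\ref{lem:edited_RBMG_color_intersection}, with the middle inclusion carrying essentially all the content.

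First, the inclusion $\Theta(T,t_{\mu}) \subseteq \Theta(T,\tT)$ is exactly the first inclusion of Theorem~\ref{thm:extrem-ortho}, so nothing new is needed there. Likewise, the last inclusion $\NH(\G(T,\sigma)) \subseteq \G(T,\sigma)$ is immediate from Definition~\ref{def:non-hug-graph}: $\NH(\G(T,\sigma))$ is by construction a subgraph of the symmetric part $G(T,\sigma)$ of $\G(T,\sigma)$, which in turn is a subgraph of $\G(T,\sigma)$ (viewed in the same way as in the convention of Theorem~\ref{thm:extrem-ortho}).

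The substantial step is the middle inclusion $\Theta(T,\tT) \subseteq \NH(\G(T,\sigma))$. I would argue as follows. Let $xy$ be an arbitrary edge of $\Theta(T,\tT)$. By Theorem~\ref{thm:extrem-ortho} we have $\Theta(T,\tT) \subseteq G(T,\sigma)$, so $xy$ is in particular an edge of the symmetric part of $\G(T,\sigma)$. It therefore remains to show that $xy$ is \emph{not} a hug-edge of $\G(T,\sigma)$, since then $xy \in E(G(T,\sigma)) \setminus F$, where $F$ denotes the set of hug-edges, and by Definition~\ref{def:non-hug-graph} this set difference is exactly the edge set of $\NH(\G(T,\sigma))$. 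Assume, for contradiction, that $xy$ is a hug-edge. Then Lemma~\ref{lem:edited_RBMG_color_intersection} yields two distinct children $v_1,v_2 \in \child_T(\lca_T(x,y))$ with $\sigma(L(T(v_1))) \cap \sigma(L(T(v_2))) \neq \emptyset$, and Definition~\ref{def:extremal_labeling} forces $\tT(\lca_T(x,y)) = \DUPL$. On the other hand, $xy \in E(\Theta(T,\tT))$ means, by Definition~\ref{def:ortho-graph}, that $\tT(\lca_T(x,y)) = \SPEC$ --- a contradiction. Hence $xy$ is not a hug-edge, and the middle inclusion follows.

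I do not expect a genuine obstacle here; the result is a bookkeeping combination of two earlier statements. Theorem~\ref{thm:extrem-ortho} controls where $\Theta(T,\tT)$ sits relative to the (R)BMG, and Lemma~\ref{lem:edited_RBMG_color_intersection} guarantees that every hug-edge lies below an apparent-duplication vertex and is therefore excluded from the orthology graph of the extremal labeling. The only point requiring a little care is matching the undirected graphs $\Theta(T,\tT)$ and $\NH(\G(T,\sigma))$ against the directed $\G(T,\sigma)$ through its symmetric part $G(T,\sigma)$, exactly as already done in Theorem~\ref{thm:extrem-ortho}.
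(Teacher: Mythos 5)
Your proposal is correct and follows essentially the same route as the paper's proof: both reduce everything to Theorem~\ref{thm:extrem-ortho} plus Lemma~\ref{lem:edited_RBMG_color_intersection}, the only cosmetic difference being that you argue the middle inclusion by contradiction while the paper invokes the contrapositive of that lemma directly.
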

  \begin{proof}
    By Thm.~\ref{thm:extrem-ortho},
    $\Theta(T,t_{\mu}) \subseteq \Theta(T,\tT) \subseteq \G(T,\sigma)$; and
    by definition, we have $\NH(\G(T,\sigma)) \subseteq \G(T,\sigma)$.  Now,
    let $xy$ be an edge in $\Theta(T,\tT)$ and thus,
    $\tT(\lca_T(x,y))=\SPEC$.  By definition of $\tT$, we have
    $\sigma(L(T(v_1))) \cap \sigma(L(T(v_2)))=\emptyset$ for any two distinct
    $v_1,v_2\in\child_T(\lca_T(x,y))$. The contraposition of 
    Lemma~\ref{lem:edited_RBMG_color_intersection} implies that $xy$ is not a
    hug-edge and thus an edge of $\NH(\G(T,\sigma))$, which completes the
    proof.  
  \end{proof}
  
  The no-hug graph still may contain false-positive orthology assignments,
  i.e., $\NH(\G(T,\sigma))=\Theta(T,t_{\mu})$ does not hold in general. In
  the following section, we shall see that there are, however, no \ufp edges
  left in the no-hug graph.
  
  \subsection{Resolving least resolved trees}
  \label{APP:ssect:augtree} 
  
  Since every BMG $(\G,\sigma)$ at least implicitly contains all information
  needed to identify its \ufp edges, this is also true for its unique least
  resolved tree $(T^*,\sigma)$. It is not always possible, however, to assign
  an event labeling $t$ to $T^*$ such that $(T^*,t)$ is the cotree for the
  correct orthology relation.  Fig.~\ref{fig:messy_vertices-1} shows that
  $T^*$ may not be ``resolved enough''. To tackle this problem, we analyze
  the redundant edges of more resolved trees that explain
  $(\G,\sigma)$. Cor.~\ref{cor:edge_redundant} implies that all edges below a
  speciation vertex are redundant because, by
  Lemma~\ref{lem:duplication_witness}, the color sets of distinct subtrees
  below a speciation vertex do not overlap.  More precisely, we have
  \begin{fact}
    \label{obs:speciations_merged}
    Let $\mu$ be a reconciliation map from $(T,\sigma)$ to $S$ and assume
    that there is a vertex $u\in V^0(T)$ such that $\mu(u)\in V^0(S)$ and
    thus, $t_{\mu}(u)=\SPEC$.  Then every inner edge $uv$ of $T$ with
    $v\in\child_{T}(u)$ is redundant w.r.t.\ $\G(T,\sigma)$.  Moreover, if an
    inner edge $uv$ with $v\in\child_{T}(u)$ is non-redundant, then $u$ must
    have two children with overlapping color sets, and hence,
    $t_{\mu}(u)=\DUPL$.
  \end{fact}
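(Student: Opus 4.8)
The plan is to obtain both assertions as immediate consequences of Lemma~\ref{lem:duplication_witness} and Corollary~\ref{cor:edge_redundant}; essentially no new argument is needed, and the only care required is to match hypotheses and to recall that the event label of a vertex of $V^0(T)$ is always $\SPEC$ or $\DUPL$ (Def.~\ref{def:event-rbmg}).

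First I would record the structural consequence of the speciation hypothesis. Since $u\in V^0(T)$ with $\mu(u)\in V^0(S)$, and hence $t_\mu(u)=\SPEC$, Lemma~\ref{lem:duplication_witness} yields $\sigma(L(T(v_1)))\cap\sigma(L(T(v_2)))=\emptyset$ for every pair of distinct children $v_1,v_2\in\child_T(u)$. In particular, for any fixed child $v\in\child_T(u)$ we have $\sigma(L(T(v)))\cap\sigma(L(T(v')))=\emptyset$ for all $v'\in\child_T(u)\setminus\{v\}$. This is exactly the hypothesis of Corollary~\ref{cor:edge_redundant}, so for every inner edge $uv$ of $T$ with $v\in\child_T(u)$ the edge $uv$ is redundant w.r.t.\ $\G(T,\sigma)$. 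This settles the first claim.

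For the second claim I would argue by contraposition, reusing the two cited results the other way around. Let $uv$ be a non-redundant inner edge with $v\in\child_T(u)$. Since $uv$ is an inner edge (both endpoints in $V^0(T)$), Corollary~\ref{cor:edge_redundant} is applicable, and its contrapositive shows that the color-disjointness assumption must fail: there exists $v'\in\child_T(u)\setminus\{v\}$ with $\sigma(L(T(v)))\cap\sigma(L(T(v')))\ne\emptyset$, i.e., $u$ has two children with overlapping color sets. Applying the contrapositive of Lemma~\ref{lem:duplication_witness} to this pair of children then gives $\mu(u)\notin V^0(S)$; since $u\in V^0(T)$ forces $t_\mu(u)\in\{\SPEC,\DUPL\}$, we conclude $\mu(u)\in E(S)$, i.e.\ $t_\mu(u)=\DUPL$, as claimed.

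The main (and rather mild) obstacle is purely bookkeeping: one has to check that ``inner edge $uv$ with $v\in\child_T(u)$'' in the statement coincides with the hypothesis ``$uv$ an inner edge with $v\prec_T u$'' of Corollary~\ref{cor:edge_redundant}, and that Lemma~\ref{lem:duplication_witness}, phrased for arbitrary pairs of children of $u$, specializes in the forward direction to ``one fixed child versus all the others'' and, in the contrapositive direction, to ``some pair of children with non-disjoint colors.'' No genuinely new reasoning is involved beyond these two invocations.
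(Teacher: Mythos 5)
Your proposal is correct and follows essentially the same route as the paper, which presents this statement as an immediate consequence of Lemma~\ref{lem:duplication_witness} (color-disjointness of subtrees below a speciation vertex, and its contrapositive for the duplication label) combined with Cor.~\ref{cor:edge_redundant}. Your bookkeeping — matching the hypothesis of Cor.~\ref{cor:edge_redundant} and noting that an inner vertex can only receive the label $\SPEC$ or $\DUPL$ — is exactly the intended argument.
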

  
  To identify the vertices in $(T^*,\sigma)$ that can be expanded to yield a
  tree that still explains $\G(T^*,\sigma)$, we introduce a particular way of
  ``augmenting'' a leaf-colored tree.
  \begin{definition}
    \label{def:augmenting}
    Let $(T,\sigma)$ be a leaf-colored tree, $u$ be an inner vertex of $T$,
    $\CIG_T(u)$ the corresponding color-set intersection graph, and
    $\mathcal{C}$ the set of connected components of $\CIG_T(u)$. Then
    the tree $T_u$ \emph{augmented at vertex $u$} is obtained by
    applying the following editing steps to $T$:
    \begin{itemize}[itemsep=0.2ex, topsep=0.2ex, parsep=0cm]
      \item If $\CIG_T(u)$ is connected, do nothing.
      \item Otherwise, for each $C\in\mathcal{C}$ with $|C|>1$
      \begin{itemize}[noitemsep,nolistsep]
        \item introduce a vertex $w$ and attach it as a child of $u$, i.e., add
        the edge $uw$,
        \item for every element $v_i\in C$, substitute the edge $uv_i$ by the
        edge $wv_i$.
      \end{itemize}
    \end{itemize}
    The augmentation step is \emph{trivial} if $T_u=T$, in which case
    we say that \emph{no edit step was performed}.
  \end{definition}
  An example of an augmentation is shown in
  Fig.~\ref{fig:augmenting_labeling_algo}.  It is easy to see that the tree
  $T_u$ obtained by an augmentation of a phylogenetic tree $T$ is again a
  phylogenetic tree.  The augmentation step at vertex $u$ of $T$ is trivial
  if and only if either $\CIG_T(u)$ is connected or all connected components
  $C\in\mathcal{C}$ are singletons, i.e., $|C|=1$.  If $(T_u,\sigma)$ is
  obtained by augmenting $(T,\sigma)$ at node $u$, we denote the set of newly
  introduced vertices by $V_{\neg T}\coloneqq V(T_u)\setminus V(T)$. Note
  that $V_{\neg T}=\emptyset$ whenever no edit step was performed.
  
  Since augmentation only inserts vertices between $u$ and its children, it
  affects neither $L(T(u))$ nor $L(T(v))$ for $v\in\child(u)$. As an
  immediate consequence we find
  \begin{fact}
    Let $(T,\sigma)$ be a leaf-colored tree, $u\ne v$ two inner vertices of
    $T$, $\CIG_T(u)$ the corresponding color-set intersection graph, and
    $(T_u,\sigma)$ the tree obtained by augmenting $T$ at $u$. Then
    $\CIG_{T_u}(v)=\CIG_{T}(v)$.
    \label{fact:aug-indep}
  \end{fact}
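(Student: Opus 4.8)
The plan is to show that augmentation is a purely \emph{local} modification: it inserts new internal vertices strictly between $u$ and a subset of its children and reattaches those children, but it changes neither the leaf set below any pre-existing vertex nor the child set of any vertex other than $u$. Once this is made precise, the claim follows immediately, since $\CIG_{T}(v)$ is determined by $\child_T(v)$ together with the color sets $\sigma(L(T(c)))$ of the subtrees rooted at the children $c\in\child_T(v)$.

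First I would dispose of the trivial case: if the augmentation step at $u$ is trivial, then $T_u=T$ and there is nothing to prove, so assume a non-trivial edit was performed. By Def.~\ref{def:augmenting}, the editing introduces, for each connected component $C$ of $\CIG_T(u)$ with $|C|>1$, one new vertex $w\in V_{\neg T}=V(T_u)\setminus V(T)$ attached as a child of $u$, and replaces the edges $uv_i$ by $wv_i$ for the $v_i\in C$. From this description I would record two elementary facts. (i) For every $p\in V(T)$ one has $L(T_u(p))=L(T(p))$: no leaves are created or removed, and for $p$ incomparable to or strictly below a child of $u$ the subtree $T(p)$ is untouched, while for $p=u$ the new $w$'s together with all reattached $v_i$ remain descendants of $u$, and for $p\succ_T u$ the subtree $T(u)\subseteq T(p)$ is merely rearranged internally. (ii) For every $p\in V(T)$ with $p\ne u$ one has $\child_{T_u}(p)=\child_T(p)$: the only vertex whose child set is altered by the editing steps is $u$ itself (it loses the $v_i$ lying in components of size $>1$ and gains the corresponding $w$'s). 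Note also that since $v$ is assumed to be an inner vertex of $T$, we have $v\in V(T)$ and hence $v\notin V_{\neg T}$, so $v$ is still an inner vertex of $T_u$ and the statement is well posed.

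Now fix an inner vertex $v\ne u$ of $T$. By fact (ii), the vertex set of $\CIG_{T_u}(v)$ equals $\child_{T_u}(v)=\child_T(v)$, which is the vertex set of $\CIG_T(v)$. For any two distinct children $c_1,c_2\in\child_T(v)$, fact (i) gives $L(T_u(c_i))=L(T(c_i))$ and therefore $\sigma(L(T_u(c_1)))\cap\sigma(L(T_u(c_2)))=\sigma(L(T(c_1)))\cap\sigma(L(T(c_2)))$; in particular, $c_1c_2$ is an edge of $\CIG_{T_u}(v)$ if and only if it is an edge of $\CIG_T(v)$. Hence $\CIG_{T_u}(v)=\CIG_T(v)$, as claimed. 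I do not expect any genuine obstacle here; the only point requiring a little care is the bookkeeping in facts (i) and (ii)—precisely identifying $u$ as the unique vertex whose child set changes and verifying that subtree leaf sets are preserved—which is exactly the content of the informal remark preceding the statement that ``augmentation only inserts vertices between $u$ and its children''.
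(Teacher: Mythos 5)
Your argument is correct and is essentially the paper's own: the paper states this as an immediate observation following the remark that augmentation only inserts vertices between $u$ and its children, so that neither the leaf sets $L(T(p))$ of pre-existing vertices nor the child sets of vertices other than $u$ change. Your facts (i) and (ii) are exactly the bookkeeping behind that remark, so the proposal matches the intended (unwritten) proof.
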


  \begin{lemma}
    \label{lem:augmenting_color_disjoint}
    Let $(T,\sigma)$ be a leaf-colored tree. Let $u\in V^0(T)$ and $T_u$ be
    the tree after augmenting $T$ at vertex $u$. If $\CIG_T(u)$ is
    disconnected, then $\sigma(L(T_u(w_1)))\cap\sigma(L(T_u(w_2)))=\emptyset$
    for any two distinct vertices $w_1,w_2\in\child_{T_u}(u)$.
  \end{lemma}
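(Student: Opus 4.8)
The plan is to unwind Definition~\ref{def:augmenting} and observe that the children of $u$ in $T_u$ stand in a natural bijection with the connected components of $\CIG_T(u)$; once this correspondence is made precise, the claim follows directly from the defining edge relation of the color-set intersection graph.

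First I would fix notation: let $\mathcal{C}=\{C_1,\dots,C_m\}$ be the connected components of $\CIG_T(u)$, where $m\ge 2$ since $\CIG_T(u)$ is disconnected. Reading off the editing steps in Definition~\ref{def:augmenting}, each component $C_j$ with $|C_j|>1$ contributes exactly one new child $w_{C_j}$ of $u$ in $T_u$ with $\child_{T_u}(w_{C_j})=C_j$, whereas each singleton component $C_j=\{v\}$ contributes the original vertex $v$, which stays a child of $u$. In both cases the child of $u$ in $T_u$ associated with $C_j$ has leaf set $\bigcup_{v\in C_j}L(T(v))$, because augmentation only inserts vertices strictly between $u$ and the $v_i\in\child_T(u)$ and never alters a subtree $T(v_i)$. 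Hence $\child_{T_u}(u)$ is in bijection with $\mathcal{C}$, and for the child $w$ corresponding to $C_j$ we have $\sigma(L(T_u(w)))=\bigcup_{v\in C_j}\sigma(L(T(v)))$.

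Next I would take two distinct children $w_1,w_2\in\child_{T_u}(u)$, corresponding to components $C_p$ and $C_q$ with $p\ne q$, and argue by contradiction. If some color $t$ lies in $\sigma(L(T_u(w_1)))\cap\sigma(L(T_u(w_2)))$, then by the formula above there are $v\in C_p$ and $v'\in C_q$ with $t\in\sigma(L(T(v)))$ and $t\in\sigma(L(T(v')))$. Thus $\sigma(L(T(v)))\cap\sigma(L(T(v')))\ne\emptyset$, so by definition $vv'\in E(\CIG_T(u))$; but $v$ and $v'$ lie in distinct connected components of $\CIG_T(u)$, which rules out any edge between them, a contradiction. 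Therefore $\sigma(L(T_u(w_1)))\cap\sigma(L(T_u(w_2)))=\emptyset$, as claimed.

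I do not anticipate a genuine obstacle here; the one point requiring care is the bookkeeping in the first step, namely checking that singleton components are handled uniformly with the non-singleton ones (for $C=\{v\}$ no new vertex is created, yet the child of $u$ associated with $C$ is just $v$, and $L(T_u(v))=L(T(v))$ still matches $\bigcup_{v\in C}L(T(v))$) and that $\child_{T_u}(u)$ contains nothing beyond these. Once that bijection is pinned down, color-disjointness is immediate from the definition of $\CIG_T(u)$.
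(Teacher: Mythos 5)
Your proof is correct and follows essentially the same route as the paper's: identify the children of $u$ in $T_u$ with the connected components of $\CIG_T(u)$ (new vertices for non-singleton components, original children for singletons), then note that a shared color between two distinct children would force an edge of $\CIG_T(u)$ across two different components, a contradiction. Your explicit bookkeeping of the bijection and of $\sigma(L(T_u(w)))=\bigcup_{v\in C}\sigma(L(T(v)))$ just spells out what the paper states more tersely.
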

  \begin{proof}
    By construction, the vertex $w_i$ in $T_u$, $i=1,2$, is either a child of
    $u$ in $T$ or was inserted in the augmentation step. Therefore, the two
    connected components $C_1$ and $C_2$ of $\CIG_T(u)$ to which $w_1$ and
    $w_2$ belong are disjoint. Thus
    $\sigma(L(T(v_i)))\cap \sigma(L(T(v_j)))= \emptyset$ for all
    $v_i,v_j\in \child_T(u)$ with $v_i\in C_1$ and $v_j\in C_2$ because
    otherwise there would be an edge $v_iv_j$ in $\mathfrak{C}_T(u)$ and
    thus, $C_1=C_2$.  Since $w_i$ is either the single vertex in $C_i$ or
    $w_i$ has as children the vertices of $C_i$ in $T_u$, $i\in \{1,2\}$, we
    conclude that $\sigma(L(T_u(w_1)))\cap\sigma(L(T_u(w_2)))=\emptyset$.
  \end{proof}
  
  The following result shows that no further edit step can be performed at
  vertices that have been newly introduced by a previous augmentation
  step or have already undergone an augmentation.
  \begin{lemma}
    Let $(T,\sigma)$ be a leaf-colored tree, $u\in V^0(T)$, $(T_u,\sigma)$
    the tree obtained by augmenting $T$ at $u$, and denote by
    $(T_{uw},\sigma)$ the tree obtained by augmenting $T_u$ at $w$. Then
    $T_{uw}=T_u$ for $w=u$ as well as for all newly introduced vertices,
    i.e., for all $w\in V_{\neg T}\cup\{u\}$.
    \label{lem:augment-once}
  \end{lemma}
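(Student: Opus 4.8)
The plan is to argue case by case over $w\in V_{\neg T}\cup\{u\}$ and to show that in each case the color-set intersection graph of $w$ in $T_u$ falls into one of the two ``do nothing'' situations of Def.~\ref{def:augmenting}: it is either connected, or all of its connected components are singletons. The whole argument rests on one elementary observation about the editing rules in Def.~\ref{def:augmenting}: augmenting $T$ at $u$ only inserts new vertices strictly between $u$ and (some of) its children, and therefore $L(T_u(v))=L(T(v))$ for every $v\in\child_T(u)$. In particular the leaf sets of the subtrees rooted at the old children of $u$ are left untouched.

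First I would dispose of the trivial sub-case where $\CIG_T(u)$ is connected: then no edit step is performed, so $T_u=T$, $V_{\neg T}=\emptyset$, and $\CIG_{T_u}(u)=\CIG_T(u)$ is connected; hence augmenting $T_u$ at $u$ is again trivial and $T_{uu}=T_u$. So assume $\CIG_T(u)$ is disconnected and consider $w=u$. The children of $u$ in $T_u$ are exactly the singleton components of $\CIG_T(u)$ together with the newly introduced vertices, and Lemma~\ref{lem:augmenting_color_disjoint} states precisely that $\sigma(L(T_u(w_1)))\cap\sigma(L(T_u(w_2)))=\emptyset$ for any two distinct $w_1,w_2\in\child_{T_u}(u)$. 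Hence $\CIG_{T_u}(u)$ contains no edges at all, i.e.\ each of its connected components is a singleton, so by Def.~\ref{def:augmenting} the augmentation step at $u$ in $T_u$ is trivial and $T_{uu}=T_u$.

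It remains to treat $w\in V_{\neg T}$, which forces $\CIG_T(u)$ to be disconnected. Such a $w$ was introduced for some connected component $C$ of $\CIG_T(u)$ with $|C|>1$, and by construction $\child_{T_u}(w)=C\subseteq\child_T(u)$. Using $L(T_u(v))=L(T(v))$ for each $v\in C$, the graph $\CIG_{T_u}(w)$ is exactly the subgraph of $\CIG_T(u)$ induced on the vertex set $C$. Since $C$ is a connected component of $\CIG_T(u)$, this induced subgraph is connected, so by the first bullet of Def.~\ref{def:augmenting} augmenting $T_u$ at $w$ does nothing, i.e.\ $T_{uw}=T_u$. Combining the two cases gives the claim.

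I do not expect a genuine obstacle here. The only points requiring a moment's care are (i) that augmentation does not change the leaf sets of subtrees rooted at the old children of $u$ — immediate from the editing rules — and (ii) that a newly introduced vertex $w$ always has precisely the vertices of its associated component $C$ as children in $T_u$ (and, since $|C|>1$, is genuinely an inner vertex, so that applying Def.~\ref{def:augmenting} at $w$ is meaningful), which lets us identify $\CIG_{T_u}(w)$ with $\CIG_T(u)[C]$.
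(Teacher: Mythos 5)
Your proof is correct and follows essentially the same route as the paper's: for $w=u$ it invokes Lemma~\ref{lem:augmenting_color_disjoint} to see that $\CIG_{T_u}(u)$ has no edges (so the step is trivial), and for newly introduced $w$ it observes that $\CIG_{T_u}(w)$ is the induced subgraph of $\CIG_T(u)$ on a connected component and hence connected. You merely spell out the details (unchanged subtree leaf sets, identification of $\child_{T_u}(w)$ with $C$) that the paper compresses into ``by construction.''
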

  \begin{proof}
    If $T_u=T$, then $V_{\neg T}=\emptyset$ and thus $T_{uu}=T_{u}=T$.  If
    $T_u\ne T$, then the definition of the augmentation step at $u$ implies
    that either $\CIG_{T_u}(u)$ is connected or all connected components of
    $\CIG_{T_u}(u)$ are singletons. In either case
    Lemma~\ref{lem:augmenting_color_disjoint} ensured that augmentation at
    $u$ leaves $T_{u}$ unchanged, i.e., $T_{uu}=T_u$.  By construction,
    $\CIG_{T_u}(w)$ is connected for $w\in V_{\neg T}\setminus\{u\}$ and thus,
    we have $T_{uw}=T_{u}$.  
  \end{proof}
  The tree obtained by augmenting a set of inner vertices of $(T,\sigma)$ is
  therefore independent of the order of the augmentation steps.
  \begin{definition}[Augmented tree]
    Let $(T,\sigma)$ be a leaf-colored tree. The \emph{augmented tree of 
      $(T,\sigma)$}, denoted by $(\aug(T),\sigma)$, is obtained by augmenting 
      all 
    inner vertices of $(T,\sigma)$.
    \label{def:aug-tree}
  \end{definition}
  \begin{lemma}
    For every leaf-colored tree $(T,\sigma)$ there is a unique tree
    $(\aug(T),\sigma)$ obtained from $(T,\sigma)$ by repeated application of
    augmentation steps until only trivial augmentation steps remain.  The
    tree $(\aug(T),\sigma)$ is computed by Alg.~\ref{alg:augment_extremal}.
    \label{lem:augment_extremal}
  \end{lemma}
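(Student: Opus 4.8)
The plan is to prove termination, confluence (order-independence), and algorithmic correctness separately, leaning on Lemma~\ref{lem:augment-once} and Fact~\ref{fact:aug-indep}.

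First I would settle \emph{termination}. Since $T_u$ is again a planted phylogenetic tree on the unchanged leaf set $L(T)$, it has a bounded number of inner vertices (at most $|L(T)|-1$). On the other hand, by Definition~\ref{def:augmenting} every \emph{non-trivial} augmentation step inserts at least one new inner vertex (one per non-singleton connected component of the relevant color-set intersection graph) without removing any. Hence the number of inner vertices strictly increases at each non-trivial step and is bounded above, so in \emph{any} order of augmentations only finitely many non-trivial steps can occur, and the process necessarily reaches a tree on which every further augmentation step is trivial.

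Next I would show \emph{confluence}, which yields uniqueness of the terminal tree. The key observations are that (i) re-augmenting a vertex that has already been processed, or augmenting a vertex created by a previous augmentation, is idle — this is exactly Lemma~\ref{lem:augment-once}; and (ii) augmentation steps at two distinct inner vertices commute. For (ii) note that augmenting $T$ at $u$ only inserts vertices between $u$ and a subset of $\child_T(u)$; therefore, for any other inner vertex $v$, the set $\child_{T_u}(v)$ and the subtrees $T_u(v')$ with $v'\in\child_{T_u}(v)$ — and hence $\CIG_{T_u}(v)$ — coincide with their counterparts in $T$, which is Fact~\ref{fact:aug-indep}. Thus the local edit-rewriting at $u$ and that at $v$ act on disjoint regions of the tree and can be carried out in either order with the same result. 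A standard diamond/confluence argument then shows that the terminal tree is unique and can be described explicitly: it is obtained from $(T,\sigma)$ by simultaneously replacing, at every inner vertex $u$ of $T$ for which $\CIG_T(u)$ is disconnected with at least one non-singleton component, the star of edges $uv_i$ by the corresponding grouped star (one new intermediate vertex per non-singleton connected component of $\CIG_T(u)$, taking the vertices of that component as its children). This is exactly the object denoted $(\aug(T),\sigma)$ in Definition~\ref{def:aug-tree}.

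Finally, \emph{algorithmic correctness}: Algorithm~\ref{alg:augment_extremal} iterates once over the inner vertices of $(T,\sigma)$ and performs the augmentation step of Definition~\ref{def:augmenting} at each. By the commutation and idempotence facts above, after all original inner vertices have been treated no non-trivial step remains (new vertices have connected $\CIG$ and old vertices are unaffected by augmentations elsewhere), so the output is precisely the unique terminal tree characterized above. I expect the main obstacle to be the bookkeeping inside the confluence argument: one must carefully distinguish the original inner vertices of $T$ (each of which may still require processing) from the vertices introduced along the way (which, by Lemma~\ref{lem:augment-once}, never do), and verify that Fact~\ref{fact:aug-indep} continues to apply after several augmentations have already been performed, so that the process can never ``reactivate'' a vertex whose augmentation step has become trivial.
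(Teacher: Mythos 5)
Your proposal is correct and follows essentially the same route as the paper: both rest on Lemma~\ref{lem:augment-once} (a vertex, once augmented, and any newly created vertex can only be trivially augmented) together with Observation~\ref{fact:aug-indep} (augmentation at $u$ leaves $\CIG$ at every other inner vertex unchanged), from which order-independence, uniqueness of the terminal tree, and correctness of Alg.~\ref{alg:augment_extremal} follow. Your explicit termination count via the bounded number of inner vertices is a small addition the paper leaves implicit, but it does not change the argument.
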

  \begin{proof}
    Lemma~\ref{lem:augment-once} together with Obs.~\ref{fact:aug-indep} 
    implies 
    that (i) every vertex $u$ in $T$ can be
    non-trivially augmented at most once, (ii) the newly introduced vertices
    cannot be non-trivially augmented at all, and (iii) augmentation of two
    distinct inner vertices of $T$ yields the same result irrespective of the
    order of the augmentation steps. Thus, $(\aug(T),\sigma)$ is unique. The
    correctness of Alg.~\ref{alg:augment_extremal} now follows
    immediately.  
  \end{proof}
  
  \begin{algorithm}[t]
    \caption{Augmented tree}
    \label{alg:augment_extremal}
    \SetAlgoLined
    \DontPrintSemicolon
    \KwData{Leaf-colored phylogenetic tree $(T,\sigma)$}
    \KwResult{Augmented tree  $(\aug(T),\sigma)$}
    \ForEach{$u \in V^0(T)$ in pre-order}{
      Compute $\CIG_T(u)$.\; $\mathcal{C}\leftarrow$ set of connected
      components of $\CIG_T(u)$\;
      \uIf{$|\mathcal{C}|>1$}{
        \ForEach{$C\in\mathcal{C}$ such that $|C|>1$}{ Introduce a vertex $w$
          and the edge $uw$.\label{line:new_edge_start}\;
          \ForEach{$v_i\in C$}{ Remove the edge $uv_i$.\; Add the edge
            $w v_i$.\label{line:new_edge_end}\;
          }
        }
      }
    }
  \end{algorithm}
  
  \begin{lemma}
    Alg.~\ref{alg:augment_extremal} with input $T=(V,E)$ and $\sigma$ runs in
    $O(|V|^2|\mathscr{S}|)$ time and $O(|V|^2)$ space, where
    $\mathscr{S} = \sigma(L(T))$ is the set of species under consideration.
    \label{lem:augalg}
  \end{lemma}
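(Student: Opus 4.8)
The plan is to establish the two bounds separately, the time bound by amortizing the per‑vertex cost over the tree and the space bound by adding up the three data structures the algorithm keeps. The running time of a single pass over the main loop at an inner vertex $u$ will be shown to be $O(d_u^2\,|\mathscr{S}|)$, where $d_u\coloneqq|\child_T(u)|$, after an $O(|V|\,|\mathscr{S}|)$‑time preprocessing step; summing over all inner vertices then gives the claim.

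First I would precompute, for every $v\in V(T)$, the color set $\sigma(L(T(v)))\subseteq\mathscr{S}$, stored as a characteristic bit vector of length $|\mathscr{S}|$. A single post‑order traversal suffices: a leaf $v$ gets $\{\sigma(v)\}$ and an inner vertex gets the union of its children's sets. The number of set‑union operations equals $\sum_v|\child_T(v)|=|V|-1$, each costing $O(|\mathscr{S}|)$, so the pass runs in $O(|V|\,|\mathscr{S}|)$ time; since $|\mathscr{S}|=|\sigma(L(T))|\le|L(T)|\le|V|$, storing all these sets uses $O(|V|\,|\mathscr{S}|)\subseteq O(|V|^2)$ space. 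A point I would make explicit is that these precomputed sets stay valid throughout the run: an augmentation step only inserts a vertex between $u$ and some of its children and never modifies the subtree rooted at an original vertex, so when $u$ is eventually processed both $\child_T(u)$ and all the sets $\sigma(L(T(v)))$ are exactly the precomputed ones (this is essentially Obs.~\ref{fact:aug-indep}).

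Next I would bound the per‑vertex work. Building $\CIG_T(u)$ reduces to testing, for each of the $\binom{d_u}{2}$ pairs of children of $u$, whether their stored color sets intersect; each test costs $O(|\mathscr{S}|)$, giving $O(d_u^2\,|\mathscr{S}|)$ time and $O(d_u^2)$ space for the graph. Its connected components are then found by breadth‑first search in $O(d_u^2)$ time, and the editing — introducing one vertex $w$ and redirecting the corresponding edges for each component $C$ with $|C|>1$ — costs $O(d_u)$. Hence one iteration costs $O(d_u^2\,|\mathscr{S}|)$ time, and its scratch space is reused afterwards. The step I expect to be the crux of the time analysis is the summation: using $\sum_{u\in V^0(T)}d_u\le|V|-1$ and $d_u\le|V|$,
\begin{equation*}
\sum_{u\in V^0(T)}d_u^2\ \le\ \Bigl(\max_{u\in V^0(T)}d_u\Bigr)\cdot\sum_{u\in V^0(T)}d_u\ \le\ |V|\,(|V|-1)\ =\ O(|V|^2),
\end{equation*}
so the main loop runs in $\sum_u O(d_u^2\,|\mathscr{S}|)=O(|V|^2\,|\mathscr{S}|)$ time, which dominates the preprocessing.

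For the space bound I would simply add up the three contributions: the stored color sets, $O(|V|\,|\mathscr{S}|)\subseteq O(|V|^2)$; the tree $\aug(T)$ itself, which has only $O(|V|)$ vertices since each augmentation at $u$ introduces at most $d_u/2$ new vertices and $\sum_u d_u=O(|V|)$; and the per‑iteration data structures for $\CIG_T(u)$ and its components, which need $O(\max_u d_u^2)\subseteq O(|V|^2)$ space and are released before the next iteration. Altogether this gives $O(|V|^2)$ space and $O(|V|^2\,|\mathscr{S}|)$ time, as claimed. (Correctness of the output is not at issue here, being already guaranteed by Lemma~\ref{lem:augment_extremal}.)
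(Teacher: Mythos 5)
Your proposal is correct and follows essentially the same route as the paper's proof: precompute the color sets $\sigma(L(T(v)))$ in $O(|V|\,|\mathscr{S}|)$ time, build each $\CIG_T(u)$ by pairwise color-set intersection tests, and note that finding connected components and editing is linear in the size of these graphs, whose total size is $O(|V|^2)$. Your version is merely a bit more explicit, amortizing via $\sum_u d_u^2 = O(|V|^2)$ and justifying (via Obs.~\ref{fact:aug-indep}) that the precomputed sets remain valid during augmentation, where the paper simply bounds the pairwise comparisons over all vertex pairs.
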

  \begin{proof}
    Assigning the color set $L(T(u))$ to each $u$ requires
    $O(|V| |\mathscr{S}|)$ time, where $|\mathscr{S}|<|V|$.  The total effort
    to construct all $\CIG_T(u)$ is bounded by $O(|V|^2|\mathscr{S}|)$,
    corresponding to comparing the color sets of all pairs of vertices of
    $T$. The total size of all color-set intersection graphs in $O(|V|^2)$.
    Computation of the connected components is linear in the size of the
    graph, which also bounds the editing effort for each $u$, implying the
    claim.
  \end{proof}
  
  We finally show that augmentation does not affect the underlying BMG.
  \begin{proposition}\label{prop:aug-bmg}
    \label{prop:augmenting_algo_same_BMG}
    For every leaf-colored tree $(T,\sigma)$, it holds
    $\G(T,\sigma)=\G(\aug(T),\sigma)$.
  \end{proposition}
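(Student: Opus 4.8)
The plan is to show $\G(T,\sigma) = \G(\aug(T),\sigma)$ by reducing to a single augmentation step and then iterating. First I would invoke Lemma~\ref{lem:augment_extremal}: $\aug(T)$ is obtained from $(T,\sigma)$ by a finite sequence of augmentation steps, each of which augments $(T,\sigma)$ at a single inner vertex $u$. So it suffices to prove that if $(T_u,\sigma)$ is obtained from $(T,\sigma)$ by augmenting at one inner vertex $u$, then $\G(T_u,\sigma)=\G(T,\sigma)$; the general statement then follows by an easy induction on the number of non-trivial augmentation steps (the trivial steps leave the tree unchanged and so are vacuous).

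For the single-step claim, if the augmentation at $u$ is trivial there is nothing to prove, so assume $\CIG_T(u)$ is disconnected and let $\mathcal{C}$ be its set of connected components with $|C|>1$. For each such $C$ the augmentation inserts a new vertex $w_C$ as a child of $u$ and re-hangs the vertices of $C$ below $w_C$; equivalently, $T$ is obtained from $T_u$ by contracting the inner edges $uw_C$. The key point is that each edge $uw_C$ is \emph{redundant} in $T_u$ with respect to $\G(T_u,\sigma)$. To see this I would apply Cor.~\ref{cor:edge_redundant}: by Lemma~\ref{lem:augmenting_color_disjoint}, after augmenting we have $\sigma(L(T_u(w_1)))\cap\sigma(L(T_u(w_2)))=\emptyset$ for any two distinct children $w_1,w_2\in\child_{T_u}(u)$; in particular $\sigma(L(T_u(w_C)))$ is disjoint from $\sigma(L(T_u(w')))$ for every other child $w'$ of $u$ in $T_u$. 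Hence $uw_C$ is an inner edge of $T_u$ satisfying the hypothesis of Cor.~\ref{cor:edge_redundant}, so it is redundant, i.e.\ contracting it does not change the BMG. Since the edges $\{uw_C : C\in\mathcal{C},\ |C|>1\}$ are pairwise disjoint, Lemma~\ref{lem:redundant_commutative} lets me contract all of them (in any order) without altering the BMG, and the resulting tree is exactly $(T,\sigma)$. Therefore $\G(T,\sigma)=\G(T_u,\sigma)$.

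Putting the pieces together: starting from $(T,\sigma)$, each non-trivial augmentation step produces a tree whose BMG equals that of its predecessor (applying the single-step claim, with the roles of ``$T$'' and ``$T_u$'' as above — note the claim is symmetric in that it asserts equality of the two BMGs), and the finitely many trivial steps change nothing. Hence $\G(\aug(T),\sigma)=\G(T,\sigma)$, as desired. One small technical care point to get right in the write-up is that the single-step lemma must be stated between two \emph{consecutive} trees in the augmentation sequence; since augmenting at $u$ only inserts vertices strictly between $u$ and its children, it does not interfere with color sets $L(T(v))$ for $v\preceq_T$ a child of $u$ nor with $L(T(u))$ itself (cf.\ Obs.~\ref{fact:aug-indep}), so the $\CIG$ at every not-yet-processed vertex, and hence the validity of each subsequent step, is unaffected. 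I do not anticipate a genuinely hard obstacle here: the only thing to be careful about is invoking Cor.~\ref{cor:edge_redundant} in the correct direction (the color-disjointness is a property of $T_u$, the \emph{more} resolved tree) and making sure the disjoint-contraction bookkeeping via Lemma~\ref{lem:redundant_commutative} is legitimate, which it is because the newly inserted edges are vertex-disjoint and all redundant.
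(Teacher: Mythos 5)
Your proposal is correct and follows essentially the same route as the paper's proof: show that each non-trivial augmentation step only introduces inner edges $uw$ that are redundant by Lemma~\ref{lem:augmenting_color_disjoint} together with Cor.~\ref{cor:edge_redundant}, contract them all via Lemma~\ref{lem:redundant_commutative} to recover the previous tree, and iterate over the augmentation steps using Lemma~\ref{lem:augment_extremal}. No gaps.
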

  \begin{proof}
    Let $u\in V^0(T)$ and $T_u$ be the tree after augmenting $T$ at
    vertex $u$. Put $A\coloneqq\{uw\mid w\in V_{\neg T}\}$ and note that all
    edges of $T_u$ in $A$ are inner edges. Now consider $e\in A$. Since
    $w\in V_{\neg T}$, an edit step was performed to obtain $w$ and thus,
    $|\mathcal{C}|>1$ in
    $\mathfrak{C}_T(u)$. Lemma~\ref{lem:augmenting_color_disjoint} and
    $|\mathcal{C}|>1$ imply that for any $v'\in\child_{T_u}(u)$ with $v'\ne w$
    we have $\sigma(L(T_u(v')))\cap\sigma(L(T_u(w)))=\emptyset$.  Thus, 
    Cor.~\ref{cor:edge_redundant} implies that the edge $uw$ is redundant in
    $(T_u,\sigma)$ w.r.t.\ $\G(T,\sigma)$.
    
    Denoting by $T_{u_A}$ the tree
    obtained from  $T_u$ by contraction of all edges in $A$, we obtain
    $(T,\sigma) = (T_{u_A},\sigma)$. Lemma~\ref{lem:redundant_commutative} now
    implies $\G(T_u,\sigma)=\G(T_{u_A},\sigma)=\G(T,\sigma)$ for every
    augmentation step. By Lemma~\ref{lem:augment_extremal}, we can repeat this
    argument for every augmentation in the arbitrary order in which
    $\G(\aug(T),\sigma)$ is obtained from  $\G(T,\sigma)$, and thus
    $\G(\aug(T),\sigma)=\G(T,\sigma)$.
  \end{proof}
  
  \subsection{Extremal labeling of augmented trees}
  \label{APP::ssec-sec:augmented-extremal-labeling}
  
  While the least resolved tree in general cannot support an event labeling
  that properly reflects the underlying true history of a gene family, we
  shall see here that the augmented tree $(\aug(T),\sigma)$ does feature
  sufficient resolution. To this end, we investigate the extremal
  event labeling of $(\aug(T),\sigma)$.
  \begin{lemma}
    \label{lem:aumented_tree_SPEC_DUPL}
    Let $\wt\coloneqq\tTp$ be the extremal event labeling of the augmented
    tree $(\aug(T),\sigma)$ obtained from $(T,\sigma)$ and let $u$ be some
    vertex of $\aug(T)$.  Then it holds $\wt(u)=\DUPL$ if and only if
    $\CIG_{\aug(T)}(u)$ is connected.
  \end{lemma}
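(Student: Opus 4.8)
The plan is to deduce the equivalence from one structural fact about augmented trees: for every (inner) vertex $u$ of $\aug(T)$ — these are the only vertices for which $\CIG_{\aug(T)}(u)$ is defined — the color-set intersection graph $\CIG_{\aug(T)}(u)$ is \emph{either connected or edgeless}. Granting this dichotomy, the lemma is immediate. Since $\aug(T)$ is again a phylogenetic tree (as noted after Def.~\ref{def:augmenting}), $u$ has at least two children, so $\CIG_{\aug(T)}(u)$ has at least two vertices; hence being connected forces the existence of an edge, while being edgeless forces disconnectedness. On the other hand, by Def.~\ref{def:extremal_labeling} one has $\wt(u)=\DUPL$ precisely when $u$ has two children $w_1,w_2$ with $\sigma(L(\aug(T)(w_1)))\cap\sigma(L(\aug(T)(w_2)))\neq\emptyset$, i.e.\ precisely when $\CIG_{\aug(T)}(u)$ contains an edge. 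Combining, $\wt(u)=\DUPL$ iff $\CIG_{\aug(T)}(u)$ contains an edge iff $\CIG_{\aug(T)}(u)$ is connected.

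To prove the dichotomy I would first reduce $\CIG_{\aug(T)}(u)$ to something computed directly from $T$. Every inner vertex $u$ of $\aug(T)$ is either (i) an inner vertex already present in $T$, or (ii) a vertex created during the augmentation of some inner vertex $w$ of $T$. By Lemma~\ref{lem:augment_extremal} the order of augmentation steps is irrelevant, by Observation~\ref{fact:aug-indep} an augmentation at one vertex does not change the color-set intersection graph at any other vertex, and by Lemma~\ref{lem:augment-once} newly introduced vertices as well as already-augmented vertices admit only trivial further augmentation. Augmentation never alters the leaf set of a subtree rooted at a child of $u$, nor the set $\child(u)$ once $u$ has been augmented (case (i)) or created (case (ii)). Hence in case (i), $\CIG_{\aug(T)}(u)=\CIG_{T_u}(u)$ where $T_u$ is $T$ augmented at $u$; and in case (ii), writing $C$ for the connected component of $\CIG_T(w)$ whose vertices become the children of $u$ (so $|C|>1$), one gets $\CIG_{\aug(T)}(u)=\CIG_T(w)[C]$, the subgraph of $\CIG_T(w)$ induced on $C$.

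A short case analysis then establishes the dichotomy. In case (ii), $\CIG_T(w)[C]$ is the graph of a single connected component, hence connected (and, as $|C|>1$, it has an edge, so $\wt(u)=\DUPL$). In case (i) with $\CIG_T(u)$ connected, the augmentation step at $u$ does nothing, so $\CIG_{\aug(T)}(u)=\CIG_T(u)$ is connected. In case (i) with $\CIG_T(u)$ disconnected, Lemma~\ref{lem:augmenting_color_disjoint} yields $\sigma(L(T_u(w_1)))\cap\sigma(L(T_u(w_2)))=\emptyset$ for all distinct $w_1,w_2\in\child_{T_u}(u)$, i.e.\ $\CIG_{\aug(T)}(u)=\CIG_{T_u}(u)$ is edgeless. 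These cases are exhaustive, which proves the dichotomy and hence the lemma.

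I expect the only real care to be needed in the reduction step of the second paragraph: one must check that augmentation steps carried out at vertices other than $u$ (including those performed inside the subtrees below the children of $u$) change neither $\child(u)$ nor the leaf sets of the subtrees hanging below those children, so that $\CIG_{\aug(T)}(u)$ is genuinely determined by $T$ in the stated way. This is exactly what the proof of Observation~\ref{fact:aug-indep} gives, applied iteratively along the sequence of augmentation steps (with Lemma~\ref{lem:augment-once} ensuring the sequence is finite and well-behaved); everything else follows directly from the definitions of the extremal labeling and of $\CIG$.
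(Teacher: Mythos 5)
Your proof is correct and follows essentially the same route as the paper's: the same case split between vertices already in $T$ (using Lemma~\ref{lem:augmenting_color_disjoint} when $\CIG_T(u)$ is disconnected) and vertices newly created by augmentation (whose children form a connected component, preserved via Obs.~\ref{fact:aug-indep} and Lemma~\ref{lem:augment-once}). Packaging this as a ``connected or edgeless'' dichotomy is a tidy reformulation, but the substance matches the paper's argument.
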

  \begin{proof}
    By the definitions of the extremal event labeling and
    $\CIG_{\aug(T)}(u)$, the `if'-direction is clear.  Now suppose that
    $\wt(u)=\DUPL$.  There are two possibilities:
    \par\noindent (1)  $u\in V^0(T)$. If $\CIG_T(u)$ is connected, then
    $\CIG_{\aug(T)}(u)=\CIG_T(u)$. Otherwise, 
    Lemma~\ref{lem:augmenting_color_disjoint} implies that
    $\sigma(L(\aug(T)(w_1)))\cap\sigma(L(\aug(T)(w_2))) = \emptyset$ for all
    $w_1,w_2\in \child_{\aug(T)}(u)$, thus the definition of the extremal
    event labeling implies $\wt(u)\neq\DUPL$, a contradiction.
    \par\noindent (2) $u\in V_{\neg T}$, i.e., $u$ is newly created by
    augmenting some $u'\in V^0(T)$, hence $\CIG_{T}(u)$ is connected and, by
    Obs.~\ref{fact:aug-indep} and Lemma~\ref{lem:augment-once},
    $\CIG_{\aug(T)}(u)$ is connected.
  \end{proof}
  
  For later reference, we need the following
  \begin{lemma}
    \label{lem:augmented_tree_no_adj_spec}
    Let $(\G,\sigma)$ be a BMG, $(T^*,\sigma)$ its least resolved tree, and
    $\wt\coloneqq \tTps$ the extremal event labeling of the augmented tree
    $(\aug(T^*),\sigma)$.  Then, $(\aug(T^*),\wt,\sigma)$ does not contain
    adjacent speciation vertices, i.e., if $\wt(u)=\SPEC$ for a vertex $u$ of
    $\aug(T^*)$, then $\wt(v)=\DUPL$ for any of its non-leaf children
    $v\in\child_{\aug(T^*)}(u)\setminus L(\aug(T^*))$.
  \end{lemma}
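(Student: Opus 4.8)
The plan is to read off the labels $\wt$ from the color-set intersection graphs of $\aug(T^*)$ using Lemma~\ref{lem:aumented_tree_SPEC_DUPL}, and to rule out a speciation immediately below a speciation by invoking Cor.~\ref{cor:edge_redundant} together with the fact that a least resolved tree has no redundant edges. First I would establish a structural observation: at every inner vertex $u$ of $\aug(T^*)$, the graph $\CIG_{\aug(T^*)}(u)$ is either connected or edgeless. For $u\in V^0(T^*)$ this follows from Def.~\ref{def:augmenting} and Lemma~\ref{lem:augmenting_color_disjoint} (if $\CIG_{T^*}(u)$ is connected, $u$ is left untouched; if it is disconnected, the children of $u$ in $\aug(T^*)$ become pairwise color-disjoint, so $\CIG_{\aug(T^*)}(u)$ is edgeless); and if $u$ is a vertex introduced by the augmentation, then by Def.~\ref{def:augmenting} its children are exactly the vertices of some connected component $C$ with $|C|>1$ of the color-set intersection graph of its parent in $T^*$, so $\CIG_{\aug(T^*)}(u)$ is the connected subgraph induced by $C$ and thus has an edge. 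Here I use Fact~\ref{fact:aug-indep} and Lemma~\ref{lem:augment-once} to see that augmentations at vertices $\ne u$ affect neither $\child_{\aug(T^*)}(u)$ nor the leaf sets of the relevant subtrees, so these descriptions are accurate for the fully augmented tree. By Lemma~\ref{lem:aumented_tree_SPEC_DUPL} it follows that $\wt(w)=\DUPL$ for every introduced vertex $w$, and hence that $\wt(u)=\SPEC$ forces $u\in V^0(T^*)$ with $\CIG_{\aug(T^*)}(u)$ edgeless.

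Then I would fix $u$ with $\wt(u)=\SPEC$ and a non-leaf child $v\in\child_{\aug(T^*)}(u)\setminus L(\aug(T^*))$ and argue by cases. If $v$ was introduced by the augmentation, then $\CIG_{\aug(T^*)}(v)$ is connected by the observation above, so $\wt(v)=\DUPL$ by Lemma~\ref{lem:aumented_tree_SPEC_DUPL} and we are done. If instead $v\in V^0(T^*)$, then $v$ is also a child of $u$ in $T^*$; since $v$ has remained a \emph{direct} child of $u$ in $\aug(T^*)$ rather than being attached below an introduced vertex, $v$ must be a singleton connected component of $\CIG_{T^*}(u)$, i.e.\ $\sigma(L(T^*(v)))\cap\sigma(L(T^*(v')))=\emptyset$ for every $v'\in\child_{T^*}(u)\setminus\{v\}$. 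As $v$ is an inner vertex, $uv$ is an inner edge of $T^*$, so Cor.~\ref{cor:edge_redundant} shows that $uv$ is redundant w.r.t.\ $\G(T^*,\sigma)=(\G,\sigma)$ --- contradicting the fact that $(T^*,\sigma)$ is least resolved and hence has no redundant edge. Therefore the second case cannot occur, and $\wt(v)=\DUPL$ in all cases.

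I do not expect a genuine obstacle here: the substantive input is the single implication ``least resolved $\Rightarrow$ no redundant edge'' fed into Cor.~\ref{cor:edge_redundant}, and everything else is bookkeeping around Def.~\ref{def:augmenting}. The only points that need a little care are the equivalence ``$v$ stays a direct child of $u$ in $\aug(T^*)$'' $\Leftrightarrow$ ``$v$ is a singleton component of $\CIG_{T^*}(u)$'', which is immediate from the definition of the augmentation step, and the claim that augmenting other vertices does not disturb $\CIG_{\aug(T^*)}(u)$, which is Fact~\ref{fact:aug-indep} together with Lemma~\ref{lem:augment-once} (augmentation only inserts vertices strictly between a vertex and its children).
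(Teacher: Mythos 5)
Your proof is correct and rests on the same two pillars as the paper's own argument: Cor.~\ref{cor:edge_redundant} combined with the least-resolvedness of $(T^*,\sigma)$, and the fact (via Def.~\ref{def:augmenting}, Fact~\ref{fact:aug-indep}, Lemma~\ref{lem:augment-once} and Lemma~\ref{lem:aumented_tree_SPEC_DUPL}) that augmentation-created vertices have connected color-set intersection graphs and hence carry the label $\DUPL$ under $\wt$; the paper merely packages this as a contradiction showing an edge $uv$ of $\aug(T^*)$ below a speciation is redundant and therefore not an edge of $T^*$, whereas you locate the redundancy inside $T^*$ itself, which is an equivalent reorganization. One caution: the ``equivalence'' you call immediate ($v$ stays a direct child of $u$ in $\aug(T^*)$ iff $v$ is a singleton component of $\CIG_{T^*}(u)$) is false when $\CIG_{T^*}(u)$ is connected (then no edit is performed and every child stays put), so it may only be invoked after $\wt(u)=\SPEC$ has excluded that case --- which your step~1 dichotomy does provide, so the argument stands as written in context.
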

  \begin{proof}
    Set $\aug\coloneqq\aug(T^*)$ and note that, by Prop.~\ref{prop:aug-bmg},
    $(\aug,\sigma)$ explains $(\G,\sigma)$.  Assume, for contradiction, that
    there is an inner edge $uv$ in $\aug$ with $v\prec_{\aug} u$ such that
    $\wt(u)=\wt(v)=\SPEC$.  By the definition of the extremal event labeling
    $\wt$, we have $\sigma(L(\aug(v)))\cap\sigma(L(\aug(v')))=\emptyset$ for
    any $v'\in\child_{\aug}(u)\setminus\{v\}$.  Together with
    Cor.~\ref{cor:edge_redundant} this implies that $uv$ is redundant for
    $(\G,\sigma)$, and hence, not an edge in the least resolved tree
    $(T^*,\sigma)$.  Now consider the augmentation in which the edge $uv$,
    and thus vertex $v$ was created; resulting in a tree $(T',\sigma)$.  By
    the definition of augmenting (Def.~\ref{def:augmenting}), it clearly
    holds that $\CIG_{T'}(v)$ is connected.  By Lemma~\ref{lem:augment-once},
    the edges adjacent to $v$ do not change in any subsequent augmentation.
    Thus $\CIG_{\aug}(v)$ must be connected as well.
    Lemma~\ref{lem:aumented_tree_SPEC_DUPL} now implies that $\wt(v)=\DUPL$;
    a contradiction.  
  \end{proof}
  
  \begin{lemma}
    \label{lem:aumented_subgraph_RBMG}
    Let $(\G,\sigma)$ be a BMG and $(T^*,\sigma)$ its unique least resolved
    tree.  Moreover, let $\wt\coloneqq \tTps$ be the extremal event labeling
    of the augmented tree $(\aug(T^*),\sigma)$.  Then,
    $\Theta(\aug(T^*),\wt) \subseteq \G$.
  \end{lemma}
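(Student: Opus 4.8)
The plan is to unwind the definition of the orthology graph $\Theta$ together with that of the extremal event labeling, and then to invoke Lemma~\ref{lem:edge-xy-lca}. First I would set $\aug\coloneqq\aug(T^*)$ and record, by Prop.~\ref{prop:aug-bmg}, that the augmented tree still explains the BMG, i.e.\ $\G(\aug,\sigma)=\G(T^*,\sigma)=(\G,\sigma)$. This is the one point that requires care: the labeling $\wt$ is defined on $\aug$, so one must know that $(\aug,\sigma)$ is itself a tree explaining $(\G,\sigma)$ before the best-match characterization of Lemma~\ref{lem:edge-xy-lca} may be applied to it. Note also that $L(\aug)=L(T^*)=V(\G)$, so $\Theta(\aug,\wt)$ and $\G$ share the same vertex set.

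Next I would take an arbitrary edge $xy$ of $\Theta(\aug,\wt)$ and put $u\coloneqq\lca_{\aug}(x,y)$, so that $\wt(u)=\SPEC$ by Def.~\ref{def:ortho-graph}. Since $x\neq y$ and $u$ is their last common ancestor, $u$ is an inner vertex of $\aug$ and there are two distinct children $v_x,v_y\in\child_{\aug}(u)$ with $x\preceq_{\aug}v_x$ and $y\preceq_{\aug}v_y$. By Def.~\ref{def:extremal_labeling}, $\wt(u)=\SPEC$ forces $\sigma(L(\aug(v_1)))\cap\sigma(L(\aug(v_2)))=\emptyset$ for every pair of distinct children $v_1,v_2$ of $u$ (otherwise $u$ would carry the label $\DUPL$); equivalently $\CIG_{\aug}(u)$ is edgeless, which is consistent with Lemma~\ref{lem:aumented_tree_SPEC_DUPL}.

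Applying this color-disjointness to the pair $v_x,v_y$ gives $\sigma(L(\aug(v_x)))\cap\sigma(L(\aug(v_y)))=\emptyset$. Since $\sigma(x)\in\sigma(L(\aug(v_x)))$ and $\sigma(y)\in\sigma(L(\aug(v_y)))$, this immediately yields $\sigma(x)\neq\sigma(y)$, $\sigma(x)\notin\sigma(L(\aug(v_y)))$, and $\sigma(y)\notin\sigma(L(\aug(v_x)))$. These are precisely the hypotheses of Lemma~\ref{lem:edge-xy-lca} for the tree $(\aug,\sigma)$, so that lemma gives that $xy$ is an edge in $\G(\aug,\sigma)=(\G,\sigma)$. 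As $xy$ was an arbitrary edge of $\Theta(\aug,\wt)$, we obtain $\Theta(\aug(T^*),\wt)\subseteq\G$.

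I do not expect any real obstacle here; the whole content is a translation between the condition that $\wt(u)=\SPEC$ and the condition that the species of $x$ (resp.\ of $y$) is absent from the opposite child subtree, which is exactly what Lemma~\ref{lem:edge-xy-lca} encodes. The only things one must not forget are to cite Prop.~\ref{prop:aug-bmg}, so that Lemma~\ref{lem:edge-xy-lca} legitimately applies to $(\aug(T^*),\sigma)$, and to use that an edge of $\Theta$ joins two distinct leaves, which guarantees $v_x\neq v_y$ and hence lets the color-disjointness at $u$ be exploited.
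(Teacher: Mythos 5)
Your proposal is correct and follows essentially the same route as the paper's proof: invoke Prop.~\ref{prop:aug-bmg} so that $(\aug(T^*),\sigma)$ explains $(\G,\sigma)$, use $\wt(u)=\SPEC$ to get color-disjointness of the child subtrees at $u=\lca_{\aug(T^*)}(x,y)$, and conclude that $x,y$ are reciprocal best matches. The only cosmetic difference is that you cite Lemma~\ref{lem:edge-xy-lca} explicitly for the last step, where the paper argues directly from the definition of best matches.
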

  \begin{proof}
    Since $(T^*,\sigma)$ explains $(\G,\sigma)$, we have
    $(\G,\sigma) = \G(T^*,\sigma)$.  By 
    Prop.~\ref{prop:augmenting_algo_same_BMG}, we have
    $\G(T^*,\sigma)=\G(\aug(T^*),\sigma)$.  Let $xy$ be an edge in
    $\Theta(\aug(T^*),\wt)$.  By definition, $\wt(\lca_{\aug(T^*)}(u))=\SPEC$
    where $u\coloneqq \lca_{\aug(T^*)}(x,y)$.  By definition of the
    extremal event labeling,
    $\sigma(L(\aug(T^*)(v_1)))\cap\sigma(L(\aug(T^*)(v_2)))=\emptyset$ for
    all two distinct vertices $v_1,v_2\in\child_{\aug(T^*)}(u)$.  The latter
    is true, in particular, for the two children
    $v_x,v_y\in\child_{\aug(T^*)}(u)$ with $x\preceq_{\aug(T^*)} v_x$ and
    $y\preceq_{\aug(T^*)} v_y$.  Therefore,
    $\sigma(x)\notin\sigma(L(\aug(T^*)(v_y)))$ and
    $\sigma(y)\notin\sigma(L(\aug(T^*)(v_x)))$. We conclude that $x$ and $y$
    are reciprocal best matches in $\aug(T^*)$. Finally,
    $(\G,\sigma) =\G(\aug(T^*),\sigma)$ implies that $xy$ is an edge in $\G$.
  \end{proof}
  
  Now we are in the position to prove the main results of this contribution. 
  \begin{theorem}
    Let $(\G,\sigma)$ be a BMG, $(T^*,\sigma)$ its unique least resolved
    tree, and $\wt\coloneqq \tTps$ the extremal event labeling of the
    augmented tree $(\aug(T^*),\sigma)$. Then
    $(\Theta(\aug(T^*),\wt),\sigma) = \NH(\G,\sigma)$.
    \label{thm:MAIN} 
  \end{theorem}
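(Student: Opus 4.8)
The plan is to establish the two inclusions separately, in both cases using that the augmented tree $(\aug(T^*),\sigma)$ again explains $(\G,\sigma)$ by Prop.~\ref{prop:augmenting_algo_same_BMG}, so that the results of the previous sections apply verbatim to it; note also that $\wt=\tTps$ is by construction the extremal event labeling of $\aug(T^*)$, and that both graphs in the claimed identity have vertex set $V(\G)=L(\aug(T^*))$ and coloring $\sigma$, so it suffices to compare edge sets.

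For the inclusion $\Theta(\aug(T^*),\wt)\subseteq\NH(\G,\sigma)$ I would simply apply Cor.~\ref{cor:NH} to the leaf-colored tree $(\aug(T^*),\sigma)$ (a reconciliation map exists by Lemma~\ref{lem:reconAll}): this yields $\Theta(\aug(T^*),\wt)\subseteq\NH(\G(\aug(T^*),\sigma))=\NH(\G,\sigma)$. Conceptually this is the contrapositive of Lemma~\ref{lem:edited_RBMG_color_intersection}: if $xy\in E(\Theta(\aug(T^*),\wt))$ then $\wt(\lca_{\aug(T^*)}(x,y))=\SPEC$, so by the definition of the extremal labeling the subtrees below that vertex are pairwise color-disjoint, hence $xy$ cannot be a hug-edge; together with $\Theta(\aug(T^*),\wt)\subseteq\G$ from Lemma~\ref{lem:aumented_subgraph_RBMG}, this places $xy$ in $\NH(\G,\sigma)$.

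For the reverse inclusion, let $xy\in E(\NH(\G,\sigma))$; thus $xy$ is an edge of the symmetric part of $\G(\aug(T^*),\sigma)$ that is \emph{not} a hug-edge. Set $u\coloneqq\lca_{\aug(T^*)}(x,y)$, an inner vertex, and let $v_x,v_y\in\child_{\aug(T^*)}(u)$ be the (distinct) children with $x\preceq_{\aug(T^*)}v_x$ and $y\preceq_{\aug(T^*)}v_y$. Suppose, for contradiction, that $\wt(u)=\DUPL$. By Lemma~\ref{lem:aumented_tree_SPEC_DUPL} this means $\CIG_{\aug(T^*)}(u)$ is connected; since it has at least the two vertices $v_x\ne v_y$, all of its vertices --- in particular $v_x$ and $v_y$ --- lie in a single connected component. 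Then Thm.~\ref{thm:good_ugly_or_hourglass} forces $xy$ to be a hug-edge, contradicting $xy\in E(\NH(\G,\sigma))$. Hence $\wt(u)=\SPEC$, and by Def.~\ref{def:ortho-graph} we conclude $xy\in E(\Theta(\aug(T^*),\wt))$.

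The substantive content has already been absorbed into the preparatory lemmas, so I do not expect a genuine obstacle: the crucial ingredients are Lemma~\ref{lem:aumented_tree_SPEC_DUPL} (the $\wt$-duplications of $\aug(T^*)$ are exactly the vertices with connected color-set intersection graph) and Thm.~\ref{thm:good_ugly_or_hourglass} (a common connected component of $\CIG$ forces a hug-edge). The only points requiring care are bookkeeping: checking that $\aug(T^*)$ is indeed an explaining tree so those results apply, that $\lca_{\aug(T^*)}(x,y)$ is an inner vertex with $|\child_{\aug(T^*)}(u)|\ge 2$, and that ``connected with $\ge 2$ vertices'' really does put $v_x$ and $v_y$ into the same component --- all of which are immediate.
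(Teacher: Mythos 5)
Your proposal is correct and follows essentially the same route as the paper's proof: the inclusion $E(\NH(\G,\sigma))\subseteq E(\Theta(\aug(T^*),\wt))$ via Lemma~\ref{lem:aumented_tree_SPEC_DUPL} and Thm.~\ref{thm:good_ugly_or_hourglass}, and the converse via (the contrapositive of) Lemma~\ref{lem:edited_RBMG_color_intersection} together with Prop.~\ref{prop:aug-bmg}, which is exactly what Cor.~\ref{cor:NH} packages. The only cosmetic difference is that the paper phrases the argument as equality of the two deleted-edge sets $F=F'$ rather than as two graph inclusions.
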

  \begin{proof}
    Let $(G,\sigma)$ be the symmetric part of $(\G=(V,E),\sigma)$.  For
    simplicity, we write $G_{\Theta} \coloneqq \Theta(\aug(T^*),\wt)$ and
    $G_{\NH} \coloneqq (V, E(\NH(\G,\sigma)))$.  Recall that, by definition,
    $G_{\NH}\subseteq G$ and, by Lemma~\ref{lem:aumented_subgraph_RBMG},
    $G_{\Theta}\subseteq \G$. Finally, as $G$ contains only edges of $\G$, we
    have $G_{\Theta}\subseteq G$.  Let
    $F \coloneqq E(G) \setminus E(G_{\NH}) $ be the set of all edges of $G$
    that are hug-edges, and let $F' \coloneqq E(G)\setminus E(G_{\Theta})$ be
    the set of all edges in $G$ that do not form orthologous pairs. Since
    $G_{\NH},G_{\Theta}\subseteq G$ it suffices to verify that $F=F'$ in
    order to show that $(G_{\Theta},\sigma)=(G_{\NH},\sigma)$.
    
    Assume $e=xy\in F'$. Hence, $xy\notin E(G_{\Theta})$ and therefore,
    $\wt(u)=\DUPL$ where $u\coloneqq\lca_{\aug(T^*)}(x,y)$.  By
    Lemma~\ref{lem:aumented_tree_SPEC_DUPL}, $\CIG_{\aug(T^*)}(u)$ has
    exactly one connected component.  This together with 
    Thm.~\ref{thm:good_ugly_or_hourglass} implies that $xy$ is a hug-edge and
    thus, $xy\in F$, and hence $F'\subseteq F$.
    
    Assume $e=xy\in F$ is a hug-edge.  Assume, for contradiction, that
    $e\notin F'$ and thus, $\wt(u)=\SPEC$ where
    $u\coloneqq\lca_{\aug(T^*)}(x,y)$. By definition of the extremal
    event labeling, it must therefore hold that
    $\sigma(L(\aug(T^*)(v_1)))\cap\sigma(L(\aug(T^*)(v_2)))=\emptyset$ for
    any two distinct vertices $v_1,v_2\in\child_{\aug(T^*)}(u)$. By 
    Prop.~\ref{prop:aug-bmg}, $(\aug(T^*),\sigma)$ explains $(\G,\sigma)$. This
    together with Lemma~\ref{lem:edited_RBMG_color_intersection} implies that
    there are two distinct vertices $v_1,v_2\in\child_{\aug(T^*)}(u)$ such
    that
    $\sigma(L(\aug(T^*)(v_1)))\cap\sigma(L(\aug(T^*)(v_2)))\neq \emptyset$; a
    contradiction. Therefore, $e\in F'$, and hence $F\subseteq F'$.
  \end{proof}
  
  \begin{theorem}
    An edge $xy$ in a BMG $(\G,\sigma)$ is \ufp if and only if $xy$ is a
    hug-edge of $(\G,\sigma)$.
    \label{thm:ufp-iff-hug}
  \end{theorem}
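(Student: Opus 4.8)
The plan is to prove the two implications separately, since both directions are essentially corollaries of machinery already in place. For the ``if'' direction there is nothing to do: if $xy$ is a hug-edge of $(\G,\sigma)$, then Thm.~\ref{thm:good_ugly_or_hourglass} asserts directly that $xy$ is \ufp. So the whole argument reduces to the ``only if'' direction.

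For the converse I would argue by contraposition: assume that $xy$ is an edge in $\G$ that is \emph{not} a hug-edge, and deduce that $xy$ is not \ufp. Let $G$ denote the symmetric part of $\G$ and let $F$ be the set of hug-edges of $(\G,\sigma)$. Since $xy$ is an edge in $\G$, it belongs to $E(G)$; and since $xy\notin F$, Def.~\ref{def:non-hug-graph} gives $xy\in E(G)\setminus F=E(\NH(\G,\sigma))$. Now I would invoke Thm.~\ref{thm:MAIN}: for the unique least resolved tree $(T^*,\sigma)$ of $(\G,\sigma)$ and the extremal event labeling $\wt\coloneqq\tTps$ of the augmented tree $(\aug(T^*),\sigma)$, we have $(\Theta(\aug(T^*),\wt),\sigma)=\NH(\G,\sigma)$. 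Hence $xy$ is an edge of $\Theta(\aug(T^*),\wt)$, which by Def.~\ref{def:ortho-graph} means exactly that $\wt(\lca_{\aug(T^*)}(x,y))=\SPEC$.

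The last step is to exhibit a single explaining tree for which $xy$ fails to be $(T,\sigma)$-\fp. By Thm.~\ref{thm:LRT} the tree $(T^*,\sigma)$ explains $(\G,\sigma)$, and by Prop.~\ref{prop:aug-bmg} so does $(\aug(T^*),\sigma)$. Its extremal labeling is precisely $\wt=\tTps$, and we have just shown $\wt(\lca_{\aug(T^*)}(x,y))=\SPEC\ne\DUPL$. Thus statement~(3) of Lemma~\ref{lem:T-fp-no-mu} fails for the tree $(\aug(T^*),\sigma)$, so by that lemma statement~(1) fails as well, i.e.\ $xy$ is not $(\aug(T^*),\sigma)$-\fp. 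Since being \ufp (Def.~\ref{def:ufp}) requires $xy$ to be $(T,\sigma)$-\fp for \emph{every} tree explaining $(\G,\sigma)$, we conclude that $xy$ is not \ufp, which completes the contrapositive and hence the theorem.

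I do not expect a genuine obstacle at this stage: all the real content was absorbed into Thm.~\ref{thm:good_ugly_or_hourglass} (translating hug-edges into overlapping color sets at the relevant last common ancestor) and Thm.~\ref{thm:MAIN} (showing that the augmented tree $\aug(T^*)$ is ``resolved enough'' to witness every remaining speciation). The only points requiring a little care are (i) noting that a non-hug edge of $\G$ genuinely survives into $\NH(\G,\sigma)$, which is immediate once one observes $xy\in E(G)$; and (ii) recognizing that $\wt$ is \emph{by definition} the extremal labeling of the explaining tree $\aug(T^*)$, so that Lemma~\ref{lem:T-fp-no-mu}(3) applies verbatim to that particular tree, giving a concrete reconciliation $\mu$ with $t_\mu(\lca_{\aug(T^*)}(x,y))=\SPEC$ and thereby directly contradicting \ufp-ness.
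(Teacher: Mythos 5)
Your proposal is correct and follows essentially the same route as the paper: the ``if'' direction is exactly Thm.~\ref{thm:good_ugly_or_hourglass}, and the ``only if'' direction argues, just as the paper does, that a non-hug edge lies in $\NH(\G,\sigma)=(\Theta(\aug(T^*),\wt),\sigma)$ (Thm.~\ref{thm:MAIN}), hence $\wt(\lca_{\aug(T^*)}(x,y))=\SPEC$, so by Lemma~\ref{lem:T-fp-no-mu} the edge is not $(\aug(T^*),\sigma)$-\fp and therefore not \ufp. Your explicit appeal to Prop.~\ref{prop:aug-bmg} to ensure $(\aug(T^*),\sigma)$ explains $(\G,\sigma)$ is a detail the paper leaves implicit, but there is no substantive difference.
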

  \begin{proof}
    Let $(\G,\sigma)$ be a BMG, $(T^*,\sigma)$ its unique least resolved
    tree, and $\wt\coloneqq \tTps$ the extremal event labeling of the
    augmented tree $(\aug(T^*),\sigma)$.  As shown in the proof of
    Thm.~\ref{thm:MAIN}, every edge $xy$ of of the symmetric part $G$ that is
    not a hug-edge satisfies $xy\in E(G_{\Theta})$ and therefore
    $\wt(u)=\SPEC$, where $u\coloneqq\lca_{\aug(T^*)}(x,y)$.  
    Lemma~\ref{lem:T-fp-no-mu} implies that $e$ is not $(\aug(T^*),\sigma)$-\fp 
    and thus, in particular, not \ufp.  That is, all edges in
    $(G_{\Theta},\sigma)=(G_{\NH},\sigma)$ are non-\ufp edges. Moreover,
    Thm.~\ref{thm:good_ugly_or_hourglass} implies that all hug-edges in
    $E(G) \setminus E(G_{\NH})$ are \ufp.  Since $(G_{\NH},\sigma)$ does not
    contain \ufp edges, all \ufp edges must also be hug-edges, which
    completes the proof.
  \end{proof}
  
  We next show that $\NH(\G,\sigma)$ can be computed in polynomial time. In
  fact, the effort is dominated by computing the least resolved tree
  $(T^*,\sigma)$ for a given BMG.
  \begin{theorem}
    For a given BMG $(\G,\sigma)$, the set of all \ufp edges can be computed
    in $O(|L|^3 |\mathscr{S}|)$ time, where $L=V(\G)$ and
    $\mathscr{S} = \sigma(L(T))$ is the set of species under consideration.
    \label{thm:time}
  \end{theorem}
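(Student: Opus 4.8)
The plan is to read off an algorithm from Thms.~\ref{thm:MAIN} and~\ref{thm:ufp-iff-hug} and then charge its cost step by step. Write $(G,\sigma)$ for the symmetric part of $(\G,\sigma)$. By Def.~\ref{def:non-hug-graph} together with Thm.~\ref{thm:ufp-iff-hug}, the set of \ufp edges is exactly $E(G)\setminus E(\NH(\G,\sigma))$, and by Thm.~\ref{thm:MAIN} we have $\NH(\G,\sigma)=(\Theta(\aug(T^*),\wt),\sigma)$, where $(T^*,\sigma)$ is the unique least resolved tree of $(\G,\sigma)$ and $\wt=\tTps$. Hence the following procedure suffices: (1)~construct $(T^*,\sigma)$ from $(\G,\sigma)$; (2)~construct $(\aug(T^*),\sigma)$ using Alg.~\ref{alg:augment_extremal}; (3)~compute the extremal event labeling $\wt$ on $\aug(T^*)$; (4)~build $\Theta(\aug(T^*),\wt)$; and (5)~output $E(G)\setminus E(\Theta(\aug(T^*),\wt))$. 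In particular, no explicit enumeration of good, bad, or ugly quartets, of hourglasses, or of hourglass chains is needed, which is precisely what makes the procedure efficient.

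It remains to bound the five steps, the claim being that Step~(1) dominates. For Step~(1) I would invoke the known construction of the least resolved tree (Thm.~\ref{thm:LRT}; see \cite{Geiss:19a} and the refined analyses in \cite{BMG-corrigendum,Schaller:20d}), which runs in $O(|L|^3|\mathscr{S}|)$ time. For Step~(2), note first that $T^*$ is a planted phylogenetic tree on at most $|L|$ leaves, so $|V(T^*)|=O(|L|)$; moreover, augmenting an inner vertex $u$ of $T^*$ creates one new vertex per connected component of $\CIG_{T^*}(u)$ of size $>1$, hence at most $\lfloor|\child_{T^*}(u)|/2\rfloor$ new vertices, and $\sum_{u}|\child_{T^*}(u)|=|E(T^*)|=O(|L|)$, so $|V(\aug(T^*))|=O(|L|)$ as well. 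Lemma~\ref{lem:augalg} then bounds Step~(2) by $O(|V(T^*)|^2|\mathscr{S}|)=O(|L|^2|\mathscr{S}|)$. For Step~(3), a single bottom-up sweep assigns each vertex $v$ of $\aug(T^*)$ its color set $\sigma(L(\aug(T^*)(v)))$ in $O(|L||\mathscr{S}|)$ total time; testing, for each inner vertex $u$, whether two distinct children of $u$ share a color then costs $O(|L|^2|\mathscr{S}|)$ overall (equivalently, one reuses the color-set intersection graphs already computed in Step~(2) and applies Lemma~\ref{lem:aumented_tree_SPEC_DUPL}, so that $\wt(u)=\DUPL$ iff $\CIG_{\aug(T^*)}(u)$ is connected).

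For Step~(4), after an $O(|L|)$-time LCA preprocessing of $\aug(T^*)$ each query $\lca_{\aug(T^*)}(x,y)$ is answered in $O(1)$, so scanning all ordered pairs $x,y\in L$ and recording $xy$ whenever $\wt(\lca_{\aug(T^*)}(x,y))=\SPEC$ takes $O(|L|^2)$ time; by Lemma~\ref{lem:aumented_subgraph_RBMG} the graph so obtained is a subgraph of $(G,\sigma)$, so no extra consistency check against $(\G,\sigma)$ is needed. Step~(5), together with extracting $E(G)$ from $E(\G)$, is just set manipulation on edge sets of size $O(|L|^2)$ and thus $O(|L|^2)$. Summing, the total is $O(|L|^3|\mathscr{S}|+|L|^2|\mathscr{S}|+|L|^2)=O(|L|^3|\mathscr{S}|)$, as claimed. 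The only points that are not routine bookkeeping are (a)~correctness, which is delegated wholesale to Thms.~\ref{thm:MAIN} and~\ref{thm:ufp-iff-hug}, and (b)~the size bound $|V(\aug(T^*))|=O(|L|)$; the genuine obstacle has already been overcome in the form of Thm.~\ref{thm:MAIN}, and I expect no further difficulty in assembling the timing argument.
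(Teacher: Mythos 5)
Your proposal is correct and follows essentially the same route as the paper: reduce the problem to computing $\NH(\G,\sigma)=(\Theta(\aug(T^*),\wt),\sigma)$ via Thms.~\ref{thm:MAIN} and~\ref{thm:ufp-iff-hug}, then charge $O(|L|^3|\mathscr{S}|)$ to the LRT construction (Thm.~\ref{thm:LRT}), $O(|L|^2|\mathscr{S}|)$ to the augmentation (Lemma~\ref{lem:augalg}), and lower-order terms to the extremal labeling, the $\lca$-based construction of the orthology graph, and the final edge-set difference. Your explicit check that $|V(\aug(T^*))|=O(|L|)$ is a small detail the paper leaves implicit, but it does not change the argument.
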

  \begin{proof}
    Given a BMG $(\G,\sigma)$, its least resolved tree $(T^*,\sigma)$ can be
    computed in $O(|L|^3 |\mathscr{S}|)$ time (cf.\ Thm.~\ref{thm:LRT} and
    \cite[Sec.~5]{Geiss:19a}).  The augmented tree $(\aug(T^*),\sigma)$ can
    be obtained from $(T^*,\sigma)$ in $O(|L|^2 |\mathscr{S}|)$ time
    according to Lemma~\ref{lem:augalg}. The extremal event labeling $\wt$
    can be obtained from the connectivity information on the
    $\CIG_{\aug(T^*)}(u)$ in linear time.  Computing
    $(\Theta(\aug(T^*),\wt),\sigma) = \NH(\G,\sigma)$ then only requires
    evaluation of $\lca_{\aug(T^*)}(x,y)$, which can be achieved in
    polynomial time in $O(|L|^2)$ as described in \cite[Sec.~5]{Geiss:19a}).
  \end{proof}
  
  \subsection{Additional unidentified false-positives}
  \label{ssec:APP:further-fp}
  
  For an event-labeled, leaf-colored tree $(T,t,\sigma)$, we consider the
  triple set
  \begin{equation}
  \begin{split}
  \mathfrak{S}(T,t,\sigma) = \{\sigma(a)\sigma(b)|\sigma(c) \colon
  & ab|c\le T;\;
  t(\lca_T(a,b,c))=\SPEC; \\
  &\sigma(a),\sigma(b),\sigma(c) \text{ pairwise distinct} \}.
  \end{split}
  \end{equation}
  Moreover, we will need the following characterization of biologically 
  plausible event-labeled gene trees:
  \begin{theorem}{\cite{HernandezRosales:12a,Hellmuth:17}}
    There is a species tree $S$ together with a reconciliation map $\mu$ from
    $(T,t,\sigma)$ to $S$ such that $t_{\mu}=t$ if and only if
    $\mathfrak{S} (T,t,\sigma)$ is compatible. In this case, every species
    tree $S$ that displays $\mathfrak{S}(T,t,\sigma)$ can be reconciled with
    $(T,t,\sigma)$.  Moreover, there is a polynomial-time algorithm that
    determines whether a species tree for $(T,t, \sigma)$ exists, and if so,
    returns a species tree $S$ together with a reconciliation map
    $\mu:T\to S$.
    \label{thm:SpeciesTriplets}  
  \end{theorem}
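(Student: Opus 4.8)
The plan is to prove the two implications of the equivalence separately and then record the algorithmic part; this is the reconciliability criterion of \citet{HernandezRosales:12a} and \citet{Hellmuth:17}, and I would follow their line of argument. Throughout I would use that distinct children of a speciation vertex have color-disjoint subtrees (notably for $(\aug(T^*),\wt,\sigma)$, where this holds by construction, cf.\ Lemma~\ref{lem:augmenting_color_disjoint}); this is in fact a tacit hypothesis of the statement, since otherwise the images under $\mu$ of two children of a speciation vertex would be forced to coincide, violating (R3.ii).

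For the \emph{forward direction}, suppose a species tree $S$ and a reconciliation map $\mu$ from $(T,t,\sigma)$ to $S$ with $t_\mu=t$ exist, and take any triple $\sigma(a)\sigma(b)|\sigma(c)\in\mathfrak{S}(T,t,\sigma)$. By definition $v:=\lca_T(a,b,c)$ has $t(v)=\SPEC$, the three colors are pairwise distinct, and $ab|c\le T$, so $a,b\in L(T(v_1))$ and $c\in L(T(v_2))$ for two distinct children $v_1,v_2$ of $v$. Then $t_\mu(v)=\SPEC$ gives $\mu(v)\in V^0(S)$, while (R1) and (R2) give $\sigma(a),\sigma(b)\preceq_S\mu(v_1)$ and $\sigma(c)\preceq_S\mu(v_2)$, and (R3.ii) makes $\mu(v_1),\mu(v_2)$ incomparable in $S$. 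From this one deduces $\lca_S(\sigma(a),\sigma(c))=\lca_S(\sigma(b),\sigma(c))=\lca_S(\mu(v_1),\mu(v_2))$: any common ancestor of $\sigma(a)$ and $\sigma(c)$ is comparable to each of $\mu(v_1)$ and $\mu(v_2)$ (being an ancestor of $\sigma(a)$, resp.\ of $\sigma(c)$), and it cannot lie strictly below either without making $\mu(v_1)$ and $\mu(v_2)$ comparable. Since $\mu(v_1)\prec_S\lca_S(\mu(v_1),\mu(v_2))$ and $\lca_S(\sigma(a),\sigma(b))\preceq_S\mu(v_1)$, the tree $S$ displays $\sigma(a)\sigma(b)|\sigma(c)$. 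Hence $S$ witnesses that $\mathfrak{S}(T,t,\sigma)$ is compatible.

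For the \emph{backward direction}, assume $\mathfrak{S}:=\mathfrak{S}(T,t,\sigma)$ is compatible. I would run the BUILD algorithm \cite{Semple:03} on $\mathfrak{S}$ to obtain in polynomial time a species tree $S$ with $L(S)=\sigma(L(T))$ displaying $\mathfrak{S}$ (colors not occurring in any triple of $\mathfrak{S}$ are attached arbitrarily), and then construct $\mu\colon V(T)\to V(S)\cup E(S)$ top-down: set $\mu(0_T):=0_S$, $\mu(x):=\sigma(x)$ for $x\in L(T)$, and for an inner vertex $u$ with $M(u):=\lca_S(\sigma(L(T(u))))$ put $\mu(u):=M(u)$ if $t(u)=\SPEC$ and $\mu(u):=$ the edge of $S$ from $M(u)$ to its parent if $t(u)=\DUPL$; if this violates (R2) against the already-fixed image of $\parent_T(u)$, move $\mu(u)$ up to the unique edge of $S$ just below $\mu(\parent_T(u))$ on the path from $M(u)$ to $\rho_S$ (the standard device by which a chain of duplications lands on a common edge). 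Axioms (R0) and (R1) hold by construction, and (R2) by the correction step.

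The main obstacle is verifying (R3) at a vertex $u$ with $t(u)=\SPEC$, which is the only place where compatibility is genuinely used. The key lemma to prove is that, because $S$ displays $\mathfrak{S}$, for distinct children $v_i,v_j$ of such a $u$ the subtrees $S(M(v_i))$ and $S(M(v_j))$ are leaf-disjoint: a common leaf, combined with colors chosen below $v_i$, below $v_j$, and below a third child (three distinct colors are available because the child color sets are pairwise disjoint), yields a triple of $\mathfrak{S}$ that $S$ would fail to display, a contradiction; the argument requires a case split according to whether the child subtrees are monochromatic. Leaf-disjointness forces $M(v_i),M(v_j)$ — and hence $\mu(v_i),\mu(v_j)$, each lying on or below the respective $M$ — to be pairwise incomparable, which is (R3.ii), and it identifies $M(u)$ as the vertex at which these subtrees split, so $M(u)=\lca_S(\mu(v_i),\mu(v_j))$ for every pair, which is (R3.i); moreover $M(u)\in V^0(S)$ since $u$ has at least two children with disjoint non-empty color sets. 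This produces a reconciliation map with $t_\mu=t$. The algorithmic claim then follows: BUILD decides compatibility of $\mathfrak{S}$ (of size $O(|L(T)|^3)$) and, in the positive case, returns $S$ in polynomial time, and the construction and verification of $\mu$ above are polynomial in $|V(T)|+|V(S)|$.
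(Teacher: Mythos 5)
You should first note that the paper does not prove Thm.~\ref{thm:SpeciesTriplets} at all: it is imported verbatim from \citet{HernandezRosales:12a} and \citet{Hellmuth:17} and used as a black box (e.g.\ in Lemma~\ref{lem:reconc-aug-all}), so there is no in-paper proof to compare against. Judged on its own, your skeleton matches the cited literature: the forward direction (reading off displayed species triples from \AX{(R1)}, \AX{(R2)}, \AX{(R3)}) is correct as you argue it, and the backward direction via \texttt{BUILD} plus the map $u\mapsto \lca_S(\sigma(L(T(u))))$ for speciations and the edge above it for duplications is the standard construction; your observation that color-disjointness of the child subtrees of speciation vertices is a tacit standing hypothesis is also correct and relevant (it holds for $(\aug(T^*),\wt,\sigma)$, the case the paper actually needs).

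There is, however, a genuine gap in the one step where compatibility is actually used. Your key lemma (leaf-disjointness, equivalently incomparability, of $S(M(v_i))$ and $S(M(v_j))$ for distinct children $v_i,v_j$ of a speciation vertex $u$) is the right statement, but the mechanism you sketch for it cannot work: a triple of $\mathfrak{S}(T,t,\sigma)$ arising from $u$ must have \emph{two} of its three leaves below the \emph{same} child of $u$, since $ab|c\le T$ with $\lca_T(a,b,c)=u$ forces $\lca_T(a,b)\prec_T u$. Choosing one color below $v_i$, one below $v_j$ and one below a third child, as you propose, gives three leaves whose restriction of $T$ is a star and hence contributes no triple to $\mathfrak{S}$ at all. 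The correct argument is: if the subtrees share a leaf then $M(v_i)$ and $M(v_j)$ are comparable, say $M(v_j)\preceq_S M(v_i)$; pick $A_1,A_2\in\sigma(L(T(v_i)))$ with $\lca_S(A_1,A_2)=M(v_i)$ (possible when $v_i$ is not monochromatic, using that the lca of a leaf set is attained by a pair) and any $B\in\sigma(L(T(v_j)))$; then $A_1A_2|B\in\mathfrak{S}$ but $\lca_S(A_1,B)\preceq_S M(v_i)=\lca_S(A_1,A_2)$, so $S$ does not display it — and the monochromatic case is handled symmetrically from the $v_j$ side (both children monochromatic contradicts disjointness directly). Two smaller repairs: your ``move $\mu(u)$ up'' device for \AX{(R2)} is ill-defined precisely in the only case where \AX{(R2)} could fail ($M(u)=M(\parent_T(u))$ with a speciation parent), and is unnecessary once the key lemma gives the strict inequality $M(v_i)\prec_S M(u)$; and $M(u)=\lca_S(\mu(v_i),\mu(v_j))$ need not hold ``for every pair'' of children when $u$ has three or more of them — \AX{(R3)}(i) only requires it for some pair, which again follows from the lca-of-a-pair fact. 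Also note that duplication children map to the edge \emph{above} $M(v_i)$, not ``on or below'' it; incomparability still lifts, but via the paper's edge-order conventions, and that deserves a line.
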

  
  Throughout this section we are only concerned with the extremal event 
  labeling $\tTps$ of the augmented trees $(\aug(T^*),\sigma)$ of 
  least resolved trees $(T^*,\sigma)$. For brevity, we simply 
  write~$\wt$. For a BMG $(\G,\sigma)$, we consider the set of trees
  \begin{equation}
  \mathfrak{T} \coloneqq
  \left\{ (T,t,\sigma) \;|\; \NH(\G,\sigma) = (\Theta(T,t),\sigma)\right\}.
  \label{eq:trees}
  \end{equation}
  
  An orthology relation $\NH(\G,\sigma)$ obtained from a BMG $(\G,\sigma)$ by
  removing all of its \ufp edges is biologically feasible only if there is an
  event-labeled gene tree $(T,t,\sigma)\in \mathfrak{T}$ that can be
  reconciled with some species tree.  To show that this condition can be
  tested in polynomial time, we first need a technical result.
  
  \begin{lemma}
    Let $(\G,\sigma)$ be a BMG with LRT $(T^*,\sigma)$, and let
    $\mathfrak{T}$ be be given by Eq.~(\ref{eq:trees}).  If $ab|c$ is
    displayed by $\aug(T^*)$ and $\wt(\lca_{\aug(T^*)}(a,b,c)) = \SPEC$, then
    $ab|c$ is also displayed by every tree $(T,t,\sigma)\in\mathfrak{T}$ and
    $t(\lca_{T}(a,b,c))=\SPEC$.
    \label{lem:aug_tree_spec_triples}
  \end{lemma}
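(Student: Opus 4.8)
The plan is to reduce the claim to an intrinsic property of the graph $\NH(\G,\sigma)$, exploiting that it is a cograph and that the trees in $\mathfrak{T}$ are precisely its cotree representations. By Thm.~\ref{thm:MAIN} we have $(\Theta(\aug(T^*),\wt),\sigma) = \NH(\G,\sigma)$, and by Thm.~\ref{thm:ortho-cograph} this graph is a cograph; hence every $(T,t,\sigma)\in\mathfrak{T}$ is a cotree of $\NH(\G,\sigma)$ in which $\SPEC$ plays the role of a ``series'' node and $\DUPL$ that of a ``parallel'' node (for leaves, $\lca$ never hits the planted root, so the $\ROOT$/$\LEAF$ labels are irrelevant). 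In particular $(\aug(T^*),\wt,\sigma)$ is itself such a cotree.

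First I would unpack the hypothesis on $\aug(T^*)$. Since $ab|c\le\aug(T^*)$, we have $\lca_{\aug(T^*)}(a,c)=\lca_{\aug(T^*)}(b,c)=\lca_{\aug(T^*)}(a,b,c)=:v$ and $\lca_{\aug(T^*)}(a,b)\prec_{\aug(T^*)} v$. From $\wt(v)=\SPEC$ and Def.~\ref{def:ortho-graph} it follows immediately that $ac,bc\in E(\Theta(\aug(T^*),\wt)) = E(\NH(\G,\sigma))$.

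The crux — and the step I expect to be the main obstacle — is to show $ab\notin E(\NH(\G,\sigma))$, i.e.\ $\wt(w)=\DUPL$ for $w:=\lca_{\aug(T^*)}(a,b)$. (This is genuinely needed: if $\{a,b,c\}$ induced a triangle in $\NH(\G,\sigma)$, the conclusion would fail, since a triangle admits a cotree not displaying any triple.) Here I would first observe that if $\Scap(a,b)\ne\emptyset$ then $ab$ is \ufp by Prop.~\ref{prop:color_intersection_dupl} and hence not an edge of $\NH(\G,\sigma)$, so one may assume $\Scap(a,b)=\emptyset$. Suppose, for contradiction, $\wt(w)=\SPEC$. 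By Lemma~\ref{lem:aumented_tree_SPEC_DUPL}, every newly created augmentation vertex has a connected color-set intersection graph and is therefore labelled $\DUPL$; thus both $v$ and $w$ lie in $V(T^*)$, and — since the augmentation runs in pre-order and only regroups \emph{original} children — every new vertex has only original children. Let $v_1\in\child_{\aug(T^*)}(v)$ be the child with $a,b\preceq_{\aug(T^*)}v_1$; it is a non-leaf vertex, so $\wt(v)=\SPEC$ together with Lemma~\ref{lem:augmented_tree_no_adj_spec} forces $\wt(v_1)=\DUPL$ and hence $w\prec_{\aug(T^*)}v_1$. Walking down the path from $v$ to $w$ and applying Lemma~\ref{lem:augmented_tree_no_adj_spec} repeatedly together with the original/new dichotomy, I would locate an inner vertex $p$ of $\aug(T^*)$ whose children in $\aug(T^*)$ are pairwise color-disjoint and an \emph{original} inner edge of $T^*$ incident with $p$; Cor.~\ref{cor:edge_redundant} (via Lemma~\ref{lem:redundant_edges}), using $\G(\aug(T^*),\sigma)=(\G,\sigma)$ from Prop.~\ref{prop:aug-bmg}, then makes this edge redundant w.r.t.\ $(\G,\sigma)$, contradicting leastness of $T^*$ (Thm.~\ref{thm:LRT}). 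The delicate part is the bookkeeping of which vertices along the path are original versus newly created, and pinning down exactly which edge is redundant.

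Granting $ac,bc\in E(\NH(\G,\sigma))$ and $ab\notin E(\NH(\G,\sigma))$, the three leaves $a,b,c$ induce a path $P_3$ with centre $c$ in the cograph $\NH(\G,\sigma)$. I would then finish with the elementary fact that $P_3$ has a \emph{unique} cotree, namely the caterpillar obtained by attaching $c$ at a ``series'' root above the ``parallel'' vertex $\lca(a,b)$. Consequently, for every cotree $(T,t)$ of a cograph $G$ and every induced $P_3$ of $G$ with centre $c$, the restriction of $(T,t)$ to $\{a,b,c\}$ (obtained by restricting to these leaves and suppressing degree-two vertices, which preserves labels and the represented graph) equals this caterpillar; hence $\lca_T(a,b)\prec_T\lca_T(a,c)=\lca_T(b,c)$ and $t(\lca_T(a,b,c))=\SPEC$, i.e.\ $ab|c\le T$ and $t(\lca_T(a,b,c))=\SPEC$. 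Applying this with $G=\NH(\G,\sigma)$ to each $(T,t,\sigma)\in\mathfrak{T}$ yields the statement.
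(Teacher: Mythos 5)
There is a genuine gap, and it sits exactly where you expected trouble: your ``crux'' step, namely that $\wt(\lca_{\aug(T^*)}(a,b))=\DUPL$ and hence $ab\notin E(\NH(\G,\sigma))$, is not only unproved (you only sketch a walk-down-the-path redundancy argument and admit the bookkeeping is open) -- it is false. Consider four genes $a,b,d,c$ with $\sigma(a)=\sigma(d)$ and $\sigma(a),\sigma(b),\sigma(c)$ pairwise distinct, and the tree with root $v$ having children $c$ and $u$, where $u$ has children $d$ and $w$, and $w$ has children $a,b$. Its BMG has edges $ab,ac,bc,cd$ and the single non-symmetric arc $(d,b)$; the edge $vu$ is redundant while $uw$ is not, so the LRT $T^*$ is obtained by contracting $vu$, and augmenting $T^*$ at its root recreates $u$, i.e.\ $\aug(T^*)$ is the original tree with $\wt(v)=\SPEC$, $\wt(u)=\DUPL$, $\wt(w)=\SPEC$. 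Here $ab|c\le\aug(T^*)$ and $\wt(\lca_{\aug(T^*)}(a,b,c))=\SPEC$, yet $\wt(\lca_{\aug(T^*)}(a,b))=\SPEC$, so $\{a,b,c\}$ induces a triangle in $\NH(\G,\sigma)$. Your sketched contradiction cannot be completed (neither $vw$ in $T^*$ nor $uw$ in $\aug(T^*)$ is redundant, since the sibling subtree containing $d$ shares the color $\sigma(a)$), and your parenthetical justification that a triangle would make the conclusion fail conflates cotrees of the induced subgraph on $\{a,b,c\}$ with restrictions of cotrees of the \emph{whole} graph $\NH(\G,\sigma)$: in the example the rest of the graph (the vertex $d$) forces every tree in $\mathfrak{T}$ to display $ab|c$ with a speciation on top, so the lemma holds even though $ab\in E(\NH(\G,\sigma))$. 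Consequently the final $P_3$-caterpillar argument, which is fine when $ab\notin E(\NH(\G,\sigma))$, does not cover all cases, and the proof as a whole does not go through.

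The part of your argument that does survive is the easy half: $ac,bc\in E(\NH(\G,\sigma))$, and the observation that in any cotree of $\NH(\G,\sigma)$ at most one of the three pairs can have its $\lca$ strictly below $\lca(a,b,c)$, so two edges among $\{a,b,c\}$ force $t(\lca_T(a,b,c))=\SPEC$; this is essentially how the paper obtains the event label. For the triple itself the paper avoids deciding whether $ab$ is an edge at all: it passes to the unique discriminating cotree $(T',t')$ of the cograph $\NH(\G,\sigma)$, notes that $T'$ is obtained from $(\aug(T^*),\wt)$ by contracting like-labeled edges, and uses Lemma~\ref{lem:augmented_tree_no_adj_spec} to place a duplication vertex on the path from $\lca_{\aug(T^*)}(a,b,c)$ to $\lca_{\aug(T^*)}(a,b)$; that duplication vertex cannot be merged into the speciation $\lca_{T'}(a,b,c)$, so $\lca_{T'}(a,b)\prec_{T'}\lca_{T'}(a,b,c)$, i.e.\ $ab|c\le T'$, and since every tree in $\mathfrak{T}$ refines $T'$, both conclusions transfer. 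If you want to salvage your route, you would have to add a separate argument for the triangle case rather than exclude it.
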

  \begin{proof}
    Suppose that $ab|c$ is displayed by $\aug(T^*)$ and
    $\wt(\lca_{\aug(T^*)}(a,b,c)) = \SPEC$.  Thm.~\ref{thm:MAIN} implies
    $(\Theta(\aug(T^*),\wt),\sigma)=\NH(\G,\sigma)$. Thus $\NH(\G,\sigma)$ is
    a cograph by Thm.~\ref{thm:ortho-cograph}.  Let $(T',t',\sigma)$ be a
    least resolved tree for the cograph $\NH(\G,\sigma)$. Clearly,
    $(T',t',\sigma)\in \mathfrak{T}$.  This tree is unique and any other tree
    in $\mathfrak{T}$ must be a refinement of $(T',t',\sigma)$
    \cite{CORNEIL:81,Boecker:98}.  We proceed with showing that (1)
    $t'(\lca_{T'}(a,b,c)) = \SPEC$ and (2) $ab|c$ is displayed by $T'$.
    
    In order to show~(1), assume for contradiction that
    $t'(\lca_{T'}(a,b,c)) = \DUPL$ and note that
    $(T',t',\sigma)\in \mathfrak{T}$ implies
    $\NH(\G,\sigma) = (\Theta(T',t'),\sigma)$. Since
    $\wt(\lca_{\aug(T^*)}(a,b,c)) = \SPEC$ and $ab|c \le \aug(T^*)$, the
    induced subgraph of $\NH(\G,\sigma)$ on $\{a,b,c\}$ contains at least the
    two edges $ac$ and $bc$. However, if $t'(\lca_{T'}(a,b,c)) = \DUPL$, then
    this induced subgraph can contain at most one edge; a
    contradiction. Hence, $t'(\lca_{T'}(a,b,c)) = \SPEC$.
    
    Next, we show~(2). Since $\aug(T^*)$ displays $ab|c$ and $T'$ is obtained
    from $\aug(T^*)$ by a series of edge contractions, $T'$ can neither
    display $ac|b$ nor $bc|a$, thus either $ab|c\le T'$ or
    $\lca_{T'}(a,b)=\lca_{T'}(a,b,c)$. By
    Lemma~\ref{lem:augmented_tree_no_adj_spec}, $(\aug(T^*),\wt)$ does not
    contain adjacent (consecutive) speciation vertices.  Therefore and since
    $\aug(T^*)$ displays $ab|c$, the path from $\lca_{\aug(T^*)}(a,b,c)$ to
    $\lca_{\aug(T^*)}(a,b)$ in $\aug(T^*)$ must contain at least one
    duplication vertex.  Since $T'$ can be obtained from $\aug(T^*)$ by
    contracting all edges $uv$ in $\aug(T^*)$ with $\wt(u)=\wt(v)$
    \cite{CORNEIL:81,Boecker:98}, the path from $\lca_{T'}(a,b,c)$ to
    $\lca_{T'}(a,b)$ in $T'$ must contain at least one duplication
    vertex. Together with $t'(\lca_{T'}(a,b,c)) = \SPEC$ this implies
    $\lca_{T'}(a,b)\ne\lca_{T'}(a,b,c)$, and hence, $ab|c$ is displayed by
    $T'$.
    
    Since every tree $(T,t,\sigma)\in \mathfrak{T}$ is a refinement of
    $(T',t',\sigma)$, the triple $ab|c$ is also displayed by~$T$.  Finally,
    since $\NH(\G,\sigma) = (\Theta(T,t),\sigma)$ for every tree
    $(T,t,\sigma)\in \mathfrak{T}$, we can re-use the arguments from the
    proof of Statement (1) to conclude that $t(\lca_{T}(a,b,c))=\SPEC$.  
  \end{proof}
  
  \begin{lemma}
    Let $(\G,\sigma)$ be a BMG with LRT $(T^*,\sigma)$ and let $\mathfrak{T}$
    be given by Eq.~(\ref{eq:trees}).  Then, the following statements are
    equivalent:
    \begin{itemize}
      \item[(1)] There is no reconciliation map $\mu$ from
      $(\aug(T^*),\wt,\sigma)$ to any species tree such that $t_{\mu}=\wt$.
      \item[(2)] For all trees $(T,t,\sigma)$ in $\mathfrak{T}$ there is no
      reconciliation map $\mu$ from $(T,t,\sigma)$ to any species tree such
      that $t_{\mu}=t$.
    \end{itemize}
    In particular, Condition (1) can be verified in polynomial time.
    \label{lem:reconc-aug-all}
  \end{lemma}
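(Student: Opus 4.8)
The plan is to reduce both statements to the (in)compatibility of species‑triple sets via Theorem~\ref{thm:SpeciesTriplets}, and then to exploit the triple‑set inclusion supplied by Lemma~\ref{lem:aug_tree_spec_triples}. First I would record that $(\aug(T^*),\wt,\sigma)\in\mathfrak{T}$: by Theorem~\ref{thm:MAIN} we have $(\Theta(\aug(T^*),\wt),\sigma)=\NH(\G,\sigma)$, which is exactly the defining property of $\mathfrak{T}$ in Eq.~(\ref{eq:trees}). Hence Statement~(2), applied to this particular element of $\mathfrak{T}$, immediately gives Statement~(1), so the direction $(2)\Rightarrow(1)$ is free.

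For $(1)\Rightarrow(2)$ I would argue by contraposition. Suppose some $(T,t,\sigma)\in\mathfrak{T}$ admits a reconciliation map $\mu$ with $t_\mu=t$; by Theorem~\ref{thm:SpeciesTriplets} this is equivalent to $\mathfrak{S}(T,t,\sigma)$ being compatible. The key claim is $\mathfrak{S}(\aug(T^*),\wt,\sigma)\subseteq\mathfrak{S}(T,t,\sigma)$: if $\sigma(a)\sigma(b)|\sigma(c)\in\mathfrak{S}(\aug(T^*),\wt,\sigma)$, then there are leaves $a,b,c$ of pairwise distinct colour with $ab|c\le\aug(T^*)$ and $\wt(\lca_{\aug(T^*)}(a,b,c))=\SPEC$, and Lemma~\ref{lem:aug_tree_spec_triples} yields $ab|c\le T$ together with $t(\lca_{T}(a,b,c))=\SPEC$, so $\sigma(a)\sigma(b)|\sigma(c)\in\mathfrak{S}(T,t,\sigma)$ as well. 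Since every subset of a compatible set of triples is compatible, $\mathfrak{S}(\aug(T^*),\wt,\sigma)$ is compatible, and Theorem~\ref{thm:SpeciesTriplets} applied to $(\aug(T^*),\wt,\sigma)$ produces a reconciliation map $\mu'$ with $t_{\mu'}=\wt$ — precisely the negation of Statement~(1). This completes the contrapositive.

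For the polynomial‑time claim I would observe that $(T^*,\sigma)$ is computable from $(\G,\sigma)$ in polynomial time (Theorem~\ref{thm:LRT}), the augmented tree $(\aug(T^*),\sigma)$ in polynomial time from $(T^*,\sigma)$ (Lemma~\ref{lem:augalg}), and $\wt$ then from the connectivity of the graphs $\CIG_{\aug(T^*)}(u)$ in linear time; running the polynomial‑time algorithm of Theorem~\ref{thm:SpeciesTriplets} on $(\aug(T^*),\wt,\sigma)$ decides Statement~(1).

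The only genuinely delicate point is the triple‑set inclusion, and that is exactly what Lemma~\ref{lem:aug_tree_spec_triples} was set up to deliver, so once it is in hand the rest is routine; the one subtlety worth spelling out is that the colour‑distinctness condition in the definition of $\mathfrak{S}$ is intrinsic to the leaves $a,b,c$ and transfers verbatim between $\aug(T^*)$ and any $(T,t,\sigma)\in\mathfrak{T}$, since these trees share the same leaf set and colouring, so the inclusion is genuinely an inclusion of species‑triple sets and not merely of leaf‑triple sets.
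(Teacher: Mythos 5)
Your proposal is correct and follows essentially the same route as the paper: establish $(\aug(T^*),\wt,\sigma)\in\mathfrak{T}$ via Thm.~\ref{thm:MAIN} to get $(2)\Rightarrow(1)$, use Lemma~\ref{lem:aug_tree_spec_triples} for the inclusion $\mathfrak{S}(\aug(T^*),\wt,\sigma)\subseteq\mathfrak{S}(T,t,\sigma)$, and invoke Thm.~\ref{thm:SpeciesTriplets} for both the other direction (you phrase it contrapositively via compatibility of subsets, the paper directly via incompatibility of supersets — logically the same step) and the polynomial-time check.
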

  \begin{proof}
    First note that $(\aug(T^*),\wt,\sigma)\in \mathfrak{T}$ since, by
    Thm.~\ref{thm:MAIN}, $(\Theta(\aug(T^*),\wt),\sigma) =\NH(\G,\sigma)$.
    Hence, Statement (2) implies (1).
    
    For the converse, let $ab|c$ be displayed by $\aug(T^*)$ where
    $\sigma(a)=A$, $\sigma(b)=B$, $\sigma(c)=C$ are pairwise distinct, and
    $\wt(\lca_{\aug(T^*)}(a,b,c)) = \SPEC$. By definition,
    $AB|C \in \mathfrak{S}(\aug(T^*),\wt,\sigma)$.
    Lemma~\ref{lem:aug_tree_spec_triples} implies that $ab|c$ is also
    displayed by every tree $(T,t,\sigma)\in\mathfrak{T}$ and
    $t(\lca_{T}(a,b,c))=\SPEC$. Therefore, we have
    $\mathfrak{S}(\aug(T^*),\wt,\sigma) \subseteq \mathfrak{S}(T,t,\sigma)$
    for all $(T,t,\sigma)\in\mathfrak{T}$.  Now suppose that Condition~(1)
    holds. Then, by Thm.~\ref{thm:SpeciesTriplets},
    $\mathfrak{S}(\aug(T^*),\wt,\sigma)$ is incompatible.  Thus,
    $\mathfrak{S}(T,t,\sigma)$ must be incompatible as well for every tree
    $(T,t,\sigma)\in\mathfrak{T}$.  Together with
    Thm.~\ref{thm:SpeciesTriplets}, this implies Condition~(2).
    
    Using the arguments in the proof of Thm.~\ref{thm:time} and
    Thm.~\ref{thm:SpeciesTriplets} we find that Condition~(1) can be
    verified in polynomial time by checking whether
    $\mathfrak{S}(\aug(T^*),\wt,\sigma)$ is incompatible.  
  \end{proof}
  It is possible, therefore to check in polynomial time whether the cograph
  $\NH(\G,\sigma)$ is a biologically feasible orthology relation for
  $(\G,\sigma)$ or whether $\NH(\G,\sigma)$ contains further false-positive
  edges.
  
  Now consider again a true evolutionary scenario
  $(\widetilde{T},\widetilde{t},\sigma)$. While $\widetilde{T}$ always
  displays the LRT $(T^*,\sigma)$ of the BMG $\G(\widetilde{T},\sigma)$, it
  does not necessarily display the augmented tree $\aug(T^*)$. As an example
  consider the scenario in Fig.~\ref{fig:messy_vertices-1}. Augmenting the
  only multifurcation in this case further resolves the root of $T^*$ and
  thus yields a tree that is not displayed by $\widetilde{T}$. It is
  interesting to ask, therefore, whether there are situations in which
  $\widetilde{T}$ does display $\aug(T^*)$.
  
  \begin{lemma}\label{lem:T-displays-aug-tree}
    Let $(T,t,\sigma)$ be an event-labeled tree explaining the BMG
    $(\G,\sigma)$, and let $(T^*,\sigma)$ be the least resolved tree of
    $(\G,\sigma)$.  If $(\Theta(T,t),\sigma) = \NH(\G,\sigma)$, then
    $\aug(T^*)$ is displayed by $T$.
  \end{lemma}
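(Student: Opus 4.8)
The plan is to argue entirely in terms of cluster systems, using that $T$ and $\aug(T^*)$ both explain $(\G,\sigma)$ and that, by Thm.~\ref{thm:MAIN}, the graph $\NH(\G,\sigma)=(\Theta(\aug(T^*),\wt),\sigma)$ is a cograph with a \emph{unique} discriminating cotree.

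First I would dispose of the easy inclusion. Since $T$ explains $(\G,\sigma)$, Thm.~\ref{thm:LRT} gives $\mathscr{C}(T^*)\subseteq\mathscr{C}(T)$. By Prop.~\ref{prop:aug-bmg}, $\aug(T^*)$ also explains $(\G,\sigma)$; as the augmentation steps merely insert vertices between an inner vertex and some of its children, $L(\aug(T^*))=L(T^*)=V(\G)$, $\mathscr{C}(T^*)\subseteq\mathscr{C}(\aug(T^*))$, and the only clusters of $\aug(T^*)$ not already in $\mathscr{C}(T^*)$ are the sets $W_{u,C}\coloneqq L(\aug(T^*)(w))=\bigcup_{v\in C}L(T^*(v))$, where $w$ is the vertex created when augmenting an inner vertex $u$ of $T^*$ along a connected component $C$ of $\CIG_{T^*}(u)$ with $|C|>1$. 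Hence $\aug(T^*)$ is displayed by $T$ (i.e.\ $T$ is a refinement of $\aug(T^*)$) as soon as $W_{u,C}\in\mathscr{C}(T)$ for every such pair $(u,C)$.

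Next I would locate each $W_{u,C}$ inside the discriminating cotree $N$ of the cograph $\NH(\G,\sigma)$. Fix $u$ and $C$ with $|C|>1$, and let $w$ be the corresponding new child of $u$ in $\aug(T^*)$. Because $L(\aug(T^*)(v))=L(T^*(v))$ for $v\in C$, the graph $\CIG_{\aug(T^*)}(w)$ equals the induced subgraph $\CIG_{T^*}(u)[C]$, which is connected, so $\wt(w)=\DUPL$ by Lemma~\ref{lem:aumented_tree_SPEC_DUPL}. By Lemma~\ref{lem:augmenting_color_disjoint} the children of $u$ in $\aug(T^*)$ are pairwise colour-disjoint and there are at least two of them, so $\CIG_{\aug(T^*)}(u)$ is edgeless, hence disconnected, whence $\wt(u)=\SPEC$ by Lemma~\ref{lem:aumented_tree_SPEC_DUPL}. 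Now $N$ is obtained from any event-labelled tree realising $\NH(\G,\sigma)$ by contracting every inner edge joining two equally labelled vertices \cite{CORNEIL:81,Boecker:98}; in $(\aug(T^*),\wt)$ the edge $uw$ is therefore not contracted (its endpoints carry the distinct labels $\SPEC$ and $\DUPL$), and $w$ is not merged upward past $u$, so $w$ is the top of its contraction class and the corresponding vertex of $N$ has leaf set $L(\aug(T^*)(w))=W_{u,C}$. Thus $W_{u,C}\in\mathscr{C}(N)$ for every new vertex $w$.

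Finally I would transfer this to $T$. Contracting in $(T,t)$ all inner edges $ab$ with $t(a)=t(b)$ leaves $\Theta(T,t)$ unchanged and yields an event-labelled tree in which no inner edge joins two equally labelled vertices; since its orthology graph is $\NH(\G,\sigma)$, uniqueness of the discriminating cotree identifies it with $N$, so $\mathscr{C}(N)=\mathscr{C}(T')\subseteq\mathscr{C}(T)$ because edge contractions only delete clusters. In particular $W_{u,C}\in\mathscr{C}(T)$ for all $(u,C)$, and together with $\mathscr{C}(T^*)\subseteq\mathscr{C}(T)$ this gives $\mathscr{C}(\aug(T^*))\subseteq\mathscr{C}(T)$, i.e.\ $\aug(T^*)\le T$. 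The one point that needs care is the contraction step: one must verify that collapsing monochromatically labelled inner edges preserves the orthology graph (so that $(T,t)$ and $(\aug(T^*),\wt)$ really do collapse to the \emph{same} tree $N$), and one uses Lemma~\ref{lem:augmented_tree_no_adj_spec} and $\wt(u)=\SPEC$ precisely to guarantee that $w$ survives as the root of its contraction class, so that $W_{u,C}$ is indeed a cluster of $N$; the remaining manipulations of cluster systems are routine.
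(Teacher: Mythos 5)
Your argument is correct, but it takes a genuinely different route from the paper. The paper proves the statement via rooted triples: for each triple $ab|c$ displayed by $\aug(T^*)$ it distinguishes whether $\wt(\lca_{\aug(T^*)}(a,b,c))$ is $\SPEC$ (then Lemma~\ref{lem:aug_tree_spec_triples} transfers the triple to every tree in $\mathfrak{T}$, hence to $T$) or $\DUPL$ (then the child separating $a,b$ from $c$ is shown to be a vertex of $T^*$ itself, so the triple is displayed by $T^*\le T$), and finally invokes the theorem of \citet{BS:95} to turn $r(\aug(T^*))\subseteq r(T)$ into $\aug(T^*)\le T$. You instead work with cluster systems: the only clusters of $\aug(T^*)$ beyond $\mathscr{C}(T^*)$ are the sets $W_{u,C}$ below newly created vertices $w$, and you show $\wt(w)=\DUPL$, $\wt(u)=\SPEC$ (note that Lemma~\ref{lem:augmenting_color_disjoint} applies because a new vertex is only created when $\CIG_{T^*}(u)$ is disconnected, and Lemma~\ref{lem:augment-once}/Obs.~\ref{fact:aug-indep} guarantee later augmentation steps do not alter the relevant children), so that $w$ is the top of its contraction class and $W_{u,C}$ is a cluster of the unique discriminating cotree $N$ of $\NH(\G,\sigma)$; since $(T,t)$ is a cotree of the same cograph, $T$ refines $N$, and together with $\mathscr{C}(T^*)\subseteq\mathscr{C}(T)$ this gives $\mathscr{C}(\aug(T^*))\subseteq\mathscr{C}(T)$. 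This buys a proof that bypasses Bryant--Steel and Lemma~\ref{lem:aug_tree_spec_triples} entirely, at the cost of leaning directly on the standard cotree facts (preservation of $\Theta$ under contraction of equally labelled inner edges, uniqueness of the discriminating cotree, and that every cotree refines it), which the paper also invokes with the citations \citet{CORNEIL:81} and \citet{Boecker:98}, so nothing unavailable is used. Two small remarks: your appeal to Lemma~\ref{lem:augmented_tree_no_adj_spec} is superfluous, since all you need is $\wt(u)=\SPEC\neq\DUPL=\wt(w)$, which you establish directly; and the "routine" verification you flag (that a contraction class containing $w$ but not $u$ lies entirely in $\aug(T^*)(w)$, so the contracted vertex has leaf set exactly $W_{u,C}$) does go through, because contraction classes are connected subtrees and any class reaching both below and above $w$ would have to contain the edge $uw$.
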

  \begin{proof}
    Let $\mathfrak{T}$ be the set of trees corresponding to $(\G,\sigma)$ as
    given by Eq.~(\ref{eq:trees}).  First note that
    $(T,t,\sigma)\in\mathfrak{T}$ and that $(T^*,\sigma)$ is displayed by
    $(T,\sigma)$ \cite[cf.][Thm.~8]{Geiss:19a}.  Now consider the set
    $r(\aug(T^*))$ of all triples displayed by $\aug(T^*)$.  For any triple
    $ab|c\in r(\aug(T^*))$, there are exactly two cases: (a) $\wt(u)=\SPEC$
    and (b) $\wt(u)=\DUPL$, where $u\coloneqq\lca_{\aug(T^*)}(a,b,c)$.
    
    In Case~(a), Lemma~\ref{lem:aug_tree_spec_triples} together with
    $(T,t,\sigma)\in\mathfrak{T}$ immediately implies that $ab|c$ is also
    displayed by~$T$.
    
    In Case~(b), we have $\wt(u)=\DUPL$.  Consider the child
    $v\in\child_{\aug(T^*)}(u)$ with $a,b\prec_{\aug(T^*)}v$.  Assume, for
    contradiction, that $v$ is not a vertex in $T^*$, i.e., it was newly
    created by augmenting a vertex $u'$. We have $u'=u$ by
    Lemma~\ref{lem:augment-once} since $u'$ cannot be (non-trivially)
    augmented any further.  Since $\aug(T^*)$ does not depend on the order of
    augmentation steps, we may assume w.l.o.g.\ that $v$ was created in the
    first augmentation step; resulting in the augmented tree $T_{u}$.
    Def.~\ref{def:augmenting} implies that $\CIG_{T}(u)$ is disconnected.
    Together with Lemma~\ref{lem:augmenting_color_disjoint}, this implies
    $\sigma(L(T_u(w_1)))\cap\sigma(L(T_u(w_2)))=\emptyset$ for any two
    distinct vertices $w_1,w_2\in\child_{T_u}(u)$.  This must still hold for
    $(\aug(T^*),\sigma)$ since the edges $uw$, where $w\in\child_{T_u}(u)$
    correspond to the vertices that have been newly introduced in the first
    augmentation step, do not change in any subsequent augmentation due to
    Lemma~\ref{lem:augment-once}. The definition of the extremal event
    labeling now implies $\wt(u)=\SPEC$; a contradiction.  Therefore, we
    conclude that $v$ is a vertex in $T^*$, and in particular,
    $a,b\in L(T^*(v))$ and $c\notin L(T^*(v))$, which in turn implies that
    $ab|c$ is displayed by $T^*$. From $T^*\le T$ we finally conclude that
    $T$ also displays $ab|c$. Denoting by $r(T)$ the set of all triples
    displayed by $T$ we therefore have $r(\aug(T^*))\subseteq r(T)$. Finally,
    we apply Thm.~1 of \citet{BS:95} to conclude that $\aug(T^*)$ is
    displayed by $T$.  
  \end{proof}
  
  \section{Quartets, hourglasses, and the structure of reciprocal best match
    graphs}
  \label{APP:sect:leftovers}
  
  \subsection{Hourglass-free BMGs}
  \label{APP:ssec:hourglass-free}
  \begin{definition}\label{def:hourglass-free}
    A BMG $(\G,\sigma)$ is \emph{hourglass-free} if it does not contain an
    hourglass as an induced subgraph.
  \end{definition}
  In particular, an hourglass-free BMG also does not contain an hourglass
  chain.  We will need the following technical result
  \begin{lemma}\label{lem:hourglass-color-sets}
    Let $(\G,\sigma)$ be a BMG explained by $(T,\sigma)$.
    Then $(\G,\sigma)$ has an hourglass $[xy \hourglass x'y']$ as an induced 
    subgraph if and only if there is a vertex $u\in V^0(T)$ with
    distinct children $v_1$, $v_2$, and $v_3$ and two distinct 
    colors $r$ and $s$ satisfying
    \begin{enumerate}
      \item $r\in\sigma(L(T(v_1)))$, 
      $r,s\in\sigma(L(T(v_2)))$, 
      and $s\in\sigma(L(T(v_3)))$, and
      \item $s\notin\sigma(L(T(v_1)))$, and 
      $r\notin\sigma(L(T(v_3)))$.
    \end{enumerate}
  \end{lemma}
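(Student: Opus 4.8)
The plan is to prove the two implications separately, using the structural description of hourglasses in an explaining tree (Lemma~\ref{lem:hourglass}), the local edge criterion (Lemma~\ref{lem:edge-xy-lca}), and the edge-existence statement (Lemma~\ref{lem:exEdge}). The forward direction is essentially a translation of Lemma~\ref{lem:hourglass}, while the converse requires exhibiting four concrete leaves and checking all of the defining arcs and non-arcs of Def.~\ref{def:hourglass}.

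\textbf{($\Rightarrow$).} Suppose $(\G,\sigma)$ contains the induced hourglass $[xy\hourglass x'y']$. First I would apply Lemma~\ref{lem:hourglass} to obtain $u\in V^0(T)$ with three distinct children $v_1,v_2,v_3$ such that $x\preceq_T v_1$, $\lca_T(x',y')\preceq_T v_2$ (hence $x',y'\in L(T(v_2))$), and $y\preceq_T v_3$. Set $r\coloneqq\sigma(x)=\sigma(x')$ and $s\coloneqq\sigma(y)=\sigma(y')$, which are distinct by Def.~\ref{def:hourglass}(i). Then $r\in\sigma(L(T(v_1)))$, $r,s\in\sigma(L(T(v_2)))$, $s\in\sigma(L(T(v_3)))$, giving Condition~(1). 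For Condition~(2): since $v_1\ne v_3$ are children of $u$ and $x\preceq_T v_1$, $y\preceq_T v_3$, we have $\lca_T(x,y)=u$ with $v_x=v_1$, $v_y=v_3$; because $xy$ is an edge in $\G$, Lemma~\ref{lem:edge-xy-lca} yields $\sigma(x)=r\notin\sigma(L(T(v_3)))$ and $\sigma(y)=s\notin\sigma(L(T(v_1)))$, which is exactly Condition~(2).

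\textbf{($\Leftarrow$).} Given $u$, its children $v_1,v_2,v_3$, and colors $r,s$ as in the statement, I would build the hourglass explicitly. Since $r,s\in\sigma(L(T(v_2)))$ and $r\ne s$, Lemma~\ref{lem:exEdge} provides $x'\in L(T(v_2))\cap L[r]$ and $y'\in L(T(v_2))\cap L[s]$ that form an edge $x'y'$ in $\G$; by Condition~(1) I may also pick any $x\in L(T(v_1))\cap L[r]$ and any $y\in L(T(v_3))\cap L[s]$. The colors show these four leaves are pairwise distinct and satisfy Def.~\ref{def:hourglass}(i). For the remaining conditions, note $\lca_T(x,y)=\lca_T(x,y')=\lca_T(y,x')=\lca_T(y',x)=\lca_T(x',y)=u$, all because the relevant leaves sit below distinct children of $u$. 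Then: $xy$ is an edge in $\G$ by Lemma~\ref{lem:edge-xy-lca} together with Condition~(2); $x'y'$ is an edge by construction; $(x,y')\in E(\G)$ because $s\notin\sigma(L(T(v_1)))$ forces $\lca_T(x,y'')\succeq_T u=\lca_T(x,y')$ for every $y''\in L[s]$, so $y'$ is a best match of $x$, and symmetrically $(y,x')\in E(\G)$ using $r\notin\sigma(L(T(v_3)))$; finally $(y',x)\notin E(\G)$ since $x'\in L(T(v_2))\cap L[r]$ gives $\lca_T(y',x')\preceq_T v_2\prec_T u=\lca_T(y',x)$, so $x$ is not a best match of $y'$, and symmetrically $(x',y)\notin E(\G)$. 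Since no arcs can join $x,x'$ or $y,y'$ (equal colors), the induced subgraph on $\{x,x',y,y'\}$ is precisely $[xy\hourglass x'y']$.

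\textbf{Main obstacle.} No step is deep; the only care needed is in the converse direction, where one must check that a \emph{single} choice of the four leaves simultaneously realizes all four arc/non-arc conditions. In particular one should verify that choosing $x',y'$ via Lemma~\ref{lem:exEdge} (to secure the edge $x'y'$) does not conflict with the other requirements --- and it does not, since those requirements depend only on the subtree memberships and colors of $x'$ and $y'$, not on their being adjacent.
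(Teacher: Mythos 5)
Your proposal is correct and follows essentially the same route as the paper: Lemma~\ref{lem:hourglass} plus the hourglass's edge/arc conditions for the forward direction, and an explicit construction of the four leaves via Lemma~\ref{lem:exEdge} and Lemma~\ref{lem:edge-xy-lca} for the converse. The only (harmless) variation is that for Condition~(2) you invoke Lemma~\ref{lem:edge-xy-lca} on the edge $xy$, whereas the paper derives it by contradiction from the arcs $(x,y')$ and $(y,x')$.
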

  \begin{proof}
    First assume that $(\G,\sigma)$ contains the hourglass
    $[xy \hourglass x'y']$ as an induced subgraph. Then by
    Lemma~\ref{lem:hourglass}, $(T,\sigma)$ contains a vertex $u\in V^0(T)$
    with three distinct children $v_1$, $v_2$, and $v_3$ such that
    $x\preceq_T v_1$, $\lca_T(x',y')\preceq_T v_2$ and $y\preceq_T v_3$.
    Putting $r\coloneqq\sigma(x)=\sigma(x')$ and
    $s\coloneqq\sigma(y)=\sigma(y')$ immediately implies Condition~(1).  Now,
    assume for contradiction that Condition~(2) is violated and thus
    $s\in\sigma(L(T(v_1)))$ or $r\in\sigma(L(T(v_3)))$.  If
    $s \in\sigma(L(T(v_1)))$, then there is a leaf $y''\prec_T v_1$ with
    $\sigma(y'') =s$. In this case, however,
    $\lca(x,y'')\preceq_T v_1 \prec_T u = \lca_T(x,y')$ implies that $(x,y')$
    cannot be an arc in $(\G,\sigma)$; a contradiction to
    $[xy \hourglass x'y']$ being an hourglass.  By similar arguments,
    $r\in\sigma(L(T(v_3)))$ is not possible.  Therefore, Condition~(2) must
    be satisfied.
    
    Now assume that there is a vertex $u\in V^0(T)$ with pairwise distinct
    children $v_1$, $v_2$, and $v_3$ and two distinct colors $r$ and $s$
    satisfying Conditions~(1) and~(2).  It is now straightforward to see that
    $(\G,\sigma)$ contains an hourglass: Condition~(1) immediately implies
    the existence of vertices $x\in L[r]\cap L(T(v_1))$ and
    $y\in L[s]\cap L(T(v_3))$.  Moreover, $r,s\in\sigma(L(T(v_2)))$ together
    with Lemma~\ref{lem:exEdge} imply that there is an edge $x'y'$ in
    $(\G,\sigma)$ with $x'\in L[r]\cap L(T(v_2))$ and
    $y'\in L[s]\cap L(T(v_2))$.  Clearly, the vertices in $\{x,x',y,y'\}$ are
    pairwise distinct.  By Condition~(2) and the location of the four leaves,
    we obtain the arcs $(x,y')$, $(x,y)$, $(y,x')$, and $(y,x)$, and thus, in
    particular the edge $xy$. Since $T(v_2)$ contains both colors $r$ and
    $s$, we can furthermore conclude that $(x',y)$ and $(y',x)$ are not arcs
    in $(\G,\sigma)$.  In summary, the subgraph of $(\G,\sigma)$ induced by
    the set $\{x,x',y,y'\}$ is an hourglass $[xy \hourglass x'y']$.  
  \end{proof}
  
  In the following a tree $(T,\sigma)$ is called \emph{refinable} if there is a
  proper refinement $(T',\sigma)$ of $(T,\sigma)$, i.e., $T\le T'$ and $T\ne 
  T'$, 
  such that $\G(T',\sigma)=\G(T,\sigma)$. Otherwise, $(T,\sigma)$ is
  \emph{non-refinable}. An inner vertex of a tree is \emph{non-refinable} if
  it cannot be refined without changing the best match graph induced by the
  tree.
  
  Clearly, for every BMG $(\G,\sigma)$, there is a tree that has the
  maximum number of vertices among all trees that explain $(\G,\sigma)$ and
  thus, a tree that cannot be further resolved.  Hence, every BMG can be
  explained by a non-refinable tree.  We will need the following useful
  property of non-refinable vertices:
  
  \begin{lemma}\label{lem:non-refinable-vertex}
    Let $(\G,\sigma)$ be a BMG explained by a tree $(T,\sigma)$, and let
    $u\in V^0(T)$ be a non-refinable vertex of $(T,\sigma)$.  Then, for any
    proper subset $C\subsetneq\child_{T}(u)$ with $|C|\ge 2$, there are
    two distinct vertices $v,v'\in C$, a vertex
    $v''\in\child_{T}(u)\setminus C$, and two vertices $a\preceq_{T} v$ and
    $b\preceq_{T} v'$ such that $(a,b)\in E(\G)$ and
    $\sigma(b)\in\sigma(L(T(v'')))$.
  \end{lemma}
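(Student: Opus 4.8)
The plan is to argue by contradiction: assume that for some proper subset $C\subsetneq\child_T(u)$ with $|C|\ge 2$ no quadruple $v,v',v'',a,b$ as in the statement exists, and use this to exhibit a proper refinement of $(T,\sigma)$ that is obtained by resolving the vertex $u$ and still explains $(\G,\sigma)$ — contradicting that $u$ is non-refinable. Spelled out, the negated hypothesis reads: for all distinct $v,v'\in C$, all $v''\in\child_T(u)\setminus C$, and all arcs $(a,b)\in E(\G)$ with $a\preceq_T v$ and $b\preceq_T v'$, one has $\sigma(b)\notin\sigma(L(T(v'')))$.

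First I would build the candidate refinement $(T',\sigma)$: insert a new vertex $w$ as a child of $u$ and reattach every $v\in C$ as a child of $w$, so that $\child_{T'}(w)=C$ and $\child_{T'}(u)=(\child_T(u)\setminus C)\cup\{w\}$. Since $C$ is a proper subset of $\child_T(u)$ with $|C|\ge 2$, both $u$ and $w$ retain degree $\ge 3$, so $(T',\sigma)$ is again a planted phylogenetic tree; moreover $\mathscr{C}(T')=\mathscr{C}(T)\cup\{D\}$ with $D\coloneqq L(T(w))=\bigcup_{v\in C}L(T(v))$ a genuinely new cluster, so $(T',\sigma)$ is a \emph{proper} refinement of $(T,\sigma)$ obtained by resolving $u$, and contracting the inner edge $uw$ of $T'$ recovers $T$.

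The core step is to show $\G(T',\sigma)=\G(T,\sigma)=(\G,\sigma)$. The inclusion $\G(T',\sigma)\subseteq\G(T,\sigma)$ is immediate from Lemma~\ref{lem:contract-subgraph} applied to the contraction of $uw$. For the reverse inclusion I would use that the only leaf pairs $p,q$ whose last common ancestor changes when passing from $T$ to $T'$ are those with $p,q\in D$ and $\lca_T(p,q)=u$, for which $\lca_{T'}(p,q)=w\prec_{T'}u$, while every other pair keeps an $\lca$ with the same cluster. Take an arc $(p,q)\in\G(T,\sigma)$ (so $\sigma(p)\ne\sigma(q)$ and $\lca_T(p,q)\preceq_T\lca_T(p,q'')$ for every $q''$ of colour $\sigma(q)$) and suppose, for contradiction, that some $q'$ with $\sigma(q')=\sigma(q)$ satisfies $\lca_{T'}(p,q')\prec_{T'}\lca_{T'}(p,q)$. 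A short case distinction on which of the two last common ancestors has changed, each time combined with the best-match inequality $\lca_T(p,q)\preceq_T\lca_T(p,q')$, forces $\lca_T(p,q)=\lca_T(p,q')=u$, with $p\preceq_T v$ and $q'\preceq_T v'$ for distinct $v,v'\in C$, with $q\preceq_T v''$ for some $v''\in\child_T(u)\setminus C$, and — crucially — with $(p,q')\in E(\G)$, because $\lca_T(p,q')=\lca_T(p,q)$ and $\sigma(q')=\sigma(q)$ make the best-match condition for $(p,q')$ identical to that for $(p,q)$. Then $a\coloneqq p$, $b\coloneqq q'$ realise precisely the configuration excluded by the negated hypothesis, since $\sigma(b)=\sigma(q)\in\sigma(L(T(v'')))$ — a contradiction. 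Hence no arc of $(\G,\sigma)$ is lost, so $\G(T',\sigma)=(\G,\sigma)$ and $u$ is refinable, contradicting the hypothesis and proving the lemma.

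The routine parts are checking that $T'$ is a valid planted phylogenetic tree and tracking how clusters and last common ancestors behave under the single inserted vertex $w$. The hard part will be organising the case analysis in the previous paragraph so that \emph{every} way an arc of $\G(T,\sigma)$ could fail to survive in $\G(T',\sigma)$ is shown to yield the forbidden quadruple; in particular one must verify carefully that $p$ and $q'$ lie below two \emph{distinct} members of $C$ and that $(p,q')$ is itself an arc of $\G$, since these are exactly the ingredients needed to invoke the hypothesis.
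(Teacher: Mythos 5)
Your proposal is correct and follows essentially the same route as the paper: you build the identical one-vertex refinement $T'$ at $u$ and show that, absent the claimed configuration, the new inner edge $uw$ could be contracted without changing the BMG (using Lemma~\ref{lem:contract-subgraph} for one inclusion), contradicting non-refinability of $u$. The only difference is organizational: the paper gets the arc $(a,b)$ with $\lca_{T'}(a,b)=w$ and $\sigma(b)\in\sigma(L(T'(u))\setminus L(T'(w)))$ directly by citing the characterization of redundant edges (Lemma~\ref{lem:redundant_edges}) applied to $uw$, whereas you re-derive that step by hand via the lca case analysis, which is valid but could be shortened by invoking that lemma.
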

  \begin{proof}
    First note that the statement is trivially true if $u$ is binary, since
    then there is no proper subset $C\subsetneq\child_{T}(u)$ such that
    $|C|\ge 2$. Thus, assume $|\child_{T}(u)|\ge 3$ in the following.
    
    We refine $(T,\sigma)$ at vertex $u$ as follows: Take an arbitrary subset
    $C\subsetneq\child_{T}(u)$ such that $|C|\ge 2$ (which exists since
    $|\child_{T}(u)|\ge 3$) and place all vertices in $C$ as the children of
    a new vertex $w$, and connect $w$ as a child of $u$.  Since $u$ is a
    non-refinable vertex of $(T,\sigma)$, this refinement leads to a tree
    $(T',\sigma)$ that does not explain $(\G,\sigma)$, and therefore, the
    inner edge $uw$ must be non-redundant w.r.t.\ $\G(T',\sigma)$.  By
    Lemma~\ref{lem:redundant_edges}, there must be an arc $(a,b)$ in
    $\G(T',\sigma)$ such that $\lca_{T'}(a,b)=w$ and
    $\sigma(b)\in \sigma(L(T'(u))\setminus L(T'(w)))$.  In particular,
    $\lca_{T'}(a,b)=w$ implies that $a\preceq_{T} v$ and $b\preceq_{T} v'$
    for two distinct vertices $v,v'\in\child_{T'}(w)=C$.  Note that
    $(T,\sigma)$ can be obtained from $(T',\sigma)$ by contraction of the
    edge $uw$. Hence, we can apply Lemma~\ref{lem:contract-subgraph} to
    conclude that $\G(T',\sigma)\subseteq (\G,\sigma)$. Therefore,
    $(a,b)\in E(\G)$.  Taking the latter arguments together, for any subset
    $C\subsetneq\child_{T}(u)$ with $|C|\ge 2$, there are vertices
    $a\preceq_{T} v$ and $b\preceq_{T} v'$ with distinct $v,v'\in C$ such
    that $(a,b)\in E(\G)$ and $\sigma(b)\in\sigma(L(T(v'')))$ for some
    $v''\in\child_{T}(u)\setminus C$.  
  \end{proof}

  \begin{proposition}\label{prop:binary-iff-hourglass-free}
    A BMG $(\G,\sigma)$ can be explained by a binary tree if and only if it is 
    hourglass-free.
  \end{proposition}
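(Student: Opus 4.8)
I would prove the two implications separately. The forward direction is immediate: if a binary tree $(T,\sigma)$ explains $(\G,\sigma)$ and $(\G,\sigma)$ contained an hourglass $[xy\hourglass x'y']$, then Lemma~\ref{lem:hourglass}, applied to this very tree, would force $T$ to possess an inner vertex with three pairwise distinct children, contradicting binarity; hence $(\G,\sigma)$ is hourglass-free. For the converse I would fix a non-refinable tree $(T,\sigma)$ explaining $(\G,\sigma)$ — such a tree exists, e.g.\ any explaining tree with a maximal number of vertices — and show that $T$ is binary. Assume, for contradiction, that some inner vertex $u$ of $T$ has children $v_1,\dots,v_k$ with $k\ge 3$.

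The crucial observation is that hourglass-freeness of $(\G,\sigma)$ imposes a laminarity condition at $u$. For a color $c$, write $N(c)\coloneqq\{\,v_i\in\child_T(u)\mid c\in\sigma(L(T(v_i)))\,\}$ for the set of children of $u$ whose subtree carries $c$, and let $\mathcal{N}\coloneqq\{\,N(c)\mid c\in\sigma(L(T))\,\}$. I claim $\mathcal{N}$ is laminar, i.e.\ any two members are nested or disjoint. Indeed, if $N(r),N(s)$ were \emph{crossing} for distinct colors $r,s$ — meaning $N(r)\cap N(s)$, $N(r)\setminus N(s)$, and $N(s)\setminus N(r)$ are all nonempty — then picking children $v_1\in N(r)\setminus N(s)$, $v_2\in N(r)\cap N(s)$, $v_3\in N(s)\setminus N(r)$ realizes precisely the configuration required by Lemma~\ref{lem:hourglass-color-sets} for the existence of an hourglass, a contradiction.

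Next I would exhibit a subset $C$ with $\emptyset\ne C\subsetneq\child_T(u)$ and $|C|\ge 2$ that can be \emph{grouped}, i.e.\ such that attaching the vertices of $C$ below a new child $w$ of $u$ yields a tree $(T',\sigma)$ with $\G(T',\sigma)=(\G,\sigma)$. Since $(T',\sigma)$ is then a proper refinement of $(T,\sigma)$ that still explains $(\G,\sigma)$, this contradicts non-refinability and finishes the proof. By Lemma~\ref{lem:redundant_edges} — equivalently, by the contrapositive of Lemma~\ref{lem:non-refinable-vertex} — grouping $C$ is valid exactly when every member of $\mathcal{N}$ that meets both $C$ and $\child_T(u)\setminus C$ contains all of $C$. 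Laminarity of $\mathcal{N}$ supplies such a $C$: if some member $M\in\mathcal{N}$ satisfies $2\le|M|<k$, take $C$ to be an inclusion-maximal such member; by laminarity any member meeting both $C$ and its complement is comparable to $C$, hence contains $C$. Otherwise every proper member of $\mathcal{N}$ is a singleton, and then $C=\{v_1,v_2\}$ works, since a singleton cannot meet both $C$ and its complement while the only non-proper member is $\child_T(u)\supseteq C$.

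The two straightforward parts are the forward direction and the reduction of hourglass-freeness to laminarity of $\mathcal{N}$. I expect the bulk of the work, and the point needing the most care, to be the precise "grouping criterion": one must translate Lemma~\ref{lem:redundant_edges}/Lemma~\ref{lem:non-refinable-vertex} into the statement that a bad arc at $w$ exists if and only if some member of $\mathcal{N}$ straddles $\{C,\,\child_T(u)\setminus C\}$ without containing $C$, and then check that the two cases above — together with the degenerate situations (leaf children of $u$, or $\mathcal{N}$ consisting only of $\child_T(u)$ and singletons) — go through.
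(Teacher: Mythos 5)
Your proof is correct, but the converse direction follows a genuinely different route from the paper's. The paper also fixes a non-refinable tree and works at a non-binary (hence non-refinable) vertex $u$, but it derives the contradiction by an induction that builds ``hourglass-free pairs'' of unbounded order -- ever longer ordered sequences of children and colors whose existence is forced by Lemma~\ref{lem:non-refinable-vertex} and whose growth is forced by Lemma~\ref{lem:hourglass-color-sets} -- which is impossible in a finite tree. You instead extract from Lemma~\ref{lem:hourglass-color-sets} a clean structural statement: hourglass-freeness makes the family $\mathcal{N}=\{N(c)\}$ of color-neighborhoods of $\child_T(u)$ laminar (a crossing pair $N(r),N(s)$ is exactly the three-children/two-colors configuration of that lemma), and laminarity always yields a groupable set $C$, which via Lemma~\ref{lem:redundant_edges} (applied to the refined tree $T'$ and its own BMG, with contraction of the new edge $uw$ giving back $T$) produces a proper refinement explaining the same BMG, contradicting non-refinability. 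Your translation of the grouping criterion is right: an offending arc at $w$ exists precisely when some $N(c)$ meets both $C$ and its complement without containing $C$, and only the sufficiency direction is needed; your two cases are exhaustive (members of size $2,\dots,k-1$ versus only singletons/empty sets/the full set, the empty $N(c)$ being harmless), and maximality of $C$ in the first case is not even required. What your approach buys is a shorter, constructive argument that exhibits the refinement explicitly and isolates laminarity as the combinatorial content of hourglass-freeness at a vertex; what the paper's hf-pair induction buys is that it never has to verify a grouping criterion, at the cost of a more intricate bookkeeping argument.
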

  \begin{proof}
    If the BMG $(\G,\sigma)$ can be explained by a binary tree, it must be
    hourglass-free as a consequence of Lemma~\ref{lem:hourglass}. To prove the 
    converse, we assume, for contradiction, that $(\G,\sigma)$ is 
    hourglass-free 
    and cannot be explained by any binary tree. Then there is a non-refinable
    non-binary tree $(T,\sigma)$ that explains $(\G,\sigma)$. By 
    construction,
    furthermore, $T$ contains a non-binary vertex $u\in V^0(T)$, which by
    assumption is non-refinable.
    
    The key device for our proof are pairs $(\mathscr{M},\mathscr{N})$ where
    $\mathscr{M}\coloneqq\{v_1,\dots,v_k\}$ is an ordered set of $k\geq 2$
    pairwise distinct children of $u$ and
    $\mathscr{N}\coloneqq\{c_1,\dots,c_{k-1}\}$ is an ordered set of $k-1$
    pairwise distinct colors. We call $(\mathscr{M},\mathscr{N})$ an
    \emph{hourglass-free pair (hf-pair) of order $k$} for $u$ if the
    following conditions are satisfied:
    \begin{description}[nolistsep]
      \item[(i)] For all $c_i\in \mathscr{N}$ we have
      $c_i\in\sigma(L(T(v_j)))$, $i\le j\le k-1$,
      \item[(ii)] For all $c_i\in \mathscr{N}$ we have
      $c_i\notin\sigma(L(T(v_j)))$, $1\le j<i$, and
      \item[(iii)] $\mathscr{N} \subseteq \sigma(L(T(v_k)))$.
    \end{description}
    If $(\mathscr{M},\mathscr{N})$ is an hf-pair of order $k$, then Condition
    (i) implies by construction that
    $\mathscr{N} \subseteq\sigma(L(T(v_{k-1})))$.  Therefore,
    $(\mathscr{M}'=(v_1,\dots,v_{k},v_{k-1}),\mathscr{N})$ is also an hf-pair
    where $\mathscr{M}'$ is obtained from $\mathscr{M}$ by exchanging
    the positions of its last two elements. Hf-pairs and the following
    arguments are illustrated in Fig.~\ref{fig:hourglassfree_nonbinary}.
    In order to obtain the desired contradiction, we show by induction
    that the children of the non-binary, non-refinable vertex $u$ harbor
    hf-pairs of arbitrary large order $k$.
    
    \noindent\textbf{Base case.} There is
    an hf-pair $(\mathscr{M},\mathscr{N})$ of order $2$ for~$u$.
    \par\noindent\textit{Proof of Claim.}
    Consider an arbitrary subset
    $\{v,v'\}\subsetneq\child_{T}(u)$ consisting of two distinct children $v$
    and $v'$ of the non-binary vertex $u$.  By
    Lemma~\ref{lem:non-refinable-vertex} and since $u$ is non-refinable,
    there are vertices $a\preceq_{T} v$ and $b\preceq_{T} v'$ such that
    w.l.o.g.\ $(a,b)\in E(\G)$ and $\sigma(b)\in\sigma(L(T(v'')))$ for some
    $v''\in\child_{T}(u)\setminus \{v,v'\}$.  The latter implies that there
    is a vertex $b'\preceq_{T}v''$ of color $\sigma(b)$.  Clearly, $b$ and
    $b'$ are distinct and the color $\sigma(b)$ is also present in the
    subtree $T(v')$.  Thus we can set
    $\mathscr{M}\coloneqq(v_1\coloneqq v', v_2\coloneqq v'')$ and
    $\mathscr{N}\coloneqq(c_1\coloneqq \sigma(b))$.  It is an easy task to
    verify that $(\mathscr{M},\mathscr{N})$ satisfies
    Conditions~(i)--(iii).
    \hfill$\triangleleft$
    
    \noindent\textbf{Induction step.}
    The existence of an hf-pair of order $k$ implies the 
    existence of an hf-pair of order $k+1$ for~$u$.
    \par\noindent\textit{Proof of Claim.}
    Let $(\mathscr{M}= (v_1,\dots,v_k),\mathscr{N}= (c_1,\dots,c_{k-1}))$ be
    an hf-pair, and consider the set
    $\{v_{k-1},v_k\}\subsetneq\child_{T}(u)$.  By
    Lemma~\ref{lem:non-refinable-vertex} and since $u$ is non-refinable,
    there are again vertices $a\preceq_{T} v$ and $b\preceq_{T} v'$ for
    distinct $v,v'\in\{v_{k-1},v_k\}$ such that $(a,b)\in E(\G)$ and
    $\sigma(b)\in\sigma(L(T(v'')))$ for some
    $v''\in\child_{T}(u)\setminus \{v_{k-1},v_k\}$.  We can assume w.l.o.g.\
    that $a\preceq_{T} v=v_{k-1}$ and $b\preceq_{T} v'=v_k$ since otherwise
    we can simply swap $v_{k-1}$ and $v_{k}$ in the ordered set $\mathscr{M}$
    as argued above.  Since $(a,b)$ is an arc in $(\G,\sigma)$ and
    $\lca_{T}(a,b)=u$, the color $\sigma(b)$ cannot be present in the subtree
    $T(v_{k-1})$.  Since $\mathscr{N}\subseteq \sigma(L(T(v_{k-1})))$ and
    $\sigma(b)\notin \sigma(L(T(v_{k-1})))$, we conclude that
    $\sigma(b)\notin \mathscr{N}$.
    
    We continue to show that $v''$ is distinct from all elements in
    $\mathscr{M}$.  Clearly, in the case $k=2$, $v''$ is distinct from all
    elements in $\mathscr{M} = \{v_1,v_2\}=\{v,v'\}$ by
    construction. Now let $k>2$ and assume, for contradiction, that there
    is a vertex $v_j\in\{v_1,\dots,v_{k-2}\}$ such that
    $\sigma(b)\in\sigma(L(T(v_j)))$.  In this case, $j<k-1$ and
    Condition~(ii) imply that $c_{k-1}\notin\sigma(L(T(v_j)))$.  In addition,
    we have $c_{k-1}\in\sigma(L(T(v_{k-1})))$ and
    $c_{k-1}\in\sigma(L(T(v_{k})))$ by Conditions~(i) and~(iii),
    respectively. Recall that $v'=v_k$. In summary, we obtain three distinct 
    vertices
    $v_j,v_k,v_{k-1}$ and two distinct colors $\sigma(b)$ and $c_{k-1}$
    satisfying Conditions~(1) and~(2) in
    Lemma~\ref{lem:hourglass-color-sets}, which implies that
    $(\G,\sigma)$ contains an hourglass; a contradiction.  Hence,
    $\sigma(b)\notin\sigma(L(T(v_j)))$ for all $j\in\{1,\dots, k-2\}$.  This
    implies that $v''$ is distinct from $v_1,\dots,v_{k-2}$.  Moreover, by
    construction, $v''$ is distinct from $v_{k-1}$ and $v_k$.  In summary,
    $v''$ is therefore distinct from all elements in $\mathscr{M}$.
    
    Consider now the pair
    $(\mathscr{M}' \coloneqq(v_1,\dots,v_k,v_{k+1}\coloneqq v''),
    \mathscr{N}'\coloneqq (c_1,\dots,c_{k-1},c_k\coloneqq \sigma(b)))$.
    Since $(\mathscr{M},\mathscr{N})$ is an hf-pair, and since, by
    construction, $c_k=\sigma(b)\notin \sigma(L(T(v_{j})))$ for
    $1\le j\le k-1$ and $c_k=\sigma(b)\in \sigma(L(T(v_{k})))$, we can
    immediately conclude that Conditions~(i) and~(ii) are satisfied for
    $(\mathscr{M}',\mathscr{N}')$.  It remains to show that Condition~(iii)
    is satisfied as well, i.e., $c_i\in\sigma(L(T(v_{k+1})))$ for all
    $1\le i\le k$.  By construction, we have $c_k\in\sigma(L(T(v_{k+1})))$.
    Now assume that $c_i\notin\sigma(L(T(v_{k+1})))$ for some
    $1\le i \le k-1$.  We have $c_i\in\sigma(L(T(v_{k-1})))$ and
    $c_i,c_k\in\sigma(L(T(v_{k})))$ by Condition~(i), and
    $c_k\notin\sigma(L(T(v_{k-1})))$ by Condition~(ii).  Taken together, we
    obtain three distinct vertices $v_{k-1},v_{k},v_{k+1}$ and two distinct
    colors $c_i$ and $c_k$ satisfying Conditions~(1) and~(2) in
    Lemma~\ref{lem:hourglass-color-sets}, which implies that
    $(\G,\sigma)$ contains an hourglass; a contradiction.  Therefore,
    Condition~(iii) must be satisfied as well, and
    $(\mathscr{M}',\mathscr{N}')$ is an hf-pair of order $k+1$.
    \hfill$\triangleleft$
    
    Repeated application of the induction step implies that children of a
    non-refinable non-binary vertex $u$ in a non-refinable tree $(T,\sigma)$
    explaining an hourglass-free BMG harbor an hf-pair of arbitrary
    order. This is of course impossible since $G$ is finite, i.e, no such
    vertex $u$ can exist. Therefore, every hourglass-free BMG $(\G,\sigma)$
    can be explained by a binary tree.  
  \end{proof}
  
  \begin{figure}[t]
    \begin{center}
      \includegraphics[width=0.85\textwidth]{./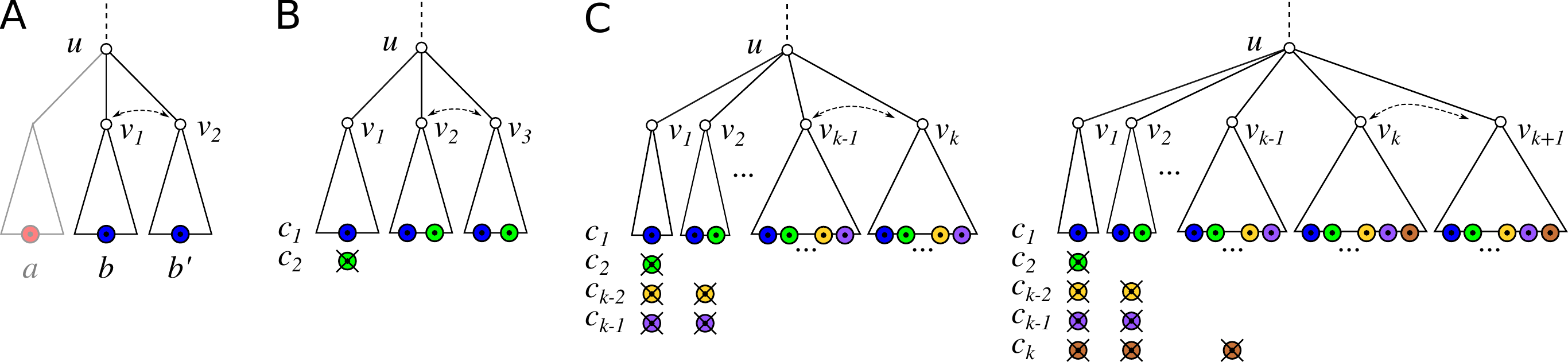}
    \end{center}
    \caption{Illustration of the induction argument in the proof of
      Prop.~\ref{prop:binary-iff-hourglass-free}. (A) Base case: an
      hourglass-free pair (hf-pair)
      $(\mathscr{M}=\{v_1,v_2\},\mathscr{N}=\{\sigma(b)\})$ of order
      $2$. Note that vertex $a$ is only required to show the existence of
      $(\mathscr{M},\mathscr{N})$.  (B) An hf-pair
      $(\mathscr{M}=\{v_1,v_2,v_3\},\mathscr{N}=\{c_1,c_2\})$ of order $3$.
      (C) Induction step: The existence of an hf-pair of order $k$ implies
      the existence of an hf-pair of order $k+1$, and thus, an infinite
      number of children of $u$. This gives the desired contradiction in the
      proof of Prop.~\ref{prop:binary-iff-hourglass-free}.  The dashed arrow
      indicates the last two elements in the ordered set $\mathscr{M}$ of an
      hf-pair $(\mathscr{M},\mathscr{N})$ are interchangeable.}
    \label{fig:hourglassfree_nonbinary}
  \end{figure}
  
  Prop.~\ref{prop:binary-iff-hourglass-free} gives rise to a procedure for
  determining whether a BMG $(\G,\sigma)$ can be explained by a binary tree.
  We simply need to check whether $(\G,\sigma)$ is hourglass-free, a task
  that can be done trivially in $O(|E(\G)|^2)$ time by checking, for all
  pairs of edges $ab$ and $a'b'$ (in constant time), whether or not they
  induce an hourglass $[ab \hourglass a'b']$ or $[a'b' \hourglass ab]$,
  respectively.  Hence, we obtain
  \begin{corollary}
    It can be decided in polynomial time whether a BMG $(\G,\sigma)$ can be
    explained by a binary tree.
    \label{cor:binary-polytime}
  \end{corollary}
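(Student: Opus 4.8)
The plan is to reduce the decision problem to testing the absence of a single four-vertex forbidden induced subgraph. By Prop.~\ref{prop:binary-iff-hourglass-free}, a BMG $(\G,\sigma)$ admits an explaining binary tree precisely when it is hourglass-free, i.e., when it contains no induced subgraph isomorphic to an hourglass in the sense of Def.~\ref{def:hourglass}. Hence it suffices to exhibit a polynomial-time procedure that decides hourglass-freeness of a given $(\G,\sigma)$.

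First I would enumerate all unordered pairs $\{ab,a'b'\}$ of edges of the symmetric part $(G,\sigma)$, discarding those that share a vertex; there are at most $\binom{|E(\G)|}{2}=O(|E(\G)|^2)$ of them, and since $|E(\G)|\le |V(\G)|^2$ this enumeration is polynomial in the size of the input. For each such pair of vertex-disjoint edges, I would test whether it can be realised as an hourglass: the two edges must span exactly the same two colours, say $\sigma(a)=\sigma(a')=r$ and $\sigma(b)=\sigma(b')=s$ with $r\ne s$ after possibly renaming the endpoints, and then only two role assignments are compatible with Condition~(i) of Def.~\ref{def:hourglass}, namely $(x,y,x',y')=(a,b,a',b')$ and $(x,y,x',y')=(a',b',a,b)$. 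For each of these two candidates I would check Conditions~(ii)--(iv), that is, that $xy$ and $x'y'$ are edges, that $(x,y')$ and $(y,x')$ are arcs, and that $(y',x)$ and $(x',y)$ are not arcs of $\G$. Each such check inspects four fixed vertices and a bounded number of arc-adjacency queries, and hence runs in constant time under any standard representation of $(\G,\sigma)$ (e.g.\ an adjacency matrix). The BMG is hourglass-free if and only if none of the pairs passes any of these tests.

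Combining the two steps, the whole procedure runs in $O(|E(\G)|^2)=O(|V(\G)|^4)$ time, which proves the corollary. I do not anticipate a real obstacle here: the entire combinatorial weight of the statement already resides in Prop.~\ref{prop:binary-iff-hourglass-free}, and what remains is only the routine observation that a four-vertex forbidden configuration can be detected by brute-force enumeration over pairs of edges. The single point that deserves a line of care is that the hourglass relation is not symmetric under exchanging the primed and unprimed vertices---Conditions~(iii) and~(iv) distinguish a direction, as reflected in the notation $[ab \hourglass a'b']$---so for each unordered pair of edges both role assignments must be tested; once this is spelled out, correctness is immediate from Def.~\ref{def:hourglass} and Prop.~\ref{prop:binary-iff-hourglass-free}.
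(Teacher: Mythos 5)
Your argument is correct and is essentially identical to the paper's own proof: it reduces the question to hourglass-freeness via Prop.~\ref{prop:binary-iff-hourglass-free} and then detects hourglasses by brute-force enumeration over pairs of edges, checking both role assignments $[ab \hourglass a'b']$ and $[a'b' \hourglass ab]$ in constant time each, for an overall $O(|E(\G)|^2)$ bound. No gaps; the attention you give to the asymmetry of the primed and unprimed pairs matches the paper's treatment.
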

  It remains open, however, whether such a tree can be constructed
  efficiently.
  
  \citet{Geiss:19b} found that a certain type of colored 6-cycles is an
  important characteristic of RBMGs with a ``complicated'' structure that can
  only be explained by multifurcating trees. Let us write
  $\langle x_1 x_2\dots x_k\rangle$ for an induced cycle $C_k$ with edges
  $x_i x_{i+1}$, $1 \leq i \leq k-1$, and $x_k x_1$ in the symmetric part $G$
  of $\G$. We say that $(\G,\sigma)$ contains a \emph{hexagon} if the
  corresponding RBMG $(G,\sigma)$ contains an induced
  $C_6 = \langle x_1 x_2\dots x_6\rangle$ such that any three consecutive
  vertices of $C_6$ have pairwise distinct colors, i.e.,
  $\sigma(x_i)=\sigma(x_i+3)$, $1\leq i\leq 3$. Since hexagons contain $P_4$s
  and, by \cite[Lemma~32]{Geiss:19b}, any $P_4$ is either a good or a bad
  quartet, there are exactly two possible induced subgraphs spanned by a
  hexagon $C_6 = \langle x_1 x_2\dots x_6\rangle$, which are shown in
  Fig.~\ref{fig:C6}.  A graph $(\G,\sigma)$ is \emph{hexagon-free} if it does
  not contain a hexagon.
  
  \begin{figure}[t]
    \begin{center}
      \includegraphics[width=0.85\textwidth]{./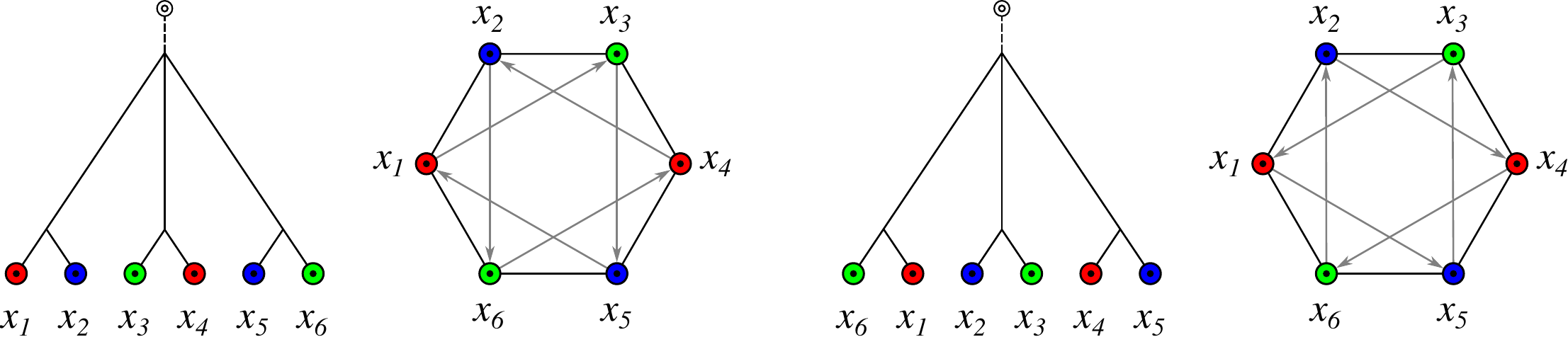}
    \end{center}
    \caption{Two examples of trees whose BMGs $\G(T,\sigma)$ contain a
      hexagon $\langle x_1x_2x_3x_4x_5x_6\rangle$. There are exactly two
      distinct possibilities for the placement of the non-symmetric arcs in
      the subgraph of the BMG induced by the hexagon.}
    \label{fig:C6}
  \end{figure}
  
  \begin{lemma}
    \label{lem:hour-hex}
    Every hourglass-free BMG $(\G,\sigma)$ is hexagon-free.
  \end{lemma}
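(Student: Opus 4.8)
The plan is to argue by contradiction. Suppose $(\G,\sigma)$ is an hourglass-free BMG that nevertheless contains a hexagon $\langle x_1 x_2 \dots x_6\rangle$ with $\sigma(x_i)=\sigma(x_{i+3})$ for $1\le i\le 3$; write $r\coloneqq\sigma(x_1)=\sigma(x_4)$, $s\coloneqq\sigma(x_2)=\sigma(x_5)$, $t\coloneqq\sigma(x_3)=\sigma(x_6)$, and fix a tree $(T,\sigma)$ explaining $(\G,\sigma)$. I will exhibit in $(T,\sigma)$ an inner vertex with three pairwise distinct children whose color sets realize, for a suitable pair of the colors $r,s,t$, the pattern of Lemma~\ref{lem:hourglass-color-sets}, which then produces an hourglass in $(\G,\sigma)$ and contradicts hourglass-freeness.

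First I would pin down the orientations of the non-edges of the hexagon. Each of the six induced paths $\langle x_{i-1}x_ix_{i+1}x_{i+2}\rangle$ (indices mod $6$) is an induced $P_4$ in the symmetric part $G$, hence a good or a bad quartet by \cite[Lemma~32]{Geiss:19b}; note its two same-colored endpoints are $x_{i-1},x_{i+2}$. Inspecting the "diagonal" arcs of each such quartet (the pair that a good/bad classification forces to be oriented in exactly one direction) shows that on each of the relevant non-edges of the hexagon precisely one orientation can occur, which leaves exactly the two arc patterns of Fig.~\ref{fig:C6}. In either pattern, for every non-edge $x_ix_j$ of the hexagon with $\sigma(x_i)\neq\sigma(x_j)$ there is a hexagon vertex $x_\ell$ with $\sigma(x_\ell)=\sigma(x_j)$, $x_ix_\ell\in E(G)$ and $(x_i,x_j)\notin E(\G)$, i.e.\ $x_i x_\ell\mid x_j$ is an informative triple; by Lemma~\ref{lem:informative_triples} all these triples are displayed by every tree explaining $(\G,\sigma)$.

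Next I would extract the structure of $(T,\sigma)$ near $u\coloneqq\lca_T(x_1,\dots,x_6)$. Applying \cite[Lemma~36]{Geiss:19b} to whichever of the wrap-around quartets $\langle x_1x_2x_3x_4\rangle$, $\langle x_3x_4x_5x_6\rangle$, $\langle x_5x_6x_1x_2\rangle$ are good, the ternary-vertex structure furnished by (the proof of) Prop.~\ref{prop:bad} for the bad ones, and Lemma~\ref{lem:inf_triples_overlap} applied to the informative triples of the previous step, one checks — via a short case analysis over the two Fig.~\ref{fig:C6} patterns — that $u$ has pairwise distinct children $v_1,v_2,v_3$ with, up to relabelling, $x_6,x_1\in L(T(v_1))$, $x_2,x_3\in L(T(v_2))$, $x_4,x_5\in L(T(v_3))$, and in particular $\lca_T(x_1,x_2)=\lca_T(x_3,x_4)=u$. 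Hence $r,t\in\sigma(L(T(v_1)))$, $s,t\in\sigma(L(T(v_2)))$, $r,s\in\sigma(L(T(v_3)))$. Moreover $s\notin\sigma(L(T(v_1)))$: an $s$-leaf below $v_1$ would satisfy $\lca_T(x_1,\cdot)\preceq_T v_1\prec_T u=\lca_T(x_1,x_2)$, contradicting $x_1x_2\in E(G)$; and symmetrically $r\notin\sigma(L(T(v_2)))$ using $x_3x_4\in E(G)$.

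Finally, taking the colors $p\coloneqq r$ and $q\coloneqq s$ and the children $v_1,v_3,v_2$ (in the roles of the three children in Lemma~\ref{lem:hourglass-color-sets}) we have $p\in\sigma(L(T(v_1)))$, $p,q\in\sigma(L(T(v_3)))$, $q\in\sigma(L(T(v_2)))$, $q\notin\sigma(L(T(v_1)))$ and $p\notin\sigma(L(T(v_2)))$ — exactly the hypotheses of Lemma~\ref{lem:hourglass-color-sets}. Thus $(\G,\sigma)$ contains an hourglass, a contradiction; therefore every hourglass-free BMG is hexagon-free. The main obstacle I anticipate is the bookkeeping in the third step: verifying uniformly across both Fig.~\ref{fig:C6} orientation patterns that the three subtrees below $u$ separate the colors as claimed, since one must rule out that two consecutive hexagon vertices sit in the same child of $u$ and must cope with the extra one-directional arcs, which is precisely where the informative-triple arguments carry the load.
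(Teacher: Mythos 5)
Your argument is correct, but it takes a genuinely different route from the paper. The paper's proof is a two-line corollary of the machinery it has just built: by Prop.~\ref{prop:binary-iff-hourglass-free} an hourglass-free BMG is explained by a binary tree, and \cite[Lemma~9]{Geiss:20a} states that hexagons require non-binary trees. You instead prove the contrapositive directly: a hexagon forces, in \emph{every} explaining tree, a vertex $u=\lca_T(x_1,\dots,x_6)$ with three distinct children housing the consecutive pairs, and the resulting color pattern triggers Lemma~\ref{lem:hourglass-color-sets}, hence an hourglass. Your sketched third step does go through, and in fact more easily than you fear: since consecutive quartets of the hexagon share a diagonal pair of arcs, good and bad quartets must alternate, so (after rotating indices) the three good quartets exist and \cite[Lemma~36]{Geiss:19b} places the pairs $\{x_1,x_2\}$, $\{x_3,x_4\}$, $\{x_5,x_6\}$ below pairs of distinct children of $\lca$-vertices $u_1,u_2,u_3$; these are pairwise comparable, and if, say, $u_1\prec u$ then all of $x_1,\dots,x_4$ sit below one child $w$ of $u$, whereupon the hexagon edges $x_4x_5$ and $x_6x_1$ force $x_5,x_6\preceq w$ as well (otherwise $x_2$, resp.\ $x_3$, would be a strictly better match), contradicting $u=\lca_T(x_1,\dots,x_6)$; so $u_1=u_2=u_3=u$ and the three children are pairwise distinct, with no need for the bad-quartet structure or the informative-triple bookkeeping you anticipated (only your minor wording slip that \emph{both} orientations of each non-edge are absent needs fixing — exactly one arc is present in each Fig.~\ref{fig:C6} pattern). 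What the two approaches buy: the paper's proof is essentially free once Prop.~\ref{prop:binary-iff-hourglass-free} is available, but it leans on that nontrivial refinement result and on an external lemma; yours is self-contained at the level of quartet structure, avoids the binary-tree characterization entirely, and yields the sharper local statement that every hexagon spawns an hourglass at the same last common ancestor.
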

  \begin{proof}
    By Prop.~\ref{prop:binary-iff-hourglass-free}, every hourglass-free BMG
    $(\G,\sigma)$ can be explained by a binary tree. Lemma~9 in
    \cite{Geiss:20a} implies that hexagons can only be explained by
    non-binary trees. Hence, $(\G,\sigma)$ must be hexagon-free.
  \end{proof}
  Clearly, the converse of Lemma~\ref{lem:hour-hex} is not always satisfied,
  since, by Obs.~\ref{obs:hourbmg}, an hourglass is a BMG without hexagons.
  
  A very useful observation in previous work is the fact that every 3-colored
  vertex induced subgraph of an RBMG $(G,\sigma)$ is again an RBMG
  \cite[Thm.~7]{Geiss:19b}. Furthermore, the connected components
  $(C,\sigma)$ of every 3-colored vertex induced subgraph of $(G,\sigma)$
  belong to precisely one of the three types \cite[Thm.~5]{Geiss:19b}:
  \begin{description}
    \item[\textbf{Type (A)}] $(C,\sigma)$ contains a $K_3$ on three colors
    but no induced $P_4$.
    \item[\textbf{Type (B)}] $(C,\sigma)$ contains an induced $P_4$ on three
    colors whose endpoints have the same color, but no induced 
    cycle $C_n$ on $n\geq 5$ vertices.
    \item[\textbf{Type (C)}] $(C,\sigma)$ contains a hexagon.
  \end{description}
  The graphs for which all such 3-colored connected components are of Type
  (A) are exactly the RBMGs that are cographs, or co-RBMGs for short
  \cite[Thm.~8 and Remark~2]{Geiss:19b}. Together with 
  Lemma~\ref{lem:hour-hex}, 
  this classification immediately implies
  \begin{corollary}\label{cor:hourglass-free}
    Let $(\G,\sigma)$ be an hourglass-free BMG.  Then its symmetric part
    $(G,\sigma)$ is either a co-RBMG or it contains an induced $P_4$ on three
    colors whose endpoints have the same color, but no induced cycle $C_n$ on
    $n\geq 5$ vertices.
  \end{corollary}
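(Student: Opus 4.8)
The plan is to read the statement off the trichotomy for $3$-coloured connected components of RBMGs established in \cite{Geiss:19b} together with Lemma~\ref{lem:hour-hex}. Recall that every $3$-coloured vertex-induced subgraph of $(G,\sigma)$ is again an RBMG, and that each of its connected components is of Type~(A) (contains $K_3$ but no induced $P_4$), Type~(B) (contains an induced $P_4$ on three colours whose endpoints have the same colour, but no induced $C_n$ for $n\ge 5$), or Type~(C) (contains a hexagon); moreover $(G,\sigma)$ is a co-RBMG exactly when all such components are of Type~(A) \cite[Thm.~8 \& Remark~2]{Geiss:19b}.

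First I would apply Lemma~\ref{lem:hour-hex}: since $(\G,\sigma)$ is hourglass-free it is hexagon-free, hence no $3$-coloured induced subgraph of $(G,\sigma)$ can have a component of Type~(C), as such a component contains an induced $C_6$ coloured $abcabc$, i.e.\ a hexagon of $(\G,\sigma)$. Thus every $3$-coloured connected component is of Type~(A) or~(B). If all of them are of Type~(A), then $(G,\sigma)$ is a co-RBMG and the first alternative of the statement holds. Otherwise some component is of Type~(B), and by the description of this type it --- and therefore $(G,\sigma)$ --- contains an induced $P_4$ on three colours whose endpoints have the same colour. It remains to verify that $(G,\sigma)$ contains no induced cycle $C_n$ with $n\ge 5$; here I would use that the only induced cycles of length at least $5$ occurring in an RBMG are hexagons (a fact from the RBMG structure theory of \cite{Geiss:19b}, which for the hourglass-free case can alternatively be obtained by explaining $(\G,\sigma)$ by a binary tree via Prop.~\ref{prop:binary-iff-hourglass-free} and arguing, as in the proof of Lemma~\ref{lem:hour-hex}, that any longer induced cycle forces a multifurcation). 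Since $(\G,\sigma)$ is hexagon-free, $(G,\sigma)$ then has no induced $C_n$ with $n\ge 5$ at all.

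The step I expect to carry the actual weight is this last one: the trichotomy by itself only constrains $3$-coloured induced subgraphs, whereas an induced cycle of $(G,\sigma)$ may use arbitrarily many colours. A convenient handle is that a vertex whose colour is unique among the vertices of the cycle is a reciprocal best match of every other vertex of the cycle, so three such uniquely-coloured vertices would span a triangle --- impossible in a $C_n$ with $n\ge 4$; this disposes of every colouring with sufficiently many singleton colour classes, and the few remaining colourings (small number of colours and large colour classes, e.g.\ a $C_5$ coloured $ababc$ or a $C_6$ with colour classes of sizes $2,2,1,1$) are pinned down either by combining the proper-colouring constraints with the best-match definition or by invoking the relevant statements of \cite{Geiss:19b}. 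Once ``no long induced cycle other than a hexagon'' is in place, the corollary indeed follows at once from Lemma~\ref{lem:hour-hex} and the trichotomy.
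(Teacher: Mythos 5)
Your first two steps coincide with the paper's own proof: it, too, simply combines Lemma~\ref{lem:hour-hex} with the trichotomy for connected components of $3$-colored induced subgraphs and the co-RBMG characterization of \citet{Geiss:19b}, concluding that hexagon-freeness rules out Type~(C) components, so every component is of Type~(A) or Type~(B), which is exactly the dichotomy of the corollary. Observe that this already disposes of induced cycles on at most three colors graph-wide: such a cycle is connected and therefore lies in a single component of the induced subgraph on its colors, and neither a Type~(A) component (no induced $P_4$, hence no induced $C_n$ with $n\ge 5$) nor a Type~(B) component contains such a cycle. The paper stops there; its reading of the cycle clause is the one carried over from the Type~(B) description, and no further argument is given or needed for the statement as intended.

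Where you diverge is your third step, in which you try to exclude induced cycles $C_n$, $n\ge 5$, on four or more colors from all of $(G,\sigma)$. This is not part of the paper's argument, and the tool you propose for it is not sound: a vertex whose color is unique among the vertices of the cycle is in general \emph{not} a reciprocal best match of every other cycle vertex, since best matches are determined relative to all leaves of that color in the graph (which may lie outside the cycle); moreover, if the claim were true, that vertex would be adjacent to its non-neighbors on the cycle, contradicting the assumption that the cycle is induced, so the purported handle undercuts itself rather than the coloring. Likewise, the blanket assertion that the only induced cycles of length at least five in an RBMG are hexagons is neither stated nor cited in the paper (Lemma~\ref{lem:hour-hex}, via the result of \citet{Geiss:20a} it invokes, concerns hexagons only, and Prop.~\ref{prop:binary-iff-hourglass-free} gives no analogue for longer cycles), and the remaining colorings are deferred to an unspecified appeal to \citet{Geiss:19b}. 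So if one insists on reading the cycle clause as a claim about arbitrary induced cycles of $(G,\sigma)$, your argument for that stronger statement does not go through as written; for the corollary as the paper proves it, your first two steps already suffice.
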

  
  Since all \ufp edges in an hourglass-free BMG are contained in quartets,
  we have
  \begin{corollary}\label{cor:hourglass-free-coRBMG}
    Let $(\G,\sigma)$ be an hourglass-free BMG. Then its symmetric part
    $(G,\sigma)$ is a co-RBMG if and only if there are no \ufp edges in
    $(\G,\sigma)$.
  \end{corollary}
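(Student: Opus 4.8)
The plan is to deduce both directions from Thm~\ref{thm:ufp-iff-hug}, which identifies the \ufp edges with the hug-edges, after first observing that in an hourglass-free BMG the clause (C3) of Def~\ref{def:hug-edge} can never be invoked. Indeed, if $(\G,\sigma)$ is hourglass-free it contains no hourglass and hence no hourglass chain (a chain is by Def~\ref{def:hc} a sequence of hourglasses). Therefore every hug-edge of an hourglass-free BMG must satisfy (C1) or (C2), i.e.\ it is the middle edge of a good quartet or the first edge of an ugly quartet, and in either case it lies inside an induced $3$-colored $P_4$ of the symmetric part $(G,\sigma)$ (the path $\langle zxyz'\rangle$ in the good case, $\langle zxz'y\rangle$ in the ugly case).

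For the ``only if'' direction I would assume $(G,\sigma)$ is a co-RBMG, hence a cograph, hence free of induced $P_4$. By the observation above, this rules out good and ugly quartets, and the hourglass-free hypothesis rules out hourglass chains; so by Def~\ref{def:hug-edge} the BMG $(\G,\sigma)$ has no hug-edge, and Thm~\ref{thm:ufp-iff-hug} then yields that it has no \ufp edge. For the ``if'' direction I would argue by contraposition: suppose $(G,\sigma)$ is not a co-RBMG. By Cor~\ref{cor:hourglass-free} it then contains an induced $3$-colored $P_4$, say $\mathscr{P}$, with endpoints of equal color. By \cite[Lemma~32]{Geiss:19b} the induced subgraph $(\G[\mathscr{P}],\sigma)$ of the BMG is a good quartet or a bad quartet, and Prop~\ref{prop:good_quartet_middle_edge} (for a good quartet) or Prop~\ref{prop:bad} (for a bad quartet) exhibits a \ufp edge inside $\mathscr{P}$; Prop~\ref{prop:ugly_quartet} would cover a possible ugly quartet as well. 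Hence $(\G,\sigma)$ contains a \ufp edge, which is the required contradiction.

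I do not anticipate a genuine obstacle here: the argument is essentially a bookkeeping assembly of results already established (Thm~\ref{thm:ufp-iff-hug}, Cor~\ref{cor:hourglass-free}, \cite[Lemma~32]{Geiss:19b}, and Props.~\ref{prop:good_quartet_middle_edge}, \ref{prop:bad}, \ref{prop:ugly_quartet}). The only point that needs care is recognising that the hourglass-free hypothesis is used in exactly one place, namely to discard clause (C3) in the forward implication; this use is necessary and not cosmetic, since Fig.~\ref{fig:hourglasses}(A) already displays a BMG whose symmetric part is a cograph yet which contains a hug-edge coming from an hourglass, so without that hypothesis the forward implication would fail.
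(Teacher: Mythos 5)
Your proof is correct, and the forward direction coincides with the paper's: cograph $\Rightarrow$ no induced $P_4$ $\Rightarrow$ no good or ugly quartets, hourglass-freeness kills hourglass chains, hence no hug-edges and, by Thm.~\ref{thm:ufp-iff-hug}, no \ufp edges; your remark that hourglass-freeness enters exactly there, witnessed by Fig.~\ref{fig:hourglasses}(A), is the same point the paper makes elsewhere. The converse is where you diverge. The paper argues directly: if there are no \ufp edges, then by Thm.~\ref{thm:ufp-iff-hug} there are no hug-edges, so $(G,\sigma)=\NH(\G,\sigma)$, which by Thm.~\ref{thm:MAIN} equals $(\Theta(\aug(T^*),\wt),\sigma)$ and is therefore a cograph by Thm.~\ref{thm:ortho-cograph}. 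You instead argue by contraposition at the level of quartets: not a co-RBMG gives, via Cor.~\ref{cor:hourglass-free}, a $3$-colored induced $P_4$ with equally colored endpoints, which by \cite[Lemma~32]{Geiss:19b} spans a good or bad (or, as a fallback, ugly) quartet, and Props.~\ref{prop:good_quartet_middle_edge}, \ref{prop:bad}, \ref{prop:ugly_quartet} then produce a \ufp edge. Both routes are sound. The paper's version is shorter and, notably, its converse needs no hourglass-freeness at all (Thm.~\ref{thm:MAIN} makes $\NH(\G,\sigma)$ a cograph for every BMG), at the price of invoking the augmented-tree machinery; your version stays entirely within the quartet toolkit but uses the hourglass-free hypothesis in both directions (through Cor.~\ref{cor:hourglass-free}) and additionally leans on the external classification \cite[Lemma~32]{Geiss:19b}.
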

  \begin{proof}
    Since $(G,\sigma)$ is a cograph, it contains no induced $P_4$s and thus,
    $(\G,\sigma)$ contains no good or ugly quartets.  By
    Thm.~\ref{thm:ufp-iff-hug}, all hug-edges are determined by hourglass
    chains and good or ugly quartets.  Since none of them is contained in
    $(\G,\sigma)$, it also does not contain \ufp edges. Conversely, suppose
    that $(\G,\sigma)$ contains no \ufp edges.  Then, by Thm.~\ref{thm:MAIN},
    $(G,\sigma) = \NH(\G,\sigma)$ is an orthology graph and thus, by
    Thm.~\ref{thm:ortho-cograph}, a cograph.  
  \end{proof}
  
  \subsection{\ufp edges in hourglass chains}
  \label{APP:ssec:hchain}
  
  The situation is much more complicated in the presence of hourglasses. We
  start by providing sufficient conditions for \ufp edges that are identified
  by hourglass chains.
  \begin{proposition}
    \label{prop:hourglass-ufp} 	
    Let
    $\mathfrak{H}=[x_1 y_1 \hourglass x'_1 y'_1],\dots,[x_k y_k \hourglass
    x'_k y'_k]$ be an hourglass chain in $(\G,\sigma)$, possibly with a left
    tail $z$ or a right tail $z'$.  Then, an edge in $\G$ is \ufp if it is
    contained in the set
    \begin{align*}
    F =
    & \{x_iy_j\mid 1\leq i \leq j \leq k\}
    \cup\{zz'\}
    \cup\{zy_{i}, x_iz', zy'_{i}, x'_{i}z' \mid 1 \leq i \leq k \}\\
    & \cup\{ x_{i}x_{j+1} \mid 1\le i < j < k \} 
    \cup \{ y_{i}y_{j+1} \mid 1\le i < j < k \} \\
    & \cup\{x'_1 y'_i, x'_1 y_i \mid 2 \leq i \leq k \}
    \cup\{x_i y'_k, x'_i y'_k \mid 1 \leq i \leq k-1 \} \\
    & \cup\{x'_1 z, x'_1 z', y'_k z, y'_k z'\}
    \end{align*} 
  \end{proposition}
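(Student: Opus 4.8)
The plan is to reduce the whole statement to one structural observation: in every tree explaining $(\G,\sigma)$, all the edges listed in $F$ share the same last common ancestor $u$, and $u$ already carries two children whose color sets overlap, so $(T,\sigma)$-\fp-ness of each such edge is immediate from Lemma~\ref{lem:T-fp-no-mu}. First I would fix an arbitrary tree $(T,\sigma)$ explaining $(\G,\sigma)$ and apply Lemma~\ref{lem:hourglass_chain_tails} (or Lemma~\ref{lem:hourglass_chain} in the tailless case) to obtain a vertex $u\in V^0(T)$ with pairwise distinct children $v_0,v_1,\dots,v_k,v_{k+1}$ such that $x_1\in L(T(v_0))$, $y_k\in L(T(v_{k+1}))$, $x'_i,y'_i\in L(T(v_i))$ for all $1\le i\le k$, and, whenever the tails exist, $z\preceq_T v_0$ and $z'\preceq_T v_{k+1}$. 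From condition (H1) I would then read off the remaining leaf locations: since $x_{i+1}=y'_i$ for $1\le i<k$, every $x_i$ satisfies $x_i\preceq_T v_{i-1}$ ($1\le i\le k$), and since $y_i=x'_{i+1}$ for $1\le i<k$, every $y_i$ satisfies $y_i\preceq_T v_{i+1}$ ($1\le i\le k$).

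Next I would note that $x_1\preceq_T v_0$, $x'_1\preceq_T v_1$ and $\sigma(x_1)=\sigma(x'_1)$ already give $\sigma(L(T(v_0)))\cap\sigma(L(T(v_1)))\ne\emptyset$, so $u$ has two children with overlapping color sets. Hence, by Lemma~\ref{lem:T-fp-no-mu}, it suffices to prove $\lca_T(a,b)=u$ for each edge $ab$ lying in $F$: indeed, if $a\preceq_T v_p$ and $b\preceq_T v_q$ for two \emph{distinct} children $v_p\ne v_q$ of $u$, then, since the subtrees below the children of $u$ are pairwise disjoint, $u$ is the unique $\preceq_T$-minimal common ancestor of $a$ and $b$.

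Finally I would go through the families constituting $F$ and check the index condition $p\ne q$ in each, which is immediate from the stated ranges: for $x_iy_j$ with $i\le j$ the endpoints lie below $v_{i-1}$ and $v_{j+1}$ and $i-1\le j-1<j+1$; for $zy_i$ and $x_iz'$ below $v_0,v_{i+1}$ resp.\ $v_{i-1},v_{k+1}$ using $i+1\ge 2$ resp.\ $i-1\le k-1<k+1$; for $zy'_i$ and $x'_iz'$ below $v_0,v_i$ and $v_i,v_{k+1}$; for $x_ix_{j+1}$ and $y_iy_{j+1}$ with $i<j<k$ below $v_{i-1},v_j$ and $v_{i+1},v_{j+2}$; for $x'_1y'_i$ and $x'_1y_i$ with $i\ge 2$ below $v_1,v_i$ and $v_1,v_{i+1}$; for $x_iy'_k$ and $x'_iy'_k$ with $i\le k-1$ below $v_{i-1},v_k$ and $v_i,v_k$; and for $zz'$, $x'_1z$, $x'_1z'$, $y'_kz$, $y'_kz'$ below the obviously distinct children $v_0,v_{k+1}$, $v_1,v_0$, $v_1,v_{k+1}$, $v_k,v_0$, $v_k,v_{k+1}$ (using $k\ge 1$). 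In every case $\lca_T(a,b)=u$, so $ab$ is $(T,\sigma)$-\fp by Lemma~\ref{lem:T-fp-no-mu}; as $(T,\sigma)$ was arbitrary, $ab$ is \ufp. The only real work is this index bookkeeping (and observing that the families mentioning $z$ or $z'$ are vacuous when the corresponding tail is absent); there is no conceptual obstacle once Lemmas~\ref{lem:hourglass_chain_tails} and~\ref{lem:T-fp-no-mu} are available.
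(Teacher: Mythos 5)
Your proposal is correct and follows essentially the same route as the paper's proof: invoke Lemma~\ref{lem:hourglass_chain_tails} (resp.\ Lemma~\ref{lem:hourglass_chain}) to locate all vertices of the chain below the distinct children $v_0,\dots,v_{k+1}$ of a common vertex $u$, observe that $\sigma(x_1)=\sigma(x'_1)$ forces $\sigma(L(T(v_0)))\cap\sigma(L(T(v_1)))\ne\emptyset$, and conclude via Lemma~\ref{lem:T-fp-no-mu} after checking that each pair in $F$ lies below distinct children so that its last common ancestor is $u$. The only difference is that you carry out the index bookkeeping explicitly, which the paper leaves as ``an easy task to verify''; your checks are accurate.
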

  \begin{proof}
    Let $(T,\sigma)$ be an arbitrary tree that explains $(\G,\sigma)$.  By
    analogous arguments as in the proof of Lemma~\ref{lem:hourglass_chain_dupl} 
    and by Lemma~\ref{lem:hourglass_chain_tails}, there is a vertex $u\in 
    V^0(T)$
    with pairwise distinct children $v_0,v_1,\dots,v_k,v_{k+1}$ such that it
    holds $x_1\in L(T(v_0))$, $y_k\in L(T(v_{k+1}))$ and, for all
    $1\le i\le k$, we have $x'_i,y'_i\in L(T(v_i))$. Since $x_{i+1}=y'_i$ and
    $x'_{i+1}=y_i$ by definition of hourglass chains, it is an easy task to
    verify that for all edges $e=ab\in F$ the vertices $a$ and $b$ are
    located below distinct children of $u$ and thus, $\lca_T(a,b)=u$ for all
    such edges.  As argued in the proof of 
    Lemma~\ref{lem:hourglass_chain_dupl}, 
    we have $\sigma(L(T(v_0)))\cap\sigma(L(T(v_1)))\ne\emptyset$.  The latter
    arguments together with Lemma~\ref{lem:T-fp-no-mu} imply that every edge
    in $F$ is \ufp.  
  \end{proof}
  
  \begin{figure}[t]
    \begin{center}
      \includegraphics[width=0.85\textwidth]{./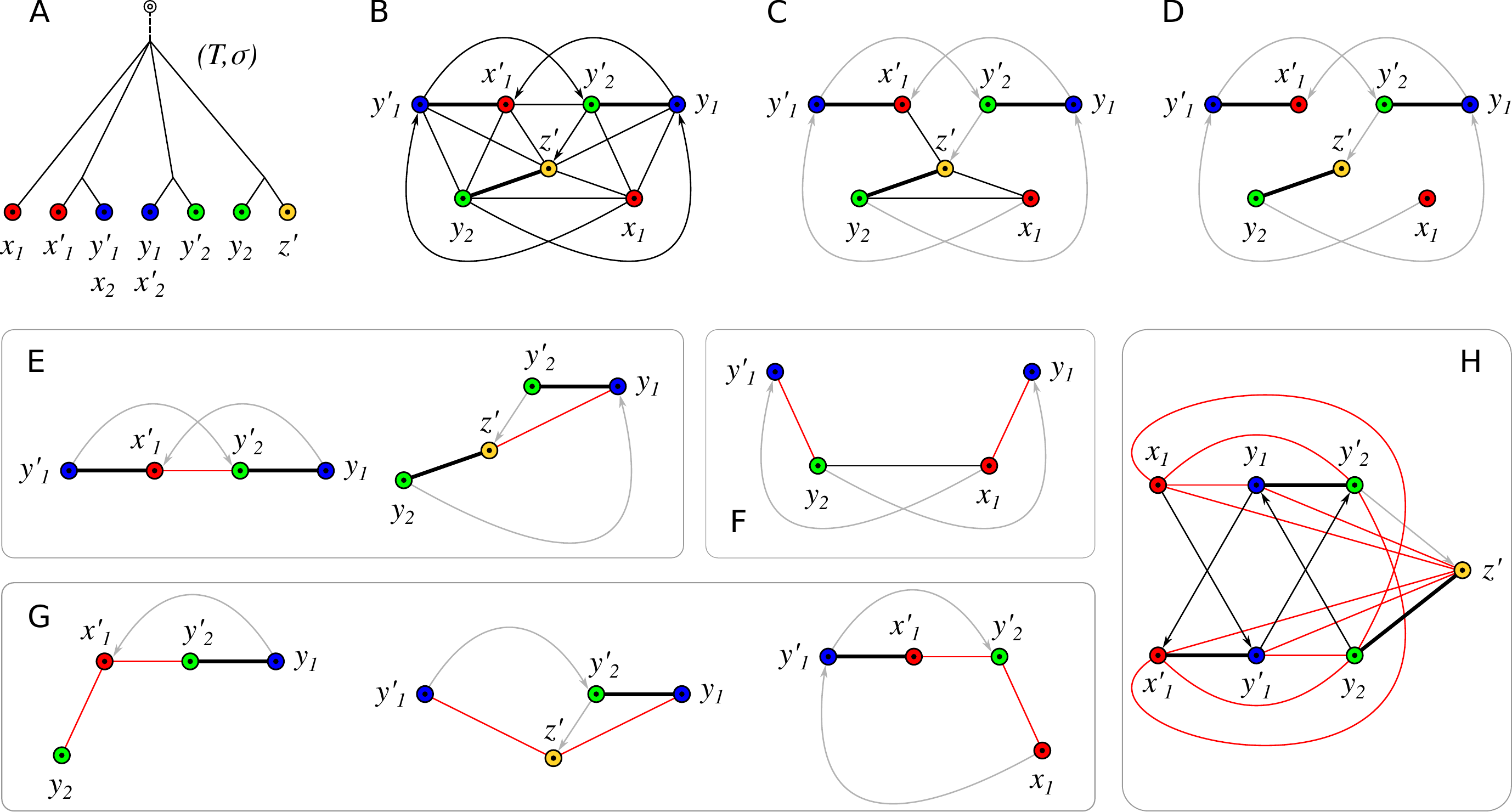}
    \end{center}
    \caption{The (non-binary) tree $(T,\sigma)$ in Panel (A) explains the BMG
      $(\G,\sigma)$ in Panel (B), which contains several induced $P_4$s and
      an hourglass chain of length $k=2$ with right tail $z'$. Edges that are
      not $(T,\sigma)$-\fp (and thus not \ufp) are shown as thick lines. Thin
      edges correspond to those that can be identified as \ufp by the
      subgraphs in (E--H), where they are highlighted in red.  (C) The graph
      after deletion of all edges that can be identified by good, bad and
      ugly quartets according to Props.~\ref{prop:good_quartet_middle_edge},
      \ref{prop:bad}, and~\ref{prop:ugly_quartet}. Note that it contains the
      induced $P_4$s $\langle y_1'x_1'z'y_2\rangle$ and
      $\langle y_1'x_1'z'x_1\rangle$, which were not induced subgraphs of the
      original BMG in (B).  Its symmetric part $(H,\sigma)$ differs from
      $\NH(\G,\sigma)$ (cf.\ Def.~\ref{def:non-hug-graph}) since it still
      contains \ufp edges.  (D) The BMG after deletion of all \ufp edges.
      Its symmetric part, comprising the thick edges, is $\NH(\G,\sigma)$.
      (E) The two good quartets.  (F) The single bad quartet.  (G) Examples
      for ugly quartets that cover the remaining \ufp edges that are
      identifiable via quartets.  Panel (H) shows the BMG $(\G,\sigma)$ in a
      different layout that highlights the hourglass chain with right tail
      $z'$. All edges that are \ufp according to
      Prop.~\ref{prop:hourglass-ufp} are in red.  To identify the \ufp edges
      in $(\G,\sigma)$, only the subgraphs in Panel (E), (G) and (H) are
      necessary (cf.\ Def.~\ref{def:hug-edge} and Thm.~\ref{thm:MAIN}).}
    \label{fig:evenhg}
  \end{figure}
  
  Figs.~\ref{fig:hourglasses} and~\ref{fig:evenhg} furthermore show that
  hourglass chains identify false-positive edges that are not associated with
  quartets in the BMG: The BMG in Fig.~\ref{fig:hourglasses}(A) has the \ufp
  edge $xy$, and the BMG in Fig.~\ref{fig:evenhg}(B) contains the \ufp edges
  $x_1y_2$, $x_1z'$ and $x'_1z'$. A careful investigation shows that these
  edges are either not even part of an induced $P_4$ (such as $xy$ in
  Fig.~\ref{fig:hourglasses} and $x'_1z'$ in Fig.~\ref{fig:evenhg}), or at
  least not identifiable as \ufp via good, bad or ugly quartets according to
  Props.~\ref{prop:good_quartet_middle_edge}, \ref{prop:bad} 
  and~\ref{prop:ugly_quartet}, as it is the case for $x_1y_2$ and $x_1z'$ in
  Fig.~\ref{fig:evenhg}.

  \subsection{Four-colored $P_4$s}
  \label{APP:ssec-4colP4}
  
  \citet[Thm.~8]{Geiss:19b} established that the RBMG $(G,\sigma)$ is a co-RBMG,
  i.e., a cograph, if and only if every subgraph induced on three colors is a
  cograph. Therefore, if $(G,\sigma)$ contains an induced 4-colored $P_4$, it
  also contains an induced 3-colored $P_4$. For hourglass-free BMGs
  $(\G,\sigma)$ it is clear that a 4-colored $P_4$ always overlaps with a
  3-colored $P_4$: In this case $\NH(\G,\sigma)$ is obtained by deleting
  middle edges of good quartets and first edges of ugly quartets. Since
  $\NH(\G,\sigma)$ is a cograph, there is no $P_4$ left, and thus at least
  one edge of any 4-colored $P_4$ was among the deleted edges. It is natural
  to ask whether this is true for BMGs in general.
  Fig.~\ref{fig:ex_P4_no_overlap} shows that good and ugly quartets are not
  sufficient on their own: there are 4-colored $P_4$s that do not overlap
  with the middle edge of a good quartet or the first edge of an ugly quartet.
  On the other hand, it is clear that at least one of its edges is \ufp. This
  does not imply, however, that the \ufp edges in a 4-colored $P_4$ are also
  edges of 3-colored $P_4$s.
  
  \begin{figure}[t]
    \begin{center}
      \includegraphics[width=0.6\textwidth]{./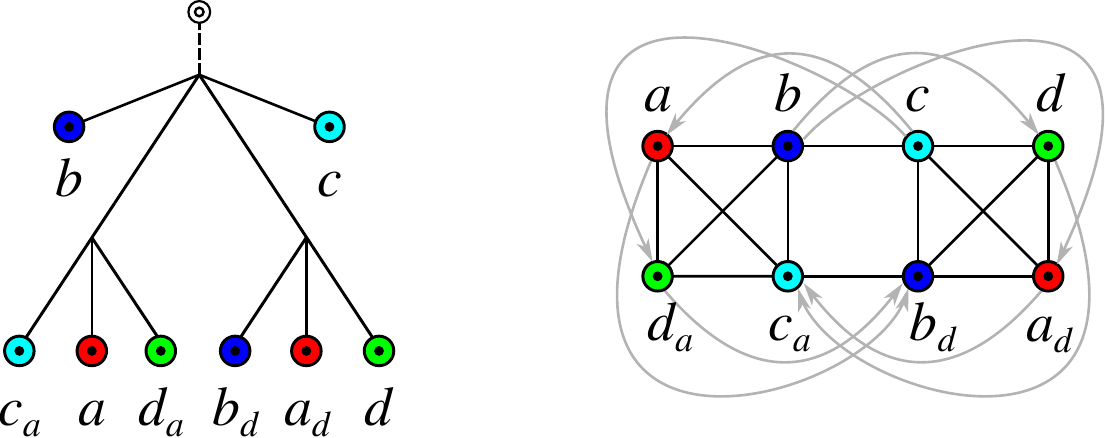}
    \end{center}
    \caption{
      The symmetric part of the BMG $(\G,\sigma)$ contains the 4-colored
      induced $P_4$ $\langle abcd \rangle$. None of its edges is the middle edge
      of a good quartet or the first edge of an ugly quartet. According to
      Lemma~\ref{lem:4col-P4}, there is the bad quartet
      $\langle abca_d \rangle$ that contains as first edge the edge $ab$.}	
    \label{fig:ex_P4_no_overlap}
  \end{figure}
  
  Still, in the context of cograph-editing approaches it is of interest
  whether the 3-colored $P_4$-s are sufficient. In the following we provide
  an affirmative answer.
  \begin{lemma}
    \label{lem:4col-P4}
    Let $(\G,\sigma)$ be a BMG and $\mathscr{P}$ a 4-colored induced
    $P_4$ in the symmetric part of $(\G,\sigma)$. Then at least one of
    the edges of $\mathscr{P}$ is either the middle edge of some good
    quartet or the first edge of a bad or ugly quartet in $(\G,\sigma)$.
  \end{lemma}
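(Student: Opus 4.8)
The plan is to fix a tree $(T,\sigma)$ explaining $(\G,\sigma)$, write $\mathscr{P}=\langle abcd\rangle$ for the given induced $P_4$ with pairwise distinct colours $r=\sigma(a)$, $s=\sigma(b)$, $t=\sigma(c)$, $q=\sigma(d)$, and to use Prop.~\ref{prop:good_or_ugly} as the central device: for any edge $xy$ of $\mathscr{P}$ with $\Scap(x,y)\ne\emptyset$ we are immediately done, since then $xy$ is the middle edge of a good quartet or the first edge of an ugly quartet. As $\Scap$ does not depend on the chosen tree (Lemma~\ref{lem:Scap}), it therefore suffices to handle the case $\Scap(a,b)=\Scap(b,c)=\Scap(c,d)=\emptyset$; in that case a \emph{bad} (or ugly) quartet will have to be produced explicitly, exactly as in the example of Fig.~\ref{fig:ex_P4_no_overlap}.

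Assume now all three colour-intersections are empty. First I would pin down the tree structure using the edge/non-edge characterisation Lemma~\ref{lem:edge-xy-lca}, analysing how $a,b,c$ distribute among the children of $u_1\coloneqq\lca_T(a,b,c)$. If two of them, say $a$ and $b$, lie in a common child subtree, then the non-edge $ac$ forces the colour $r$ to occur in both child subtrees of $u_1$ that contain $b$ and $c$, so $\Scap(b,c)\ne\emptyset$, a contradiction; the case $b,c$ together is symmetric (it forces $\Scap(a,b)\ne\emptyset$ via colour $t$), and the case $a,c$ together is ruled out by a short case distinction on the location of $d$ relative to $u_1$, each possibility yielding $\Scap(b,c)\ne\emptyset$ or $\Scap(c,d)\ne\emptyset$. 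Hence $a,b,c$ lie in three pairwise distinct children of $u_1$ and all three pairwise lca's equal $u_1$; applying the same reasoning to $b,c,d$ (using $\Scap(b,c)=\Scap(c,d)=\emptyset$) forces $\lca_T(b,c,d)=u_1=:u$ and $d$ to lie in a child $v_d$ of $u$ distinct from $v_b,v_c$, while a one-line argument (again via $\Scap(a,b)=\emptyset$ and the non-edge $bd$) excludes $v_d=v_a$. We thus reach the ``spread'' configuration: $a,b,c,d$ occupy four pairwise distinct child subtrees $T(v_a),T(v_b),T(v_c),T(v_d)$ of a common vertex $u$, the colour sets of consecutive subtrees are disjoint, and each of the non-edges $ac$, $ad$, $bd$ translates into a dichotomy of the form ``colour $\gamma$ occurs in $T(v_i)$, or colour $\delta$ occurs in $T(v_j)$''.

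In the spread configuration I would then construct the required quartet by exploiting these dichotomies. For instance, the non-edge $ad$ yields either an $r$-coloured leaf $a_d\in L(T(v_d))$ or a $q$-coloured leaf $d_a\in L(T(v_a))$; in the first case, using the consecutive colour-disjointness together with best-match computations in the style of the proofs of Prop.~\ref{prop:bad} and Prop.~\ref{prop:ugly_quartet}, one verifies that $\langle a\,b\,c\,a_d\rangle$ induces a bad quartet with first edge $ab$ whenever the additional non-edge $ba_d$ holds, the latter being secured by a sub-case on the dichotomy coming from $bd$; the remaining combinations of the three dichotomies, and the alternative leaf $d_a$, are handled by the same recipe, each time producing a bad or ugly quartet one of whose first edges (or, in the good-quartet alternative, whose middle edge) is one of $ab$, $bc$, $cd$. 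The main obstacle is precisely the bookkeeping in this last part: one must check, for every admissible combination of the three non-edge dichotomies, that some edge of $\mathscr{P}$ is realised as the first edge of a concrete quartet, and one must choose the auxiliary third-colour leaves as \emph{closest} best matches so that the arc conditions of Def.~\ref{def:GoodBadUgly} hold; this is the only way to circumvent the failures of reciprocity that arise when the two endpoints of a candidate quartet edge lie inside the same child subtree of $u$.
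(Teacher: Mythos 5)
Your reduction and structural analysis are sound and coincide with the paper's own opening: invoking Prop.~\ref{prop:good_or_ugly} (with Lemma~\ref{lem:Scap}) to reduce to $\Scap(a,b)=\Scap(b,c)=\Scap(c,d)=\emptyset$, and then using Lemma~\ref{lem:edge-xy-lca} to force $\lca_T(a,b)=\lca_T(b,c)=\lca_T(c,d)=:v$ with $a,b,c,d$ below four pairwise distinct children $v_a,v_b,v_c,v_d$ of $v$, is exactly how the published proof proceeds. Your one fully specified construction is also correct: if $\sigma(a)\in\sigma(L(T(v_d)))$ (leaf $a_d$) and in addition $\sigma(b)\in\sigma(L(T(v_d)))$, then $\langle a\,b\,c\,a_d\rangle$ is an induced $P_4$ and, checking the arcs via Lemma~\ref{lem:edge-xy-lca}, a bad quartet with first edge $ab$.

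The genuine gap is the remaining sub-case, which you dismiss with ``the same recipe''. Take the branch $\sigma(a)\in\sigma(L(T(v_d)))$ but $\sigma(b)\notin\sigma(L(T(v_d)))$, so the $bd$-dichotomy only gives $\sigma(d)\in\sigma(L(T(v_b)))$ (and hence $\sigma(d)\notin\sigma(L(T(v_a)))$ by $\Scap(a,b)=\emptyset$). Then $ba_d$ \emph{is} an edge of the symmetric part (both reciprocity conditions at $v$ hold), so your candidate path has a chord; and the dichotomy leaf of colour $\sigma(d)$ in $T(v_b)$ cannot be appended to $d,c,b$ either, because it sits in the same child subtree as $b$, so its (non-)adjacencies are not controlled by Lemma~\ref{lem:edge-xy-lca} at $v$ --- precisely the difficulty you flag, and ``choosing closest best matches'' does not resolve it, since no induced $P_4$ through an edge of $\mathscr{P}$ is produced this way. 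The paper closes exactly this case with a construction your sketch never reaches: it first deduces $\sigma(c)\in\sigma(L(T(v_a)))$, then splits on whether $\sigma(d)\in\sigma(L(T(v_a)))$. If yes, the leaf $d_a\preceq_T v_a$ gives the bad quartet $\langle d\,c\,b\,d_a\rangle$ with first edge $cd$; if no (which is forced in the branch above), Lemma~\ref{lem:exEdge} supplies an \emph{edge} $a'c'$ with both endpoints inside $T(v_a)$, and $\langle c\,d\,c'\,a'\rangle$ is an ugly quartet with first edge $cd$, so one is done anyway. This use of Lemma~\ref{lem:exEdge} to manufacture two auxiliary leaves within a single child subtree, rather than one dichotomy leaf appended to three vertices of $\mathscr{P}$, is the missing idea; without it your case analysis does not close.
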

  \begin{proof}
    Let $(T,\sigma)$ be an arbitrary tree that explains $(\G,\sigma)$ and
    suppose that $\mathscr{P}\coloneqq\langle abcd \rangle$ is a 4-colored
    induced $P_4$ in the symmetric part $(G,\sigma)$.
    
    If one of the edges $ab$, $bc$, or $cd$ of $\mathscr{P}$ is the middle
    edge of some good quartet or the first edge of some ugly quartet, then we
    are done. Hence, we assume in the following that this is not the case and
    show that at least one of the edges of $\mathscr{P}$ is the first edge in
    a bad quartet.
    
    By contraposition of Prop.~\ref{prop:good_or_ugly}, we have
    $\Scap(a,b)=\emptyset$, $\Scap(b,c)=\emptyset$ and
    $\Scap(c,d)=\emptyset$. We set $v\coloneqq \lca_T(b,c)$ with children
    $v_b,v_c\in\child_{T}(v)$ such that $b\preceq_{T}v_b$ and
    $c\preceq_{T}v_c$, and $w\coloneqq \lca_T(a,b)$ with children
    $w_a, w_b\in \child_T(w)$ such that $a\preceq_{T}w_a$ and
    $b\preceq_{T}w_b$.  Note, that $v,v_b,w$, and $w_b$ are pairwise
    comparable, since they are all ancestors of $b$.
    
    We show that $w=v$. Assume, for contradiction, that (i) $w\prec_T v$ or
    (ii) $v \prec_T w$. In Case (i), we have $w_a\prec_Tw\preceq_T v_b$ and
    thus, $\sigma(a)\in \sigma(L(T(v_b)))$.  Hence, as
    $\Scap(b,c)=\emptyset$, it must hold that
    $\sigma(a)\notin \sigma(L(T(v_c)))$ and
    $\sigma(c)\notin \sigma(L(T(v_b)))$. Lemma~\ref{lem:edge-xy-lca} implies
    $ac\in E(G)$.  But then $\mathscr{P}$ is not an induced $P_4$; a
    contradiction. In Case (ii), we have $v_c\preceq_Tv\preceq w_b$ and thus,
    $\sigma(c)\in \sigma(L(T(w_b)))$. Since $\Scap(a,b)=\emptyset$ we thus
    have $\sigma(c)\notin \sigma(L(T(w_a)))$ and
    $\sigma(a)\notin \sigma(L(T(w_b)))$. By Lemma~\ref{lem:edge-xy-lca},
    $ac\in E(G)$; again a contradiction. Thus $w=v$. Analogous arguments can
    be used to establish $\lca_T(c,d)=v$. We therefore have
    $v=\lca_T(a,b)=\lca_T(b,c)=\lca_T(c,d)$. In the following $v_x$ denotes
    the child of $v$ with $x\preceq_Tv_x$ for $x\in \{a,b,c,d\}$. Note,
    $v_a\ne v_b$, $v_b\ne v_c$ and $v_c\ne v_d$.
    
    We next show that $v_a$, $v_b$, $v_c$, and $v_d$ are pairwise
    distinct. Fist, assume for contradiction that $v_a=v_c$. Together with
    $\Scap(c,d)=\emptyset$, this assumption implies that
    $\sigma(a)\notin \sigma(L(T(v_d)))$ and
    $\sigma(d)\notin \sigma(L(T(v_c)))$. By Lemma~\ref{lem:edge-xy-lca},
    $ad\in E(G)$, contradicting the assumption that $\mathscr{P}$ is an
    induced $P_4$. Hence, $v_a\ne v_c$. By symmetry of $\mathscr{P}$, we can
    use similar arguments to conclude that $v_b\ne v_d$.  Finally, assume for
    contradiction that $v_a= v_d$. Then, $\sigma(d)\in \sigma(L(T(v_a)))$.
    Hence, $\Scap(a,b)=\emptyset$ implies that
    $\sigma(d)\notin \sigma(L(T(v_b)))$ and
    $\sigma(b)\notin \sigma(L(T(v_d)))$.  Again Lemma~\ref{lem:edge-xy-lca}
    implies $bd\in E(G)$; a contradiction.  In summary, $v_a$, $v_b$, $v_c$,
    and $v_d$ must be pairwise distinct.
    
    We claim $\sigma(c)\in \sigma(L(T(v_a)))$. Since $ad\notin E(G)$ and
    $\lca_{T}(a,d)=v$, Lemma~\ref{lem:edge-xy-lca} implies that
    $\sigma(a)\in \sigma(L(T(v_d)))$ or $\sigma(d)\in \sigma(L(T(v_a)))$.  By
    symmetry of $\mathscr{P}$, we can w.l.o.g.\ assume that
    $\sigma(a)\in \sigma(L(T(v_d)))$ and thus, there is a vertex
    $a_d\in L(T(v_d))$ with $\sigma(a_d)=\sigma(a)$. In this case,
    $\Scap(c,d)=\emptyset$ implies that $\sigma(a)\notin
    \sigma(L(T(v_c)))$. This together with $ac\notin E(G)$ and
    Lemma~\ref{lem:edge-xy-lca} implies that
    $\sigma(c)\in \sigma(L(T(v_a)))$.
    
    We claim $\sigma(d)\in \sigma(L(T(v_a)))$.  We assume for contradiction
    that this is not the case and show that this implies the existence of an
    ugly quartet $\langle cdc'a'\rangle$ containing $cd$ as its first edge,
    which leads to a contradiction to our initial assumption that none of the
    edges in $\mathscr{P}$ is the first, resp., middle edge of an ugly,
    resp., good quartet.  To see this, note that
    $\sigma(a),\sigma(c)\in\sigma(L(T(v_a)))$ and Lemma~\ref{lem:exEdge}
    imply that there is an edge $a'c'$ for two vertices $a',c'\prec_T v_a$
    with $\sigma(a')=\sigma(a)$ and $\sigma(c')=\sigma(c)$. Since
    $\sigma(a)=\sigma(a')$ and
    $\lca_T(a',c')\preceq_Tv_a\prec_T v=\lca_T(a',c)$, we have
    $a'c\notin E(G)$. Since $\sigma(a_d)=\sigma(a')$ and
    $\lca_T(a_d,d)\preceq_Tv_d\prec_T v=\lca_T(a',d)$, we have
    $a'd\notin E(G)$. Now, $\Scap(c,d)$ implies that
    $\sigma(c)\notin \sigma(L(T(v_d)))$.  This and
    $\sigma(d)\notin \sigma(L(T(v_a)))$ together with
    Lemma~\ref{lem:edge-xy-lca} implies that there is an edge $c'd\in
    E(G)$. Thus, we obtain the ugly quartet $\langle cdc'a'\rangle$ and
    hence, the desired contradiction.  Therefore,
    $\sigma(d)\in \sigma(L(T(v_a)))$. Because of $\Scap(a,b)=\emptyset$ we
    also have $\sigma(d)\notin \sigma(L(T(v_b)))$.
    
    Since $\sigma(d)\in \sigma(L(T(v_a)))$, there is a vertex
    $d_a\preceq v_a$ with $\sigma(d_a)=\sigma(d)$. Moreover,
    $\sigma(b)\notin \sigma(L(T(v_a))$ and
    $\sigma(d)\notin \sigma(L(T(v_b)))$ together with
    Lemma~\ref{lem:edge-xy-lca} implies that $bd_a\in E(G)$. Furthermore,
    $\sigma(c)\in \sigma(L(T(v_a)))$ and Lemma~\ref{lem:edge-xy-lca} imply
    that $cd_a\notin E(G)$. Now, $\Scap(c,d)=\emptyset$ implies
    $\sigma(d)\notin \sigma(L(T(v_c)))$ and therefore,
    $\lca_T(c,d_a)=v\preceq \lca_T(c,d')$ for all $d'\in
    L[\sigma(d)]$. Hence, $(c,d_a)\in E(\G)$.
    
    In summary, $\langle dcbd_a \rangle$ is an induced $P_4$ in $G$. By
    \cite[Lemma~32]{Geiss:19b}, every such induced $P_4$ forms either a good,
    bad, or ugly quartet in $(\G,\sigma)$ and, since $(c,d_a)\in E(\G)$, we
    can conclude that $\langle dcbd_a \rangle$ is a bad quartet with first
    edge $cd$, which completes the proof.
  \end{proof}
  
  \begin{corollary}{\cite[Thm.~8]{Geiss:19b}}
    Let $(G,\sigma)$ be an RBMG. Then, $(G,\sigma)$ is a cograph if and only
    if all subgraphs induced by three colors are cographs.
  \end{corollary}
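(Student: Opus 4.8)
The plan is to deduce the statement from Lemma~\ref{lem:4col-P4} together with the classical fact that cographs are precisely the $P_4$-free graphs \cite{CORNEIL:81}. The forward implication is immediate and requires no new ideas: if $(G,\sigma)$ is a cograph it contains no induced $P_4$, and since $P_4$-freeness is hereditary, every induced subgraph of $(G,\sigma)$ — in particular every subgraph induced by a set of three colors — is again a cograph.

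For the converse I would argue by contraposition. Suppose $(G,\sigma)$ is not a cograph, so it contains some induced $P_4$, call it $\mathscr{P}$. Realize $(G,\sigma)$ as the symmetric part of a BMG $(\G,\sigma)$; since $\sigma$ is a proper coloring of $\G$, adjacent vertices of $G$ carry distinct colors, so $\mathscr{P}$ spans at least two colors. By \cite[Obs.~5]{Geiss:19b}, an RBMG has no induced $P_4$ on exactly two colors, so $\mathscr{P}$ uses either three or four colors. If $\mathscr{P}$ uses exactly three colors $r,s,t$, then the subgraph of $(G,\sigma)$ induced by all vertices whose colors lie in $\{r,s,t\}$ contains $\mathscr{P}$ as an induced subgraph, and hence is not a cograph — giving a $3$-colored induced non-cograph, which is what we want. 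If $\mathscr{P}$ uses four colors, Lemma~\ref{lem:4col-P4} supplies an edge of $\mathscr{P}$ that is the middle edge of a good quartet or the first edge of a bad or ugly quartet in $(\G,\sigma)$. By Def.~\ref{def:GoodBadUgly} each such quartet contains an induced $P_4$ of $(G,\sigma)$ on three pairwise distinct colors (the colors $r,s,t$ in Def.~\ref{def:GoodBadUgly} are necessarily distinct, again by \cite[Obs.~5]{Geiss:19b}), so $(G,\sigma)$ contains a $3$-colored induced $P_4$ and we are back in the previous case. In every case we exhibit a $3$-colored induced subgraph of $(G,\sigma)$ that is not a cograph, completing the contraposition.

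Essentially all of the substantive work has already been carried out in Lemma~\ref{lem:4col-P4}; what remains is only a case analysis on the number of colors appearing in an induced $P_4$, using the properness of $\sigma$ and \cite[Obs.~5]{Geiss:19b} to dispose of the one- and two-color cases. I therefore do not anticipate any real obstacle — the main thing to be careful about is the color bookkeeping when passing from a quartet to the $3$-colored induced $P_4$ it contains, and making sure that this $P_4$ lives in the symmetric part $(G,\sigma)$ rather than merely in $(\G,\sigma)$, which is guaranteed by the definition of good, bad, and ugly quartets.
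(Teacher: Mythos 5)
Your proposal is correct and follows essentially the same route as the paper: the forward direction via heredity of cographs, and the converse by using Lemma~\ref{lem:4col-P4} to convert any 4-colored induced $P_4$ into a 3-colored one coming from a good, bad, or ugly quartet. Your version merely spells out the color case analysis (including ruling out 2-colored $P_4$s via \cite[Obs.~5]{Geiss:19b}) that the paper's proof leaves implicit.
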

  \begin{proof}
    If $(G,\sigma)$ is a cograph, then all its induced subgraphs are also
    cographs \cite{CORNEIL:81}. Conversely, if $(G,\sigma)$ is not a cograph,
    then it contains at least one induced $P_4$. By Lemma~\ref{lem:4col-P4},
    $(G,\sigma)$ cannot contain only 4-colored $P_4$s and therefore the
    restriction to at least one combination of three colors contains a $P_4$
    and is thus not a cograph.
  \end{proof}

\end{appendix}

\bibliographystyle{spbasic.bst}
\bibliography{references}

\end{document}